\def\NoPrintedContents{1}
\documentclass{article}
\usepackage{graphicx}
\usepackage{placeins}
\usepackage{subfiles}
\usepackage[T1]{fontenc}
\usepackage[margin=1in]{geometry}
\usepackage{amsmath}
\usepackage{amssymb}
\usepackage{amsthm}
\usepackage{needspace}
\usepackage{refcount}
\usepackage{array}
\usepackage{booktabs}
\usepackage{tabularx}
\usepackage{aliascnt} %
\usepackage{microtype}
\usepackage[sorting=none]{biblatex}
\usepackage{xcolor}
\usepackage{graphicx}
\usepackage{tikz}
\usetikzlibrary{arrows.meta,positioning}
\usepackage{xr-hyper}
\usepackage{hyperref}
\definecolor{linkblue}{RGB}{0,62,140}
\definecolor{citegreen}{RGB}{0,105,60}
\definecolor{urlblue}{RGB}{0,80,130}
\hypersetup{
  colorlinks=true,
  linkcolor=linkblue,  %
  citecolor=citegreen, %
  urlcolor=urlblue     %
}
\usepackage[capitalise]{cleveref}
\crefname{equation}{Eq.}{Eqs.}
\Crefname{equation}{Eq.}{Eqs.}

\newcommand{\eps}{\varepsilon}
\newcommand{\R}{\mathbb{R}}

\newcommand{\C}{\mathbb{C}}
\newcommand{\BI}{\mathbb{I}}

\newcommand{\BB}{\mathcal{B}}
\newcommand{\HH}{\mathcal{H}}
\newcommand{\FF}{\mathcal{F}}
\newcommand{\GG}{\mathcal{G}}
\newcommand{\LL}{\mathcal{L}}
\newcommand{\KK}{\mathcal{K}}
\newcommand{\MM}{\mathcal{M}}
\newcommand{\NN}{\mathcal{N}}
\newcommand{\SSS}{\mathcal{D}_1}

\DeclareMathOperator{\Tr}{Tr}
\DeclareMathOperator{\EP}{EP}
\DeclareMathOperator{\Rea}{Re}
\DeclareMathOperator{\spec}{spec}
\DeclareMathOperator{\id}{id}

\DeclareMathOperator{\polylog}{polylog}

\newcommand{\faml}[1]{(#1_N)_N}
\newcommand{\ketbra}[2]{\lvert #1\rangle\!\langle #2\rvert}
\newcommand{\Diss}[1]{\mathcal D[#1]}

\newcommand{\osc}[1]{|\mkern-1mu|\mkern-1mu|#1|\mkern-1mu|\mkern-1mu|}
\newcommand{\oscm}[1]{|\mkern-1mu|\mkern-1mu|#1|\mkern-1mu|\mkern-1mu|_{\operatorname{m}}}

\newaliascnt{proposition}{theorem}
\newtheorem{proposition}[proposition]{Proposition}
\aliascntresetthe{proposition}
\newaliascnt{corollary}{theorem}
\newtheorem{corollary}[corollary]{Corollary}
\aliascntresetthe{corollary}
\newaliascnt{lemma}{theorem}
\newtheorem{lemma}[lemma]{Lemma}
\aliascntresetthe{lemma}
\newaliascnt{conjecture}{theorem}
\newtheorem{conjecture}[conjecture]{Conjecture}
\aliascntresetthe{conjecture}
\theoremstyle{plain}
\newaliascnt{definition}{theorem}
\newtheorem{definition}[definition]{Definition}
\aliascntresetthe{definition}
\theoremstyle{definition}
\newaliascnt{example}{theorem}
\newtheorem{example}[example]{Example}
\aliascntresetthe{example}
\theoremstyle{remark}
\newaliascnt{remark}{theorem}

\aliascntresetthe{remark}

\newcounter{proof}

\crefname{proof}{Proof}{Proofs}
\Crefname{proof}{Proof}{Proofs}
\newcommand{\prooflabel}[2]{%
  \renewcommand{\theproof}{\getrefnumber{#1}}%
  \refstepcounter{proof}%
  \label{#2}%
}

\title{Is rapid mixing stable?}

\author{Jordi A. Montañà-López\textsuperscript{1,2,\textdagger}, Barbara Roos\textsuperscript{3},\\
Sebastian Stengele\textsuperscript{2,4}, Ángela Capel\textsuperscript{5,6,*}, Rahul Trivedi\textsuperscript{1,2,*}\\[0.8em]
\begin{minipage}{0.97\textwidth}
\centering\small
\textsuperscript{1}Max Planck Institute of Quantum Optics, Hans-Kopfermann-Straße 1, 85748 Garching, Germany\endgraf
\textsuperscript{2}Munich Center for Quantum Science and Technology, Schellingstraße 4, 80799 München, Germany\endgraf
\textsuperscript{3}Gran Sasso Science Institute, Viale Francesco Crispi 7, 67100 L'Aquila\endgraf
\textsuperscript{4}Departments of Mathematics and Physics, TU München, 85747 Garching, Germany\endgraf
\textsuperscript{5}Department of Applied Mathematics and Theoretical Physics, University of Cambridge, Cambridge CB3~0WA, United Kingdom\endgraf
\textsuperscript{6}Fachbereich Mathematik, Universität Tübingen, 72076 Tübingen, Germany
\end{minipage}}
\AddToHook{cmd/maketitle/after}{%
  \begingroup
  \footnotetext[2]{Corresponding author: \href{mailto:jordi.montana-lopez@mpq.mpg.de}{jordi.montana-lopez@mpq.mpg.de}.}%
  \footnotetext[1]{Equal contribution.}%
  \endgroup
}

\graphicspath{{.}}
\newcommand{\resulttoc}[2]{%
  \addcontentsline{toc}{subsection}{%
    \texorpdfstring{\protect\cref*{#1}}{Result \getrefnumber{#1}}: #2}%
}
\newcommand{\subsectionresulttoc}[2]{%
  \addcontentsline{toc}{subsubsection}{%
    \texorpdfstring{\protect\cref*{#1}}{Result \getrefnumber{#1}}: #2}%
}

\providecommand{\papercheckpoint}[9]{}
\makeatletter
\newcommand{\paperpart}[2]{%
  \immediate\write\@auxout{\string\papercheckpoint{#1}%
    {\number\value{section}}{\number\value{subsection}}{\number\value{equation}}%
    {\number\value{figure}}{\number\value{table}}%
    {\number\value{theorem}}{\number\value{proof}}{\number\value{footnote}}}%
}
\makeatother

\newif\ifpaperhascites
\DeclareBibliographyCategory{partcited}
\ifSubfilesClassLoaded{%
  \AtEveryCitekey{%
    \global\paperhascitestrue
    \addtocategory{partcited}{\thefield{entrykey}}%
  }%
}{}
\newcommand{\partbibliography}{%
  \ifSubfilesClassLoaded{%
    \ifpaperhascites
      \clearpage
      \printbibliography[heading=bibintoc,title=References,category=partcited]%
    \fi
  }{}%
}

\begin{document}
\ifdefined\NoPrintedContents
\pagenumbering{arabic}
\else
\pagenumbering{roman}
\fi

\maketitle

\begin{abstract}
Several physically interesting Lindbladians, such as those describing generic open quantum system models as well as engineered state preparation processes, are often rapidly mixing, that is, they relax to steady states in time scaling polylogarithmically with the system size. Owing to the short mixing times, it is physically expected that local perturbations do not substantially increase the mixing time. In this work we question this expectation and ask whether rapid mixing survives. We provide counterexamples showing that rapid mixing is not stable in general, even when the perturbation does not change the steady state. We also show that arbitrarily weak local Hamiltonian perturbations can destroy rapid mixing. Complementing these results, we develop an array of sufficient conditions which imply stable rapid mixing, including several classes of quasifree fermionic systems, commuting Lindbladians, and local spin systems. We also identify conditions under which modified logarithmic Sobolev inequalities (MLSIs), a stronger form of convergence, lead to stable rapid mixing. As a further implication of our results, we show that rapid mixing does not imply a positive MLSI constant.
\end{abstract}

\ifdefined\NoPrintedContents
\else
\clearpage
\tableofcontents
\clearpage
\pagenumbering{arabic}
\fi

\paperpart{sections-introduction}{0}
\section{Introduction}\label{sec:introduction}
The evolution of a quantum system is often influenced by its interactions with the environment. When the system is weakly coupled to a large environment, one assumes the environment remains at equilibrium upon interacting with the system. Under the additional bath correlation-time and coarse-graining assumptions of the Markov approximation, the evolution of the system alone follows Markovian dynamics --- the subsequent time evolution only depends on the current state of the system, not on its history --- which are described by the Lindblad master equation \cite{GoriniKossakowskiSudarshan1976,Lindblad1976}.

Markovian quantum dynamics model physical processes such as spontaneous and collective decay in quantum optics \cite{Dicke1954}, as well as thermalization of local quantum systems on a lattice coupled to a reservoir via the Davies generator \cite{Davies1974}. In the context of quantum technologies, they are also used as a common model for noise arising from decoherence, depolarization and many-body noise processes \cite{GardinerZoller2004}. For relaxing dynamics, the system approaches a steady state at long times. While this can be detrimental if depolarizing noise eventually sends the system to the maximally mixed state, if one can tune the couplings with the environment it is possible to change the steady state of the dynamics. Engineered dissipation studies how to control the system-environment couplings such that the steady state is a desired state we want to prepare, turning dissipation into a tool for state preparation \cite{VerstraeteWolfCirac2009,kraus2008preparation}.

A basic question in the settings above is how quickly the dynamics forgets its initial state and approaches its steady-state subspace. This time scale is the mixing time. We call a family rapidly mixing when this time grows at most polylogarithmically in the system size. We separately say that it is \emph{rapidly mixing with constant decay rate} when its trace-norm distance to the steady-state subspace is bounded by a polynomial prefactor times an exponential with a size-independent rate. For local Lindbladians with unique stationary states, this stronger property imposed uniformly on local restrictions yields stability of local observables under small local perturbations \cite{CLMP2015}. Lindbladians satisfying such bounds have been designed for thermal states \cite{RouzeFrancaAlhambra2024b, SmidMeisterBertaBondesan2025rm,KochanowskiAlhambraCapelRouze2024}, free-fermionic ground states \cite{ZhanDingHuhnGrayPreskillChanLin2025}, injective tensor network states \cite{baruah2026dissipative} and matrix product density operators \cite{liu2026parent}.

The spectral gap controls the slowest asymptotic decay rate, but by itself need not give a sharp estimate of the mixing time \cite{SzehrReebWolf2015}. Functional inequalities, including logarithmic Sobolev inequalities and modified logarithmic Sobolev inequalities (MLSIs), give quantitative convergence bounds for classical Markov chains and Glauber dynamics \cite{DiaconisSaloffCoste1996,BobkovTetali2006,StroockZegarlinski1992a,StroockZegarlinski1992b,GuionnetZegarlinski2003,MartinelliOlivieri1994,CaputoParisi2021}. Quantum logarithmic Sobolev inequalities similarly control entropy decay and mixing, including for semigroups with nonunique stationary states \cite{KT2013,CM2017,BR2022,GR22}. Bounds relating global relative entropy to conditional relative entropies on smaller regions provide a route to such estimates \cite{CapelLuciaPerezGarcia2018,BardetCapelRouze2022}. Applications include heat-bath dynamics and Davies generators for commuting quantum spin systems \cite{BardetCapelLuciaPerezGarciaRouze2021,CapelRouzeFranca2021,BCG2024,KochanowskiAlhambraCapelRouze2024}. Oscillator-norm methods originate in classical lattice spin dynamics \cite{StroockZegarlinski1992a,StroockZegarlinski1992b,GuionnetZegarlinski2003}; they have since been used to show ergodicity of infinite-size limits \cite{majewski1995quantum,BertiniDeSolePostaPresilla2025}, as well as to prove rapid mixing of quantum Gibbs samplers in several settings \cite{KB2016,RouzeFrancaAlhambra2024b,SmidMeisterBertaBondesan2025rm,ZhanDingHuhnGrayPreskillChanLin2025}.

Beyond equilibration, local dissipative evolution over polylogarithmic times has recently been used to define equivalence of mixed states. The standard notion of a zero-temperature phase is formulated in terms of ground states of local Hamiltonians connected by a gapped path, and does not extend directly to mixed states arising at nonzero temperature, under noise, or as steady states of driven systems, where neither a parent Hamiltonian nor an energy gap need exist. Local dissipative evolution provides an alternative: generating correlations across the system from initially short-range-correlated states with bounded local interactions requires a time at least proportional to the system diameter \cite{BravyiHastingsVerstraete2006,Poulin2010}, so that two mixed states may be regarded as equivalent when local Lindbladian evolutions transform each into the other in polylogarithmic time \cite{CP2019}. These transformations concern the specified initial states and need not mix rapidly from every initial state. Local evolution on these timescales thus plays a role analogous to that of a non-closing gap for ground states. Related criteria define stable steady-state phases through the response to dynamical perturbations \cite{RGK2023}, or through a Markov length that stays finite along the evolution, allowing that local evolution to be inverted quasi-locally and hence does not leave the phase \cite{SangHsieh2025}. Such definitions also have algorithmic consequences, enabling learning of local expectation values across a phase \cite{ORFW2023,ORFW2023a}.

To use efficient dissipative preparation and dynamical descriptions of mixed-state phases in practice, we need their favorable time scaling to survive local implementation errors. We therefore consider perturbations with uniformly bounded local strength. This robustness question has a well-established Hamiltonian analogue: a gap that remains open along a path of local Hamiltonians permits quasi-adiabatic continuation of the ground-state sector \cite{HastingsWen2005}, while gap-stability theorems establish robustness under local perturbations for classes of Hamiltonians satisfying suitable structural assumptions \cite{BravyiHastingsMichalakis2010}. In open quantum systems, rapid mixing plays a related role: under uniform local mixing hypotheses, it guarantees stability of local observables and correlation functions \cite{CubittLuciaMichalakisPerezGarcia2015,LuciaCubittMichalakisPerezGarcia2015}. Quantitative perturbation bounds for generators and their fixed points are also available \cite{SzehrWolf2013}, but they do not establish stability of rapid mixing under perturbations with uniformly bounded local strength. This question has been studied in the classical setting of Glauber dynamics, a widely studied model of thermalization for spin systems. For the ferromagnetic Ising model in the uniqueness regime away from the critical point, exponential relaxation of local observables and uniqueness of the stationary measure persist under small perturbations of the dynamics, including perturbations that break detailed balance \cite{CrawfordDeRoeck2018}. The stability in the Hamiltonian case, together with the role of rapid mixing in quantum stability results, suggests that short mixing times may themselves be robust. This would allow the favorable time scaling of dissipative preparation to survive local implementation errors, motivating the question:

\begin{center}
\emph{When is rapid mixing stable under a local perturbation of the Lindbladian?}
\end{center}

We show that rapid mixing and locality of the generator alone do not imply such stability, and that it continues to fail under several additional assumptions that one might expect to be sufficient. We establish stability criteria for commuting models, local spin systems, quasifree fermionic dynamics, and generators controlled by functional inequalities.

\noindent \Cref{tab:instability-summary,tab:positive-summary} summarize the negative and positive results. The counterexamples are presented in \cref{sec:counterexamples} and the stability criteria in \cref{sec:dissipative-covers,sec:stability_quasifree,sec:commuting,sec:fast}; the full proofs are collected in the appendices. \Cref{sec:open} discusses open questions.

\partbibliography

\paperpart{sections-preliminaries}{0}
\section{Preliminaries}\label{sec:setting}
Let $\HH$ be a finite-dimensional Hilbert space. We denote by $\BB(\HH)$ the algebra of linear operators on $\HH$ and by $\mathcal D_1(\HH)$ the set of states on $\HH$, i.e., the positive semidefinite operators $\rho\in\BB(\HH)$ satisfying $\Tr(\rho)=1$. We write $A\succeq B$ when $A-B$ is positive semidefinite and $A\succ B$ when it is positive definite. In particular, $\rho\succ0$ means that the density matrix $\rho$ has full rank. For spin systems, $\Lambda$ is a finite set of sites equipped with a metric $d_\Lambda$. Each site $x\in\Lambda$ carries a finite-dimensional Hilbert space $\HH_x\simeq\C^d$, and $\HH_\Lambda=\bigotimes_{x\in\Lambda}\HH_x$. An operator is supported on $X\subseteq\Lambda$ if it belongs to $\BB(\HH_X)\otimes\BI_{X^c}$, where $\HH_X=\bigotimes_{x\in X}\HH_x$. We write $\operatorname{diam}(X):=\max_{x,y\in X}d_\Lambda(x,y)$. A collection of terms is $k$-local if every support $X$ has $|X|\leq k$, and is geometrically local with range $r$ if every support has $\operatorname{diam}(X)\leq r$.

A \emph{Lindbladian} is a generator of the form
\begin{align}
    \label{eq:lindblad} \LL(\rho)=-i[H,\rho]+\sum_j\Diss{L_j}(\rho),\qquad
    \Diss{A}(\rho):=A\rho A^\dagger-\tfrac12\{A^\dagger A,\rho\}.
\end{align}
Here $H=H^\dagger$, and the jumps can be chosen traceless after a compensating change of $H$ \cite{GoriniKossakowskiSudarshan1976,Lindblad1976}. We write $^*$ for the Hilbert--Schmidt adjoint of a superoperator and retain $^\dagger$ for the conjugate transpose of an operator. Thus the Heisenberg-picture generator is
\begin{align}
    \label{eq:lindblad_heis} \LL^*(O)=i[H,O]+\sum_j\mathcal D^*[L_j](O),\qquad
    \mathcal D^*[A](O):=A^\dagger O A-\tfrac12\{A^\dagger A,O\}.
\end{align}
For fermionic systems, $\Lambda$ labels fermionic modes with creation and annihilation operators $a_x^\dagger,a_x$ satisfying the canonical anticommutation relations $\{a_x,a_y\}=0$ and $\{a_x,a_y^\dagger\}=\delta_{xy}\BI$. The local algebra on $X\subseteq\Lambda$ is the CAR algebra $\mathcal A_X$ generated by $a_x,a_x^\dagger$ for $x\in X$, and geometric locality refers to the same metric $d_\Lambda$ on the mode labels. Let $\Pi:=(-1)^{\sum_x a_x^\dagger a_x}$ be the parity operator. A physical density matrix is even, meaning that $[\rho,\Pi]=0$, or equivalently that $\rho$ is block diagonal in the two parity sectors. A Lindbladian preserves this block-parity structure whenever it is parity covariant,
\begin{align}
    \LL(\Pi\rho\Pi)=\Pi\LL(\rho)\Pi.
\end{align}
A sufficient condition is that $[H,\Pi]=0$ and every jump operator has definite parity, $\Pi L_j\Pi=\pm L_j$. Odd jump operators can transfer population between the two parity sectors, so parity covariance does not in general imply conservation of the parity expectation $\Tr(\Pi\rho)$.

Every finite-dimensional Lindbladian has at least one stationary density matrix, and its spectrum lies in the closed left half-plane \cite{Lindblad1976}. If the stationary state is unique, we denote it by $\sigma_\LL$: it is globally attractive, so $0$ is the only eigenvalue on the imaginary axis \cite{schirmer2010stabilizing}. More generally, we say that $\LL$ has a \emph{well-defined infinite-time limit} when it has no nonzero purely imaginary eigenvalues, $\operatorname{spec}(\LL)\cap i\mathbb R=\{0\}$. This is equivalent to the existence of the pointwise limit
\begin{align}
    P_\LL&:=\lim_{t\to\infty}e^{t\LL},
\end{align}
which is the projection onto the steady-state subspace $\ker\LL=\{X\in\BB(\HH):\LL(X)=0\}$. We prove this equivalence in \cref{lem:infinite-time-limit} for completeness. Since it is a limit of quantum channels, $P_\LL$ maps density matrices to density matrices. When the stationary state is not unique, convergence to $P_\LL(\rho)$ concerns the entire steady-state subspace and does not assert mixing between distinct stationary states. Its adjoint $P_\LL^*=\lim_{t\to\infty}e^{t\LL^*}$ projects observables onto the subspace $\ker\LL^*$ of conserved quantities. When the stationary state is unique, these projections take the simpler form
\begin{align}
    P_\LL(X)&=\Tr(X)\sigma_\LL, & P_\LL^*(O)&=\Tr(\sigma_\LL O)\BI.
\end{align}

Here, for an operator $X$, we use $\lVert X\rVert_1$ for the trace norm, $\lVert X\rVert$ for the operator norm, and $\lVert X\rVert_2$ for the Hilbert--Schmidt norm. For a linear map $\Phi$, we write $\lVert\Phi\rVert_{p\to p}:=\sup_{X\ne0}\lVert\Phi(X)\rVert_p/\lVert X\rVert_p$ and use
\begin{align*}
    \lVert\Phi\rVert_\diamond&:=\sup_{k\geq1}\lVert\Phi\otimes\operatorname{id}_k\rVert_{1\to1},
    &\lVert\Phi\rVert_{\operatorname{cb}}&:=\sup_{k\geq1}\lVert\Phi\otimes\operatorname{id}_k\rVert_{\infty\to\infty}
\end{align*}
for the diamond norm and the completely bounded norm, respectively \cite{Watrous2018}.

\begin{definition}[Mixing time]
The \emph{mixing time} of a Lindbladian $\mathcal{L}$ with a well-defined infinite-time limit $P_\LL$ for $\varepsilon>0$ is
\begin{align*}
    \tau_{\operatorname{mix}}^{\LL}(\varepsilon):=\inf\Bigl\{t>0:\sup_{\rho\in\SSS(\HH)} \|e^{t\LL}(\rho)-P_\LL(\rho)\|_1<\varepsilon\Bigr\}.
\end{align*}
\end{definition}

Writing $d_\LL(t):=\sup_{\rho\in\SSS(\HH)}\lVert e^{t\LL}(\rho)-P_\LL(\rho)\rVert_1$, trace-norm contractivity and the semigroup property give, for $0<\varepsilon<1/2$,
\begin{align*}
    d_\LL(t+s)\leq d_\LL(t)d_\LL(s)\quad\Longrightarrow\quad \tau_{\operatorname{mix}}^\LL(\varepsilon)\leq\left\lceil\log_2(\varepsilon^{-1})\right\rceil\tau_{\operatorname{mix}}^\LL(1/2).
\end{align*}
It is thus customary to consider mixing to a fixed accuracy: we write $\tau_{\operatorname{mix}}^\LL:=\tau_{\operatorname{mix}}^\LL(1/2)$. Following the infinite-lattice convention of \cite[Definitions~3.3--3.4]{CubittLuciaMichalakisPerezGarcia2015}, a family of Lindbladians is specified by a fixed generator on the infinite lattice, whose restrictions to finite regions are equipped with prescribed boundary conditions. The bulk interaction terms are fixed independently of the region; boundary terms may depend on the region, subject to uniform locality and strength bounds. Open and periodic boundary conditions are included.

\begin{definition}[Rapid mixing]
For a family $\faml{\LL}$ with well-defined infinite-time limits, we say that the family is \emph{rapidly mixing} if there are constants $a,q>0$, independent of $N$, such that
\begin{align}
    \label{eq:rapid-mixing-time} \tau_{\operatorname{mix}}^{\LL_N}\leq a\left[\log(eN)\right]^q.
\end{align}
\end{definition}

If instead $\tau_{\operatorname{mix}}^{\LL_N}\leq aN^q$ for constants $a,q>0$ independent of $N$, we call the family \emph{fast mixing}.

\begin{definition}[Rapid mixing with constant decay rate]
For a family $\faml{\LL}$ with well-defined infinite-time limits, we say that the family is \emph{rapidly mixing with constant decay rate} if there are constants $\gamma>0$ and $c\geq0$, independent of $N$, and a prefactor $C_N\geq1$ satisfying $C_N=O(N^c)$ such that
\begin{align}
    \label{eq:sufficient} \sup_{\rho\in\SSS(\HH)}\|e^{t\LL_N}(\rho)-P_{\LL_N}(\rho)\|_1\leq C_N e^{-\gamma t},\qquad t\geq0.
\end{align}
\end{definition}
If instead \cref{eq:sufficient} holds with $C_N=O(e^{cN})$ for constants $c,\gamma>0$ independent of $N$, we call the family \emph{fast mixing with constant decay rate}. For brevity, when the family is clear we often suppress the subscript $N$ and say that $\LL$ is rapidly mixing. Rapid mixing with constant decay rate implies rapid mixing, since \cref{eq:sufficient} gives
\begin{align}
    \tau_{\operatorname{mix}}^{\LL_N}\leq\gamma^{-1}\log(2C_N)=O(\log N).
\end{align}
The converse need not hold: the decay rate may close as an inverse power of $\log N$ while the mixing time remains polylogarithmic. Given a Lindbladian $\LL$, its spectral gap
\begin{align}
    \operatorname{gap}(\LL):=\min\{-\operatorname{Re}z:z\in\operatorname{spec}(\LL),\ z\neq0\}
\end{align}
provides a general lower bound on the mixing time. Indeed, the spectral-radius bound for $e^{t\LL}-P_\LL$ gives \cite[Theorem~III.2]{SzehrReebWolf2015}
\begin{align}
    d_\LL(t)\geq e^{-t\operatorname{gap}(\LL)},\qquad
    \tau_{\operatorname{mix}}^\LL(\varepsilon)\geq\frac{\log(\varepsilon^{-1})}{\operatorname{gap}(\LL)},\qquad 0<\varepsilon<1,
    \label{eq:gap-mixing-lower-bound}
\end{align}
Consequently, a gap closing as an inverse power of $N$ rules out rapid mixing.

We will also consider rapid mixing uniformly for restrictions to lattice balls, following the local restrictions in \cite[Eq.~(11)]{CLMP2015}. Let $b_x(R)$ denote the ball of lattice sites of radius $R$ centered at $x$. For $A\subseteq\Lambda$, let $\mathcal{L}_A$ be obtained by retaining the Hamiltonian terms and jump operators supported entirely on $A$, acting on $\HH_A=(\C^d)^{\otimes |A|}$.

\begin{definition}[Uniform rapid mixing]
For a family $\faml{\LL}$ of Lindbladians, we say that the family is \emph{uniformly rapidly mixing} if there exist constants $a,q,R_0>0$ such that, for every lattice $\Lambda$ in the family, every $x\in\Lambda$, and every $R\geq R_0$, the restriction to $A=b_x(R)$ has a well-defined infinite-time limit and
\begin{align}
    \tau_{\operatorname{mix}}^{\LL_A}\leq a\left[\log(e|A|)\right]^q,
\end{align}
where the constants are independent of $\Lambda,x$, and $R$. We say that the family is \emph{uniformly rapidly mixing with constant decay rate} if there are constants $C,\gamma>0$ and $p\geq0$, uniform in the same parameters, such that
\begin{align}
    \label{eq:uniform-rapid-mixing} \sup_{\rho\in\SSS(\HH_A)}\|e^{t\LL_A}(\rho)-P_{\LL_A}(\rho)\|_1\le C|A|^p e^{-\gamma t},\qquad t\geq0,
\end{align}
\end{definition}

Whenever the family contains the full lattice as one of the permitted restrictions, uniform rapid mixing implies rapid mixing of the global family.

Functional inequalities provide rapid-mixing bounds for several classes of Lindbladians; here we focus on modified logarithmic Sobolev inequalities \cite{KT2013,Bardet2017,BR2022}. We use natural logarithms and write $D(\rho\|\sigma):=\Tr[\rho(\log\rho-\log\sigma)]$ for quantum relative entropy when $\operatorname{supp}\rho\subseteq\operatorname{supp}\sigma$, with $0\log0=0$; otherwise $D(\rho\|\sigma)=+\infty$.

\begin{definition}[MLSI]
Let $\sigma\succ0$ be a stationary state of $\LL$. We say that $\LL$ satisfies a \emph{modified logarithmic Sobolev inequality} (MLSI) with constant $\alpha>0$ if, for every density matrix $\rho\in\SSS(\HH)$,
\begin{align}
    \label{eq:MLSI} 2\alpha\,D(\rho\|\sigma)\le \operatorname{EP}_\LL(\rho).
\end{align}
Here $\operatorname{EP}_\LL$, called the entropy production, is given by
\begin{align}
    \operatorname{EP}_\LL(\rho):=-\Tr[\LL(\rho)(\log\rho-\log\sigma)].
\end{align}
The optimal constant is denoted by $\alpha(\LL)$ and is called the modified log-Sobolev constant.
\end{definition}

MLSI gives useful bounds on the mixing time when the smallest singular value of the full-rank stationary state is controlled. Suppose that the smallest singular value satisfies $(\sigma_{\LL_N})_{\min}\geq e^{-cN^q}$ for size-independent constants $c,q>0$. Then the following estimates hold for every density matrix $\rho$ \cite{KT2013}.
\begin{align*}
    \tfrac12\|e^{t\LL_N}(\rho)-\sigma_{\LL_N}\|_1^2
    &\overset{(1)}{\le} D(e^{t\LL_N}(\rho)\|\sigma_{\LL_N})
    \overset{(2)}{\le} e^{-2\alpha(\LL_N)t}D(\rho\|\sigma_{\LL_N}),\\
    &\overset{(3)}{\le} e^{-2\alpha(\LL_N)t}\log(\sigma_{\LL_N})_{\min}^{-1}
    \overset{(4)}{\le} cN^q e^{-2\alpha(\LL_N)t},
\end{align*}
where $(1)$ uses Pinsker's inequality, $(2)$ uses the MLSI together with Gr\"onwall's inequality, $(3)$ uses $D(\rho\|\sigma)=-S(\rho)-\Tr(\rho\log\sigma)\leq\log\lambda_{\min}(\sigma)^{-1}$, since $S(\rho)\geq0$, and $(4)$ uses the assumed lower bound on $(\sigma_{\LL_N})_{\min}$. Thus
\begin{align}
    \lVert e^{t\LL_N}(\rho)-\sigma_{\LL_N}\rVert_1&\leq\sqrt{2c}\,N^{q/2}e^{-\alpha(\LL_N)t}
    \qquad\text{and}\qquad
    \tau_{\operatorname{mix}}^{\LL_N}(\varepsilon)\leq\frac{1}{\alpha(\LL_N)}\left(\frac{q}{2}\log N+\frac12\log(2c)+\log(\varepsilon^{-1})\right).
\end{align}
Consequently, $\alpha(\LL_N)=\Omega((\log N)^{-r})$ for any fixed $r\geq0$ implies rapid mixing; the case $r=0$ gives rapid mixing with constant decay rate. Establishing useful MLSI lower bounds is, however, hard in general. The inequality is known for specific structured classes: tensor products of depolarizing noise \cite{KastoryanoTemme2013}, one-dimensional heat-bath dynamics under mixing and entropy-factorization assumptions \cite{BardetCapelLuciaPerezGarciaRouze2021}, Davies dynamics for finite-range commuting spin chains \cite{KochanowskiAlhambraCapelRouze2024}, and commuting nearest-neighbour models at high temperature \cite{CapelRouzeFranca2021}.

\partbibliography

\paperpart{sections-instability_results}{0}
\section{Instability results}\label{sec:instability_results}\label{sec:counterexamples}

In this paper we study the following question: if $\LL$ is rapidly mixing and $\KK$ is another Lindbladian, when is $\LL+\KK$ rapidly mixing? A straightforward perturbation analysis yields the following stability result.
\begin{lemma}[Mixing under a small contractive perturbation]
\label{lem:perturbation_fixed_point}\resulttoc{lem:perturbation_fixed_point}{Mixing under a small contractive perturbation}
Let $\mathcal L_0$ be a Lindbladian with a unique stationary state $\sigma_0$ and a well-defined infinite-time limit. Set $T:=\tau_{\operatorname{mix}}^{\mathcal L_0}(1/2)$, and let $V$ be a Hermiticity-preserving, trace-annihilating linear map.
Define the restricted induced trace norm on traceless Hermitian operators by
\begin{align*}
    \|V\|_{1\to1,0}:=\sup_{\substack{X=X^\dagger,\ \Tr X=0\\X\neq0}}
    \frac{\|V(X)\|_1}{\|X\|_1}.
\end{align*}
For $\varepsilon\geq0$, set $\mathcal L_\varepsilon:=\mathcal L_0+\varepsilon V$ and suppose that $e^{t\mathcal L_\varepsilon}$ is a trace-norm contraction on Hermitian operators for every $t\geq0$. If
\begin{align}
    \varepsilon \|V\|_{1\to1,0}\leq\frac{1}{4T}, \label{eq:generic-perturbation-threshold}
\end{align}
then $\mathcal L_\varepsilon$ has a unique stationary state $\sigma_\varepsilon$ and, for all states $\rho$ and $t\geq0$,
\begin{align}
    \lVert e^{t\mathcal L_\varepsilon}(\rho)-\sigma_\varepsilon\rVert_1
    \leq\frac43e^{-\gamma t}\lVert\rho-\sigma_\varepsilon\rVert_1,
    \qquad \gamma:=\frac{\log(4/3)}{T}. \label{eq:generic-perturbation-bound}
\end{align}
Mixing times $T=O(\operatorname{polylog}N)$ imply rapid mixing; $T=O(1)$ gives constant decay rate.
\end{lemma}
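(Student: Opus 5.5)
Write $\mathcal T_t:=e^{t\mathcal L_0}$ and $\mathcal S_t:=e^{t\mathcal L_\varepsilon}$. We work on the real space $\mathcal V_0$ of traceless Hermitian operators, with the trace norm. Both semigroups leave $\mathcal V_0$ invariant, since $\mathcal L_0$ and $V$ preserve Hermiticity and annihilate the trace. On $\mathcal V_0$ the unperturbed dynamics contracts after time $T$. For any traceless Hermitian $X$, write $X=\lambda(\rho_+-\rho_-)$ with $\lambda=\|X\|_1/2$ and $\rho_\pm$ states. Uniqueness of $\sigma_0$ gives $P_{\mathcal L_0}(X)=0$. Hence
\begin{align*}
\|\mathcal T_T X\|_1\le \lambda\bigl(\|\mathcal T_T\rho_+-\sigma_0\|_1+\|\mathcal T_T\rho_--\sigma_0\|_1\bigr)\le \lambda\cdot 2\cdot\tfrac12=\tfrac12\|X\|_1 ,
\end{align*}
where we use $d_{\mathcal L_0}(T)\le 1/2$ (taking the mixing time as an infimum, possibly with a limiting argument). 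So $\|\mathcal T_T\|_{1\to1,0}\le 1/2$. Trace-norm contractivity of $\mathcal T_t$ on $\mathcal V_0$ also gives $\|\mathcal T_t\|_{1\to1,0}\le1$ for all $t$.

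Next we compare $\mathcal S_T$ with $\mathcal T_T$ using Duhamel's formula
\begin{align*}
\mathcal S_T-\mathcal T_T=\varepsilon\int_0^T \mathcal T_{T-s}\,V\,\mathcal S_s\,ds .
\end{align*}
Every factor preserves $\mathcal V_0$. By hypothesis $\mathcal S_s$ is a trace-norm contraction on Hermitian operators, and $\mathcal T_{T-s}$ is as well. Therefore $\|\mathcal S_T-\mathcal T_T\|_{1\to1,0}\le \varepsilon T\|V\|_{1\to1,0}\le 1/4$ by \cref{eq:generic-perturbation-threshold}. Combining with the first step gives $\|\mathcal S_T\|_{1\to1,0}\le 3/4$. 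This is the key estimate, and the contractivity hypothesis on $e^{t\mathcal L_\varepsilon}$ is exactly what makes the Duhamel integrand uniformly bounded. Without it, $\|\mathcal S_s\|$ could grow, and this is the step I would expect to need care.

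For uniqueness and existence, $\mathcal L_\varepsilon$ generates a trace-preserving semigroup, since $V$ is trace-annihilating. Its fixed points: if $\sigma,\sigma'$ are both stationary states, their difference lies in $\mathcal V_0$ and is fixed by $\mathcal S_T$. Since $\|\mathcal S_T\|_{1\to1,0}<1$, the difference is zero. Existence needs slightly more, because positivity of $\mathcal S_t$ is not assumed. I would take any state $\rho$ and note that $\mathcal S_{nT}(\rho)-\mathcal S_{(n+1)T}(\rho)=\mathcal S_{nT}(\rho-\mathcal S_T\rho)$ decays geometrically in trace norm. So $\mathcal S_{nT}\rho$ is Cauchy and converges to some $\sigma_\varepsilon$ that is Hermitian, has unit trace and is fixed by $\mathcal S_T$. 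Averaging, $\int_0^T\mathcal S_s\sigma_\varepsilon\,ds/T$ is fixed by every $\mathcal S_t$, hence lies in $\ker\mathcal L_\varepsilon$. By uniqueness among trace-one Hermitian fixed points of $\mathcal S_T$ this equals $\sigma_\varepsilon$. Positivity of $\sigma_\varepsilon$ holds whenever $\mathcal L_\varepsilon$ is a Lindbladian, which is the intended setting, and is otherwise only needed as a trace-one Hermitian stationary operator.

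For the decay bound, take $t=nT+r$ with $0\le r<T$. Since $\rho-\sigma_\varepsilon\in\mathcal V_0$,
\begin{align*}
\|\mathcal S_t\rho-\sigma_\varepsilon\|_1=\|\mathcal S_r\mathcal S_T^n(\rho-\sigma_\varepsilon)\|_1\le (3/4)^n\|\rho-\sigma_\varepsilon\|_1 .
\end{align*}
Here $(3/4)^n=e^{-\gamma nT}\le e^{-\gamma(t-T)}=\tfrac43e^{-\gamma t}$, with $\gamma=\log(4/3)/T$, which is \cref{eq:generic-perturbation-bound}.

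For the final claim, $\|\rho-\sigma_\varepsilon\|_1\le 2$, so the distance to $P_{\mathcal L_\varepsilon}(\rho)=\sigma_\varepsilon$ is at most $\tfrac83e^{-\gamma t}$. This gives $\tau^{\mathcal L_\varepsilon}_{\mathrm{mix}}=O(T)$, hence polylogarithmic when $T$ is. When $T=O(1)$, the rate $\gamma$ is size-independent and the prefactor is constant, which is rapid mixing with constant decay rate.
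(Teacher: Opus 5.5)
Your argument is correct and follows the paper's proof essentially step for step: the one-step bound $\|e^{T\mathcal L_0}X\|_1\le\frac12\|X\|_1$ via positive and negative parts, Duhamel at the single time $T$ to reach $3/4$, iteration, and uniqueness from strict contraction on traceless Hermitian operators. Your Cauchy-sequence construction of $\sigma_\varepsilon$ replaces the paper's appeal to existence of a fixed state for a state-preserving semigroup, and the positivity you hedge on needs no Lindblad assumption: a Hermitian trace-one operator with trace norm at most one is positive, so the contractivity hypothesis already makes each $\mathcal S_{nT}\rho$ a state, and hence its limit $\sigma_\varepsilon$ is a state, which is exactly how the paper argues.
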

The proof is given in \cref{proof:perturbation_fixed_point}. For a local perturbation with uniformly bounded interaction strength, $\|V\|_{1\to1,0}=O(N)$. If the unperturbed family is rapidly mixing, the lemma therefore guarantees stability for perturbations whose local strength decreases as an inverse polynomial in $N$. We now study perturbations of constant local strength and give counterexamples under progressively stronger hypotheses.

Let us first consider the case that $\LL$ and $\KK$ are rapidly mixing. If each can rapidly erase information on its own, it is natural to expect their combined dissipative evolution to do the same: adding relaxation channels should only improve convergence. It makes a positive answer to the following question seem natural:
\begin{center}
\emph{$(\star)$ If $\LL$ and $\KK$ are rapidly mixing Lindbladians then is $\LL+\KK$ rapidly mixing?}
\end{center}

We show that $(\star)$ is false even after imposing several hypotheses that would normally be expected to prevent competing long-time behavior. We begin with the most direct setting: both summands are primitive, have the maximally mixed state as their unique stationary state, and mix rapidly with constant decay rate.

\begin{lemma}[A common full-rank fixed point does not make rapid mixing additive]
\label{lem:XZeps}\resulttoc{lem:XZeps}{A common full-rank fixed point does not make rapid mixing additive}
There exist two local, unital, primitive Lindbladian families $(\LL_N)_N$ and $(\KK_N)_N$ that are both rapidly mixing with constant decay rate and have the same unique fixed point $\sigma_N=2^{-N}\BI$, whereas $\LL_N+\KK_N$ is primitive with fixed point $\sigma_N$ but satisfies
\begin{align}
    \tau_{\operatorname{mix}}^{\LL_N+\KK_N}=\Omega(N^2).
\end{align}
In particular, the family $(\LL_N+\KK_N)_N$ is not rapidly mixing.
\end{lemma}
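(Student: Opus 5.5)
The plan is to build both families from the same strongly dephasing single-site dynamics and let their Hamiltonian parts cancel in the sum. What survives in $\LL_N+\KK_N$ is a self-adjoint (detailed-balance with respect to $2^{-N}\BI$) classical dynamics on the diagonal with an almost conserved, diffusing quantity. Concretely, on a chain of $N$ qubits I would take
\begin{align*}
    \LL_N&=-i[H_X,\cdot\,]+\sum_{j}\Diss{Z_j}+\eps\,\mathcal J_N, &
    \KK_N&=+i[H_X,\cdot\,]+\sum_{j}\Diss{Z_j}+\eps\,\mathcal J_N,
\end{align*}
with $H_X=h\sum_j X_j$. The nonlocal-free term $\mathcal J_N$ consists of nearest-neighbour hopping dissipators $\Diss{h_{j,j+1}}$ with Hermitian jumps $h_{j,j+1}=\tfrac12(X_jX_{j+1}+Y_jY_{j+1})$, together with a single boundary flip $\Diss{X_1}$. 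All jump operators are Hermitian, so both generators are unital and $2^{-N}\BI$ is stationary. The sum is
\begin{align*}
    \LL_N+\KK_N=2\sum_j\Diss{Z_j}+2\eps\,\mathcal J_N,
\end{align*}
which is a Hilbert--Schmidt self-adjoint, negative semidefinite generator.

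\textbf{Step 1 (each summand mixes rapidly with constant decay rate).} For $\eps=0$, $\LL_N$ is a tensor product of identical single-qubit generators $-i[hX,\cdot\,]+\Diss{Z}$. This single-qubit generator is primitive with a gap $g(h)>0$: the field rotates $Z$ into $Y$, and $Y$ is damped by the dephasing. A telescoping argument then gives the bound $O(N)e^{-g t}$. For $\eps>0$ I would run an oscillator-norm, Dobrushin-type argument in the Heisenberg picture. Every local observable is damped at rate $g$ by its own site, while the coupling $\eps\mathcal J_N$ can only move weight between neighbouring sites with strength $O(\eps)$. For a small constant $\eps<\eps_0(h)$, this yields a bound on the total oscillation of the form $\osc{e^{t\LL_N^*}O}\le e^{-(g-c\eps)t}\osc{O}$. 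That bound converts to trace-norm mixing with prefactor $O(N)$ and rate $\gamma=g-c\eps>0$. The same argument applies verbatim to $\KK_N$, since the sign of $H_X$ is irrelevant. I expect this step to be the main obstacle: the oscillator-norm bookkeeping for the non-normal generator has to be set up carefully so that the constants are genuinely uniform in $N$.

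\textbf{Step 2 (primitivity of the sum).} On off-diagonal matrix units, $2\sum_j\Diss{Z_j}$ is strictly negative, so every stationary state is diagonal. On the diagonal, the sum acts as a classical Markov generator: symmetric exclusion, which conserves magnetization, plus a spin flip at site $1$. This chain is irreducible, so the kernel is spanned by $\BI$, and self-adjointness gives no purely imaginary spectrum. Hence $\LL_N+\KK_N$ is primitive with fixed point $\sigma_N$.

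\textbf{Step 3 (slow mode).} Since $\LL_N+\KK_N$ is self-adjoint, its gap is bounded above by the Rayleigh quotient of any traceless test observable. I would take the diagonal observable
\begin{align*}
    F=\sum_j \cos\!\Big(\frac{\pi(j-1)}{2N}\Big)Z_j,
\end{align*}
which is orthogonal to $\BI$ and has $\lVert F\rVert_2^2\asymp N\,2^N$. The dephasing terms annihilate $F$. Each hopping term contributes $O\big((f_j-f_{j+1})^2\big)2^N=O(N^{-2})2^N$, and these sum to $O(N^{-1})2^N$. The boundary flip contributes $O(f_1^2)\cdot 2^N$, which naively is $O(1)\cdot 2^N$ and would spoil the bound. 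To fix this I would use the profile with $f_1$ vanishing at the flip site, namely $f_j=\sin(\pi j/(2N+1))$, or equivalently place the flip at the end where $f$ vanishes. The Rayleigh quotient is then $O(\eps N^{-2})$, so $\operatorname{gap}(\LL_N+\KK_N)=O(N^{-2})$. Finally, \cref{eq:gap-mixing-lower-bound} with $\eps=1/2$ gives $\tau_{\operatorname{mix}}^{\LL_N+\KK_N}=\Omega(N^2)$.
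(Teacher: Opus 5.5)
Your proof is correct, but it takes a different route from the paper.

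\textbf{The paper's route.} The paper keeps everything in product form. It gives $\KK$ site-dependent fields $(\eps_j-1)X_j$ with $\eps_j=1/(j+1)$, so each summand is a tensor product of single-site generators with $|h|\in[1/2,1]$. Rapid mixing then follows from one uniform single-site estimate and telescoping. The sum $\sum_j\bigl(-i[\eps_jX_j,\cdot\,]+2\Diss{Z_j}\bigr)$ has an exactly computable gap $2\bigl(1-\sqrt{1-\eps_N^2}\bigr)=\Theta(N^{-2})$. This is the Zeno-suppressed rate at the last site, and it comes with an explicit product witness state.

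\textbf{Your route.} You cancel the fields exactly and take the slow mode from the nearly conserved, diffusing magnetization of symmetric exclusion. Because all your jumps are Hermitian, the sum is self-adjoint, so a Rayleigh quotient bounds the gap. Your ingredients check out: the quadratic-form computations ($(f_j-f_{j+1})^2\,2^N$ per bond and $2f_1^2\,2^N$ for the flip), the sine profile giving $O(\eps N^{-2})$, and the irreducibility argument for primitivity. Note that the accuracy parameter in \cref{eq:gap-mixing-lower-bound} clashes with your coupling $\eps$.

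\textbf{The cost: Step 1.} Your summands are interacting, so their rapid mixing is only perturbative, for $\eps$ below a size-independent threshold. That step is exactly \cref{cor:commuting-strong-bulk-dissipation}, applied with:
\begin{itemize}
\item the singleton cover by the commuting terms $-i[\pm hX_j,\cdot\,]+\Diss{Z_j}$, so $a=\mathfrak d=1$ and $\kappa=0$;
\item the 2-local perturbation $\eps\mathcal J_N$.
\end{itemize}
So it is not a genuine obstacle. The one correction is that the bound cannot have prefactor one, as you wrote. For the single-site Heisenberg generator, $Z\mapsto Z+2htY+O(t^2)$, so $\lVert e^{t\mathcal G^*}Z\rVert>1$ for small $t>0$. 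This is the same non-normality behind \cref{lemma:rm_MLSI}. The paper's staircase-kernel Volterra argument supplies the needed constant prefactor, which is harmless.

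\textbf{What each route buys.} The paper's route is fully elementary and exact. Yours gives a construction whose bulk terms are translation invariant with fixed coefficients, with only a boundary flip. This avoids the decreasing $\eps_j$ that the paper itself flags as fine-tuned and handles separately with the fermionic construction of \cref{lem:fermionic-addition}. The same cover argument applied to ball restrictions would also make your summands uniformly rapidly mixing.
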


\begin{example}[Spin-chain construction]\label{ex:XZeps}
On a spin chain of $N$ sites, write $\LL=\LL_N$ and $\KK=\KK_N$ for the local generators of the families $(\LL_N)_N$ and $(\KK_N)_N$, with jump operator $Z_j$ at every site $j$ and Hamiltonians $X_j$ and $(\eps_j-1)X_j$, respectively:
\begin{align}
    \LL(\rho)=\sum_j\bigl(-i[X_j,\rho]+\Diss{Z_j}(\rho)\bigr),\qquad \KK(\rho)=\sum_j\bigl(-i[(\eps_j-1)X_j,\rho]+\Diss{Z_j}(\rho)\bigr),
\end{align}
where $\eps_j=1/(j+1)$. Both generators have the unique full-rank fixed point $\sigma=(\BI/2)^{\otimes N}$, and both families are rapidly mixing. For every finite $N$, the sum $\LL+\KK$ is primitive and gapped, but its gap is $\Theta(N^{-2})$ and its mixing time is $\Omega(N^2)$, so $(\LL_N+\KK_N)_N$ is not rapidly mixing.
\end{example}
The proof is given in \cref{proof:XZeps}. The construction proving \cref{lem:XZeps} identifies a first mechanism: addition cancels the order-one Hamiltonian rotations that make the individual generators relax rapidly, exposing a mode whose rate vanishes with the system size. One might regard the example as nongeneric because the residual coefficients $\eps_j$ decrease along the chain. Translation invariance and fixed local coefficients would appear to rule out precisely this fine tuning, while uniform rapid mixing of all restrictions suggests that no slow mode is hidden in a finite region. The next result shows that these assumptions are still insufficient.

\begin{lemma}[Translation invariance and fixed local coefficients do not make rapid mixing additive]
\label{lem:fermionic-addition}\resulttoc{lem:fermionic-addition}{Translation invariance and fixed local coefficients do not make rapid mixing additive}
There exist two finite-range, translation-invariant fermionic Lindbladian families $(\LL_N^+)_N$ and $(\LL_N^-)_N$, indexed by odd $N\geq5$ and with local coefficients independent of $N$, such that both families are uniformly rapidly mixing with constant decay rate and have the same unique full-rank fixed point $\sigma_N=4^{-N}\BI$. The sum family $(\mathcal S_N)_N$, with $\mathcal S_N=\LL_N^++\LL_N^-$, is primitive with the same fixed point at each size, but for every $0<\varepsilon<1$,
\begin{align}
    \tau_{\operatorname{mix}}^{\mathcal S_N}(\varepsilon)
    \geq \frac{N^2}{2\pi^2}\log(\varepsilon^{-1}).
\end{align}
The lower bound persists when the supremum in the mixing time is restricted to parity-even states.
\end{lemma}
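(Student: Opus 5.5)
The plan is to build both families as quasifree (quadratic) fermionic Lindbladians with Hermitian, linear Majorana jump operators, so that $4^{-N}\BI$ is automatically stationary. We then reduce all mixing questions to the real damping matrix $X$ that drives the Majorana two-point function, in the spirit of third quantization. Each unit cell carries two fermionic modes, i.e.\ four Majoranas; this accounts for the dimension $4^N$. We split these Majoranas into a ``fast'' species $a_j$ and a ``slow'' species $b_j$:
\begin{itemize}
\item the $a$-species receives strong on-site dephasing-type dissipation;
\item the $b$-species is damped only through gradient jumps $L_j\propto b_j-b_{j+1}$, whose momentum-space rate is $|1-e^{ik}|^2=2(1-\cos k)$;
\item $\LL_N^{\pm}$ share this dissipator and add a translation-invariant quadratic Hamiltonian coupling $\pm i\,a_jb_j$, plus if needed a nearest-neighbour hopping.
\end{itemize}
Individually, the coupling rotates every $b$-mode into the fast $a$-sector, so each $\LL^\pm$ damps all modes at an $O(1)$ rate. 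In the sum $\mathcal S_N$ the couplings cancel exactly, as in \cref{ex:XZeps}, leaving the $b$-sector with rates $2(1-\cos k)$. Odd $N$ with periodic boundary conditions is used so that the momenta $k=2\pi m/N$ avoid accidental degeneracies, e.g.\ at $k=\pi$. Primitivity of $\mathcal S_N$ then follows because $X$ has trivial kernel apart from $k=0$, and the $k=0$ $b$-mode is made non-conserved by a small on-site term common to both families. Stationarity of $4^{-N}\BI$ holds because all jumps are Hermitian.

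The lower bound comes in three steps.
\begin{enumerate}
\item Block-diagonalise $X$ by Fourier transform. This exhibits an eigenvalue of $X$, associated with the slowest nonzero $b$-momentum $k=2\pi/N$, whose rate is at most $2(1-\cos(2\pi/N))\cdot\tfrac12\le 2\pi^2/N^2$. The normalisation of the gradient jumps is fixed so that this constant comes out exactly.
\item Quadratic Majorana bilinears $b_k^\dagger b_{k'}$ are eigenoperators of $\mathcal S_N^*$ with eigenvalues given by sums of $X$-eigenvalues, so $\operatorname{gap}(\mathcal S_N)\le 2\pi^2/N^2$. Applying \cref{eq:gap-mixing-lower-bound} then gives $\tau_{\operatorname{mix}}^{\mathcal S_N}(\varepsilon)\ge \frac{N^2}{2\pi^2}\log(\varepsilon^{-1})$.
\item For the parity-even statement, one must rerun the spectral-radius argument inside the even subspace, since the slow eigenoperator is quadratic and hence parity-even. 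Concretely, take $\rho=4^{-N}(\BI+\delta\,Q)$ with $Q$ the Hermitian part of the slow bilinear and $\delta$ small. This is an even state, and $\|e^{t\mathcal S_N}(\rho)-\sigma_N\|_1$ decays no faster than $e^{-t\cdot 2\pi^2/N^2}$ relative to its initial value. Restricting the argument of \cref{eq:gap-mixing-lower-bound} to the invariant even sector yields the claim.
\end{enumerate}

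The upper bounds for $\LL^\pm$ require uniform rapid mixing with constant decay rate on every ball restriction, i.e.\ on open segments of length $|A|$. For quasifree dynamics,
\[
\|e^{t\LL}(\rho)-\sigma\|_1\lesssim \mathrm{poly}(|A|)\,\|e^{-tX_A}\|.
\]
This holds because the state is Gaussian-driven: higher correlation functions decay via products of $e^{-tX}$, and the trace distance can be bounded by a sum over Majorana monomials, or via relative entropy to $4^{-|A|}\BI$, which costs at most $|A|\log 4$. So it suffices to show $\|e^{-tX_A}\|\le C e^{-\gamma t}$ uniformly in $A$.

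The main obstacle is that $X_A$ is non-normal. Its symmetric part is only positive semidefinite, because the $b$-sector has zero symmetric damping at small momenta, and open boundaries break the Fourier diagonalisation. I would try first to find an explicit Lyapunov matrix $P=\BI+\kappa\,M$, with $M$ a local antisymmetric mixing of $a_j$ and $b_j$ and $\kappa$ small. This is a hypocoercivity-type construction, chosen so that
\[
PX_A+X_A^{T}P\succeq 2\gamma P
\]
holds with $\gamma$ and $\kappa$ independent of $A$. The off-diagonal Hamiltonian coupling $\pm ia_jb_j$ supplies the missing dissipation for $b$ through the commutator term, and since everything is local and translation-invariant, the required inequality reduces to a finite-dimensional, site-local matrix inequality that is insensitive to boundaries. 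This gives $\|e^{-tX_A}\|\le\sqrt{\operatorname{cond}(P)}\,e^{-\gamma t}$ with $\operatorname{cond}(P)=O(1)$, hence uniform rapid mixing with constant decay rate for both $\LL^+$ and $\LL^-$. The same Lyapunov matrix fails for the sum precisely because the coupling cancels there.
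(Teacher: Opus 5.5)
\paragraph{Gap: the sum is not primitive, and the proposed fix removes the slow mode.}
There is a genuine gap in the sum family. As you set it up, once the $\pm i\,a_jb_j$ couplings cancel, the $b$-sector of $\mathcal S_N$ is damped only by the gradient jumps $L_j\propto b_j-b_{j+1}$. Their symbol $2(1-\cos k)$ vanishes at $k=0$, which lies on the momentum grid for every $N$, so odd $N$ plays no role in your design. The uniform vector $v$ on the $b$-Majoranas then satisfies $\mathbf X_{\mathcal S_N}v=0$. Hence the dressed linear observable $\widetilde c(v)$ is conserved by $\mathcal S_N^*$, and $\mathcal S_N$ is not primitive, contrary to the statement.

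Your remedy, a small on-site term common to both families, cannot repair this with an $N$-independent coefficient. An on-site translation-invariant term has a momentum-independent symbol. Suppose it damps the $k=0$ $b$-mode, either as a dissipator on $b$ of rate $\eta$, which adds that rate at every $b$-momentum, or as a non-cancelling Hamiltonian coupling of strength $\eta$ to the $a$-sector damped at rate $\kappa_a$, which gives an effective rate of order $\eta^2/\kappa_a$ at every momentum. Then it damps the $k=2\pi/N$ mode by the same $N$-independent amount. More generally, the momentum blocks of any finite-range quasifree generator depend continuously on $k$. Strict stability at $k=0$ therefore forces a uniform decay rate on a fixed neighbourhood of $0$, which contains $2\pi/N$ for large $N$. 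You thus get either a non-primitive sum or one without the $N^{-2}$ mode. Taking the added term of size $O(N^{-2})$ would violate the fixed-coefficient requirement.

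\paragraph{Missing idea: put the residual zero off the momentum grid.}
The missing idea is to place the zero of the residual symbol at a momentum that is \emph{not} on the grid; this is exactly what odd $N$ buys. The paper damps only the $a$-species, by balanced gain and loss, and couples $b$ to it through a Hamiltonian with symbol $g_\lambda(q)=\lambda+\cos q$. For $\lambda=\pm2$ one has $1\le|g_\lambda|\le3$, while in the sum the on-site parts cancel and leave $\cos q$, whose zeros $\pm\pi/2$ are missed for odd $N$. Every $2\times2$ block of the sum is then strictly stable, which gives primitivity via \cref{lem:qf-oscillator-mixing}. Moreover $\min_k|\cos q_k|=\sin(\pi/(2N))$ gives $r_N\le\pi^2/N^2$, together with an even witness $W$ satisfying $W^2=\BI$ and $\mathcal S_N^*(W)=-2r_NW$. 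Your design would be rescued the same way by staggered jumps $b_j+b_{j+1}$, whose symbol vanishes only at $k=\pi$.

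\paragraph{Smaller points.}
The even witness must be used with $\rho=\sigma_N(\BI+W)$, not with a small $\delta$, to obtain the exact constant. It also decays at twice the single-particle rate, which your normalization must absorb.

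Your justification of $\|e^{t\LL}(\rho)-\sigma\|_1\lesssim\mathrm{poly}(|A|)\,\|e^{t\mathbf X_A}\|$ is only a sketch: the relative-entropy route would need an entropy-contraction inequality that such generators need not satisfy. By contrast, \cref{lem:qf-oscillator-mixing} gives the bound $m\|e^{t\mathbf X}\|$ directly.

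For the restrictions, your hypocoercive Lyapunov matrix is plausible, but the paper needs none. A single orthogonal change of basis diagonalizes the restricted hopping on both species and leaves the on-site gain and loss invariant. This reduces $\mathbf X_A$ to $2\times2$ blocks with couplings $|\pm2+\nu_j|\in[1,3]$.
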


\begin{example}[Fixed-coefficient fermionic construction]\label{ex:fermionic-addition}
For odd $N\geq5$, let $\Lambda_N=\mathbb Z/N\mathbb Z$, with two fermionic annihilation operators $a_x,b_x$ in each cell $x$. With indices understood modulo $N$, define
\begin{align}
    H_{\lambda,N}&=\sum_{x\in\Lambda_N}\left[\lambda a_x^\dagger b_x+\frac12(a_{x+1}^\dagger+a_{x-1}^\dagger)b_x+\mathrm{h.c.}\right],\label{eq:fermionic-addition-H-main}\\
    \LL_{\lambda,N}(\rho)&=-i[H_{\lambda,N},\rho]+\sum_{x\in\Lambda_N}\left(\Diss{a_x}(\rho)+\Diss{a_x^\dagger}(\rho)\right).\label{eq:fermionic-addition-L-main}
\end{align}
Define the families $(\LL_N^\pm)_N$ and $(\mathcal S_N)_N$ by $\LL_N^\pm=\LL_{\pm2,N}$ and $\mathcal S_N=\LL_N^++\LL_N^-$, and set $\sigma_N=4^{-N}\BI$. Then
\begin{align}
    \lVert e^{t\LL_N^\pm}-P_{\LL_N^\pm}\rVert_\diamond\leq12Ne^{-t/2},\qquad P_{\LL_N^\pm}(X)=\Tr(X)\sigma_N.
\end{align}
The same estimate, with $N$ replaced by $|A|$, holds for the restriction to every nonempty set of cells $A$, retaining the on-site jumps and the Hamiltonian terms supported in $A$. The sum $\mathcal S_N=2\LL_{0,N}$ is primitive with fixed point $\sigma_N$, but
\begin{align}
    \tau_{\operatorname{mix}}^{\mathcal S_N}(\varepsilon)&\geq\frac{\log(\varepsilon^{-1})}{2r_N}
    \geq\frac{N^2}{2\pi^2}\log(\varepsilon^{-1}),
    &r_N&=1-\sqrt{1-4\sin^2\left(\frac{\pi}{2N}\right)}.
\end{align}
The state establishing this bound has even parity.
\end{example}
The proof is given in \cref{proof:fermionic-addition}. At the single-particle level, the fixed on-site couplings of each summand keep its modes uniformly damped. Their addition cancels those couplings and leaves a long-wavelength mode with decay rate of order $N^{-2}$. Thus even translation invariance, fixed local coefficients, a common faithful stationary state, and uniform rapid mixing of both summands do not make rapid mixing additive. The preceding examples nevertheless exploit coherent Hamiltonian cancellation. It is therefore natural to ask whether restricting to purely dissipative generators restores the intuition that adding relaxation channels can only improve convergence.

\phantomsection\label{par:ising-kernel-mismatch}
The low-temperature two-dimensional Ising model provides an instructive illustration of why this expectation can fail \cite{GKZ2024}. Let $\Lambda_L=\{1,\ldots,L\}^2$ have free boundary conditions, with even $L\geq4$ and $N=L^2$, and let
\begin{align}
    H_L=-\sum_{\langle x,y\rangle}Z_xZ_y, \qquad \sigma_{\beta,L}=\frac{e^{-\beta H_L}}{\Tr e^{-\beta H_L}}.
\end{align}
Splitting the conditional heat-bath updates between the two checkerboard sublattices gives geometrically 5-local, zero-Hamiltonian families $(\LL_{1,N})_N$ and $(\LL_{2,N})_N$. At each size, write $\LL_i=\LL_{i,N}$; each generator preserves the full-rank state $\sigma_{\beta,L}$ and satisfies
\begin{align}
    \label{eq:ising-layer-rm-main} \sup_\rho\|e^{t\LL_i}(\rho)-P_{\LL_i}(\rho)\|_1 \leq 2N e^{-t},\qquad i\in\{1,2\}.
\end{align}
Their fixed-point spaces are degenerate and strictly larger than $\operatorname{span}\{\sigma_{\beta,L}\}$: for each summand, the relevant sectors correspond to the positive- and negative-magnetization phases on the checkerboard sublattice left frozen by that summand. Their sum is the full heat-bath dynamics and is primitive. The low-temperature bound of \cite[Corollary~3.5 and Lemma~4.14]{GKZ2024} gives, for every sufficiently large fixed $\beta$ and all sufficiently large $L$,
\begin{align}
    \label{eq:ising-slow-main} \tau_{\operatorname{mix}}^{\LL_1+\LL_2}(1/4) \geq e^{c_\beta L}=e^{c_\beta\sqrt N}
\end{align}
for some $c_\beta>0$. We provide a detailed analysis in \cref{app:ex:ising-kernel-mismatch}.

At low temperature, the ferromagnetic Ising model has two macroscopically magnetized phases. Each summand updates only one checkerboard sublattice while treating the other as frozen boundary data, so it rapidly equilibrates conditionally on that data. Their sum updates both sublattices, but moving between the positive- and negative-magnetization phases requires creating an interface of length of order $L$. The associated free-energy barrier makes such transitions exponentially rare in $L$, producing the exponentially long mixing time. A tempting diagnosis is therefore that the slowdown is caused by the degenerate steady-state subspaces of the summands. Requiring a unique stationary state for each generator removes that source of conserved information; removing Hamiltonian terms also rules out coherent cancellation. One could therefore expect additivity to hold for dissipation-only generators with unique steady states. The following example shows otherwise.

\begin{figure}[!t]
\centering
\begin{minipage}[t]{0.49\linewidth}
a)\par\smallskip
\centering
\includegraphics[width=\linewidth]{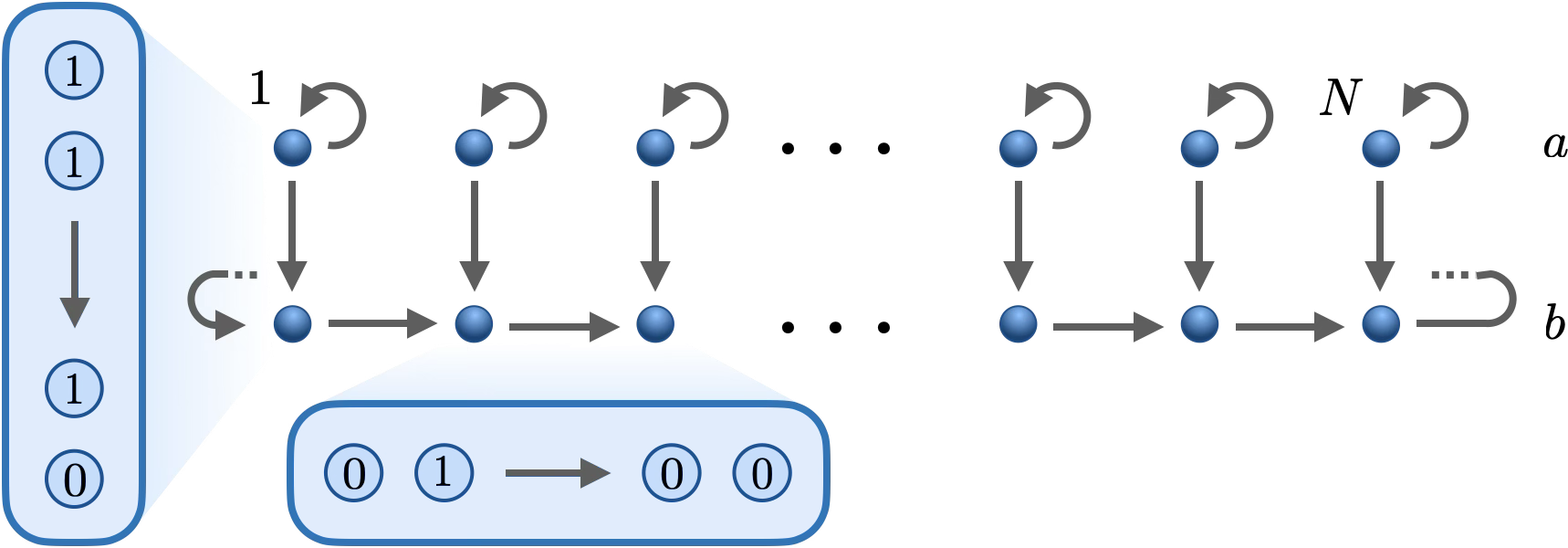}
\end{minipage}\hfill
\begin{minipage}[t]{0.49\linewidth}
b)\par\smallskip
\centering
\vspace*{0.048\linewidth}
\includegraphics[width=\linewidth]{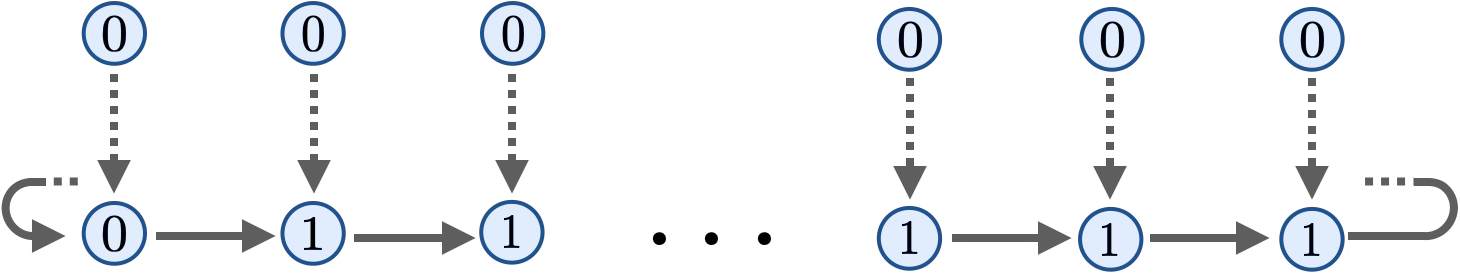}
\end{minipage}
\caption{Mechanism of the dissipation-only construction on the $a$ and $b$ qubits. a) In both $\LL$ and $\KK$, the $L_{a,b,i}$ jumps between the $a$ and $b$ qubits introduce conditional resets on the $b$ qubits, while the $L_{b,i}$ jumps propagate zeros around the $b$ qubits. The generators differ in their single-qubit action on the $a$ qubits and rapidly mix toward the product states $\sigma_\LL=\eta_a^{\otimes N}\otimes\ketbra{0^N}{0^N}_b$ and $\sigma_\KK=(Z\eta Z)_a^{\otimes N}\otimes\ketbra{0^N}{0^N}_b$, respectively. The insets show that $L_{a,b,i}$ maps $|11\rangle_{a_i b_i}$ to $|10\rangle_{a_i b_i}$ and $L_{b,i}$ maps $|01\rangle_{b_i b_{i+1}}$ to $|00\rangle_{b_i b_{i+1}}$. b) The sum $\LL+\KK$ rapidly drives the $a$ qubits toward $|0^N\rangle_a$. The conditional resets $L_{a,b,i}$ then switch off, and the $a$ and $b$ qubits evolve independently. Starting from the displayed state $|01\cdots1\rangle_b$, relaxation requires applying the bottom jumps successively around the ring, so the mixing time is $\Omega(N)$.}
\label{fig:rm_sum_dissip_only}
\end{figure}

\begin{lemma}[Dissipation-only rapid mixing is not additive]
\label{lem:rm_sum_dissip_only}\resulttoc{lem:rm_sum_dissip_only}{Dissipation-only rapid mixing is not additive}
There exist two geometrically local, translation-invariant, zero-Hamiltonian Lindbladian families $(\LL_N)_N$ and $(\KK_N)_N$ on periodic lattices, each with a unique stationary state and rapid mixing with constant decay rate, such that the stationary states are different and the infinite-time limit of $\LL_N+\KK_N$ exists, but
\begin{align}
    \tau_{\operatorname{mix}}^{\LL_N+\KK_N}\geq\frac{N-1}{8}.
\end{align}
The sum family $(\LL_N+\KK_N)_N$ has a degenerate steady-state subspace at each size.
\end{lemma}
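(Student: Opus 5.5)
I will build the example suggested by \cref{fig:rm_sum_dissip_only}. On a periodic chain of $N$ cells, each cell $i$ carries two qubits $a_i,b_i$. Both families share two kinds of jump operators. The first is a conditional reset $L_{a,b,i}=\ketbra{1}{1}_{a_i}\otimes\ketbra{0}{1}_{b_i}$, which maps $|11\rangle\mapsto|10\rangle$. The second is a propagation jump $L_{b,i}=\ketbra{0}{0}_{b_i}\otimes\ketbra{0}{1}_{b_{i+1}}$, which maps $|01\rangle_{b_ib_{i+1}}\mapsto|00\rangle$. The families differ only in single-qubit dissipators on the $a$ qubits. For $\LL$ I take $\Diss{\sqrt{p}\,\sigma^-}+\Diss{\sqrt{1-p}\,\sigma^+}$, whose unique fixed point is a full-rank diagonal state $\eta$ with $\langle1|\eta|1\rangle=1-p>0$. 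For $\KK$ I use the same rates with the roles of $\sigma^\pm$ swapped, giving fixed point $Z\eta Z$ — or rather, I take the population-flipped state $X\eta X$, so that the weight on $|1\rangle$ is $p$. I choose $p>1/2$ so that the sum of the two $a$-dissipators has rates for $1\to0$ strictly larger than for $0\to1$. On reflection, to match the caption ("the sum drives $a$ toward $|0^N\rangle$") I will instead give $\KK$ the one-sided amplitude damping $\Diss{\sigma^-_{a_i}}$ plus a dephasing, and pick the $\LL$ part so that its fixed point $\eta$ is full rank. The $a$ marginal of the sum then still has a positive $0\to1$ rate. The cleanest choice consistent with the claim "stationary states differ, sum has degenerate kernel" is: $\LL$ has $a$-part $\Diss{\sigma^+}+\Diss{\sigma^-}$, $\KK$ has $a$-part $-\Diss{\sigma^+}+c\,\Diss{\sigma^-}$, where $\KK$ is still a Lindbladian if I instead just use $\Diss{\sigma^-}$ and let $\LL$ have only $\Diss{\sigma^+}$ plus $\Diss{\sigma^-}$. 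To be concrete: in $\KK$ the $a$ part is pure amplitude damping to $|0\rangle$, and in $\LL$ it is pure pumping to $|1\rangle$. Then the sum has the full-rank $a$ fixed point $\eta$. I will fix the precise single-site choice after checking the sum's kernel; the mechanism below is what matters.

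\textbf{Rapid mixing of each summand.} Consider $\LL$ with all $a_i$ pumped to $|1\rangle$. The $a$ qubits relax independently at rate one. Once $a_i=1$, the reset $L_{a,b,i}$ drives $b_i\to0$ at rate one, and the propagation jumps only turn $1$s into $0$s. So the unique fixed point is $\ketbra{1^N}{1^N}_a\otimes\ketbra{0^N}{0^N}_b$. For the trace-distance bound I use a coupling/union-bound argument. For each cell, the probability that $b_i$ is still $1$ at time $t$ is at most the probability that $a_i$ has not yet flipped or that the reset has not yet fired, which is $O(te^{-t})$. The union bound over $N$ cells gives $O(Nte^{-t})\le C N e^{-t/2}$. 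Coherences are killed at the same rates, since all jumps are diagonal-to-diagonal with decaying off-diagonal blocks. For $\KK$ I need a fixed point $\eta$ that is not $|0\rangle$, so that resets still happen with positive rate in each cell. Hence $\KK$'s $a$ dynamics must also be full rank, and I would use $\LL$: $a\to\eta$ and $\KK$: $a\to X\eta X$ with $\eta\neq X\eta X$ both full rank. The same per-cell argument, with reset rate $\ge\min(\eta_{11},\eta_{00})$, gives constant decay rate. Uniqueness of the fixed point follows because every $b_i$ gets reset to $0$ with positive rate and no jump creates $1$s.

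\textbf{Slowdown of the sum.} The crux is to arrange that the sum's $a$ dynamics converges to exactly $|0\rangle$ on each site. Then the resets switch off in the limit, and the $b$ chain evolves only under the propagation jumps. Those jumps preserve the state $|1^N\rangle_b$ (no $0$ present), so the kernel is degenerate. Achieving this with each summand full rank is impossible with rate dissipators alone, since rates add positively. I would therefore use non-diagonal single-qubit jumps, e.g.\ $L=\ketbra{0}{+}$-type jumps, whose sum has the pure fixed point $|0\rangle$ while each summand has a different mixed fixed point. This is the step I expect to be the main obstacle, together with verifying uniqueness for each summand.

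For the lower bound, I start from $|0^N\rangle_a\otimes|01\cdots1\rangle_b$. Under the sum, $a$ stays at $|0\rangle$, so only propagation acts, and a single domain wall moves deterministically one site per jump at rate one. Reaching the limit $|0^N\rangle_b$ requires $N-1$ successive exponential clocks. The limit $P(\rho)$ is $|0\cdots0\rangle_b$, and the distance equals $2\Pr[\text{Erlang}(N-1,1)>t]$. This is at least $1/2$ for $t\le (N-1)/8$ by a Chebyshev or Markov estimate, giving $\tau_{\operatorname{mix}}\ge (N-1)/8$. Existence of the infinite-time limit holds because the dynamics is a classical absorbing chain on diagonals with coherences damped, so there are no imaginary eigenvalues.
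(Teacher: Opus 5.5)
\textbf{The decisive ingredient is never supplied.} You try several rate-type choices for the $a$ qubits ($\sigma^\pm$ with swapped rates, pumping versus damping, $X\eta X$). Each fails for the reason you identify yourself: with diagonal rates the two summands' $0\to1$ rates add. The sum then keeps weight on $|1\rangle_{a_i}$, the cross resets stay switched on, and the slowdown mechanism is absent. You then defer the actual construction as ``the main obstacle,'' so as written there is no counterexample.

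The paper's solution is a coherent cancellation inside the dissipators. With $s=\ketbra{0}{1}$, take $L_a=Z+2s$ for $\LL$ and $K_a=Z-2s$ for $\KK$. The cross terms of $\Diss{Z\pm2s}$ have opposite signs, so $\Diss{L_a}+\Diss{K_a}=2\Diss{Z}+8\Diss{s}$, which drives every $a_i$ to $|0\rangle$. Each summand alone has the full-rank fixed point $\eta$ (resp.\ $Z\eta Z$) with $\langle1|\eta|1\rangle=1/6$.

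Your tentative suggestion can in fact be completed. $\Diss{\ketbra{0}{+}}+\Diss{\ketbra{0}{-}}$ is exactly the reset generator $\rho\mapsto\Tr(\rho)\ketbra{0}{0}-\rho$. Each summand has the unique fixed point $\ketbra{-}{-}$ (resp.\ $\ketbra{+}{+}$); this is pure rather than mixed as you state, but still carries weight $1/2$ on $|1\rangle$. This choice has to be made and verified, not left open.

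\textbf{The mixing proof for the summands also has a gap.} Your coupling/union bound (``$a_i$ has not yet flipped or the reset has not yet fired'') presupposes classical, diagonal $a$-dynamics. With the non-diagonal jumps you have just argued are necessary, the reset rate depends on $a$-coherences. The anticommutator part of $\Diss{L_{a,b,i}}$ also back-acts on $a_i$ when $b_i=1$, so there is no flip time to couple. The paper instead constructs a non-diagonal positive $2\times2$ matrix $h$ with $\LL^*(h_{a_j}n_{b_j})\preceq-4\lambda h_{a_j}n_{b_j}$ and $\tfrac14n_{b_j}\preceq h_{a_j}n_{b_j}\preceq2n_{b_j}$. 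This gives $\Tr(n_{b_j}\rho_t)\le8e^{-4\lambda t}$ for any initial state.

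Even then, a bound on the weight outside the $b$-vacuum does not bound the trace distance of the whole state. Your remark that coherences ``are killed at the same rates'' does not replace the paper's two-stage step. At time $t/2$ it projects onto the $b$-vacuum with gentle-measurement error $2\sqrt{p_{t/2}}$. It then uses that, on the exact $b$-vacuum, the $a$ qubits evolve as $T_{t/2}^{\otimes N}$ and can be telescoped in diamond norm.

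\textbf{The lower bound is sound in substance, with one correction.} Every bottom dissipator appears in both $\LL$ and $\KK$, so the front moves at rate $2$, not $1$. The clean way to get $(N-1)/8$ for every $N\ge2$ is Markov's inequality on the number of front moves (mean $2t$). This gives distance at least $2\bigl(1-2t/(N-1)\bigr)\geq3/2$, which is exactly the paper's weighted-mass estimate.
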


\begin{example}[Dissipation-only construction]\label{ex:rows-dissip}\label{ex:rm_sum_dissip_only}
Consider a periodic chain of $N\geq2$ sites, with two qubits $a_i,b_i$ at each site $i$. Define the families $(\LL_N)_N$ and $(\KK_N)_N$ by writing $\LL=\LL_N$ and $\KK=\KK_N$ at each size:
\begin{align}
    \LL&=\sum_{i=1}^N\mathcal{D}[L_{a,i}]+\sum_{i=1}^N\mathcal{D}[L_{b,i}]+\sum_{i=1}^N\mathcal{D}[L_{a,b,i}],\notag\\
    \KK&=\sum_{i=1}^N\mathcal{D}[K_{a,i}]+\sum_{i=1}^N\mathcal{D}[L_{b,i}]+\sum_{i=1}^N\mathcal{D}[L_{a,b,i}],
\end{align}
with jump operators
\begin{align}
    L_{a,i}&=\begin{pmatrix}1&2\\
    0&-1\end{pmatrix}_{a_i},\qquad 1\le i\le N,\notag\\
    K_{a,i}&=\begin{pmatrix}1&-2\\
    0&-1\end{pmatrix}_{a_i},\qquad 1\le i\le N,\notag\\
    L_{b,i}&=\ketbra{0}{0}_{b_i}\otimes\ketbra{0}{1}_{b_{i+1}},\qquad 1\le i\le N,\notag\\
    L_{a,b,i}&=\sqrt{\gamma}\,\ketbra{1}{1}_{a_i}\otimes\ketbra{0}{1}_{b_i},\qquad 1\le i\le N.
\end{align}
Here $\gamma>0$ is independent of $N$, and the site labels are understood modulo $N$, so $b_{N+1}=b_1$. The bottom jump changes $b_{i+1}$ from $1$ to $0$ when $b_i$ is in state $0$, while the cross jump changes $b_i$ from $1$ to $0$ when $a_i$ is in state $1$. Thus $\LL$ and $\KK$ are translation invariant and have the same bottom and cross jumps, while the off-diagonal entries of their $a_i$ jumps have opposite signs.

Let $\eta=\begin{pmatrix}\frac56&-\frac13\\-\frac13&\frac16\end{pmatrix}$ be the unique fixed point of the single-qubit dissipator with jump operator $L_a=\begin{pmatrix}1&2\\0&-1\end{pmatrix}$. The Lindbladian $\LL$ has the unique fixed point $\sigma_\LL=\eta_a^{\otimes N}\otimes\ketbra{0^N}{0^N}_b$, while $\KK$ has the unique fixed point $\sigma_\KK=(Z\eta Z)_a^{\otimes N}\otimes\ketbra{0^N}{0^N}_b$. Both families $(\LL_N)_N$ and $(\KK_N)_N$ are dissipation-only and rapidly mixing with constant decay rate, but $(\LL_N+\KK_N)_N$ is not rapidly mixing: for every fixed $0<\delta<1/2$, $\tau_{\operatorname{mix}}^{\LL+\KK}(\delta)=\Omega(N)$.

Their sum has kernel
\begin{align}
    \ker(\LL+\KK)&=\ketbra{0^N}{0^N}_a\otimes\BB(\mathcal S_b),\qquad \mathcal S_b:=\operatorname{span}\{|0^N\rangle_b,|1^N\rangle_b\},
\end{align}
where $\BB(\mathcal S_b)$ is embedded in the $b$-qubit operator algebra and includes coherences between the strings. Its infinite-time limit exists, and the mixing time to this kernel is $\Omega(N)$.
\end{example}
The proof is given in \cref{proof:rm_sum_dissip_only}, and the mechanism is illustrated in \cref{fig:rm_sum_dissip_only}. In the construction shown in \cref{fig:rm_sum_dissip_only}a), the individual generators provide conditional resets throughout the $b$ row. Adding $\KK$ makes the dynamics approach $|0\rangle$ on all $a_i$ qubits. The projectors in the cross jumps then turn off the resets from $a_i$ to $b_i$, and the $b$ qubits evolve independently of the $a$ qubits. A $b$-qubit $1$ can be removed only after the site immediately to its left has become $0$, so a string such as $|01\cdots1\rangle_b$ relaxes through the directed front shown in \cref{fig:rm_sum_dissip_only}b) that needs linear time to travel around the ring. This gives a physical picture of the slowdown: the individual generators provide local resets throughout the bottom row, whereas their sum confines the remaining relaxation to information propagation from a single zero seed. One might still attribute this behavior to adding a second dissipative dynamics with a different stationary state. The next example shows that the same transport bottleneck can be triggered by a coherent Hamiltonian perturbation of a single rapidly mixing dissipative generator.

\Needspace{10\baselineskip}
\begin{lemma}[A local Hamiltonian can destroy rapid mixing]
\label{lem:rm_sum_hamiltonian}\resulttoc{lem:rm_sum_hamiltonian}{A local Hamiltonian can destroy rapid mixing}
There exist a geometrically local, translation-invariant, zero-Hamiltonian Lindbladian family $(\LL_N)_N$ on periodic lattices with a unique stationary state and rapid mixing with constant decay rate, and a translation-invariant on-site Hamiltonian family $(\KK_N)_N$ of uniformly bounded local strength, such that the infinite-time limit of $\LL_N+\KK_N$ exists but
\begin{align}
    \tau_{\operatorname{mix}}^{\LL_N+\KK_N}\geq\frac{N-1}{4}.
\end{align}
\end{lemma}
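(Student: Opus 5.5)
We adapt the construction of \cref{ex:rm_sum_dissip_only}, replacing the second dissipative generator by a coherent on-site rotation on the $a$ qubits. Take $\LL$ to be a dissipation-only generator on a periodic chain with qubits $a_i,b_i$, with the same bottom jumps $L_{b,i}=\ketbra{0}{0}_{b_i}\otimes\ketbra{0}{1}_{b_{i+1}}$ and cross jumps $L_{a,b,i}=\sqrt{\gamma}\,\ketbra{1}{1}_{a_i}\otimes\ketbra{0}{1}_{b_i}$. The single-qubit jump $L_{a,i}$ is chosen so that two things hold. First, its dissipator alone has a unique fixed point $\eta$ with $\langle 1|\eta|1\rangle>0$. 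Second, after adding an on-site Hamiltonian $\KK=-i\sum_i[h_{a_i},\cdot\,]$, the single-qubit generator $\Diss{L_a}-i[h,\cdot\,]$ has the pure fixed point $\ketbra{0}{0}$ and a size-independent gap. A concrete choice is $L_a=\ketbra{0}{0}-\ketbra{1}{1}+c\ketbra{0}{1}$ together with a $Y$- or $X$-type $h$ tuned so that $|0\rangle$ is a dark state, i.e.\ $L_a|0\rangle\propto|0\rangle$ and $(h-\tfrac{i}{2}L_a^\dagger L_a)|0\rangle\propto|0\rangle$. One can also use the simpler $L_a=\ketbra{0}{1}+\ketbra{1}{0}$-like resets, provided a Hamiltonian can shift the fixed point to $\ketbra{0}{0}$. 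I would verify this by a direct $2\times2$ Bloch-equation computation, which is routine.

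\textbf{Rapid mixing of $\LL$.} The $a$ qubits evolve autonomously under a product of primitive single-qubit semigroups, so they converge exponentially to $\eta^{\otimes N}$, with an $O(N)e^{-\gamma_a t}$ bound from the triangle inequality. Conditionally, each $b_i$ is reset at rate $\ge\gamma\langle1|\eta|1\rangle$ once $a_i$ has equilibrated, independently of its neighbours. So each $b_i$ reaches $|0\rangle$ within time $O(1)$ with probability $1-e^{-\Omega(t)}$, and a union bound gives $\|e^{t\LL}(\rho)-\sigma_\LL\|_1\le CNe^{-\gamma' t}$. I would import this step directly from the proof of \cref{lem:rm_sum_dissip_only}, since $\LL$ there has exactly this structure. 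The dynamics preserve the diagonal structure of the $b$ row up to decaying coherences, so I would rely on the same argument. Uniqueness of the stationary state follows because the limit above is state-independent.

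\textbf{Slowdown of $\LL+\KK$.} Under $\LL+\KK$ the $a$ qubits converge to $|0^N\rangle$, so the cross jumps switch off, and the $b$ row evolves only under the directed bottom jumps. The kernel is $\ketbra{0^N}{0^N}_a\otimes\BB(\operatorname{span}\{|0^N\rangle,|1^N\rangle\})$, which matches \cref{ex:rm_sum_dissip_only}; hence the infinite-time limit exists, with no imaginary eigenvalues because the $b$ dynamics is a classical absorbing chain. For the lower bound, start from $\ketbra{0^N}{0^N}_a\otimes\ketbra{01\cdots1}{01\cdots1}_b$. The $a$ register is then stationary and the cross jumps act trivially, so the evolution is exactly a totally asymmetric front. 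The number of flipped sites after time $t$ is a Poisson process of rate $1$, and reaching $|0^N\rangle_b$ requires $N-1$ jumps. Meanwhile $P_{\LL+\KK}$ sends this state to $\ketbra{0^N}{0^N}$. The trace distance is therefore at least $2\Pr[\mathrm{Poisson}(t)<N-1]$, which exceeds $1/2$ whenever $t\le(N-1)/4$, by Markov's inequality: $\Pr[\mathrm{Poisson}(t)\ge N-1]\le t/(N-1)\le 1/4$.

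\textbf{Main obstacle.} The delicate step is the single-qubit design. The Hamiltonian must not merely shift $\eta$: it must make $|0\rangle$ an exact dark state, so that the cross jumps switch off exactly rather than approximately. If the cross jumps stay on with some positive rate, $\LL+\KK$ remains rapidly mixing. I would first try $L_a=\sigma^-$-type jumps combined with a detuning-free drive, and search for $h$ solving the dark-state conditions.
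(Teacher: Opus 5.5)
Your construction and your lower bound follow the paper's route. Take $L_a=Z+2\ketbra{0}{1}$ (your $c=2$) and $h=Y$. Then $\ketbra{0}{0}$ is dark, and the states $\omega_j:=\ketbra{0^N}{0^N}_a\otimes\ketbra{0^{N-j}1^j}{0^{N-j}1^j}_b$ evolve by the rate-one death chain $(\LL+\KK)(\omega_j)=\omega_{j-1}-\omega_j$. Your estimate $\Pr[\mathrm{Poisson}(t)\geq N-1]\leq t/(N-1)$ is the paper's weighted-mass bound in probabilistic form.

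The step that is not proved as written is existence of the infinite-time limit. Knowing the kernel does not give it. Nor does ``the $b$ dynamics is a classical absorbing chain'' exclude nonzero imaginary eigenvalues of the joint generator. The drive $-i[Y,\cdot\,]$ alone has eigenvalues $\pm2i$, and the cross jumps couple the two rows while the $a$ qubits are still relaxing.

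The paper closes this by triangularity. Write $F_{uv}=\ketbra{u}{v}$. The one-site generator $\mathcal G=\Diss{L_a}-i[Y,\cdot\,]$ satisfies $\mathcal G(F_{00})=0$, $\mathcal G(F_{01})=-4F_{01}$, $\mathcal G(F_{10})=-4F_{10}$ and $\mathcal G(F_{11})=4F_{00}-4F_{11}-4F_{01}-4F_{10}$. Every bottom and cross jump lowers the occupation in its jump term and has diagonal $L^\dagger L$. Ordering computational matrix units by ket and then bra occupation therefore makes $\LL+\KK$ triangular with real nonpositive diagonal. Boundedness of the semigroup rules out a Jordan block at $0$, and \cref{lem:infinite-time-limit} gives the limit. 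The value $P_{\LL+\KK}(\rho_{\mathrm{in}})=\omega_0$ then follows from your explicit trajectory.

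The single-qubit design you flag as the main obstacle is already solved by your own family. Since $L_a^\dagger L_a|0\rangle=|0\rangle+c|1\rangle$ and $Y|0\rangle=i|1\rangle$, the dark-state condition holds exactly for $h=\tfrac{c}{2}Y$.

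Fix $c=2$, because only then is the rapid-mixing bound for $\LL$ a literal import. The Lyapunov weight $h_{a_j}n_{b_j}$ in the proof of \cref{lem:rm_sum_dissip_only} is tuned to this $L_a$. Your heuristic rate $\gamma\langle1|\eta|1\rangle$ is not what is proved there: while $b_i=1$, the no-jump part of the cross dissipator biases $a_i$ away from $|1\rangle$.

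Discard your alternatives, since both fail:
\begin{itemize}
\item An $X$-type jump gives a unital generator for every $h$, so $\BI/2$ stays stationary. Moreover, $|0\rangle$ is not an eigenvector of the jump, so $\ketbra{0}{0}$ cannot be stationary.
\item With $L_a=\sigma^-$, the zero-Hamiltonian $\LL$ already drives the $a$ qubits to $\ketbra{0}{0}$ and switches off its own resets. In addition, no drive keeps $|0\rangle$ dark for $\sigma^-$.
\end{itemize}
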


\begin{example}[Hamiltonian construction]\label{ex:rows-ham}\resulttoc{ex:rows-ham}{Rapid mixing lost due to the Hamiltonian}\label{ex:rm_sum_hamiltonian}
Let $\LL$ be the Lindbladian from \cref{ex:rm_sum_dissip_only}. It has the unique fixed point $\sigma_\LL=\eta_a^{\otimes N}\otimes\ketbra{0^N}{0^N}_b$ and satisfies $\tau_{\operatorname{mix}}^\LL(\varepsilon)=\mathcal{O}(\log(N/\varepsilon))$. Let $H_i=Y_{a_i}$ and define $(\KK_N)_N$ by $\KK_N=\KK=-i[\sum_{i=1}^NH_i,\cdot]$. Then $(\LL_N+\KK_N)_N$ is not rapidly mixing: for every fixed $0<\delta<1/2$, $\tau_{\operatorname{mix}}^{\LL+\KK}(\delta)=\Omega(N)$.
\end{example}
The proof is given in \cref{proof:rows-ham}. All of the adverse perturbations above have order-one local strength: they establish nonadditivity, but do not yet exclude a perturbative stability threshold. In particular, the Hamiltonian in \cref{lem:rm_sum_hamiltonian} is strong enough to make the steady state rank-deficient. This leaves the perturbative question: can an arbitrarily weak, but constant, local perturbation turn a $\operatorname{polylog}(N)$ mixing time into a $\operatorname{poly}(N)$ mixing time?

\Needspace{12\baselineskip}
\begin{lemma}[An arbitrarily weak local Hamiltonian can destroy rapid mixing]
\label{lem:sparse-boundary-potential}\resulttoc{lem:sparse-boundary-potential}{An arbitrarily weak local Hamiltonian can destroy rapid mixing}
There exist a primitive local Lindbladian family, evaluated on chains of $m_N=\Theta(\log N)$ qubits and denoted by $(\LL_N)_N$, and, for each $N$, an on-site Hamiltonian generator $\KK_N$ of uniformly bounded local strength, such that, for every fixed $\eps>0$, $\LL_N+\eps\KK_N$ is primitive with the same maximally mixed stationary state as $\LL_N$, and
\begin{align}
    \tau_{\operatorname{mix}}^{\LL_N}=O((\log N)^{10}),\qquad
    \tau_{\operatorname{mix}}^{\LL_N+\eps\KK_N}\geq N^{c_\eps+o(1)}
\end{align}
for some $c_\eps>0$. Here $m_N$ is the physical chain length and $N$ is an auxiliary parameter.
\end{lemma}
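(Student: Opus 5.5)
The plan is to build $\LL_N$ as a quasifree (XX / free-fermion) chain of $m=m_N=\Theta(\log N)$ qubits that is damped \emph{only at its boundary}, and to take $\KK_N$ to be a disordered on-site field. A localizing potential then decouples the bulk from the dissipative boundary. Since $m\sim\log N$, a mixing time polynomial in $m$ is polylogarithmic in $N$, while a gap $e^{-c m}$ is $N^{-c'}$. The whole construction is a transport-versus-localization mechanism.

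\textbf{Construction.} Take
\begin{align*}
\LL_N(\rho)=-i\Bigl[\sum_{j=1}^{m-1}\bigl(X_jX_{j+1}+Y_jY_{j+1}\bigr),\rho\Bigr]+\sum_{P\in\{X,Y,Z\}}\Diss{P_1}(\rho),
\end{align*}
that is, hopping plus unital depolarizing noise on the first site only. Take $\KK_N=-i[\sum_j v_j Z_j,\cdot\,]$ with $|v_j|\le1$.

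Both generators are unital, so $2^{-m}\BI$ is stationary for every $\eps$. Primitivity follows from a Frigerio/Evans-type commutant criterion. The algebra generated by $H+\eps V$ and the boundary Paulis is all of $\BB(\HH)$: the hopping transports $X_1,Y_1$ along the chain via Jordan--Wigner strings, for any on-site $v_j$. Hence the kernel is one-dimensional.

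\textbf{Jordan--Wigner reduction.} After Jordan--Wigner, the Hamiltonian is quadratic and the boundary depolarizer is equivalent to Majorana damping on site $1$ (linear Lindblad operators $c_1,c_1'$), up to a parity-dephasing piece. Third quantization then gives the spectrum of the generator as sums of eigenvalues of a single-particle matrix $X=ih-\Gamma P_1$. Here $h$ is the tridiagonal hopping matrix plus diagonal $\eps v$, and $P_1$ projects on the boundary site. Both the gap and the trace-norm convergence reduce to bounds on $\lVert e^{-tX}\rVert$, with polynomial-in-$m$ prefactors; this uses the quasifree stability machinery of \cref{sec:stability_quasifree} or a direct covariance-matrix argument for the Gaussian part.

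For $\eps=0$, the eigenvectors of $h$ are sine modes with $|\psi_k(1)|^2\gtrsim m^{-3}$ near band edges. First-order perturbation gives decay rates $\gtrsim m^{-3}$. A non-normal resolvent bound on $e^{-tX}$ then yields $\lVert e^{t\LL}-P\rVert_\diamond\le \mathrm{poly}(m)e^{-t/\mathrm{poly}(m)}$. This produces $\tau_{\mathrm{mix}}=O(m^{a})=O((\log N)^{10})$ once exponents are tracked crudely.

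\textbf{Localization and the lower bound.} For $\eps>0$, choose $v$ as a realization of i.i.d.\ bounded disorder, fixed deterministically for each $N$ by a probabilistic existence argument. One-dimensional Anderson localization at every disorder strength then applies: the Lyapunov exponent satisfies $\gamma(\eps)\gtrsim\eps^2>0$ by Furstenberg, and finite-volume localization estimates hold with high probability. Together these give some eigenvector $\psi$ of $h$, localized near the far end, with $|\psi(1)|\le e^{-(\gamma(\eps)-o(1))m}$.

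Rank-one perturbation theory for $X=ih-\Gamma P_1$, or equivalently the test observable $a^\dagger(\psi)a(\psi)$ in the Heisenberg picture, gives
\begin{align*}
\frac{d}{dt}\langle n_\psi\rangle=O\bigl(|\psi(1)|\bigr).
\end{align*}
Starting from $n_\psi=1$, the trace distance to $2^{-m}\BI$ therefore stays $\ge 1/2-O(t|\psi(1)|)$. This yields $\tau_{\mathrm{mix}}\ge e^{(\gamma(\eps)-o(1))m}=N^{c_\eps+o(1)}$, with $m=\lceil\kappa\log N\rceil$ and $c_\eps=\kappa\gamma(\eps)$.

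\textbf{Main obstacle.} I expect the hardest step to be the upper bound in the clean case: a rigorous polynomial bound on the non-normal semigroup $e^{-tX}$ with only boundary damping, transferred to diamond-norm convergence of the full (non-Gaussian) state space, with explicit exponents. The localization step is standard but needs a quantitative finite-volume statement valid for every fixed $\eps$. I would first try to obtain it from Lyapunov-exponent large deviations for transfer matrices (Le Page), which give $|\psi(1)|\le e^{-(\gamma-\delta)m}$ with probability $1-e^{-c\delta m}$.
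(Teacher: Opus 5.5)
The overall architecture is the paper's: an XX chain of length $m=\Theta(\log N)$ damped only at its boundary, and primitivity from the fact that anything commuting with the boundary Paulis and the hopping must act trivially site by site. The lower bound uses the witness $Q=2n_\psi-\BI$, whose Heisenberg derivative is $O(|\psi_1|)$ because the bath only sees the mode's amplitude on the damped site.

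\textbf{Lower bound.} The gap is where the exponentially small $|\psi_1|$ comes from. Furstenberg positivity of the Lyapunov exponent is an infinite-volume statement. You need a finite-volume theorem producing, in a box of length $m$, an eigenvector with amplitude $e^{-(\gamma-\delta)m}$ at site $1$, and you only gesture at one.

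More seriously, the lemma requires $\KK_N$ to be fixed independently of $\eps$, while the bound must hold for \emph{every} fixed $\eps>0$. A realization chosen by a probabilistic existence argument at one $\eps$ says nothing at other values, and the failure probabilities degrade as $\eps\to0$. You would need large-deviation estimates uniform in $\eps$ and energy on compacts, plus a diagonal selection over $N$.

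None of this is needed. The paper takes $m=3\ell$ and the deterministic step potential $\eps n_j$ on the two outer thirds. It uses the positive ground state $v$ of the single-particle matrix restricted to sites $2,\ldots,m-1$:
\begin{itemize}
\item A flat trial vector on the middle third gives $E_0\le-2J+2J/\ell\le-2J+\eps/2$.
\item Hence $v_{j+1}\ge(2+\eps/(2J))v_j-v_{j-1}$ inside each barrier, so $|v_2|,|v_{m-1}|\le(1+\eps/(2J))^{-(\ell-1)}$.
\item Extended by $v_1=v_m=0$, $v$ is an eigenvector up to the residual $-Jv_2|1\rangle-Jv_{m-1}|m\rangle$.
\end{itemize}
Since $\ell\ge\max\{2,4J/\eps\}$ holds for all large $N$ at each fixed $\eps$, uniformity in $\eps$ is automatic.

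\textbf{Upper bound.} Your sketch of this step, which you rightly flag as the main obstacle, does not establish it. First-order eigenvalue perturbation of $ih-\Gamma P_1$ is uncontrolled: mid-band, the matrix elements $\Gamma u_ku_l\sim\Gamma/m$ are comparable to the level spacing $\sim J/m$. Moreover, eigenvalue real parts of a nonnormal generator do not bound $\|e^{t\mathbf X}\|$ without control of the eigenbasis conditioning.

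The missing ingredient is an observability argument. Write $S_t=e^{t\LL^*}$ and $\mathcal B$ for the self-adjoint boundary dissipator, and take $T=2(m+1)^2/J$.
\begin{itemize}
\item Use the energy identity $\|S_TO\|_2^2=\|O\|_2^2-2\int_0^T\|\mathcal B^{1/2}S_tO\|_2^2\,dt$.
\item Compare with the free Hamiltonian evolution via Duhamel, at a cost $(1+2JT)^{-2}$.
\item Bound the free observation integral from below by Ingham's inequality, using frequency separation $\Delta\ge12J/(m+1)^2$ and boundary weights $u_k^2\ge8/(m+1)^3$.
\end{itemize}
This gives a Hilbert--Schmidt contraction by $1-8/(25(m+1)^5)$ per window. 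Passing to trace norm costs a factor $2^{m/2}$, which yields the $(m+1)^9$ bound.

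\textbf{What your quasifree framing could add.} The same argument can run on the single-particle matrix. In the complex basis the frequencies $\omega_k$ are simple; in the real Majorana basis $\pm\omega_k$ are doubly degenerate, since $\omega_{m+1-k}=-\omega_k$. The paper's estimate $\|e^{t\LL}(\rho)-\sigma\|_1\le 2m\|e^{t\mathbf X}\|\,\|\rho-\sigma\|_1$ then gives prefactor $O(m)$ instead of $2^{m/2}$.

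This requires deleting $\Diss{Z_1}$ from your boundary depolarizer. $Z_1\propto c_1c_2$ is a quadratic jump, not a harmless parity-dephasing piece. With it, your $\LL_N$ is not quasifree, and neither third quantization nor the oscillator estimate applies.
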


\begin{example}[Boundary dissipation of a localized mode]\label{ex:sparse-boundary-potential}
Fix $a,J>0$ and set $\gamma=J/2$. For all sufficiently large $N$, let
\begin{align}
    m=m_N:=3\left\lfloor\frac{a\log N}{3}\right\rfloor,
    \qquad \ell:=\frac{m}{3}.
\end{align}
Consider one open chain of $m$ qubits, with sites labelled $1,\ldots,m$, and define
\begin{align}
    H&=-J\sum_{j=1}^{m-1}\bigl(\sigma_j^+\sigma_{j+1}^-+\sigma_j^-\sigma_{j+1}^+\bigr),\label{eq:sparse-H0}\\
    \LL_N(\rho)&=-i[H,\rho]+\gamma\sum_{s\in\{1,m\}}\bigl(\Diss{\sigma_s^-}(\rho)+\Diss{\sigma_s^+}(\rho)\bigr).\label{eq:sparse-L0}
\end{align}
The dissipation-free interval between the two endpoints has length $m-2=\Theta(\log N)$. Define the on-site potential profile and its Hamiltonian generator by
\begin{align}
    V_N&=\sum_{j=1}^{\ell}n_j+\sum_{j=2\ell+1}^{3\ell}n_j,
    \qquad n_j=\sigma_j^+\sigma_j^-,\label{eq:sparse-potential}\\
    \KK_N&=-i[V_N,\,\cdot\,].\label{eq:sparse-Leps}
\end{align}
For a fixed $\eps>0$, the perturbation $\eps\KK_N$ adds an on-site potential of strength $\eps$ on the left and right thirds of the chain. Here $N$ is an auxiliary parameter used to set the physical chain length $m_N=\Theta(\log N)$. Both $\LL_N$ and $\LL_N+\eps\KK_N$ are primitive and have the unique fixed point $\tau_N=2^{-m}\BI$. For every $0<\delta<1$ and, for the lower bound, $\ell\geq\max\{2,4J/\eps\}$, we have
\begin{align}
    \tau_{\operatorname{mix}}^{\LL_N}(\delta)
    &\leq\frac{9(m+1)^9}{2J}\left[\left(\frac m2+1\right)\log2+\log\frac1\delta\right],\label{eq:sparse-main-upper}\\
    \tau_{\operatorname{mix}}^{\LL_N+\eps\KK_N}(1/2)
    &\geq\frac1{16J}\left(1+\frac\eps{2J}\right)^{\ell-1}
    =N^{a\kappa_\eps/3+o(1)},
    \qquad \kappa_\eps=\log\left(1+\frac\eps{2J}\right)>0.\label{eq:sparse-main-lower}
\end{align}
Hence $\LL_N$ has polylogarithmic mixing time in $N$, whereas for every fixed $\eps>0$, the mixing time of $\LL_N+\eps\KK_N$ is bounded below by a positive power of $N$.
\end{example}
The proof is given in \cref{proof:sparse-boundary-potential}. The example exhibits a localization mechanism under boundary-driven dissipation, illustrated in \cref{fig:sparse_boundary_potential}. In the unperturbed chain, shown in \cref{fig:sparse_boundary_potential}a), relaxation proceeds by transporting excitations to the dissipative endpoints and the mixing time is polylogarithmic in $N$. The weak potential localizes a mode in the bulk, as shown in \cref{fig:sparse_boundary_potential}b), making its overlap with the boundary dissipation exponentially small in the physical chain length $m_N=\Theta(\log N)$; this becomes a polynomial mixing-time lower bound in $N$. The perturbation preserves the maximally mixed steady state. Here the time $T=\tau_{\operatorname{mix}}^{\LL_N}(1/2)$ in \cref{lem:perturbation_fixed_point} grows with $N$, so its sufficient bound $\eps \|V\|_{1\to1,0}\leq1/(4T)$ shrinks with size and does not protect against a fixed local perturbation strength.

\begin{figure}[!t]
\centering
\begin{minipage}[t]{0.39\linewidth}
a)\par\smallskip
\centering
\parbox[b][26mm][b]{\linewidth}{\raisebox{-2.3mm}{\includegraphics[width=\linewidth]{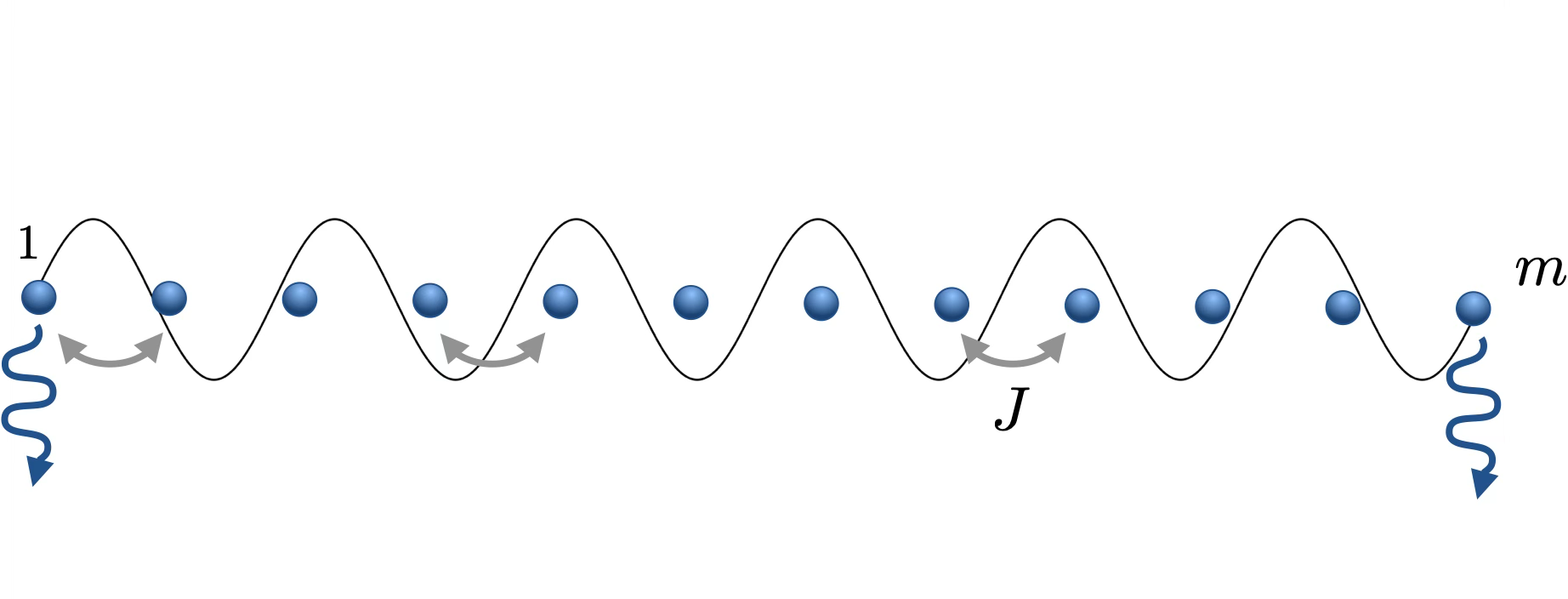}}}
\end{minipage}\hfill
\begin{minipage}[t]{0.59\linewidth}
b)\par\smallskip
\centering
\parbox[b][26mm][b]{\linewidth}{\includegraphics[width=\linewidth]{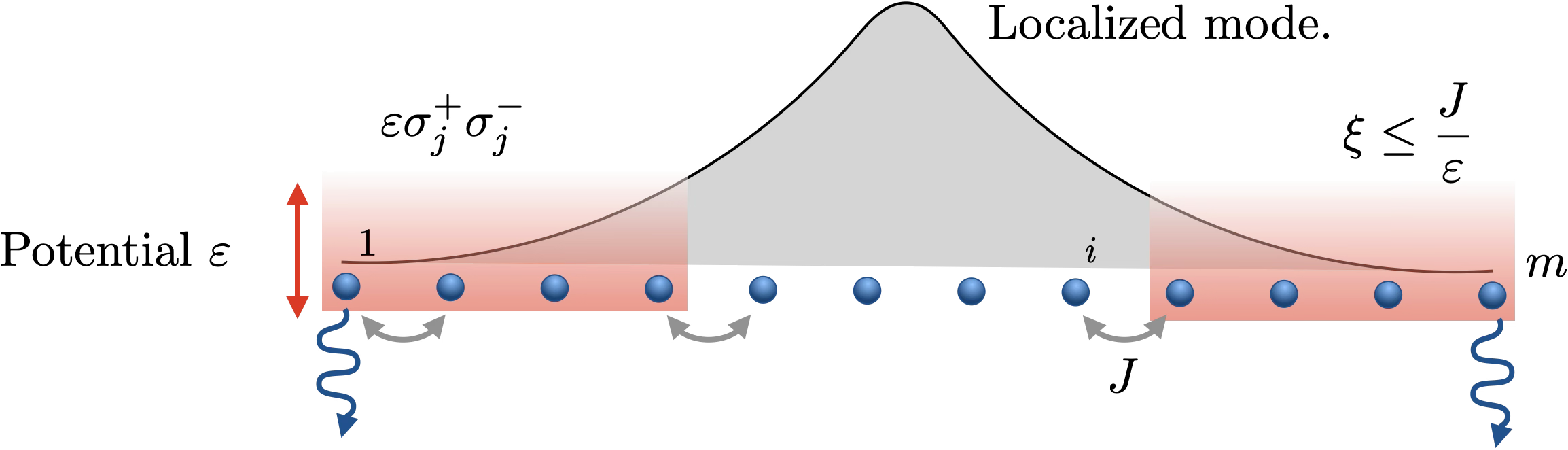}}
\end{minipage}
\caption{Boundary-driven relaxation with and without a local potential. Both panels show a chain of qubits of length $m$ with nearest neighbor hopping of strength $J$ and dissipation via equal-rate raising and lowering jump operators on the first and last qubit. The unique fixed point is the fully mixed state. a) Without a potential, coherences are transported towards the dissipative boundaries, and the mixing time is at most polynomial in $m$. b) Adding a single-site potential $\varepsilon$ on the left and right thirds of the chain produces a mode localized in the middle third. The curve is a schematic amplitude profile: the actual witness has $v_1=v_m=0$, and its neighboring amplitudes satisfy $|v_2|,|v_{m-1}|\leq\exp[-(m/3-1)\log(1+\varepsilon/(2J))]$. This makes the dynamics take $\exp(\Omega(m))$ time to reach the fully mixed state. Setting $m=\Theta(\log N)$, with $N$ an auxiliary size parameter, gives a single chain with a polylogarithmic mixing-time upper bound in $N$ without a potential and a polynomial mixing-time lower bound in $N$ for every fixed $\varepsilon>0$.}
\label{fig:sparse_boundary_potential}
\end{figure}

The preceding examples motivate stronger forms of rapid mixing that one might expect to provide stability. Functional inequalities are a natural candidate. Does rapid mixing with constant decay rate force a positive MLSI constant? Conversely, does imposing an MLSI provide the stability under adding another dissipative generator that rapid mixing alone lacks? The Hamiltonian-cancellation construction answers the first question negatively.

\begin{lemma}[Rapid mixing does not imply MLSI]\label{lemma:rm_MLSI}\resulttoc{lemma:rm_MLSI}{Rapid mixing without MLSI}
The Lindbladian family $(\LL_N)_N$ defined by
\begin{align}
    \LL_N(\rho)=\sum_{j=1}^N\bigl(-i[X_j,\rho]+\Diss{Z_j}(\rho)\bigr)
\end{align}
has a unique full-rank fixed point $\BI/2^N$ and is rapidly mixing with constant decay rate, but it does not have a positive MLSI constant.
\end{lemma}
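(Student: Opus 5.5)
Since $\LL_N$ is a sum of commuting single-site generators $\LL^{(1)}(\rho)=-i[X,\rho]+\Diss{Z}(\rho)$, the semigroup factorizes, $e^{t\LL_N}=(e^{t\LL^{(1)}})^{\otimes N}$, and I will do the analysis at the level of one qubit. In the Pauli basis, $\LL^{(1)}$ annihilates $\BI$ and acts on the Bloch vector $(x,y,z)$. The dephasing part $\Diss{Z}$ damps $x,y$ at rate $2$ and leaves $z$ invariant. The Hamiltonian $X$ rotates $(y,z)$ with frequency $2$ and leaves $x$ fixed. So $x$ decays at rate $2$, and the $(y,z)$ block is $\begin{pmatrix}-2&-2\\2&0\end{pmatrix}$ up to sign conventions. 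This block has eigenvalues $-1\pm i\sqrt3$. Hence the single-qubit channel $T_t$ is primitive with fixed point $\BI/2$, and $\|T_t-P\|_{\diamond}\leq Ce^{-t}$ for a constant $C$, since the block is diagonalizable with bounded condition number.

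For the $N$-qubit bound I will use the telescoping estimate
\begin{align*}
\|T_t^{\otimes N}-P^{\otimes N}\|_\diamond\leq\sum_{k=1}^N\|T_t^{\otimes(k-1)}\otimes(T_t-P)\otimes P^{\otimes(N-k)}\|_\diamond\leq NCe^{-t}.
\end{align*}
Each factor $T_t$ and $P$ is a channel, so only the middle factor contributes. This gives rapid mixing with constant decay rate, with $C_N=CN$ and $\gamma=1$. Uniqueness of the full-rank fixed point $\BI/2^N$ follows from primitivity of each factor.

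The MLSI failure is the main point. Because the fixed point is $\sigma=\BI/2^N$, we have $D(\rho\|\sigma)=N\log2-S(\rho)$. Since $\log\sigma\propto\BI$ and $\Tr\LL(\rho)=0$, the entropy production reduces to $\EP(\rho)=-\Tr[\LL(\rho)\log\rho]$. The Hamiltonian contribution vanishes for every $\rho$, because $\Tr([H,\rho]\log\rho)=0$ ($\rho$ commutes with $\log\rho$). Thus only the dephasing term $\sum_j\Diss{Z_j}$ contributes to the entropy production. Now take any $\rho\neq\sigma$ that is diagonal in the computational basis, for instance $\rho=\ketbra{0}{0}^{\otimes N}$ or, to keep the logarithm finite, a full-rank diagonal state such as $\rho_\delta=((1-\delta)\ketbra00+\delta\ketbra11)^{\otimes N}$. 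Such a state commutes with every $Z_j$, so $\Diss{Z_j}(\rho)=Z_j\rho Z_j-\rho=0$ and $\EP(\rho)=0$. On the other hand $D(\rho\|\sigma)>0$. The inequality $2\alpha D\leq\EP$ then forces $\alpha\le0$ already for $N=1$, so the optimal MLSI constant is $0$ at every size.

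I expect the only delicate step to be the bookkeeping of the definition rather than any estimate. The entropy production in the statement uses the full $\LL$, so I will state explicitly why the Hamiltonian term drops out. This also explains the mechanism: MLSI controls the instantaneous entropy decay and cannot see the coherent rotation. The rotation is what turns the dephasing-invariant $z$ direction into a decaying one over an $O(1)$ time, and it produces zero entropy production. I will use full-rank $\rho_\delta$, with $0<\delta<1/2$, so that $\log\rho$ is well defined and no support issues arise.
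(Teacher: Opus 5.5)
Your proposal is correct and follows essentially the same route as the paper. Rapid mixing comes from the single-site spectrum and a telescoping diamond-norm bound; the paper defers this step to the proof of \cref{lem:XZeps}. MLSI fails because a full-rank product state diagonal in the $Z$ basis is annihilated by the dephasers, so its entropy production is zero while $D(\rho\|\BI/2^N)>0$. Your observation that the Hamiltonian part of $\EP$ vanishes for every $\rho$, because $\sigma\propto\BI$ and $[\rho,\log\rho]=0$, is a slightly cleaner version of the paper's explicit check that $\Tr[Y\log\omega]=0$.
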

The proof is given in \cref{proof:rm_MLSI}. The intuition is that MLSI requires exponential contraction of relative entropy with prefactor one:
\begin{align}
    D(e^{t\LL}(\rho)\|\sigma)&\leq e^{-2\alpha(\LL)t}D(\rho\|\sigma),\qquad t\geq0.
\end{align}
For a full-rank state $\rho\neq\sigma$, differentiating at $t=0$ therefore requires a strictly negative initial entropy derivative, i.e., strictly positive entropy production. But if $\rho$ is diagonal in the computational basis, the dissipators vanish and $\LL(\rho)=-i[H,\rho]$. This initial Hamiltonian motion has zero entropy production relative to $\sigma=\BI/2^N$, so no positive MLSI constant is possible. Thus rapid mixing with constant decay rate need not imply MLSI. Conversely, the Ising example analyzed in \cref{app:ex:ising-kernel-mismatch} shows that the individual local Lindbladians each satisfy a degenerate fixed-point MLSI with constant at least $1/2$, yet their sum mixes in time $e^{\Omega(\sqrt N)}$. MLSI by itself therefore does not provide stability under local perturbations.

Taken together, these examples show that rapid mixing is considerably more fragile than one might expect a priori. Across several settings in which the available structure might suggest stability, adding a local generator can instead produce a parametrically longer mixing time. The question for the remainder of the paper is which additional conditions prevent these mechanisms and yield stability of rapid mixing under local perturbations with uniformly bounded strength and sufficiently small fixed $\eps>0$. \Cref{tab:instability-summary,tab:positive-summary} collect the negative and positive results, respectively.

\clearpage
\begin{table}[p]
\centering
\caption{Instability results. Here ``RM'' denotes rapid mixing. In the localized-mode example, $N$ is auxiliary and the physical chain has $m=\Theta(\log N)$ sites; the slowdown is polynomial to exponential in $m$.}
\label{tab:instability-summary}
\small
\renewcommand{\arraystretch}{1.15}
\begin{tabularx}{\textwidth}{@{}>{\raggedright\arraybackslash}p{0.17\textwidth} >{\raggedright\arraybackslash}X >{\raggedright\arraybackslash}p{0.19\textwidth} >{\raggedright\arraybackslash}p{0.17\textwidth} >{\raggedright\arraybackslash}p{0.14\textwidth}@{}}
\toprule
Setting & Unperturbed generator & Perturbation & Failure & Reference \\
\midrule
Hamiltonian cancellation & A primitive RM generator with constant decay rate and a full-rank fixed point & A second generator with the same properties and the same full-rank fixed point & Mixing time becomes $\Omega(N^2)$ & \cref{lem:XZeps} \\

Fermionic chain & A translation-invariant, CAR-local, uniformly RM generator with constant decay rate, uniformly bounded coefficients, and a full-rank fixed point & A second generator with the same properties and the same full-rank fixed point & Mixing time becomes $\Omega(N^2)$ & \cref{lem:fermionic-addition} \\

Dissipation-only sum & A translation-invariant, dissipation-only RM generator with constant decay rate on a periodic chain & A second translation-invariant, dissipation-only RM generator with constant decay rate & Kernel becomes degenerate; mixing time becomes $\Omega(N)$ & \cref{lem:rm_sum_dissip_only} \\

Hamiltonian perturbation & A geometrically local, translation-invariant, dissipation-only RM generator with constant decay rate on a periodic chain & Translation-invariant on-site Hamiltonian & Mixing time becomes $\Omega(N)$ & \cref{lem:rm_sum_hamiltonian} \\

Localized mode & Primitive generator with polylogarithmic mixing time in auxiliary $N$ & Arbitrarily weak local Hamiltonian & Mixing time becomes $N^{\Omega(1)}$ & \cref{lem:sparse-boundary-potential} \\

Low-temperature Ising model & An RM dissipative Davies generator with MLSI and a full-rank fixed state & A second RM dissipative Davies generator with MLSI and the same full-rank fixed state & Mixing time becomes $e^{\Omega(\sqrt N)}$ & \cref{app:ex:ising-kernel-mismatch}; \cite{GKZ2024} \\

RM versus MLSI & Primitive RM generator with constant decay rate & --- & No MLSI & \cref{lemma:rm_MLSI} \\
\bottomrule
\end{tabularx}
\end{table}
\clearpage

\partbibliography

\addtocontents{toc}{\protect\setcounter{tocdepth}{3}}
\hypersetup{bookmarksdepth=3}
\clearpage
\section{Stability results}\label{sec:stability-results}

\paperpart{sections-commuting_generators_mlsi}{0}
\subsection{Commuting generators}\label{sec:commuting}
The commuting setting is a natural starting point because the simplest local-to-global proof of rapid mixing already has this structure. Let
\begin{align}
    \LL=\sum_{x\in\Lambda}\LL_x. \label{eq:commuting-single-qubit-generator}
\end{align}
be a sum of single-qubit Lindbladians with well-defined infinite-time limits, where \(\LL_x\) acts only on qubit \(x\). Order the sites as \(\Lambda=\{x_1,\ldots,x_N\}\). Since the terms act on different qubits, their semigroups commute and factorize. Denote their asymptotic projections by \(P_{\LL_{x_j}}\). The local-to-global step is the telescoping sum
\begin{align}
    \bigotimes_{j=1}^N e^{t\LL_{x_j}}-\bigotimes_{j=1}^N P_{\LL_{x_j}}
    &=\sum_{j=1}^N\left(\bigotimes_{k<j}e^{t\LL_{x_k}}\right)\otimes
    \left(e^{t\LL_{x_j}}-P_{\LL_{x_j}}\right)\otimes
    \left(\bigotimes_{k>j}P_{\LL_{x_k}}\right), \label{eq:single-qubit-telescoping}
\end{align}
and hence, using the triangle inequality, multiplicativity of the diamond norm, and contractivity of the channels in the other tensor factors,
\begin{align*}
    \left\|\bigotimes_{j=1}^N e^{t\LL_{x_j}}-\bigotimes_{j=1}^N P_{\LL_{x_j}}\right\|_\diamond
    &\leq \sum_{j=1}^N\left\|e^{t\LL_{x_j}}-P_{\LL_{x_j}}\right\|_\diamond.
\end{align*}
Thus, a uniform single-qubit estimate \(\|e^{t\LL_x}-P_{\LL_x}\|_\diamond\leq Ce^{-\lambda t}\) yields a global bound \(C|\Lambda|e^{-\lambda t}\), and hence a logarithmic mixing time. This is a basic local-to-global argument: local relaxation estimates for the individual qubits can be assembled into global relaxation.

This result extends to commuting Lindbladians \cite{KastoryanoWolf2012}. If $\LL,\KK$ have well-defined infinite-time limits and $[\LL,\KK]=0$, then $P_{\LL+\KK}=P_\LL P_\KK$ and, for every state $\rho$,
\begin{align}
    \|e^{t(\LL+\KK)}(\rho)-P_{\LL+\KK}(\rho)\|_1 \leq \|e^{t\LL}(\rho)-P_{\LL}(\rho)\|_1 +\|e^{t\KK}(\rho)-P_{\KK}(\rho)\|_1,
\end{align}
and the mixing times satisfy:
\begin{align}
    \tau_{\operatorname{mix}}^{\LL+\KK}(\delta) \leq \max\{\tau_{\operatorname{mix}}^{\LL}(\delta/2),\tau_{\operatorname{mix}}^{\KK}(\delta/2)\} .
\end{align}
We include the proof in \cref{proof:commuting-sum} for completeness. The next two lemmas give related consequences when the steady states of $\LL$ are also steady states of $\KK$.

\begin{lemma}[Commuting addition to a relaxing generator]\label{lem:commuting-unique-stationary}\subsectionresulttoc{lem:commuting-unique-stationary}{Commuting addition to a relaxing generator}
Let $\LL,\KK$ be commuting Lindbladians and suppose that $\LL$ has a unique stationary state $\sigma$ and a well-defined infinite-time limit. Then $\KK(\sigma)=0$, $\LL+\KK$ has the unique stationary state $\sigma$, and, for every $\rho\in\SSS(\HH)$ and $t\ge0$,
\begin{align*}
    \|e^{t(\LL+\KK)}(\rho)-\sigma\|_1\le \|e^{t\LL}(\rho)-\sigma\|_1.
\end{align*}
In particular, every rapid-mixing bound for $\LL$ is inherited by $\LL+\KK$.
\end{lemma}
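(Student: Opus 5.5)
The plan is to first show that $\sigma$ is invariant under $\KK$, then exploit commutation so that $e^{t\KK}$ can be pulled through $e^{t\LL}$ and discarded by contractivity.

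\emph{Step 1: $\KK(\sigma)=0$.} Commutation gives $\LL(\KK(\sigma))=\KK(\LL(\sigma))=0$. Hence $\KK(\sigma)\in\ker\LL$. Since $\sigma$ is the unique stationary state, $\ker\LL$ is spanned by $\sigma$: the kernel of a Lindbladian is spanned by the stationary density matrices, because it is closed under the Hermitian part and the Jordan decomposition of a kernel element yields stationary states. So $\KK(\sigma)=c\sigma$ for some $c\in\C$. Because $\KK$ is trace-annihilating, $0=\Tr\KK(\sigma)=c$, which gives $\KK(\sigma)=0$.

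\emph{Step 2: inequality.} Since $[\LL,\KK]=0$, the semigroups satisfy $e^{t(\LL+\KK)}=e^{t\KK}e^{t\LL}$. Step 1 gives $e^{t\KK}(\sigma)=\sigma$. Therefore
\begin{align*}
    e^{t(\LL+\KK)}(\rho)-\sigma=e^{t\KK}\bigl(e^{t\LL}(\rho)-\sigma\bigr).
\end{align*}
The argument $e^{t\LL}(\rho)-\sigma$ is a Hermitian traceless operator, and $e^{t\KK}$ is a quantum channel, hence a trace-norm contraction on Hermitian operators. This yields the stated bound.

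\emph{Step 3: uniqueness and inherited mixing.} The state $\sigma$ is stationary for $\LL+\KK$ because $\LL(\sigma)=\KK(\sigma)=0$. Let $\rho$ be any stationary state of $\LL+\KK$. By the inequality, $\|\rho-\sigma\|_1\le\|e^{t\LL}(\rho)-\sigma\|_1$. The right-hand side tends to $\|P_\LL(\rho)-\sigma\|_1=0$ as $t\to\infty$, using the well-defined infinite-time limit and $P_\LL(\rho)=\Tr(\rho)\sigma$. Hence $\rho=\sigma$. Uniqueness holds among density matrices, and extends to the whole kernel by the spanning argument of Step 1. The same bound shows that $e^{t(\LL+\KK)}(\rho)\to\sigma$ for every $\rho$. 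Consequently $P_{\LL+\KK}$ exists, equals $\Tr(\cdot)\sigma$, and $d_{\LL+\KK}(t)\le d_\LL(t)$, so every rapid-mixing bound for $\LL$ transfers to $\LL+\KK$.

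The only delicate step is Step 1, specifically the claim that the unique stationary state spans all of $\ker\LL$, including non-Hermitian elements. The plan is to decompose $X\in\ker\LL$ as $X=A+iB$ with $A,B$ Hermitian; both parts lie in the kernel because $\LL$ is Hermiticity preserving. Each Hermitian part then equals a multiple of $\sigma$: apply $P_\LL$, which fixes kernel elements and maps $A$ to $\Tr(A)\sigma$ by linearity and the positive/negative decomposition.
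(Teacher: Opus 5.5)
Your proof is correct. Steps 2 and 3 match the paper's argument: factor $e^{t(\LL+\KK)}=e^{t\KK}e^{t\LL}$, use $e^{t\KK}(\sigma)=\sigma$ and trace-norm contractivity of $e^{t\KK}$, then let $t\to\infty$ to get convergence to $\sigma$ and hence uniqueness.

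The genuine difference is in Step 1. You apply $\KK$ itself to $\sigma$, which gives the traceless operator $\KK(\sigma)\in\ker\LL$. That operator is not a state, so uniqueness of the stationary \emph{state} does not apply directly. You must first upgrade it to $\ker\LL=\C\sigma$, which your Hermitian-part argument through $P_\LL$ does validly.

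The paper applies the semigroup instead. Commutation gives $\LL(e^{s\KK}(\sigma))=e^{s\KK}(\LL(\sigma))=0$, and $e^{s\KK}(\sigma)$ is a density matrix because $e^{s\KK}$ is a channel. Uniqueness among states then gives $e^{s\KK}(\sigma)=\sigma$ for all $s\ge0$, and differentiating at $s=0$ yields $\KK(\sigma)=0$. This avoids the step you flagged as delicate, and it uses neither the infinite-time limit nor any structure of $\ker\LL$ beyond uniqueness among states.

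Your route costs an extra lemma. In exchange, it yields the stronger statement that $\ker\LL$ is one-dimensional, and by the same argument so is $\ker(\LL+\KK)$.
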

The proof of \cref{lem:commuting-unique-stationary} is in \cref{proof:commuting-unique-stationary}. Uniqueness makes the kernel inclusion automatic.

\begin{lemma}[Commuting sum with kernel inclusion]\label{lem:commuting-kernel}\subsectionresulttoc{lem:commuting-kernel}{Commuting sum with kernel inclusion}
Let $\LL,\KK$ be commuting Lindbladians, suppose that $P_\LL=\lim_{t\to\infty}e^{t\LL}$ exists, and assume $\ker(\LL)\subseteq\ker(\KK)$. Then $P_{\LL+\KK}$ exists and equals $P_\LL$, and for all $\rho\in\SSS(\HH)$ and all $t\ge0$,
\begin{align*}
    \|e^{t(\LL+\KK)}(\rho)-P_{\LL+\KK}(\rho)\|_1\le \|e^{t\LL}(\rho)-P_\LL(\rho)\|_1.
\end{align*}
\end{lemma}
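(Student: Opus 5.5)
We prove the lemma by showing that $P_\LL$ absorbs $e^{t\KK}$, which makes the $\KK$-evolution invisible in the error term. Since $[\LL,\KK]=0$, the semigroups commute and factorize, $e^{t(\LL+\KK)}=e^{t\LL}e^{t\KK}=e^{t\KK}e^{t\LL}$. Moreover $P_\LL=\lim_{s\to\infty}e^{s\LL}$ commutes with $\KK$ and with $e^{t\KK}$, because $\KK$ commutes with every $e^{s\LL}$ and we are in finite dimensions.

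The first key step is the identity $P_\LL e^{t\KK}=e^{t\KK}P_\LL=P_\LL$. The range of $P_\LL$ is $\ker\LL$, and by hypothesis $\ker\LL\subseteq\ker\KK$. Hence $\KK P_\LL=0$, so $e^{t\KK}P_\LL=P_\LL$, and commutation gives $P_\LL e^{t\KK}=P_\LL$ as well.

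Next we identify $P_{\LL+\KK}$. We have
\[
e^{t(\LL+\KK)}-P_\LL=e^{t\KK}\bigl(e^{t\LL}-P_\LL\bigr),
\]
which uses $e^{t\KK}P_\LL=P_\LL$. Since $e^{t\KK}$ is a bounded family (a quantum channel, with uniformly bounded norm in finite dimension) and $e^{t\LL}\to P_\LL$, the right-hand side tends to $0$. So the limit $\lim_{t\to\infty}e^{t(\LL+\KK)}$ exists and equals $P_\LL$. By \cref{lem:infinite-time-limit}, the existence of this limit is equivalent to a well-defined infinite-time limit for $\LL+\KK$, so $P_{\LL+\KK}$ exists and $P_{\LL+\KK}=P_\LL$.

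For the bound, fix $\rho\in\SSS(\HH)$ and apply the same identity:
\[
e^{t(\LL+\KK)}(\rho)-P_{\LL+\KK}(\rho)=e^{t\KK}\bigl(e^{t\LL}(\rho)-P_\LL(\rho)\bigr).
\]
The operator $e^{t\LL}(\rho)-P_\LL(\rho)$ is Hermitian (and traceless). A quantum channel is trace-norm contractive on Hermitian operators: split the operator into its positive and negative parts and use trace preservation and positivity. This contractivity gives the claimed inequality.

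The only step needing care is the commutation $[P_\LL,\KK]=0$. It follows because $P_\LL$ is a pointwise limit of operators $e^{s\LL}$ each commuting with $\KK$. Alternatively, $P_\LL$ is the Riesz spectral projection of $\LL$ at $0$, i.e.\ a polynomial in $\LL$. Nothing else is an obstacle.
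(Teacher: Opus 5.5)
Your proof is correct and follows the paper's argument: kernel inclusion gives $\KK P_\LL=0$, hence $e^{t\KK}P_\LL=P_\LL$. Then $e^{t(\LL+\KK)}(\rho)-P_\LL(\rho)=e^{t\KK}\bigl(e^{t\LL}(\rho)-P_\LL(\rho)\bigr)$, and trace-norm contractivity gives both $P_{\LL+\KK}=P_\LL$ and the bound. The step you flag as needing care, $[P_\LL,\KK]=0$, is never actually used: the identity only needs $e^{t\KK}P_\LL=P_\LL$ and the factorization $e^{t(\LL+\KK)}=e^{t\KK}e^{t\LL}$.
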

The proof of \cref{lem:commuting-kernel} is in \cref{proof:commuting-kernel}.
\begin{proposition}[Stability under commuting additions]
\label{prop:commuting-stability}\subsectionresulttoc{prop:commuting-stability}{Stability under commuting additions}
Let $(\LL_N)_N$ and $(\KK_N)_N$ be Lindbladian families with $[\LL_N,\KK_N]=0$ for every $N$. If both families are rapidly mixing, then $(\LL_N+\KK_N)_N$ is rapidly mixing. If $(\LL_N)_N$ is rapidly mixing and, for every $N$, $\LL_N$ either has a unique stationary state or satisfies $\ker(\LL_N)\subseteq\ker(\KK_N)$, then its mixing bound is inherited by $(\LL_N+\KK_N)_N$.
\end{proposition}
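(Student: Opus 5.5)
The proposition is a direct consequence of the commuting-sum estimate stated just before \cref{lem:commuting-unique-stationary} (proved in \cref{proof:commuting-sum}) together with \cref{lem:commuting-unique-stationary,lem:commuting-kernel}. The plan is to apply these results at each fixed $N$ and then track how the constants behave in $N$.

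\textbf{Case 1: both families rapidly mixing.} By definition, both $\LL_N$ and $\KK_N$ have well-defined infinite-time limits. Together with $[\LL_N,\KK_N]=0$, the commuting-sum result gives that $P_{\LL_N+\KK_N}=P_{\LL_N}P_{\KK_N}$ exists, so $\LL_N+\KK_N$ has a well-defined infinite-time limit. It also gives
\begin{align*}
    \tau_{\operatorname{mix}}^{\LL_N+\KK_N}(1/2)\leq\max\{\tau_{\operatorname{mix}}^{\LL_N}(1/4),\tau_{\operatorname{mix}}^{\KK_N}(1/4)\}.
\end{align*}
Next I use the submultiplicativity bound from the preliminaries, $\tau_{\operatorname{mix}}(\varepsilon)\leq\lceil\log_2\varepsilon^{-1}\rceil\,\tau_{\operatorname{mix}}(1/2)$. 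With $\varepsilon=1/4$ this gives $\tau_{\operatorname{mix}}^{\LL_N}(1/4)\leq 2\,\tau_{\operatorname{mix}}^{\LL_N}(1/2)$, and likewise for $\KK_N$. If $\LL_N$ has constants $(a_1,q_1)$ and $\KK_N$ has constants $(a_2,q_2)$, then $\LL_N+\KK_N$ is rapidly mixing with constants $a=2\max\{a_1,a_2\}$ and $q=\max\{q_1,q_2\}$. These use $\log(eN)\geq1$, so the larger exponent dominates. If both families are rapidly mixing with constant decay rate, the triangle-inequality form of the commuting-sum result directly gives prefactor $C_N+C_N'$ and rate $\min\{\gamma,\gamma'\}$.

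\textbf{Case 2: only $\LL_N$ is assumed rapidly mixing.} Fix $N$.
\begin{itemize}
\item If $\LL_N$ has a unique stationary state, \cref{lem:commuting-unique-stationary} gives $\|e^{t(\LL_N+\KK_N)}(\rho)-\sigma\|_1\leq\|e^{t\LL_N}(\rho)-\sigma\|_1$ pointwise in $\rho$ and $t$. For this I need $P_{\LL_N+\KK_N}(\rho)=\sigma$; the lemma supplies uniqueness of the stationary state, and the pointwise bound itself shows $e^{t(\LL_N+\KK_N)}(\rho)\to\sigma$, so the limit is well defined.
\item If instead $\ker\LL_N\subseteq\ker\KK_N$, \cref{lem:commuting-kernel} gives $P_{\LL_N+\KK_N}=P_{\LL_N}$ and the same pointwise domination.
\end{itemize}
In both subcases $d_{\LL_N+\KK_N}(t)\leq d_{\LL_N}(t)$ for all $t$. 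Hence $\tau_{\operatorname{mix}}^{\LL_N+\KK_N}(\varepsilon)\leq\tau_{\operatorname{mix}}^{\LL_N}(\varepsilon)$ for every $\varepsilon$, and any bound of the form \cref{eq:rapid-mixing-time} or \cref{eq:sufficient} transfers with identical constants. The alternative between the two subcases may be chosen separately for each $N$, because the bound holds size by size.

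There is no real obstacle, since the lemmas carry the content. The only point needing care is bookkeeping: confirming that the sum has a well-defined infinite-time limit, so that its mixing time is defined, and that the accuracy loss from $1/2$ to $1/4$ costs only a constant factor, uniform in $N$.
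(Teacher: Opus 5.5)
Your proposal is correct and follows the paper's argument, which derives the proposition directly from the commuting-sum estimate (\cref{lem:commuting-sum}) together with \cref{lem:commuting-unique-stationary,lem:commuting-kernel}. Your extra bookkeeping makes explicit what the paper leaves implicit: the step $\tau_{\operatorname{mix}}(1/4)\leq2\,\tau_{\operatorname{mix}}(1/2)$ via submultiplicativity, and the check that $\LL_N+\KK_N$ has a well-defined infinite-time limit with $P_{\LL_N+\KK_N}(\rho)=\sigma$ in the unique-state case.
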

The proposition follows from the commuting-sum estimate above and \cref{lem:commuting-unique-stationary,lem:commuting-kernel}. These statements can fail without commutativity, as the counterexamples of \cref{sec:instability_results} show.

\partbibliography

\paperpart{sections-strong_bulk_dissipation}{0}
\subsection{Strong bulk dissipation}\label{sec:dissipative-covers}
Now suppose that the single-qubit \(\LL\) in \cref{eq:commuting-single-qubit-generator} is perturbed by a local Lindbladian \(\KK\). One can ask whether, for sufficiently small \(\varepsilon\), rapid mixing of \(\LL+\varepsilon\KK\) follows from a local-to-global argument as above. We will show that this holds more generally for local Lindbladians \(\LL,\KK\) whenever \(\LL\) dissipates information sufficiently quickly in local regions throughout the system.

The intuitive reason for this is already suggested by the examples in \cref{sec:instability_results}: sufficiently strong and uniformly distributed local dissipation should prevent the mechanisms by which rapid mixing is lost upon adding a perturbation. In \cref{lem:sparse-boundary-potential}, for example, the dissipation acts only at the boundaries, which are separated by a distance growing with $N$, and a perturbation can create a mode localized far from them. This motivates requiring bounded local regions throughout the system to contain Lindbladian terms that erase their local degrees of freedom. If these terms are strong enough compared with the interactions coupling nearby regions, their local relaxation can again be assembled into global rapid mixing with constant decay rate, robustly under small local perturbations. We introduce bulk-dissipative covers to formalize this local-to-global mechanism. These covers generalize previous oscillator norm methods, which originate in classical lattice spin dynamics \cite{StroockZegarlinski1992a,StroockZegarlinski1992b,GuionnetZegarlinski2003} and have also been used to study ergodicity of infinite-size limits \cite{majewski1995quantum,BertiniDeSolePostaPresilla2025}. These methods have been applied to Gibbs samplers in several settings \cite{KB2016,RouzeFrancaAlhambra2024b,SmidMeisterBertaBondesan2025rm}, typically comparing a sum of single-qubit detailed-balance generators with a quasilocal detailed-balance Gibbs sampler.

We say that a Lindbladian $\LL=\sum_{A\in\mathcal S_{\LL}}\LL_A$ is $l$-local with interaction strength bounded by $s_{\LL}$ if every $\LL_A$ is a Lindbladian supported on $A$, acting on at most $|A|\leq l$ sites, and the interaction strength per site is defined by
\begin{align}
    s_{\LL} &:= \max_{x\in\Lambda}\sum_{\substack{A\in\mathcal S_{\LL}\\
    x\in A}}\lVert\LL_A\rVert_{1\to1}.
\end{align}

In this section, induced norms of local maps are computed after tensoring with the identity map on the rest of the lattice: thus $\|\Phi_A\|_{1\to1}$ denotes $\|\Phi_A\otimes\operatorname{id}_{A^c}\|_{1\to1}$. Choose terms $\MM_\alpha:=\LL_{A_\alpha}$, $\alpha=1,\ldots,M$, from the decomposition indexed by $\mathcal S_{\LL}$. For each term $\MM_\alpha$, we call the largest region on which every stationary observable is trivial its bulk, illustrated in \cref{fig:main-dissipative-cover}a):
\begin{align}
    A_\alpha^\circ&:=\left\{x\in A_\alpha:\ker\MM_\alpha^*\subseteq\BI_x\otimes\mathcal B(\mathcal H_{x^c})\right\}. \label{eq:main-bulk-dissipation}
\end{align}
For a particular term $\MM_\alpha$, the bulk may be empty. We make three assumptions. First, as illustrated in \cref{fig:main-dissipative-cover}b), the bulk regions cover the lattice:
\begin{align}
    \bigcup_{\alpha=1}^M A_\alpha^\circ&=\Lambda.
\end{align}
Second, every term $\MM_\alpha$ has a well-defined infinite-time limit and, for some $t_{\mathrm{loc}}>0$, satisfies the local contraction estimate:
\begin{align}
    \lVert e^{t_{\mathrm{loc}}\MM_\alpha}-P_{\MM_\alpha}\rVert_\diamond &\leq \frac12, \label{eq:main-local-bulk-decay}
\end{align}
uniformly in $\alpha$, where $P_{\MM_\alpha}:=\lim_{t\to\infty}e^{t\MM_\alpha}$. These first two assumptions define a bulk-dissipative cover with local mixing time $t_{\mathrm{loc}}$. To state the third assumption, fix this cover and the local Lindblad decomposition $\LL=\sum_{B\in\mathcal S_{\LL}}\LL_B$, and define
\begin{align}
    n_{\LL}(B)&:=\#\{\alpha\in\{1,\ldots,M\}:[\MM_\alpha,\LL_B]\neq0\}, \label{eq:main-noncommutation-count}\\
    \kappa_{\LL}&:=4\max_{1\leq\nu\leq M}\sum_{\substack{B\in\mathcal S_{\LL}\\ B\cap A_\nu^\circ\neq\varnothing}}n_{\LL}(B)\lVert\LL_B\rVert_{1\to1}. \label{eq:main-local-influence}
\end{align}
Thus $\kappa_{\LL}$ bounds the total rate at which the remaining terms of $\LL$ introduce fluctuations into a locally relaxing region, as illustrated in \cref{fig:main-dissipative-cover}c). It measures the influence of noncommuting terms; terms that commute with every cover generator $\MM_\alpha$ contribute zero, regardless of their strength. Third, we assume the influence condition
\begin{align}
    \kappa_{\LL}t_{\mathrm{loc}}&<\frac12, \label{eq:strong-bulk-unperturbed-condition}
\end{align}
Under these three assumptions, $\LL$ has a unique fixed point. For covers satisfying $t_{\mathrm{loc}}=O(1)$ and $M=\operatorname{poly}(|\Lambda|)$, a size-independent positive margin in \cref{eq:strong-bulk-unperturbed-condition} gives rapid mixing with constant decay rate. A more general statement is given in \cref{app:prop:stability-strong-bulk-dissipation}.
\begin{figure}[!t]
\centering
\begin{minipage}[t]{0.58\linewidth}
a)\par\smallskip
\includegraphics[width=\linewidth]{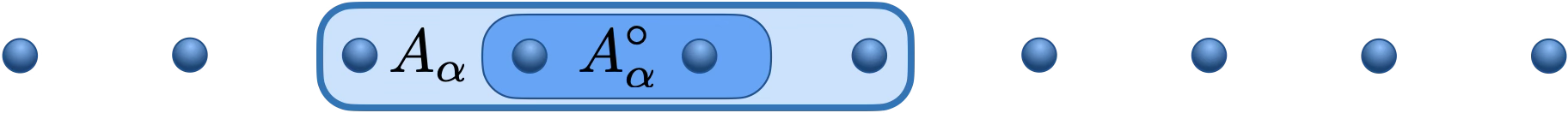}
\end{minipage}\par\medskip
\begin{minipage}[t]{0.58\linewidth}
b)\par\smallskip
\includegraphics[width=\linewidth]{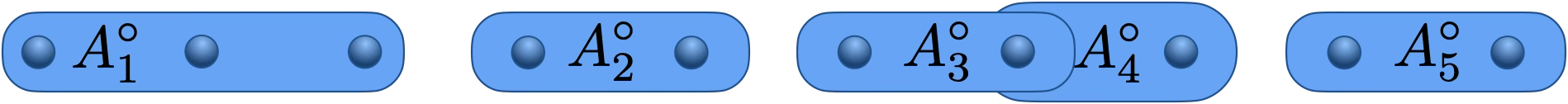}
\end{minipage}\par\medskip
\begin{minipage}[t]{0.58\linewidth}
c)\par\smallskip
\begingroup
\setlength{\unitlength}{\linewidth}
\begin{picture}(1,0.133684)
\put(0,0){\includegraphics[width=\linewidth]{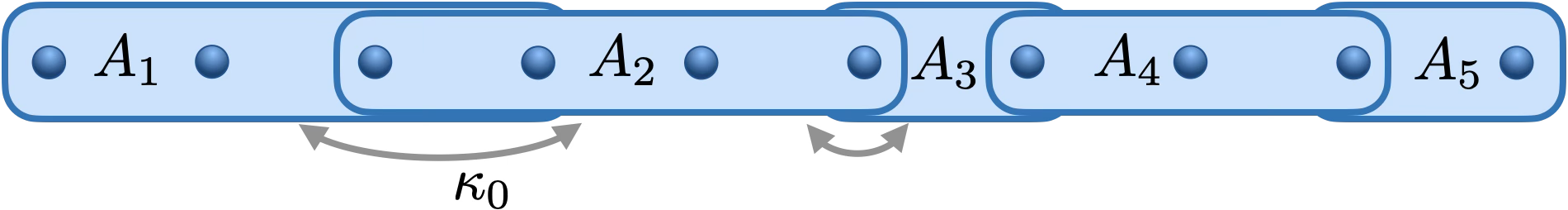}}
\put(0.284,0){\color{white}\rule{0.052\linewidth}{0.026\linewidth}}
\put(0.307,0.004){\makebox(0,0)[b]{\scriptsize $\kappa_{\LL}$}}
\end{picture}
\endgroup
\end{minipage}
\caption{Bulk-dissipative covers and the rapid-mixing conditions. a) The local generator $\MM_\alpha$ acts on $A_\alpha$, and every stationary observable acts trivially on its bulk $A_\alpha^\circ$. We assume uniform local relaxation: $\|e^{t_{\mathrm{loc}}\MM_\alpha}-P_{\MM_\alpha}\|_\diamond\leq1/2$ at a time $t_{\mathrm{loc}}$ for all $\alpha$. b) The bulk regions may overlap and must cover the lattice, $\bigcup_\alpha A_\alpha^\circ=\Lambda$, so every site is locally dissipated by at least one cover generator. c) Interactions between overlapping supports can reintroduce fluctuations into the locally relaxing regions. The coefficient $\kappa_{\LL}$ bounds the total influence of these couplings, as defined in \cref{eq:main-local-influence}. The condition $\kappa_{\LL}t_{\mathrm{loc}}<1/2$ ensures that local relaxation dominates this influence.}
\label{fig:main-dissipative-cover}
\end{figure}
The support size and incidence of the cover are quantified by
\begin{align}
    a &:=\max_\alpha|A_\alpha|,\qquad \mathfrak d:=\max_{x\in\Lambda}\#\{\alpha\in\{1,\ldots,M\}:x\in A_\alpha\},
\end{align}
respectively.\par\smallskip

For a $k$-local Lindbladian perturbation $\KK=\sum_{B\in\mathcal S_{\KK}}\KK_B$ with interaction strength bounded by $s_{\KK}$, keep the cover fixed and define $\kappa_{\KK}$ and $\kappa_{\LL+\varepsilon\KK}$ by \cref{eq:main-noncommutation-count,eq:main-local-influence}, replacing $\LL$ by $\KK$ and $\LL+\varepsilon\KK$, respectively. Then, for $\varepsilon\geq0$, the definition in \cref{eq:main-local-influence} gives
\begin{align}
    \kappa_{\LL+\varepsilon\KK}&\leq\kappa_{\LL}+\varepsilon\kappa_{\KK}
    \leq\kappa_{\LL}+\varepsilon\cdot4a\mathfrak d\,k s_{\KK}. \label{eq:main-perturbed-local-influence}
\end{align}

\begin{proposition}[Stability of strong bulk dissipation]
\label{prop:stability-strong-bulk-dissipation}\subsectionresulttoc{prop:stability-strong-bulk-dissipation}{Stability of strong bulk dissipation}
Let $(\LL_N)_N$ be an $l$-local Lindbladian family with bulk-dissipative covers of uniformly bounded support size $a$ and incidence $\mathfrak d$, and local mixing time $t_{\mathrm{loc}}=O(1)$. Suppose that $\kappa_{\LL_N}t_{\mathrm{loc}}\leq1/2-\Delta$ for some $\Delta>0$ independent of $N$. For every fixed locality $k$ and interaction-strength bound $s_{\KK}$, there exists a constant $\varepsilon_*>0$, independent of $N$, such that, for every $k$-local Lindbladian family $(\KK_N)_N$ with interaction strength at most $s_{\KK}$, the family $(\LL_N+\varepsilon\KK_N)_N$ remains rapidly mixing with constant decay rate and has a unique stationary state for all $0\leq\varepsilon\leq\varepsilon_*$.
\end{proposition}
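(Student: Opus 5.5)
The plan is to reduce the proposition to the unperturbed statement quoted just before it. That statement says: a bulk-dissipative cover with $t_{\mathrm{loc}}=O(1)$, $M=\operatorname{poly}(|\Lambda|)$ and a size-independent margin in \cref{eq:strong-bulk-unperturbed-condition} yields a unique fixed point and rapid mixing with constant decay rate. Its general form is in \cref{app:prop:stability-strong-bulk-dissipation}. The key observation is that the cover for $\LL_N$ is still a cover for $\LL_N+\varepsilon\KK_N$, so only the influence constant changes.

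The first step is to check that the cover survives the perturbation. The cover generators $\MM_\alpha$ are terms of the decomposition of $\LL_N$, and we may keep them, unperturbed, as distinguished terms of the decomposition of $\LL_N+\varepsilon\KK_N$. We append the terms $\varepsilon\KK_B$ to the index set. With this choice:
\begin{itemize}
\item the bulks $A_\alpha^\circ$ are unchanged, since they depend only on $\ker\MM_\alpha^*$;
\item the covering property $\bigcup_\alpha A_\alpha^\circ=\Lambda$ still holds;
\item the local contraction \cref{eq:main-local-bulk-decay} still holds with the same $t_{\mathrm{loc}}$.
\end{itemize}
Because $\MM_\alpha$ is untouched, $a$, $\mathfrak d$ and $M$ are also unchanged. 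Since $M\le \mathfrak d|\Lambda|$, we have $M=\operatorname{poly}(|\Lambda|)$.

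The second step bounds the new influence constant using \cref{eq:main-perturbed-local-influence}:
\[
\kappa_{\LL_N+\varepsilon\KK_N}\,t_{\mathrm{loc}}\le \Bigl(\tfrac12-\Delta\Bigr)+\varepsilon\,4a\mathfrak d k s_{\KK}\,t_{\mathrm{loc}}.
\]
I will verify this inequality briefly. Definition \cref{eq:main-local-influence} is a maximum over $\nu$ of a sum over terms $B$, weighted by $n(B)\|\cdot\|_{1\to1}$. The sum splits over $\LL$-terms and $\KK$-terms, which gives the first inequality.

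For the $\KK$ part I use three crude bounds:
\begin{itemize}
\item $n_\KK(B)\le\#\{\alpha: A_\alpha\cap B\ne\varnothing\}\le \mathfrak d|B|\le \mathfrak d k$, because maps with disjoint supports commute;
\item each $B$ with $B\cap A_\nu^\circ\neq\varnothing$ contains some site $x\in A_\nu$;
\item summing over the at most $a$ such sites $x$ gives at most $a s_{\KK}$ of total $\|\KK_B\|_{1\to1}$ weight.
\end{itemize}

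Choose
\[
\varepsilon_*:=\frac{\Delta}{8a\mathfrak d k s_{\KK}t_{\mathrm{loc}}},
\]
which is independent of $N$. Then for all $0\le\varepsilon\le\varepsilon_*$,
\[
\kappa_{\LL_N+\varepsilon\KK_N}\,t_{\mathrm{loc}}\le \tfrac12-\tfrac{\Delta}{2}.
\]
This is a size-independent margin.

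The third step applies the unperturbed statement (\cref{app:prop:stability-strong-bulk-dissipation}) to $\LL_N+\varepsilon\KK_N$. This gives a unique stationary state and a bound $C_Ne^{-\gamma t}$ with $\gamma$ depending only on $\Delta/2$ and $t_{\mathrm{loc}}$, and $C_N=\operatorname{poly}(N)$.

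The main obstacle is that the whole argument rests on that appendix result. If I had to prove it myself, I would use an oscillator-norm / Stroock–Zegarlinski scheme. One defines local oscillations $\delta_\alpha(O)=\|O-P^*_{\MM_\alpha}O\|$ for observables $O$. Over time $t_{\mathrm{loc}}$ each $\delta_\alpha$ contracts by $1/2$ under the $\MM_\alpha$ flow. Duhamel's formula then shows that the noncommuting terms feed at most $\kappa t_{\mathrm{loc}}\sum$ of neighbouring oscillations back into each $\delta_\alpha$. The resulting matrix recursion has row sums below $1/2+\kappa t_{\mathrm{loc}}<1$. Iterating it gives geometric decay of the total oscillation. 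Since the bulks cover $\Lambda$, vanishing of all $\delta_\alpha$ forces $O\propto\BI$, which yields uniqueness and converts oscillation decay into a diamond-norm bound with prefactor $O(M)$.

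The delicate point in that scheme is the Duhamel comparison between $e^{t\LL}$ and the individual $e^{t\MM_\alpha}$. Handling it is where the factor $4$ and the counts $n(B)$ in $\kappa$ arise.
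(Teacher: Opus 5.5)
Your proposal is correct and follows essentially the paper's own proof. You keep the same cover, bound the new influence by $\kappa_{\LL}+\varepsilon\cdot 4a\mathfrak d\,k s_{\KK}$, choose $\varepsilon_*=\Delta/(8a\mathfrak d\,k s_{\KK}t_{\mathrm{loc}})$ to retain a margin $\Delta/2$, and invoke the oscillator-norm/Volterra criterion. The paper routes this last step through $\kappa^{\mathrm{loc}}_{\LL+\varepsilon\KK}$, with the commuting perturbation terms absorbed into $\mathcal C_\alpha$; this is equivalent to your applying the unperturbed criterion directly to $\LL_N+\varepsilon\KK_N$.

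Two small details remain. You should replace $t_{\mathrm{loc}}$ in $\varepsilon_*$ by a uniform upper bound $T\geq t_{\mathrm{loc}}$, as the paper does, so that $\varepsilon_*$ is genuinely independent of $N$. The case $s_{\KK}=0$ is trivial.
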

The proof of \cref{prop:stability-strong-bulk-dissipation} is in \cref{proof:stability-strong-bulk-dissipation}. If $\kappa_{\LL}t_{\mathrm{loc}}\leq1/2-\Delta$ for some $\Delta>0$ and $s_{\KK}>0$, one may take
\begin{align}
    \varepsilon_*&:=\frac{\Delta}{8a\mathfrak d\,k s_{\KK}t_{\mathrm{loc}}}. \label{eq:main-constant-perturbation-threshold}
\end{align}

\begin{corollary}[Commuting strong bulk dissipation]
\label{cor:commuting-strong-bulk-dissipation}\subsectionresulttoc{cor:commuting-strong-bulk-dissipation}{Commuting strong bulk dissipation}
Let $(\LL_N)_N$ be an $l$-local Lindbladian family whose local terms commute and contain bulk-dissipative covers of uniformly bounded support size $a$ and incidence $\mathfrak d$, and local mixing time $t_{\mathrm{loc}}=O(1)$. For every fixed locality $k$ and interaction-strength bound $s_{\KK}$, there exists a constant $\varepsilon_*>0$, independent of $N$, such that, for every $k$-local Lindbladian family $(\KK_N)_N$ with interaction strength at most $s_{\KK}$, the family $(\LL_N+\varepsilon\KK_N)_N$ remains rapidly mixing with constant decay rate and has a unique stationary state for all $0\leq\varepsilon\leq\varepsilon_*$.
\end{corollary}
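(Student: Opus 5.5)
The plan is to reduce the corollary directly to \cref{prop:stability-strong-bulk-dissipation}, by showing that commutativity of the local terms forces the influence coefficient of $\LL_N$ to vanish. Once $\kappa_{\LL_N}=0$, the hypothesis $\kappa_{\LL_N}t_{\mathrm{loc}}\leq 1/2-\Delta$ holds with the size-independent margin $\Delta=1/2$, and the proposition gives the conclusion with threshold $\varepsilon_*=(16a\mathfrak d\,k s_{\KK}t_{\mathrm{loc}})^{-1}$ from \cref{eq:main-constant-perturbation-threshold}.

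\textbf{Step 1 (commutation of the cover).} Fix the decomposition $\LL_N=\sum_{B\in\mathcal S_{\LL}}\LL_B$ whose terms pairwise commute as superoperators. The bulk-dissipative cover consists of terms $\MM_\alpha=\LL_{A_\alpha}$ taken from this same decomposition. Hence $[\MM_\alpha,\LL_B]=0$ for every $\alpha$ and every $B\in\mathcal S_{\LL}$. This includes $B=A_\alpha$, where the commutator is trivially zero.

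\textbf{Step 2 (vanishing influence).} By \cref{eq:main-noncommutation-count}, Step 1 gives $n_{\LL}(B)=0$ for all $B$. Then \cref{eq:main-local-influence} gives $\kappa_{\LL_N}=0$ for every $N$. The remaining hypotheses of \cref{prop:stability-strong-bulk-dissipation} are assumed verbatim in the corollary: $l$-locality, bulk regions covering $\Lambda$, the uniform local contraction \cref{eq:main-local-bulk-decay}, uniformly bounded $a$ and $\mathfrak d$, and $t_{\mathrm{loc}}=O(1)$.

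\textbf{Step 3 (apply the proposition).} With $\Delta=1/2$, \cref{prop:stability-strong-bulk-dissipation} yields a size-independent $\varepsilon_*$. For all $0\le\varepsilon\le\varepsilon_*$, the family $\LL_N+\varepsilon\KK_N$ then has a unique stationary state and is rapidly mixing with constant decay rate. The perturbed influence is still controlled by \cref{eq:main-perturbed-local-influence}, namely $\kappa_{\LL+\varepsilon\KK}\leq\varepsilon\,4a\mathfrak d\,k s_{\KK}\le 1/4$.

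\textbf{Main obstacle.} The only delicate point is interpretive: "local terms commute" must mean that the superoperators $\LL_B$ commute with one another, not merely that their Hamiltonians or jump operators do. It must also refer to the same decomposition from which the cover is extracted, since $n_{\LL}$ is defined through superoperator commutators. If the statement is read with a different decomposition, I would first regroup terms so that the cover generators belong to the commuting family. Regrouping changes $l$ and $s_{\LL}$ only by bounded factors, and neither quantity enters $\varepsilon_*$. I would also check the convention that a term is counted when $B=A_\alpha$; this is harmless, since $[\MM_\alpha,\MM_\alpha]=0$.
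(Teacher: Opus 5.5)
Your proposal is correct and follows the paper's own argument: pairwise commutativity of the local terms gives $n_{\LL}(B)=0$ for every $B$, hence $\kappa_{\LL}=0$, and \cref{prop:stability-strong-bulk-dissipation} with $\Delta=1/2$ yields $\varepsilon_*=(16a\mathfrak d\,k s_{\KK}t_{\mathrm{loc}})^{-1}$. One small slip: with this threshold the correct conclusion is $\kappa_{\LL+\varepsilon\KK}t_{\mathrm{loc}}\leq1/4$, not $\kappa_{\LL+\varepsilon\KK}\leq1/4$; the bound that actually follows is $\kappa_{\LL+\varepsilon\KK}\leq1/(4t_{\mathrm{loc}})$, and this leaves the required margin.
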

The proof of \cref{cor:commuting-strong-bulk-dissipation} is in \cref{proof:commuting-strong-bulk-dissipation}. Here $[\LL_A,\LL_{A'}]=0$ for every pair of local terms, so $\kappa_{\LL}=0$ and $\kappa_{\LL+\varepsilon\KK}\leq\varepsilon\cdot4a\mathfrak d\,k s_{\KK}$. Thus one may use \cref{eq:main-constant-perturbation-threshold} with $\Delta=1/2$.

In Appendix~\ref{app:dissipative-covers} we treat more general perturbations and provide sharper bounds. For example, we analyze differences of local Lindbladians and perturbations of Hamiltonians and jump operators, including small reductions of local dissipation rates; see \cref{cor:app-difference,cor:app-perturbed-hamiltonians-jumps}. We also obtain stability of rapid mixing for the following state-preparation protocols.

\begin{corollary}[Stability of high-temperature Gibbs samplers]
\label{cor:main-gibbs-stability}\subsectionresulttoc{cor:main-gibbs-stability}{Stability of high-temperature Gibbs samplers}
Let $(H_N)_N$ be a family of finite-range Hamiltonians on a $D$-dimensional qubit lattice, with uniformly bounded interaction strength and a uniformly bounded number of terms meeting each site. Let $(\LL_N^{(\beta)})_N$ be the Gibbs sampler family of \cite{RouzeFrancaAlhambra2024b} with single-qubit Pauli bath couplings, at inverse temperature $\beta$. There exists a size-independent $\beta_*>0$ with the following property: for every $k$ and $s_{\KK}$, there is a size-independent $\varepsilon_*>0$ such that, for every $k$-local Lindbladian perturbation family $(\KK_N)_N$ with interaction strength at most $s_{\KK}$, $(\LL_N^{(\beta)}+\varepsilon\KK_N)_N$ remains rapidly mixing with constant decay rate and has a unique stationary state whenever $0\leq\beta<\beta_*$ and $0\leq\varepsilon<\varepsilon_*$.
\end{corollary}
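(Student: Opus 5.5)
The plan is to reduce the statement to \cref{prop:stability-strong-bulk-dissipation}. Concretely, for small $\beta$ I will exhibit a bulk-dissipative cover of $\LL_N^{(\beta)}$ with $t_{\mathrm{loc}}=O(1)$, uniformly bounded $a$ and $\mathfrak d$, and $\kappa_{\LL_N^{(\beta)}}t_{\mathrm{loc}}\leq 1/2-\Delta$, with $\Delta$ independent of $N$.

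The starting point is the structure of the Gibbs sampler of \cite{RouzeFrancaAlhambra2024b} with single-qubit Pauli couplings. It is a sum over sites $x$ and Paulis $P\in\{X,Y,Z\}$ of terms $\LL_{x,P}^{(\beta)}$ with the following properties.
\begin{itemize}
\item Each term's jump operators are quasilocal dressings of $P_x$ by the operator Fourier transform / filter.
\item At $\beta=0$ every term reduces exactly to the single-qubit dissipator $\Diss{P_x}$, up to normalization.
\item Each term differs from that reduction by $O(\beta)$, with tails decaying quickly in the distance from $x$.
\end{itemize}
I will truncate each term to a ball of radius $r$ around $x$. This gives a local $\LL_{x,P}^{(\beta,r)}$ plus a remainder whose total strength per site is small. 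The remainder is $O(\beta)$ times a tail decaying in $r$, and it will be absorbed later into the perturbation $\varepsilon\KK$, or handled through the general statement in \cref{app:prop:stability-strong-bulk-dissipation}. If the sampler is only quasilocal, I will choose $r$ as a constant depending on $\beta_*$; the tail then contributes a small, size-independent amount to $\kappa$.

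For the cover, I take $\MM_x:=\sum_{P}\LL_{x,P}^{(\beta,r)}$ supported on $A_x=b_x(r)$.
\begin{itemize}
\item At $\beta=0$, $\MM_x$ is the single-site depolarizer on $x$. Its stationary observables are $\BI_x\otimes\BB(\HH_{x^c})$, so $x\in A_x^\circ$ and the bulks cover $\Lambda$.
\item The local contraction $\lVert e^{t\MM_x}-P_{\MM_x}\rVert_\diamond\le 1/2$ holds at some $t_{\mathrm{loc}}=O(1)$ for $\beta=0$.
\end{itemize}
The main obstacle is showing both properties survive for small $\beta>0$. The kernel of $\MM_x^*$ must still act trivially on $x$, and the local gap must persist uniformly. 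To handle this, I will use the quantum detailed-balance property of each $\LL_{x,P}^{(\beta)}$ with respect to a local Gibbs-type state. This identifies $\ker\MM_x^*$ with the commutant of the dressed jumps, which for small $\beta$ remains $\BI_x\otimes(\cdot)$ by continuity of the algebra generated by the jumps; the dressed Paulis still generate $\BB(\HH_x)$ modulo the environment. For the contraction, I will apply \cref{lem:perturbation_fixed_point}-type perturbation theory at finite size $|A_x|\le a$, which is size-independent. This yields $t_{\mathrm{loc}}$ uniform in $x$ and $N$ for $\beta<\beta_*$.

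The final step is to bound $\kappa$. Two terms $\MM_x$ and $\MM_y$ commute exactly at $\beta=0$ when $x\neq y$. For $\beta>0$, each $\MM_x$ differs from its $\beta=0$ version by $O(\beta)$, which gives $\lVert[\cdot,\cdot]\rVert$-type contributions of order $\beta$. I will choose the decomposition $\LL=\sum_x\MM_x+(\text{remainder})$ so that $\kappa_{\LL}$ only counts the $O(\beta)$ parts and the tails. This uses the fact that $\kappa$ involves $n_{\LL}(B)$ and the strengths of terms that fail to commute. If strict commutativity at $\beta=0$ is insufficient bookkeeping, I will instead write $\MM_x=\MM_x^{(0)}+\beta\mathcal R_x$ and use the difference-of-Lindbladians version in \cref{cor:app-difference}. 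Either way, $\kappa_{\LL}t_{\mathrm{loc}}\le C\beta$ with $C$ depending only on locality, interaction strength, and dimension. Choosing $\beta_*$ so that $C\beta_*\le 1/4$ gives $\Delta=1/4$. \Cref{prop:stability-strong-bulk-dissipation} then yields $\varepsilon_*$ as in \cref{eq:main-constant-perturbation-threshold}, independent of $N$, together with uniqueness of the stationary state and rapid mixing with constant decay rate.
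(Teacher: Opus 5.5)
Your main route has a genuine gap: a $\beta$-dependent cover $\MM_x=\sum_P\LL_{x,P}^{(\beta,r)}$ cannot satisfy the hypotheses of \cref{prop:stability-strong-bulk-dissipation} uniformly as $\beta\to0$.

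First, the cover properties are not continuous in $\beta$. At $\beta=0$, $\MM_x$ is a single-site depolarizer whose adjoint kernel $\BI_x\otimes\BB(\HH_{A_x\setminus\{x\}})$ is highly degenerate. Hence \cref{lem:perturbation_fixed_point} does not apply, since it needs a unique stationary state. Under a small perturbation that kernel generically does one of two things. It may tilt, so that its elements act nontrivially on $x$ and $x\notin A_x^\circ$. Or it may shrink, in which case the eigenvalues leaving zero are only $O(\beta)$ away from it, and $\lVert e^{t\MM_x}-P_{\MM_x}\rVert_\diamond\le1/2$ forces $t_{\mathrm{loc}}\gtrsim1/\beta$. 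The projection $P_{\MM_x}$ is discontinuous at $\beta=0$, so ``continuity of the algebra generated by the jumps'' is not available.

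Second, even granting a cover, the coefficient in \cref{eq:main-local-influence} charges the \emph{full} norm $\lVert\LL_B\rVert_{1\to1}$ of every term that fails to commute with some cover generator, not the size of the commutator. For $\beta>0$ the neighbouring terms generically stop commuting with $\MM_x$, so $\kappa_{\LL}$ jumps from $0$ to order one. A single noncommuting neighbour of rate $\lambda$ already contributes roughly $4\lambda\cdot\log2/\lambda=4\log2>1/2$ to $\kappa_{\LL}t_{\mathrm{loc}}$. Thus $\kappa_{\LL}t_{\mathrm{loc}}\le C\beta$ is false. Getting an influence of order $\beta$ requires freezing the cover at $\beta=0$ and charging only the differences $\LL_b^{(\beta)}-\MM_b^{(0)}$, which are not Lindbladians; this lies outside \cref{prop:stability-strong-bulk-dissipation}.

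Third, the same objection applies to your truncation remainders. They cannot be absorbed into $\varepsilon\KK$, and \cref{cor:app-difference} assumes $\GG$ strictly $k$-local, which the sampler of \cite{RouzeFrancaAlhambra2024b} is not.

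Your fallback is the right idea and should be the whole argument; it is what the paper does, without any truncation. The paper fixes the singleton reset cover $\MM_a^*=\lambda_0(\mathcal R_a-\operatorname{id})$, with $\delta_a=\operatorname{id}-\mathcal R_a$ and $t_{\mathrm{loc}}^{(0)}=\lambda_0^{-1}\log4$; $\LL^{(\beta)}$ need not contain this cover. It then derives directly
\[
\tfrac{d}{dt}\delta_aO_t=(\mathcal C_a^*-\lambda_0)\delta_aO_t+\mathcal F_a^{(\beta)}(O_t)+\varepsilon\sum_{B\ni a}\delta_a\KK_B^*(O_t).
\]
Here $\mathcal C_a=\sum_{b\neq a}\LL_b^{(\beta)}+\varepsilon\sum_{B:\,a\notin B}\KK_B$ is a genuine Lindbladian, so its semigroup is a unital contraction. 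The source term is $\mathcal F_a^{(\beta)}=\delta_a\bigl((\LL_a^{(\beta)})^*-\MM_a^*\bigr)+\sum_{b\neq a}[\delta_a,(\LL_b^{(\beta)})^*]$.

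The smallness in $\beta$ enters only through these commutators, which vanish at $\beta=0$ because $[\delta_a,\MM_b^*]=0$ for $b\neq a$. The quasi-local estimates of \cite[Appendix~B.1]{RouzeFrancaAlhambra2024b} give $\sum_a\lVert\mathcal F_a^{(\beta)}(O)\rVert\le\kappa_\beta\sum_c\lVert\delta_cO\rVert$, with $\kappa_\beta t_{\mathrm{loc}}^{(0)}\le1/2-\Delta$ for $\beta<\beta_*$. The perturbation adds $4\varepsilon ks_{\KK}$ to the Volterra coefficient, and \cref{lem:app-volterra-decay,lem:app-trace-oscillator} conclude.

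If you prefer to stay within the appendix machinery, replace \cref{cor:app-difference} by \cref{lem:app-per-site-influence}. Use the first-moment strength $\zeta(V)$ of $V=\LL^{(\beta)}-\LL^{(0)}$ split into radial shells, and choose the truncations so that the far part of each $\mathcal C_\alpha$ is itself a truncated Lindbladian.
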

The proof and quantitative condition are given in \cref{proof:app-rfa-gibbs-sampler-stability}.

\begin{corollary}[Stability of simple injective MPDO parent dynamics]
\label{cor:main-mpdo-stability}\subsectionresulttoc{cor:main-mpdo-stability}{Stability of simple injective MPDO parent dynamics}
Consider the parent Lindbladian family $(\LL_N)_N$ of \cite[Section~V~A]{liu2026parent} for a simple injective matrix product density operator at a renormalization fixed point on a periodic chain of $N\geq5$ sites. For every fixed $k$ and $s_{\KK}$, there is a size-independent $\varepsilon_*>0$ such that, for $0\leq\varepsilon\leq\varepsilon_*$ and every $k$-local Lindbladian perturbation family $(\KK_N)_N$ with interaction strength at most $s_{\KK}$, $(\LL_N+\varepsilon\KK_N)_N$ remains rapidly mixing with constant decay rate and has a unique stationary state.
\end{corollary}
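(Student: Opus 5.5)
The plan is to reduce \cref{cor:main-mpdo-stability} to \cref{prop:stability-strong-bulk-dissipation}, or more directly to \cref{cor:commuting-strong-bulk-dissipation}. At a renormalization fixed point, a simple injective MPDO on a periodic chain has commuting local structure. The parent Lindbladian of \cite[Section~V~A]{liu2026parent} is a sum $\LL_N=\sum_{x}\LL_{x}$ of translates of a single local generator acting on a block of $r$ consecutive sites, with $r$ independent of $N$. As I recall the construction, each $\LL_x$ is built from the local fixed-point projection or recovery map, and these maps act on overlapping blocks but commute pairwise. I would first verify, from the explicit form given there, that $[\LL_x,\LL_{x'}]=0$ for all $x,x'$. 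If this holds, $\kappa_{\LL_N}=0$, and it remains to exhibit a bulk-dissipative cover with bounded support size, bounded incidence and $t_{\mathrm{loc}}=O(1)$.

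For the cover, I would take the generators $\MM_\alpha:=\LL_{x_\alpha}$ themselves, or sums of a bounded number of neighbouring ones, which is still a single local Lindbladian term after regrouping the decomposition. Since each $\MM_\alpha$ is a translate of one fixed finite-dimensional generator, $a$ and $\mathfrak d$ are constants. The local contraction estimate \cref{eq:main-local-bulk-decay} then holds uniformly in $\alpha$ and $N$ with a single $t_{\mathrm{loc}}$, provided $\MM_\alpha$ has a well-defined infinite-time limit. For a generator of the form $\mathcal R-\id$ with $\mathcal R$ an idempotent channel, this is immediate: $e^{t\MM}-P_\MM=e^{-t}(\id-\mathcal R)$, so $t_{\mathrm{loc}}=\log 4$ suffices because $\|\id-\mathcal R\|_\diamond\leq2$.

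The key step, and the main obstacle, is showing that the bulks $A_\alpha^\circ$ from \cref{eq:main-bulk-dissipation} cover the chain. This requires identifying $\ker\MM_\alpha^*$, the fixed-point algebra of the Heisenberg-picture local recovery map, and checking that it acts trivially on the interior sites of the block. For a renormalization-fixed-point MPDO, the local map should replace the middle sites of a block by the fixed-point tensor conditioned on the boundary virtual or physical degrees of freedom. Its conserved observables should then live only on the outer boundary sites, plausibly the first and last site of the block, so the interior $r-2$ sites form the bulk. I would derive this from injectivity and the fixed-point (Markov or quantum-conditional-independence) structure in \cite{liu2026parent}: the image of $\mathcal R^*$ is an algebra supported on the boundary. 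Translates of the interiors cover $\Lambda$ once $r\geq3$. The hypothesis $N\geq5$ guarantees that the blocks are proper subsets of the ring, so no wrap-around degeneracy occurs. If the generator's blocks are too small to have a nonempty interior, I would group several consecutive terms into one $\MM_\alpha$. Commutativity still gives a well-defined limit with $P$ equal to the product of the projections, and the bulk is nonempty.

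Once these facts are in place, \cref{cor:commuting-strong-bulk-dissipation} applies with $\Delta=1/2$. It gives the size-independent threshold $\varepsilon_*=1/(16a\mathfrak d\,k s_{\KK}t_{\mathrm{loc}})$ from \cref{eq:main-constant-perturbation-threshold}, together with uniqueness of the stationary state and rapid mixing with constant decay rate. If commutativity fails in the precise construction, I would fall back on \cref{prop:stability-strong-bulk-dissipation}. In that case I would bound $\kappa_{\LL_N}$ by counting, for each block, the finitely many noncommuting neighbours, and I would need a margin $\kappa_{\LL_N}t_{\mathrm{loc}}\leq1/2-\Delta$. This margin is the part I am least sure of.
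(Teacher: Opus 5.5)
Your plan matches the paper's proof. The paper takes the three-site terms $\LL_i=\mathcal E_i-\id$ of \cite{liu2026parent} as the cover generators and uses idempotence of $\mathcal E_i$ to get $t_{\mathrm{loc}}=\log4$. It places the middle site in the bulk because $\mathcal E_i$ factors through the partial trace over that site, so that $\operatorname{Ran}\mathcal E_i^*\subseteq\BI_{i+1}\otimes\BB(\HH_{\{i,i+2\}})$. It then cites pairwise commutativity from that reference and applies \cref{cor:commuting-strong-bulk-dissipation} with $a=\mathfrak d=3$. Your threshold $1/(16a\mathfrak d\,k s_{\KK}t_{\mathrm{loc}})$ therefore equals the paper's $\varepsilon_*=(144\,k s_{\KK}\log4)^{-1}$, and your fallback branches (grouping terms, or the noncommuting case) are not needed.
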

The precise construction and proof are given in \cref{cor:app-mpdo-parent-stability,proof:app-mpdo-parent-stability}. Strong local noise gives a complementary application: it can obstruct dissipative preparation of long-range order.

\begin{corollary}[Loss of long-range order under strong local noise]
\label{cor:main-local-noise}\subsectionresulttoc{cor:main-local-noise}{Loss of long-range order under strong local noise}
Let $(\LL_N)_N$ and $(\NN_N)_N$, with $\NN_N=\sum_B\NN_B$, be geometrically local Lindbladian families on a fixed-dimensional lattice, both with uniformly bounded per-site interaction strength. Suppose that the supports of the noise terms have uniformly bounded size and incidence and cover the lattice. Suppose each $\NN_B$ has a unique local stationary state and a common local mixing time bound $t_{\mathrm{noise}}=O(1)$. Let $\kappa_{\NN}$ be defined by \cref{eq:main-local-influence} for the cover consisting of the noise terms. Assume that $\kappa_{\NN}t_{\mathrm{noise}}\leq1/2-\Delta$ for a size-independent $\Delta>0$. Then there exists a size-independent $\varepsilon_*>0$ such that, for every fixed $\varepsilon>\varepsilon_*$, the family $(\LL_N+\varepsilon\NN_N)_N$ has a unique stationary state and is rapidly mixing with constant decay rate. The influence condition is always satisfied when the noise terms commute.
\end{corollary}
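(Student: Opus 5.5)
The plan is to apply \cref{prop:stability-strong-bulk-dissipation}, with the roles of the two generators swapped. The noise $\NN_N$ becomes the dominant generator and $\LL_N$ becomes the perturbation. For $\varepsilon>0$ we write
\begin{align*}
    \LL_N+\varepsilon\NN_N=\varepsilon\left(\NN_N+\varepsilon^{-1}\LL_N\right).
\end{align*}
Rescaling time by the constant $\varepsilon$ changes neither the stationary states nor the property of being rapidly mixing with constant decay rate: the decay rate is multiplied by $\varepsilon$ and the prefactor $C_N$ is unchanged. It therefore suffices to show that $\NN_N+\delta\LL_N$ with $\delta:=\varepsilon^{-1}$ is rapidly mixing with constant decay rate and has a unique stationary state for all $0\le\delta\le\delta_*$. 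We then set $\varepsilon_*:=1/\delta_*$.

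\textbf{Step 1: the noise terms form a bulk-dissipative cover.} Take the terms $\MM_\alpha:=\NN_{B_\alpha}$ as the cover.
\begin{itemize}
    \item Since $\NN_B$ has a unique stationary state on $B$ and a well-defined infinite-time limit, $\ker\NN_B^*$ consists of multiples of $\BI_B$, tensored with the algebra on $B^c$. Every stationary observable therefore acts trivially on each $x\in B$, so the bulk is $B^\circ=B$.
    \item The supports cover the lattice by hypothesis, so the bulks cover $\Lambda$.
    \item The local contraction estimate \cref{eq:main-local-bulk-decay} holds with $t_{\mathrm{loc}}=t_{\mathrm{noise}}$.
\end{itemize}
The support size $a$ and incidence $\mathfrak d$ are uniformly bounded by assumption. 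The influence condition $\kappa_{\NN}t_{\mathrm{noise}}\le 1/2-\Delta$ is also assumed, so $\NN_N$ satisfies every hypothesis of \cref{prop:stability-strong-bulk-dissipation}. The locality $l$ of $\NN_N$ is bounded because the supports have bounded size.

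\textbf{Step 2: $\LL_N$ is an admissible perturbation.} Because $\LL_N$ is geometrically local on a fixed-dimensional lattice, its terms have supports of bounded diameter and hence bounded cardinality $k$. Its per-site interaction strength is bounded by some constant $s_{\LL}$. The proposition then yields a size-independent
\begin{align*}
    \delta_*=\frac{\Delta}{8a\mathfrak d\,k\,s_{\LL}\,t_{\mathrm{noise}}},
\end{align*}
and so $\varepsilon_*=8a\mathfrak d k s_{\LL}t_{\mathrm{noise}}/\Delta$. The final claim, that the influence condition always holds for commuting noise terms, is immediate: if all $\NN_B$ commute, every $n_{\NN}(B)=0$, so $\kappa_{\NN}=0$ and the condition holds with $\Delta=1/2$, exactly as in \cref{cor:commuting-strong-bulk-dissipation}.

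\textbf{Main obstacle.} The step needing the most care is the bulk identification in Step 1. There we must check that uniqueness of the local stationary state of $\NN_B$, viewed on $\HH_B$, really gives
\begin{align*}
    \ker(\NN_B\otimes\operatorname{id}_{B^c})^*=\BI_B\otimes\BB(\HH_{B^c}).
\end{align*}
This holds because a unique stationary state with a well-defined limit makes $P_{\NN_B}^*(O)=\Tr(\sigma_B O)\BI_B$, and this identity extends to the tensor product with the identity map. We also need the paper's convention that induced norms are taken after tensoring with the identity, so that the local estimate is uniform in the rest of the lattice. The remaining work is bookkeeping of constants under the time rescaling.
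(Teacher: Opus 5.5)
Your proposal is correct and follows essentially the same route as the paper. The paper also uses the noise terms as a cover whose bulk is each term's entire support, treats $\LL_N$ as the perturbation, and absorbs the factor $\varepsilon$ by rescaling; it phrases this as the local mixing time becoming $t_{\mathrm{noise}}/\varepsilon$, which is the same as your time rescaling. The only difference is bookkeeping: the paper invokes the appendix bound $\kappa^{\mathrm{loc}}_{\LL+\varepsilon\NN}\leq\varepsilon\kappa_{\NN}^{\sharp}+4a\mathfrak d\,\zeta(\LL)$ directly with the strict margin, which gives a threshold smaller by a factor of two than the $8a\mathfrak d\,k s_{\LL}t_{\mathrm{noise}}/\Delta$ you obtain from the main-text proposition.
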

The proof is in \cref{proof:main-local-noise}. Proposals for generating long-range ordered states with bounded-range dynamics include boundary driving of spin chains into states with long-range spin correlations \cite{ProsenPizorn2008} and local dissipative preparation of topologically ordered states \cite{VerstraeteWolfCirac2009}. Under its noise hypotheses, \cref{cor:main-local-noise} shows that sufficiently strong local noise forces rapid relaxation. Preparing nonvanishing connected correlations over distances proportional to the system diameter requires time at least proportional to that diameter \cite{BravyiHastingsVerstraete2006,Poulin2010}, excluding such correlations in these rapidly mixing stationary states. The same preparation-time argument excludes the topological target states covered by \cite{KoenigPastawski2014}, at sufficiently small fixed trace-norm error; see \cref{proof:main-local-noise}.

\partbibliography

\paperpart{sections-quasifree_lindbladians}{0}
\subsection{Quasifree fermionic dynamics}\label{sec:stability_quasifree}
The bulk-dissipation criterion applies when local terms erase the degrees of freedom in a cover of the lattice. A complementary natural setting is provided by quasifree fermionic dynamics. Quadratic fermionic models describe free particles and superconducting mean-field states, and provide a basic setting for studying transport and topological phases, including Majorana boundary modes in quantum wires \cite{Kitaev2001}. Their open-system counterparts encompass exactly solvable boundary-driven transport \cite{Prosen2008}, dissipative preparation of topological quantum wires \cite{DiehlRicoBaranovZoller2011}, fermionic linear optics \cite{bravyi2011classical}, as well as dissipative preparation of free-fermionic ground states \cite{ZhanDingHuhnGrayPreskillChanLin2025}. While we prove the results for quasifree fermions, we expect analogous results for quasifree bosons, with suitable control of energy and dynamical stability.

Consider $N$ fermionic modes written as $m=2N$ Majoranas $c_1,\ldots,c_m$ satisfying $\{c_i,c_j\}=2\delta_{i,j}\BI$.
\begin{definition}[Quasifree fermionic Lindbladian]
The Lindbladian
\begin{align*}
    \LL(\rho)=-i[H,\rho]+\sum_\mu\left(2L_\mu\rho L_\mu^\dagger-\{L_\mu^\dagger L_\mu,\rho\}\right)
\end{align*}
is called \emph{quasifree} if $H$ is quadratic in the Majorana operators and $L_\mu$ are linear:
\begin{align*}
    H=\frac{i}{4}\sum_{j,k=1}^m \mathbf{H}_{j,k}c_jc_k,\qquad L_\mu=\sum_{j=1}^m l_{\mu,j}c_j,
\end{align*}
where $\mathbf{H}=-\mathbf{H}^T\in\mathbb R^{m\times m}$.
\end{definition}
We follow the conventions of \cite{Prosen2008,bravyi2011classical}. The Hamiltonian operator is denoted by $H$, and its coefficient matrix by $\mathbf{H}$. Define the positive semidefinite Hermitian matrix $\mathbf{M}$ and the real matrix $\mathbf{X}$ as
\begin{align}
    \mathbf{M}_{i,j}:=\sum_\mu l_{\mu,i}l_{\mu,j}^*,\qquad \mathbf{X}:=-\mathbf{H}-2(\mathbf{M}+\mathbf{M}^T)=-\mathbf{H}-4\operatorname{Re}(\mathbf{M}). \label{eq:quasifree_X}
\end{align}
Unlike the antisymmetric Hamiltonian matrix $\mathbf H$, the drift matrix $\mathbf X$ is generally nonnormal. Quasifree dynamics therefore exhibits a rich range of non-Hermitian phenomena, from exceptional points where eigenvalues and eigenvectors coalesce \cite{AshidaGongUeda2020} to boundary accumulation of decay modes and directional relaxation fronts \cite{SongYaoWang2019}. Third quantization provides an exact framework for studying such dynamics, including non-diagonalizable generators and their Jordan structure \cite{Prosen2010,BarthelZhang2022}.

For a state $\rho$, its covariance matrix is the real antisymmetric matrix
\begin{align}
    \Gamma_{j,k}(\rho)&:=\frac{i}{2}\Tr\!\left(\rho[c_j,c_k]\right).
\end{align}
Writing $\Gamma_t(\rho):=\Gamma(e^{t\LL}(\rho))$, the two-point correlation functions obey the closed differential equation
\begin{align}
    \frac{d}{dt}\Gamma_t(\rho)&=\mathbf X^T\Gamma_t(\rho)+\Gamma_t(\rho)\mathbf X+8\operatorname{Im}(\mathbf M). \label{eq:quasifree-covariance-evolution}
\end{align}
Thus $\mathbf X$ governs the homogeneous propagation of the covariance matrix, while $\operatorname{Im}(\mathbf M)$ is the dissipative source term. Equivalently,
\begin{align*}
    \Gamma_t(\rho)&=e^{t\mathbf X^T}\Gamma(\rho)e^{t\mathbf X}
    +8\int_0^t e^{(t-s)\mathbf X^T}\operatorname{Im}(\mathbf M)e^{(t-s)\mathbf X}\,ds,
\end{align*}
so the stationary covariance matrix satisfies the corresponding Lyapunov equation.

For a Gaussian initial state $\rho$ and a Gaussian stationary state $\sigma$, the covariance evolution and the trace-norm bound of \cite{Bittel_2025} give
\begin{align*}
    \|e^{t\LL}(\rho)-\sigma\|_1\leq\frac12\|\Gamma_t(\rho)-\Gamma(\sigma)\|_1
    \leq\frac12\|e^{t\mathbf X}\|^2\|\Gamma(\rho)-\Gamma(\sigma)\|_1.
\end{align*} For arbitrary initial states, however, even identical covariance matrices need not imply that the states are close. A simple example is given by two mixtures of occupation configurations with exactly two particles in four fermionic modes:
\begin{align}
    \rho_1&=\frac12\bigl(|1100\rangle\langle1100|+|0011\rangle\langle0011|\bigr),&
    \rho_2&=\frac12\bigl(|1010\rangle\langle1010|+|0101\rangle\langle0101|\bigr).\label{eq:qf-same-covariance-states}
\end{align}
Both states are supported entirely in the even-parity sector. Both have occupation $1/2$ in each mode and vanishing off-diagonal quadratic expectations, so $\Gamma(\rho_1)=\Gamma(\rho_2)=0$, but their orthogonal supports give $\|\rho_1-\rho_2\|_1=2$. Both are non-Gaussian: the Gaussian state with zero covariance is $\BI/16$, whereas their occupation correlations do not factorize. Thus covariance relaxation alone does not control the full state for non-Gaussian inputs. Here we seek trace-norm bounds valid for arbitrary initial states. For this purpose, we introduce Majorana quasiderivations and extend the oscillator-norm method to general quasifree fermionic Lindbladians.

We show that if $\sigma$ is a stationary state of a quasifree Lindbladian $\LL$, then for every state $\rho$ and every $t\geq0$,
\begin{align*}
    \|e^{t\LL}(\rho)-\sigma\|_1\leq m\|e^{t\mathbf{X}}\|\cdot||\rho-\sigma||_1.
\end{align*}
In particular, a size-independent positive lower bound on $\lambda_{\min}(\operatorname{Re}\mathbf M)$ gives rapid mixing with constant decay rate. The precise result is given in \cref{prop:quasifree_rapid_mixing}, with its proof in \cref{proof:quasifree_rapid_mixing}. Prior results obtained rapid mixing of quasifree Lindbladians when the fixed point is a Bogoliubov vacuum \cite{ZhanDingHuhnGrayPreskillChanLin2025}; our result uses different quasiderivations and applies to any quasifree Lindbladian.

\begin{proposition}[Stability under quasifree additions]
\label{cor:sum-gaussian}\subsectionresulttoc{cor:sum-gaussian}{Stability under quasifree additions}
Let $(\LL_N)_N$ be a quasifree fermionic Lindbladian family on $N$ fermionic modes, with $\operatorname{Re}(\mathbf M_{\LL_N})\succeq\mu\BI$ for some $\mu>0$ independent of $N$. For every quasifree Lindbladian family $(\KK_N)_N$ on the same modes, $(\LL_N+\KK_N)_N$ is rapidly mixing with constant decay rate and a unique stationary state.
\end{proposition}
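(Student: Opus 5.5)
The plan is to reduce the statement to the quasifree rapid-mixing bound quoted just before it (\cref{prop:quasifree_rapid_mixing}): for a quasifree Lindbladian with stationary state $\sigma$,
\begin{align*}
    \|e^{t\LL}(\rho)-\sigma\|_1\leq m\|e^{t\mathbf X}\|\,\|\rho-\sigma\|_1 .
\end{align*}
The first step is to note that quasifree Lindbladians are closed under addition. The Hamiltonian coefficient matrices add, $\mathbf H_{\LL+\KK}=\mathbf H_\LL+\mathbf H_\KK$. The jump operators of the sum are the union of the two lists of linear jumps, so $\mathbf M_{\LL+\KK}=\mathbf M_\LL+\mathbf M_\KK$. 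Hence the drift matrix is additive,
\begin{align*}
    \mathbf X_{\LL+\KK}=-(\mathbf H_\LL+\mathbf H_\KK)-4\operatorname{Re}(\mathbf M_\LL+\mathbf M_\KK).
\end{align*}
Since $\mathbf M_\KK$ is a Gram matrix $\sum_\mu l_\mu l_\mu^\dagger$, it is Hermitian positive semidefinite. Its real part $\operatorname{Re}\mathbf M_\KK=(\mathbf M_\KK+\overline{\mathbf M_\KK})/2$ is therefore real symmetric positive semidefinite. Consequently $\operatorname{Re}\mathbf M_{\LL+\KK}\succeq\operatorname{Re}\mathbf M_\LL\succeq\mu\BI$, uniformly in $N$, whatever $\KK_N$ is.

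The second step is a norm bound on $e^{t\mathbf X}$ for $\mathbf X=-\mathbf H-4\operatorname{Re}\mathbf M$, with $\mathbf H$ real antisymmetric and $\operatorname{Re}\mathbf M\succeq\mu\BI$ real symmetric. The symmetric part of $\mathbf X$ is $-4\operatorname{Re}\mathbf M\preceq-4\mu\BI$. For any real vector $v$ this gives
\begin{align*}
    \tfrac{d}{dt}\|e^{t\mathbf X}v\|^2=2\langle e^{t\mathbf X}v,\mathbf X e^{t\mathbf X}v\rangle\leq-8\mu\|e^{t\mathbf X}v\|^2 .
\end{align*}
Hence $\|e^{t\mathbf X}\|\leq e^{-4\mu t}$ in operator norm. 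No nonnormality issue arises, because this is a logarithmic-norm (dissipativity) bound rather than a spectral one.

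The third step establishes existence and uniqueness of the stationary state. A stationary state $\sigma$ exists for any finite-dimensional Lindbladian. Plugging any two stationary states into the displayed bound gives $\|\sigma-\sigma'\|_1\leq m e^{-4\mu t}\|\sigma-\sigma'\|_1\to0$, so the stationary state is unique. The same bound then yields, for all states $\rho$,
\begin{align*}
    \|e^{t(\LL_N+\KK_N)}(\rho)-\sigma\|_1\leq 2m\,e^{-4\mu t}=4N e^{-4\mu t}.
\end{align*}
This is rapid mixing with constant decay rate $\gamma=4\mu$ and prefactor $C_N=O(N)$. The same argument also shows that $P_{\LL+\KK}(\rho)=\sigma$ is well defined.

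The main obstacle lies in \cref{prop:quasifree_rapid_mixing} itself, which I take as given. Within this proof the only delicate point is to check that the normalizations match. The definition uses jumps in the form $2L\rho L^\dagger-\{L^\dagger L,\rho\}$, and $\KK_N$ may contain arbitrary linear jumps, including odd-parity ones. One must confirm that $\mathbf M$ really adds for the sum and that the sign convention makes the symmetric part of $\mathbf X$ equal to $-4\operatorname{Re}\mathbf M$. The quoted relation $\mathbf X=-\mathbf H-2(\mathbf M+\mathbf M^T)$ confirms the latter directly.
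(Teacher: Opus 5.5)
Your proposal is correct and follows the paper's proof essentially step for step. The steps are: additivity of $\mathbf M$ under the union of the jump lists, $\operatorname{Re}\mathbf M_\KK\succeq0$ so that $\operatorname{Re}\mathbf M_{\LL+\KK}\succeq\mu\BI$, the bound $\|e^{t\mathbf X}\|\leq e^{-4\mu t}$ inserted into \cref{prop:quasifree_rapid_mixing}, and uniqueness by comparing two stationary states.

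The only cosmetic difference is how the single-particle contraction is obtained. You rederive it with the dissipativity computation, which is the same one the paper uses in \cref{eq:qf-ti-one-particle-contractivity}. The paper's proof instead cites the third estimate of \cref{prop:quasifree_rapid_mixing}, which is obtained there from $\|e^{tA}\|\leq\|e^{t(A+A^\dagger)/2}\|$.
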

The proof of \cref{cor:sum-gaussian} is in \cref{proof:sum-gaussian}. More explicitly, on $N$ fermionic modes, $\operatorname{Re}(\mathbf M_{\LL_N})\succeq\mu\BI$ gives
\begin{align}
    \lVert e^{t(\LL_N+\KK_N)}(\rho)-\sigma_{\LL_N+\KK_N}\rVert_1\leq 2Ne^{-4\mu t}\|\rho-\sigma_{\LL_N+\KK_N}\|_1. \label{eq:qf-addition-mixing-bound}
\end{align} Stability holds for any quasifree $\KK$: we do not require a bound on its strength, locality or any other property beyond the fact that it is quasifree in the same Majoranas. In fact, this last condition can be relaxed. Replacing $c_i$ by $c_i'=\sum_jO_{i,j}c_j$ with a real orthogonal $m\times m$ matrix $O$ again gives Majoranas, and the new matrices are
\begin{align*}
    \mathbf{H}=O^T\mathbf{H}'O,\qquad \mathbf{M}=O^T\mathbf{M}'O,\qquad \mathbf{X}=O^T\mathbf{X}'O.
\end{align*}
Hence $\lambda_{\min}(\operatorname{Re}(\mathbf{M}))$ and $\|e^{t\mathbf{X}}\|$ are basis independent, and \cref{cor:sum-gaussian} applies whenever $\KK$ is quasifree in any orthogonally related basis. Translation invariance gives a second stability result for quasifree models.

\begin{proposition}[Stability of translation-invariant quasifree dynamics under Lindbladian perturbations]
\label{prop:qf-translation-invariant-stability}\subsectionresulttoc{prop:qf-translation-invariant-stability}{Translation-invariant quasifree stability under perturbations}
Let $\faml{\LL}$ be a translation-invariant quasifree Lindbladian family on periodic lattices $\Lambda_N=(\mathbb Z/N\mathbb Z)^D$, with one fermionic mode per unit cell and uniformly bounded interaction strength $s_{\LL}$. Let $\mathbf X_{\LL_N}$ be its single-particle matrix. Suppose that for some $\gamma>0$, independently of $N$,
\begin{align}
    \max_{z\in\operatorname{spec}(\mathbf X_{\LL_N})}\operatorname{Re}z&\leq-\gamma\qquad\text{for every }N, \label{eq:qf-ti-symbol-gap}
\end{align}
Then:
\begin{enumerate}
\item[a)] The family $(\LL_N)_N$ is rapidly mixing with constant decay rate.
\item[b)] For every interaction-strength bound $s_{\KK}$, there exists a constant $\varepsilon_*>0$, independent of $N$, such that, for every quasifree Lindbladian family $(\KK_N)_N$ with interaction strength at most $s_{\KK}$, not necessarily translation invariant, the family $(\LL_N+\varepsilon\KK_N)_N$ remains rapidly mixing with constant decay rate for all $0\leq\varepsilon\leq\varepsilon_*$.
\end{enumerate}
\end{proposition}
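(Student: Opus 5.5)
We reduce both parts to the general quasifree mixing bound announced before \cref{cor:sum-gaussian} (\cref{prop:quasifree_rapid_mixing}): for any quasifree $\LL$ with stationary state $\sigma$, $\|e^{t\LL}(\rho)-\sigma\|_1\leq m\|e^{t\mathbf X}\|\,\|\rho-\sigma\|_1$ with $m=2N$. This bound already contains the nonnormality of $\mathbf X$ through $\|e^{t\mathbf X}\|$. So it suffices to prove a size-independent estimate
\begin{align*}
\|e^{t\mathbf X}\|\leq Ce^{-\gamma' t},
\end{align*}
first for $\mathbf X_{\LL_N}$ and then for $\mathbf X_{\LL_N}+\varepsilon\mathbf X_{\KK_N}$. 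Such an estimate also forces uniqueness of the stationary state. Indeed, if $\|e^{t\mathbf X}\|\to0$, the covariance equation \cref{eq:quasifree-covariance-evolution} has a unique fixed point. Moreover, the trace-norm bound above shows that every state converges to any given stationary state, so two stationary states must coincide.

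\emph{Part a).} By translation invariance with one mode (two Majoranas) per cell, $\mathbf X_{\LL_N}$ is block circulant. The discrete Fourier transform, which is unitary on $\C^{2N}$ and so preserves operator norms, block-diagonalizes it into $2\times2$ symbols $\hat{\mathbf X}(k)$ at $k\in\frac{2\pi}{N}\mathbb Z^D$. Since the family is translation invariant with fixed local coefficients, these are evaluations of a single trigonometric-polynomial symbol $\hat{\mathbf X}(k)$ on the torus $\mathbb T^D$. The finite-range and bounded-strength assumptions control its degree and coefficients, so $\sup_k\|\hat{\mathbf X}(k)\|\leq c\,s_{\LL}$.

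Hypothesis \cref{eq:qf-ti-symbol-gap} says every eigenvalue of $\hat{\mathbf X}(k)$ at every sampled $k$ has real part $\leq-\gamma$. Because the sampled momenta become dense as $N\to\infty$ and the symbol is continuous, the bound $\operatorname{Re}\operatorname{spec}\hat{\mathbf X}(k)\leq-\gamma$ then holds for all $k\in\mathbb T^D$. For a $2\times2$ matrix with bounded norm and spectrum in $\{\operatorname{Re}z\leq-\gamma\}$ there is an explicit bound
\begin{align*}
\|e^{t\hat{\mathbf X}}\|\leq\bigl(1+t\|\hat{\mathbf X}-\lambda\BI\|\bigr)e^{-\gamma t}\leq C e^{-\gamma t/2}.
\end{align*}
One obtains it from the Schur form, or by writing the matrix as $\lambda\BI$ plus a nilpotent plus a diagonalizable part, and controlling the Jordan or near-Jordan parts uniformly by the norm bound. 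Taking the maximum over blocks gives $\|e^{t\mathbf X_{\LL_N}}\|\leq Ce^{-\gamma t/2}$ uniformly in $N$, and hence a) with prefactor $2NC$.

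\emph{Part b).} The perturbation $\mathbf X_{\KK_N}$ is not translation invariant, so instead we use perturbation theory at the level of the single-particle matrix. Its bounded per-site strength and locality give $\|\mathbf X_{\KK_N}\|\leq c' s_{\KK}$: a finite-range real matrix with bounded entries has bounded operator norm by Schur's test. (Finite range of $\KK$ in Majorana labels is implicit in ``interaction strength'' for quasifree terms; if only per-site $\ell^1$ bounds are available, Schur's test still applies.) With $\mathbf X=\mathbf X_{\LL}+\varepsilon\mathbf Y$, the Dyson/Duhamel expansion together with the bound $\|e^{t\mathbf X_{\LL}}\|\leq Ce^{-\gamma' t}$ yields the standard Grönwall-type estimate
\begin{align*}
\|e^{t\mathbf X}\|\leq Ce^{-(\gamma'-\varepsilon C\|\mathbf Y\|)t}.
\end{align*}
Choosing $\varepsilon_*=\gamma'/(2Cc's_{\KK})$ keeps the rate above $\gamma'/2$, independently of $N$. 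Applying \cref{prop:quasifree_rapid_mixing} to $\LL_N+\varepsilon\KK_N$, which is quasifree, gives rapid mixing with constant decay rate and a unique stationary state.

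The main obstacle is the uniform-in-$k$ transient bound in a). One must pass from a gap stated only at the sampled momenta to a bound on all of $\mathbb T^D$, using continuity, since we need control uniform in $N$. One must also handle possible exceptional points where $\hat{\mathbf X}(k)$ is non-diagonalizable. The $2\times2$ block structure (bounded dimension) is what makes the constant $C$ independent of $N$.
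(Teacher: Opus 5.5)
Your architecture is sound. Part a) is the paper's argument: Fourier block-diagonalization into $2\times2$ blocks, a Schur-form bound on each block, then $\|e^{t\LL}(\rho)-\sigma\|_1\leq m\|e^{t\mathbf X}\|\,\|\rho-\sigma\|_1$. In part b), your Gr\"onwall bounded-perturbation estimate is a valid substitute for the paper's route. The paper instead fixes $t_0$ with $\|e^{t_0\mathbf X_{\LL_N}}\|\leq1/2$, uses that every quasifree one-particle semigroup is contractive because $\mathbf X+\mathbf X^T=-8\operatorname{Re}\mathbf M\preceq0$, applies Duhamel at time $t_0$, and iterates. Your version avoids the contractivity; the paper's needs only a single-time contraction and gives an explicit threshold.

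The gap is the step both parts rest on: size-independent bounds $\|\mathbf X_{\LL_N}\|\lesssim s_{\LL}$ and $\|\mathbf X_{\KK_N}\|\lesssim s_{\KK}$. The hypothesis only controls $\sum_{Z\ni x}\|\KK_N^{(Z)}\|_{1\to1}$, i.e.\ norms of superoperators, with no restriction on the size of the supports $Z$. Finite range is not implicit in this definition, and the statement explicitly allows arbitrary supports. You never derive any bound on the entries of $\mathbf X^{(Z)}$ from $\|\KK_N^{(Z)}\|_{1\to1}$, and the routes you gesture at do not give one uniformly. Norm equivalence on the finite-dimensional space of quasifree generators on $Z$ gives a constant depending on $|Z|$. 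A Schur test on $\sum_Z\mathbf X^{(Z)}$, even granting $\|\mathbf X^{(Z)}\|\leq w_Z$, still loses a factor $\sqrt{2|Z|}$ in each row sum. Without a uniform bound, neither the off-diagonal Schur entry in a) nor $\|\mathbf Y\|$ in your Gr\"onwall exponent is controlled, so $\varepsilon_*$ is not size independent.

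The missing idea is a local estimate with constant one. Apply \cref{eq:qf_adjoint_quasiderivations} on $Z$ to the dressed linear observable $\widetilde c_Z(v):=i\Pi_Z\sum_jv_jc_{Z,j}$, with $v\in\R^{2|Z|}$. All second quasiderivations vanish on it, so $(\KK_N^{(Z)})^*(\widetilde c_Z(v))=\widetilde c_Z(\mathbf X^{(Z)}v)$. Since $\widetilde c_Z(v)^2=\|v\|_2^2\BI$, trace--operator norm duality gives $\|\mathbf X^{(Z)}\|\leq\|\KK_N^{(Z)}\|_{1\to1}$, independently of $|Z|$. With weights $w_Z=\|\KK_N^{(Z)}\|_{1\to1}$, weighted Cauchy--Schwarz then gives $|u^\dagger\mathbf X_{\KK_N}v|\leq\sum_Zw_Z\|u_Z\|_2\|v_Z\|_2\leq s_{\KK}\|u\|_2\|v\|_2$. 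The same argument gives $\|\mathbf X_{\LL_N}\|\leq s_{\LL}$, and with these bounds your proof goes through.

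Separately, the continuity/density step you flag as the main obstacle is unnecessary. The hypothesis is imposed for every $N$ on $\operatorname{spec}(\mathbf X_{\LL_N})$, which is exactly the union of the spectra of the sampled blocks, so only sampled momenta are ever needed. That step also presupposes a fixed trigonometric-polynomial symbol, i.e.\ finite range, which is not assumed. Finally, $m=2|\Lambda_N|=2N^D$, not $2N$; this is harmless.
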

The proof of \cref{prop:qf-translation-invariant-stability} is in \cref{proof:qf-translation-invariant-stability}.

Here the perturbation strength refers to the decomposition
\begin{align}
    \KK_N&=\sum_{Z\subseteq\Lambda_N}\KK_N^{(Z)},
    &s_{\KK}&:=\sup_N\sup_{x\in\Lambda_N}\sum_{Z\ni x}\|\KK_N^{(Z)}\|_{1\to1}<\infty, \label{eq:qf-ti-local-perturbation}
\end{align}
where each $\KK_N^{(Z)}$ is a quasifree Lindbladian supported on $Z$, and each local norm is computed on the modes in $Z$. The reference strength $s_{\LL}$ is defined analogously from a decomposition of $\LL_N$ into supported quasifree Lindbladians. The proof gives a size-independent time $t_0$ with $\sup_N\|e^{t_0\mathbf X_{\LL_N}}\|\leq1/2$. For $s_{\KK}>0$, one may take
\begin{align}
    \varepsilon_*&=\frac{1}{4t_0s_{\KK}}. \label{eq:qf-ti-perturbation-threshold}
\end{align}

A spectral gap alone does not generally imply rapid mixing: relaxation times can grow with system size without gap closing \cite{HagaNakagawaHamazakiUeda2021}, and relaxation can also be much slower than the inverse spectral gap in boundary-dissipated many-body systems \cite{MoriShirai2020}. The classical East process with open boundary conditions, at fixed vacancy density and with a facilitating boundary also has a size-independent positive gap but a worst-case mixing time growing linearly with the chain length \cite{GangulyLubetzkyMartinelli2013}. A quasifree example is the open fermionic chain with linear gain and loss studied in \cite{YangJiangBergholtz2022}: in its weak-dissipation regime, the Liouvillian gap remains positive independently of chain length, while the reported relaxation time of a boundary occupation grows linearly with chain length. In contrast, translation-invariant, finite-range quasifree fermionic systems with periodic boundary conditions have exponentially decaying connected correlations in their unique steady states \cite{ZhangBarthel2022Criticality}. Moreover, \cref{prop:qf-translation-invariant-stability} shows that, for translation-invariant quasifree families with periodic boundaries and uniformly bounded interaction strength, a uniform gap in the single-particle sector implies rapid mixing.

The role of the gap is made explicit by the estimates in \cref{prop:quasifree_rapid_mixing}. Write $\mathbf X=SJS^{-1}$, let $L$ be the largest Jordan-block size, and set
\begin{align*}
    \gamma(\mathbf X):=\min_{z\in\operatorname{spec}(\mathbf X)}-\operatorname{Re}z>0,
    \qquad \kappa(S):=\|S\|\|S^{-1}\|.
\end{align*}
Then
\begin{align*}
    \|e^{t\LL}(\rho)-\sigma\|_1
    \leq m\,\kappa(S)e^{-t\gamma(\mathbf X)}\left(\sum_{j=0}^{L-1}\frac{t^j}{j!}\right)\|\rho-\sigma\|_1.
\end{align*}
Thus, for well-conditioned systems ($\kappa(S)=\operatorname{poly}(N)$) with bounded Jordan blocks, \cref{prop:quasifree_rapid_mixing} shows that a size-independent gap in the single-particle sector yields rapid mixing. In our conventions, $\gamma(\mathbf X)$ coincides with the spectral gap of $\LL$ on the full operator space \cite{Prosen2010}. Together with the translation-invariant setting, these are two rare instances in which a Lindbladian gap yields rapid mixing.

\partbibliography

\paperpart{sections-mlsi_stability}{0}
\subsection{MLSI}\label{sec:mlsi-stability}
The preceding results obtain stability from structural information about the generator: a commuting decomposition, strong local dissipation, or quasifree closure. A different natural possibility is to strengthen the form of convergence itself. The modified logarithmic Sobolev inequality gives exponential contraction of relative entropy and is a central tool for proving thermalization. We prove stability when the stationary projections commute and, separately, under kernel inclusion.

We first introduce the notation needed to treat possibly nonprimitive dynamics. A \emph{conditional expectation} onto a unital $*$-subalgebra $\NN\subseteq\MM$ of a von Neumann algebra $\MM$ is a completely positive unital map $E^*:\MM\to\NN$ satisfying $E^*[X\Phi Y]=XE^*[\Phi]Y$ for $X,Y\in\NN$, $\Phi\in\MM$; its predual is denoted $E$. The \emph{decoherence-free subalgebra} of $\LL$ is the maximal algebra on which the dual semigroup acts as a $*$-automorphism \cite{Frigerio1978}:
\begin{align}
    \label{eq:dfs} \FF(\LL)=\Bigl\{X\in\BB(\HH): e^{t\LL^*}(X^\dagger X)=e^{t\LL^*}(X)^\dagger e^{t\LL^*}(X),\ e^{t\LL^*}(XX^\dagger)=e^{t\LL^*}(X)e^{t\LL^*}(X)^\dagger\ \ \forall t\ge0\Bigr\}.
\end{align}
\begin{proposition}[Conditional expectation and infinite-time limit, Proposition 8 of \cite{carbone2013decoherence}, Theorem 19 \cite{carbone2015environment}]\label{thm:CE}
Assume $e^{t\LL}$ has a full-rank invariant state $\sigma$. Then there is a unique conditional expectation $E^*_\LL:\BB(\HH)\to\FF(\LL)$ with $E_\LL(\sigma)=\sigma$, and for every $X\in\BB(\HH)$,
\begin{align*}
    \lim_{t\to+\infty}e^{t\LL^*}(X-E^*_\LL[X])=0, \qquad\text{equivalently}\qquad \lim_{t\to+\infty}e^{t\LL}(\rho-E_\LL(\rho))=0\quad\forall\rho\in\SSS(\HH).
\end{align*}
If $\LL$ has no nonzero purely imaginary eigenvalues, then $E_\LL=P_\LL$.
\end{proposition}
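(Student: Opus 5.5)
The plan is to work entirely in finite dimensions with the Heisenberg semigroup $\Phi_t:=e^{t\LL^*}$. This is a unital completely positive semigroup with faithful invariant state $\sigma$, so $\Tr(\sigma\Phi_t(X))=\Tr(\sigma X)$. I will replace the abstract construction of \cite{carbone2013decoherence,carbone2015environment} by a spectral one.

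\textbf{Step 1: decomposition of the operator space.} Let $\mathcal P$ be the span of the eigenvectors of $\LL^*$ with purely imaginary eigenvalues (the peripheral part). First I show that $\LL^*$ has no nontrivial Jordan blocks on $i\mathbb R$. Since $\|\Phi_t\|_{\infty\to\infty}=1$, a Jordan block would force polynomial growth, which is impossible. Then I take the spectral (Riesz) projection $E^*$ onto $\mathcal P$ along the sum of the generalized eigenspaces with $\operatorname{Re}z<0$. On that complementary sum $\Phi_t$ decays exponentially, so $\Phi_t(X-E^*[X])\to0$ for every $X$. This already gives the convergence statement in both pictures, by duality.

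\textbf{Step 2: $\mathcal P=\FF(\LL)$, and $\mathcal P$ is a $*$-algebra.} This is the step I expect to be the main obstacle. Take $X$ with $\Phi_t(X)=e^{i\theta t}X$. By the Kadison--Schwarz inequality, $\Phi_t(X^\dagger X)\succeq\Phi_t(X)^\dagger\Phi_t(X)=X^\dagger X$. Pairing with $\sigma$ and using invariance gives $\Tr\bigl(\sigma[\Phi_t(X^\dagger X)-X^\dagger X]\bigr)=0$. Since $\sigma\succ0$, the positive difference vanishes, so $\Phi_t(X^\dagger X)=\Phi_t(X)^\dagger\Phi_t(X)$; the same argument with $XX^\dagger$ gives the other identity.

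Hence $X$ lies in the multiplicative domain of every $\Phi_t$, and therefore in $\FF(\LL)$. Choi's multiplicative-domain theorem then gives $\Phi_t(XY)=\Phi_t(X)\Phi_t(Y)$ for peripheral $X,Y$. It follows that products and adjoints of peripheral eigenvectors are again peripheral eigenvectors, with eigenvalues $i(\theta+\theta')$ and $-i\theta$, so $\mathcal P$ is a unital $*$-algebra.

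Conversely, let $X\in\FF(\LL)$. Then $\|\Phi_t(X)\|_2^{\sigma}$, the GNS norm $\Tr(\sigma\Phi_t(X)^\dagger\Phi_t(X))=\Tr(\sigma\Phi_t(X^\dagger X))=\Tr(\sigma X^\dagger X)$, is constant in $t$. Decomposing $X=E^*[X]+R$ with $\Phi_t(R)\to0$, and using that $\Phi_t$ is norm-preserving on $\mathcal P$ (it acts as a unitary group there), gives $R=0$. So $\FF(\LL)=\mathcal P$.

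\textbf{Step 3: $E^*$ is a $\sigma$-preserving conditional expectation, and it is unique.} Write the eigenvalues of $\LL^*$ on $\mathcal P$ as $i\theta_k$. By Kronecker's theorem I pick times $t_n\to\infty$ with $e^{i\theta_kt_n}\to1$ for all $k$. Then $E^*=\lim_n\Phi_{t_n}$, so $E^*$ is unital, completely positive, and preserves $\sigma$, i.e.\ $E_\LL(\sigma)=\sigma$.

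For the bimodule property, let $A,B\in\mathcal P$. Since $\mathcal P$ lies in the multiplicative domain of each $\Phi_{t_n}$, we get $\Phi_{t_n}(AXB)=\Phi_{t_n}(A)\Phi_{t_n}(X)\Phi_{t_n}(B)$. Passing to the limit yields $E^*[AXB]=AE^*[X]B$.

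For uniqueness, any $\sigma$-preserving conditional expectation $F$ onto $\mathcal P$ satisfies $\Tr(\sigma Y^\dagger F[X])=\Tr(\sigma F[Y^\dagger X])=\Tr(\sigma Y^\dagger X)$ for all $Y\in\mathcal P$. So $F$ is the orthogonal projection onto $\mathcal P$ in the inner product $\langle Y,X\rangle=\Tr(\sigma Y^\dagger X)$, which is determined by $\sigma$ and $\mathcal P$ alone; hence $F=E^*$.

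\textbf{Step 4: the case with no nonzero imaginary eigenvalues.} If $\operatorname{spec}(\LL)\cap i\mathbb R=\{0\}$, then $\mathcal P=\ker\LL^*$. The Riesz projection is then exactly $\lim_{t\to\infty}\Phi_t=P_\LL^*$, and taking preduals gives $E_\LL=P_\LL$.
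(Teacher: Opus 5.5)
Your proof is correct, apart from one line noted below, and it takes a genuinely different route from the paper.

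\textbf{Comparison of routes.} The paper takes existence, uniqueness and the decay statement directly from Carbone et al. It argues only the final identification. There it notes that the dual dynamics restricted to $\FF(\LL)$ is a group of $*$-automorphisms with diagonalizable, purely imaginary generator. This group is therefore trivial when $\operatorname{spec}(\LL)\cap i\mathbb R=\{0\}$, and the decay statement then gives $e^{t\LL^*}\to E_\LL^*$.

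You instead give a self-contained finite-dimensional spectral proof:
\begin{itemize}
\item semisimplicity of the peripheral spectrum and the Riesz projection;
\item Kadison--Schwarz plus faithfulness of $\sigma$, to place the peripheral space in the multiplicative domain;
\item a Kronecker sequence $\Phi_{t_n}\to E^*$, to inherit complete positivity, unitality, $\sigma$-invariance and the bimodule property;
\item uniqueness via the orthogonal projection for $\langle Y,X\rangle_\sigma=\Tr(\sigma Y^\dagger X)$.
\end{itemize}

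The citation route is shorter and inherits the generality of the infinite-dimensional theory. Your route uses finite dimensionality essentially, but it gives more explicit structure:
\begin{itemize}
\item $E_\LL^*$ is the spectral projection onto the peripheral part, so it commutes with the semigroup and does not depend on which faithful invariant state is chosen;
\item $\FF(\LL)$ is the span of the peripheral eigenvectors;
\item $E_\LL^*$ is the $\sigma$-orthogonal projection, which is exactly what the paper re-derives by hand in the proof of \cref{prop:mlsi-kernel}.
\end{itemize}
Your Step~4 then reduces to the paper's last step.

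\textbf{The line that needs adding.} In the converse inclusion $\FF(\LL)\subseteq\mathcal P$, norm preservation on $\mathcal P$ together with $\Phi_t(R)\to0$ only yields $\|X\|_\sigma=\|E^*[X]\|_\sigma$, for example along your times $t_n$. To conclude $R=0$ you also need $R\perp_\sigma\mathcal P$, so that $\|X\|_\sigma^2=\|E^*[X]\|_\sigma^2+\|R\|_\sigma^2$.

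This orthogonality follows from your own tools. For a peripheral eigenvector $Y$ with $\Phi_t(Y)=e^{i\theta t}Y$, you showed $\Phi_t(Y^\dagger Y)=\Phi_t(Y)^\dagger\Phi_t(Y)$. Hence $Y^\dagger$ multiplies on the left in $\Phi_t$, and invariance of $\sigma$ gives
\[
\Tr(\sigma Y^\dagger R)=\Tr\bigl(\sigma\Phi_t(Y^\dagger R)\bigr)=e^{-i\theta t}\Tr\bigl(\sigma Y^\dagger\Phi_t(R)\bigr)\longrightarrow0 .
\]
Alternatively, run Step~3 before the converse inclusion. It only uses the first half of Step~2, and your uniqueness computation applied to $E^*$ itself shows that $E^*$ is the $\sigma$-orthogonal projection onto $\mathcal P$.
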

The proof of \cref{thm:CE} is in \cref{proof:CE}. For a Lindbladian $\LL$ with a full-rank invariant state and a well-defined infinite-time limit, we write $E_\LL=P_\LL$ for the state predual of the corresponding conditional expectation $E_\LL^*$. The stationary state need not be unique: for each initial state $\rho$, $E_\LL(\rho)$ is the stationary state in $\ker\LL$ selected by its infinite-time evolution. The fixed-point MLSI controls relative entropy to this input-dependent stationary state:
\begin{align}
    2\alpha\,D(\rho\|E_\LL(\rho))&\leq\EP_\LL(\rho),\qquad \EP_\LL(\rho):=-\Tr[\LL(\rho)(\log\rho-\log E_\LL(\rho))], \label{eq:fixed-point-MLSI}
\end{align}
for all density matrices $\rho$, and $\alpha(\LL)$ denotes its optimal constant. The following proposition bounds the MLSI constant of a sum when the stationary projections commute.
\begin{proposition}\label{prop:mlsi-commuting}\subsectionresulttoc{prop:mlsi-commuting}{MLSI for commuting stationary projections}
Let $\HH$ be a finite-dimensional Hilbert space, and let $\LL$ and $\KK$ be Lindbladians on $\BB(\HH)$, each with a full-rank invariant state, a well-defined infinite-time limit, and a positive MLSI constant. Suppose that
\begin{align*}
    E_\LL E_\KK=E_\KK E_\LL.
\end{align*}
Then the infinite-time limit of $\LL+\KK$ exists, and
\begin{align}
    E_{\LL+\KK}=E_\LL E_\KK,\qquad
    \ker(\LL+\KK)=\ker\LL\cap\ker\KK,\qquad
    \alpha(\LL+\KK)\geq\min\{\alpha(\LL),\alpha(\KK)\}.
\end{align}
If $E_\LL=E_\KK$, the stronger bound
\begin{align*}
    \alpha(\LL+\KK)\geq\alpha(\LL)+\alpha(\KK)
\end{align*}
holds.
\end{proposition}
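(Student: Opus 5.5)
The plan is to reduce everything to a common full-rank invariant state $\sigma$ for both generators. With such a $\sigma$, the entropy productions become additive and all relative entropies can be compared through a chain rule.

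\emph{Step 1: a common invariant state and an intertwining relation.} Put $E:=E_\LL E_\KK=E_\KK E_\LL$. Since $E_\LL$ and $E_\KK$ are preduals of faithful conditional expectations, $\sigma:=E(\BI/\dim\HH)$ should be full rank. By commutativity, $\sigma$ lies in the range of both $E_\LL$ and $E_\KK$, so $\LL(\sigma)=\KK(\sigma)=0$. Next, from $E_\LL\LL=0$ and $E_\KK\KK=0$ (stationary projections annihilate the range of the generator) together with commutativity, I get
\begin{align*}
    E(\LL+\KK)=E_\KK E_\LL\LL+E_\LL E_\KK\KK=0 .
\end{align*}
Hence $E\circ e^{t(\LL+\KK)}=E$ for all $t\geq0$.

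\emph{Step 2: additivity of entropy production.} Since $\LL$ has a well-defined infinite-time limit, the decoherence-free algebra $\FF(\LL)$ carries trivial dynamics, so $\FF(\LL)=\ker\LL^*$. For any state $\tau$ in the range of $E_\LL$, the standard structure of $\sigma$-compatible conditional expectations should give
\begin{align*}
    \log\tau-\log\sigma\in\FF(\LL)=\ker\LL^* .
\end{align*}
Granting this, $\Tr[\LL(\rho)\log\tau]=\Tr[\LL(\rho)\log\sigma]$, and therefore $\EP_\LL(\rho)=-\Tr[\LL(\rho)(\log\rho-\log\sigma)]$. The same holds for $\KK$. Because $E(\rho)$ lies in the ranges of both $E_\LL$ and $E_\KK$, the same identity applies to $\LL+\KK$ with reference state $E(\rho)$. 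This yields
\begin{align*}
    \EP_{\LL+\KK}(\rho)=\EP_\LL(\rho)+\EP_\KK(\rho).
\end{align*}
I expect this step to be the main obstacle. It needs the precise statement that $\log E_\LL(\rho)-\log\sigma$ is affiliated to the fixed-point algebra, which I would extract from the block structure of $\FF(\LL)$ in \cref{thm:CE}.

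\emph{Step 3: the chain rule.} For a conditional expectation preserving $\sigma$, one has $D(\rho\|E_\bullet(\rho))=D(\rho\|\sigma)-D(E_\bullet(\rho)\|\sigma)$. Applying this twice gives
\begin{align*}
    D(\rho\|E\rho)=D(\rho\|E_\KK\rho)+D(E_\KK\rho\|E_\KK E_\LL\rho).
\end{align*}
Data processing under $E_\KK$ bounds the second term by $D(\rho\|E_\LL\rho)$. Combining with Step 2 and the two fixed-point MLSIs,
\begin{align*}
    \EP_{\LL+\KK}(\rho)\geq 2\alpha(\LL)D(\rho\|E_\LL\rho)+2\alpha(\KK)D(\rho\|E_\KK\rho)\geq 2\min\{\alpha(\LL),\alpha(\KK)\}\,D(\rho\|E\rho).
\end{align*}
If $E_\LL=E_\KK$, then $E=E_\LL$ and the two relative entropies coincide with $D(\rho\|E\rho)$, which gives the bound $\alpha(\LL)+\alpha(\KK)$ directly.

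\emph{Step 4: the infinite-time limit and the kernel.} By Step 1, $E(e^{t(\LL+\KK)}\rho)=E(\rho)$ for all $t$. Hence
\begin{align*}
    \frac{d}{dt}D\bigl(e^{t(\LL+\KK)}\rho\,\big\|\,E\rho\bigr)=-\EP_{\LL+\KK}\bigl(e^{t(\LL+\KK)}\rho\bigr).
\end{align*}
Step 3 and Grönwall then give exponential decay of this relative entropy, and Pinsker's inequality gives $e^{t(\LL+\KK)}\rho\to E\rho$. So the infinite-time limit exists and $E_{\LL+\KK}=E$. Its range is $\operatorname{ran}E_\LL\cap\operatorname{ran}E_\KK=\ker\LL\cap\ker\KK$, and this is all of $\ker(\LL+\KK)$, since fixed points of $\LL+\KK$ are fixed by the limit. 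With $E_{\LL+\KK}=E$ identified, Step 3 is exactly the fixed-point MLSI for $\LL+\KK$ with the stated constant.
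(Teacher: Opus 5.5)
Your proposal is correct and follows the paper's proof essentially step for step: the common faithful state $\sigma=E(\BI/\dim\HH)$ and $E(\LL+\KK)=0$; additivity of entropy production with reference state $E(\rho)$, using the swap-of-fixed-points fact (which the paper supplies as \cref{prop:swap_fixpoint}, proved from exactly the block decomposition you suggest); the chain rule plus data processing; and Gr\"onwall--Pinsker along the trajectory using $E\circ e^{t(\LL+\KK)}=E$. Two small tidy-ups. First, the chain-rule identity in Step 3 is cleanest as a single application of the chain rule for $E_\KK$ with reference state $E(\rho)$, which $E_\KK$ fixes, rather than two applications with reference $\sigma$; the latter also needs the chain rule for $E$ itself. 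Second, convergence and the MLSI for non-faithful $\rho$ follow from the faithful case by approximating with $(1-\delta)\rho+\delta\sigma$, as the paper does.
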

The proof of \cref{prop:mlsi-commuting} is in \cref{proof:mlsi-commuting}. A stronger monotonicity statement holds when the perturbation leaves every steady state of $\LL$ fixed: under $\ker\LL\subseteq\ker\KK$, a positive MLSI for $\LL$ ensures that $\LL+\KK$ converges to the same stationary projection and inherits its MLSI.
\begin{proposition}\label{prop:mlsi-kernel}\subsectionresulttoc{prop:mlsi-kernel}{MLSI under kernel inclusion}
Let $\LL$ be a Lindbladian with a full-rank invariant state, a well-defined infinite-time limit $E_\LL$, and $\alpha(\LL)>0$. Let $\KK$ be a Lindbladian satisfying $\ker\LL\subseteq\ker\KK$. Then the infinite-time limit of $\LL+\KK$ exists, and
\begin{align*}
    E_{\LL+\KK}=E_\LL,\qquad
    \ker(\LL+\KK)=\ker\LL,\qquad
    \alpha(\LL+\KK)\geq\alpha(\LL).
\end{align*}
\end{proposition}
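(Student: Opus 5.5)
The plan has three stages: show that $\LL+\KK$ has the same kernel as $\LL$, then the same infinite-time limit $E_\LL$, and finally compare entropy productions term by term.

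\emph{Kernel.} Let $\sigma\succ0$ be the full-rank invariant state of $\LL$. Since $\ker\LL\subseteq\ker\KK$, $\sigma$ is also invariant for $\KK$ and for $\LL+\KK$. The inclusion $\ker\LL\subseteq\ker(\LL+\KK)$ is immediate. For the reverse inclusion, take $X\in\ker(\LL+\KK)$, which we may assume Hermitian after splitting. It suffices to treat $X=\rho$ a state, since the kernel is spanned by stationary states: the projection $P_{\LL+\KK}$ onto the kernel, taken as a Cesàro mean, is positive and trace-preserving. For such $\rho$ we have $\EP_{\LL+\KK}(\rho)=0$, and we want to conclude $\EP_\LL(\rho)=0$. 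The MLSI then gives $D(\rho\|E_\LL(\rho))=0$, i.e.\ $\rho\in\ker\LL$.

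\emph{Splitting the entropy production.} The key identity is that, because $\KK$ fixes every element of $\ker\LL$, the state $E_\LL(\rho)$ is $\KK$-invariant. Moreover, $\ker\LL$ is the range of a conditional expectation, so $\log E_\LL(\rho)$ lies in $\FF(\LL)$ modulo the modular structure determined by $\sigma$. This should imply $\Tr[\KK(\rho)\log E_\LL(\rho)]=0$. I would justify this via the structure theorem: $\ker\LL=\bigoplus_i 0\oplus(\BB(\HH_i)\otimes\omega_i)$, $\KK$ fixes each block element, and therefore the dual $\KK^*$ annihilates the block-diagonal observables $\log E_\LL(\rho)=\bigoplus_i(\log a_i\otimes\BI+\BI\otimes\log\omega_i)$. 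Hence $\KK^*(\log E_\LL(\rho))=0$. I expect this step to be the main obstacle, in particular justifying that $\KK^*$ kills these logarithms, which lie in the commutant-type algebra generated by $\ker\LL$.

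Given this, $\EP_{\LL+\KK}(\rho)=\EP_\LL(\rho)+\EP'_\KK(\rho)$, where $\EP'_\KK(\rho):=-\Tr[\KK(\rho)(\log\rho-\log E_\LL(\rho))]=-\Tr[\KK(\rho)\log\rho]+0$. We need $\EP'_\KK(\rho)\ge0$. This is the derivative of $D(e^{s\KK}\rho\|E_\LL(\rho))$ at $s=0$ with a minus sign. Since $e^{s\KK}$ is a channel fixing $E_\LL(\rho)$, data processing gives $D(e^{s\KK}\rho\|E_\LL\rho)\le D(\rho\|E_\LL\rho)$, so the derivative is nonpositive and $\EP'_\KK\ge0$. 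Therefore
\[
\EP_{\LL+\KK}(\rho)\ge\EP_\LL(\rho)\ge 2\alpha(\LL)D(\rho\|E_\LL(\rho)).
\]
If $\rho$ is not full rank, I would handle it by perturbing $\rho\to(1-\delta)\rho+\delta\sigma$ and using continuity, or by taking the inequality directly in the extended-valued sense.

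\emph{Infinite-time limit and MLSI.} With the kernel established, for any state, Grönwall applied to $\frac{d}{dt}D(\rho_t\|E_\LL(\rho))=-\EP_{\LL+\KK}(\rho_t)$ gives exponential decay, where $E_\LL(\rho_t)=E_\LL(\rho)$ because $E_\LL\KK=0$ by duality from $\KK^*$ vanishing on $\ker\LL^*$-type observables. (Alternatively, $E_\LL e^{t(\LL+\KK)}=E_\LL$ follows from $\ker\LL^*\subseteq\ker\KK^*$, which holds by the same block argument.) Hence $e^{t(\LL+\KK)}\rho\to E_\LL(\rho)$. The limit therefore exists and equals $E_\LL$, which rules out purely imaginary eigenvalues and identifies $E_{\LL+\KK}=E_\LL$ and $\ker(\LL+\KK)=\ker\LL$. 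The displayed inequality is then precisely the fixed-point MLSI for $\LL+\KK$, giving $\alpha(\LL+\KK)\ge\alpha(\LL)$.
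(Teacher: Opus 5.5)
Your overall scheme matches the paper's. With the fixed reference $\tau=E_\LL(\rho)$, you split $-\Tr[(\LL+\KK)(\rho_t)(\log\rho_t-\log\tau)]$ into the $\LL$-entropy production plus a $\KK$-term that is nonnegative by data processing, then apply Gr\"onwall and Pinsker. Your Stage~1 is a valid alternative to the paper, which obtains $\ker(\LL+\KK)=\ker\LL$ only after proving convergence: you derive it directly from vanishing entropy production at full-rank stationary states, and $(1-\delta)\rho+\delta\sigma$ does stay in $\ker(\LL+\KK)$.

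Your ``key identity'' $\KK^*(\log E_\LL(\rho))=0$ is false in general. For $\KK=\LL$ primitive with $\sigma\neq\BI/d$ it would say $\LL^*(\log\sigma)=0$, i.e.\ $\log\sigma\in\ker\LL^*=\C\BI$. The $\BI\otimes\log\omega_i$ parts of $\log E_\LL(\rho)$ do not lie in $\ker\LL^*$, and nothing forces $\KK^*$ to annihilate them. Fortunately you do not need it. The splitting is plain linearity, and $-\Tr[\KK(\rho)(\log\rho-\log\tau)]$ is minus the $s$-derivative at $s=0$ of $D(e^{s\KK}(\rho)\|\tau)$. That is nonnegative as soon as $\KK(\tau)=0$, which follows from $\tau\in\ker\LL\subseteq\ker\KK$. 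Drop the rewriting as $-\Tr[\KK(\rho)\log\rho]$.

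The genuine gap is the step $E_\LL\KK=0$, equivalently $\ker\LL^*\subseteq\ker\KK^*$ or $E_\LL e^{t\KK}=E_\LL$. You cannot avoid it, for two reasons:
\begin{itemize}
\item Without $E_\LL(\rho_t)=\tau$, the MLSI for $\LL$ only controls $D(\rho_t\|E_\LL(\rho_t))$. This falls short of $D(\rho_t\|\tau)$ by $D(E_\LL(\rho_t)\|\tau)$, so Gr\"onwall does not close.
\item $E_{\LL+\KK}$ and $E_\LL$ project along $\operatorname{ran}(\LL+\KK)$ and $\operatorname{ran}\LL$ respectively, so equal kernels alone cannot give $E_{\LL+\KK}=E_\LL$. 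Given the kernel identity, that conclusion is equivalent to $\ker\LL^*\subseteq\ker\KK^*$.
\end{itemize}

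Your justification, ``the same block argument'', does not work. The hypothesis constrains $\KK$ only on the Schr\"odinger-picture fixed points $\ker\LL$. A linear map vanishing on one subspace says nothing about its adjoint on another. Your phrasing also ``proves'' the false identity above, which shows it is unsound. The transfer to $\ker\LL^*$ needs positivity and the faithful state.

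The paper works in the GNS norm $\|A\|_\sigma^2=\Tr(\sigma A^\dagger A)$:
\begin{itemize}
\item Kadison--Schwarz makes $e^{t\KK^*}$ and $E_\LL^*$ contractions, so the idempotent $E_\LL^*$ is an orthogonal projection.
\item $e^{t\KK}E_\LL=E_\LL$ gives $E_\LL^*e^{t\KK^*}=E_\LL^*$.
\item For $u\in\operatorname{ran}E_\LL^*$ this yields $\|e^{t\KK^*}u\|_\sigma^2=\|u\|_\sigma^2+\|(\operatorname{id}-E_\LL^*)e^{t\KK^*}u\|_\sigma^2\le\|u\|_\sigma^2$, forcing $e^{t\KK^*}u=u$.
\end{itemize}

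Alternatively, within your block picture, argue via supports. For a projection $Q=\bigoplus_iY_i\otimes\BI\in\ker\LL^*$, the normalized state $\bigoplus_iY_i\otimes\omega_i$ is $e^{t\KK}$-invariant with support $Q$. This forces $e^{t\KK^*}(Q)=Q+Q^\perp e^{t\KK^*}(Q)Q^\perp$. Adding the same identity for $Q^\perp$ and using unitality kills both remainders. With this lemma supplied, your Stage~3 goes through.
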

The proof of \cref{prop:mlsi-kernel} is in \cref{proof:mlsi-kernel}. In particular, every Lindbladian addition preserving the unique full-rank stationary state of $\LL$ inherits its MLSI.

\partbibliography

\paperpart{sections-fast_mixing}{0}
\subsection{The \texorpdfstring{$\chi^2$}{chi-squared} divergence}\label{sec:fast}
MLSI provides a robust entropy-contraction mechanism, but it can fail even for primitive rapidly mixing dynamics, as \cref{lemma:rm_MLSI} shows. This motivates looking for a weaker functional inequality that still has a monotonicity property under addition. Quantum $\chi^2$-divergences provide such a setting: they measure quadratic fluctuations around a faithful stationary state and lead to a Poincar\'e-type contraction bound governed by a self-adjoint operator. This is natural for physical relaxation problems in which the stationary state is faithful but a full relative-entropy contraction estimate is unavailable.

The $\chi^2$-divergences of \cite{TKRW2010} provide upper bounds on the trace-norm distance to a faithful stationary state, with exponential contraction controlled by a self-adjoint operator constructed from the Lindbladian and the stationary state. The resulting bound gives fast mixing with constant decay rate and does not by itself imply the system-size-independent stability of local observables established in \cite{CLMP2015}. Nevertheless, we show that adding any Lindbladian preserving the stationary state can only improve this $\chi^2$ contraction bound. Thus fast mixing with constant decay rate obtained through this mechanism is stable under such additions.

Let
\begin{align*}
    K=\left\{k:(0,\infty)\to(0,\infty)| -k\text{ is operator monotone},\ k(w^{-1})=wk(w)\text{ for all }w>0,\ k(1)=1\right\}.
\end{align*}
A prototypical example is $k_\alpha(w)=\frac12(w^{-\alpha}+w^{\alpha-1})$, with $0\le\alpha\le1$. On a finite-dimensional Hilbert space, define right and left multiplication operators $R_\sigma(A)=A\sigma$, $L_\rho(A)=\rho A$, respectively, and $\Delta_{\rho,\sigma}=L_\rho R_\sigma^{-1}$ for states $\rho$ and $\sigma\succ0$. The commuting positive self-adjoint maps $L_\rho$ and $R_\sigma^{-1}$ make $\Delta_{\rho,\sigma}$ positive and self-adjoint with respect to $\langle A,B\rangle=\Tr(A^\dagger B)$. For $k\in K$, define
\begin{align*}
    \chi_k^2(\rho,\sigma)=\bigl\langle\rho-\sigma,\Omega_\sigma^k(\rho-\sigma)\bigr\rangle,\qquad \Omega_\sigma^k=R_\sigma^{-1}k(\Delta_{\sigma,\sigma}).
\end{align*}
Here $\Omega_\sigma^k$ is a positive definite self-adjoint superoperator. In this subsection, inequalities between self-adjoint superoperators are understood with respect to the Hilbert--Schmidt inner product: $\Phi\preceq\Psi$ means that $\langle A,(\Psi-\Phi)(A)\rangle\geq0$ for every operator $A$.

The contraction estimate proved in \cref{thm:tkrw} has the following form.
Let $\mathcal L$ be a Lindbladian with a stationary state $\sigma\succ0$, and let $k\in K$. Write $\mathcal L^*$ for its Hilbert--Schmidt adjoint and define
\begin{align}
    \label{eq:Lambda} \Lambda_k(\mathcal L)=[\Omega_\sigma^k]^{1/2}\circ\mathcal L\circ[\Omega_\sigma^k]^{-1/2}+[\Omega_\sigma^k]^{-1/2}\circ\mathcal L^*\circ[\Omega_\sigma^k]^{1/2}.
\end{align}
Then $\Lambda_k(\mathcal L)\preceq0$ and $\Lambda_k(\mathcal L)(\sigma^{1/2})=0$. If $l_1^k(\mathcal L)\le0$ denotes its second largest eigenvalue, counting multiplicity, then every state $\rho$ satisfies
\begin{align}
    \label{eq:tkrw} \lVert e^{t\mathcal L}(\rho)-\sigma\rVert_1^2\le\chi_k^2(e^{t\mathcal L}(\rho),\sigma)\le e^{l_1^k(\mathcal L)t}\chi_k^2(\rho,\sigma),\qquad t\ge0.
\end{align}
This is shown in \cref{thm:tkrw}, with its proof in \cref{proof:tkrw}. The self-adjoint formulation gives a direct addition rule: if $\LL$ and $\KK$ have a common faithful stationary state $\sigma$, then the generalized-gap eigenvalue of their sum is no larger than the sum of their individual eigenvalues, computed with the same $k$ and $\sigma$. Thus the corresponding decay rates add.

\begin{proposition}[Stability of the $\chi^2$ mixing bound]\label{prop:tkrw-stability}\subsectionresulttoc{prop:tkrw-stability}{Stability of the \texorpdfstring{$\chi^2$}{chi-squared} mixing bound}
Let $\mathcal L$ and $\mathcal K$ be Lindbladians satisfying $\mathcal L(\sigma)=\mathcal K(\sigma)=0$ for the same state $\sigma\succ0$. For every $k\in K$, using this same $\sigma$ to define all three operators, we have
\begin{align*}
    \Lambda_k(\mathcal L+\mathcal K)&=\Lambda_k(\mathcal L)+\Lambda_k(\mathcal K)\preceq\Lambda_k(\mathcal L),\\
    l_1^k(\mathcal L+\mathcal K)&\le l_1^k(\mathcal L)+l_1^k(\mathcal K).
\end{align*}
If $l_1^k(\mathcal L)+l_1^k(\mathcal K)<0$, then $\mathcal L+\mathcal K$ converges exponentially to $\sigma$, which is therefore its unique stationary state.
\end{proposition}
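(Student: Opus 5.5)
The first identity follows from linearity of the definition \cref{eq:Lambda}. Because $\LL$ and $\KK$ share the stationary state $\sigma$, the operator $\Omega_\sigma^k$ and its powers $[\Omega_\sigma^k]^{\pm1/2}$ are the same for $\LL$, $\KK$ and $\LL+\KK$. We also have $(\LL+\KK)^*=\LL^*+\KK^*$. Conjugating a sum by fixed maps is linear, so $\Lambda_k(\LL+\KK)=\Lambda_k(\LL)+\Lambda_k(\KK)$. The inequality $\Lambda_k(\LL+\KK)\preceq\Lambda_k(\LL)$ then follows from the statement quoted from \cref{thm:tkrw} that $\Lambda_k(\KK)\preceq0$. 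That statement applies because $\KK$ is a Lindbladian with the faithful stationary state $\sigma$.

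For the eigenvalue bound, all three operators are Hilbert--Schmidt self-adjoint and nonpositive, and all annihilate the common vector $v:=\sigma^{1/2}$. Since each is self-adjoint, the orthogonal complement $v^\perp$ is invariant under each of them. The largest eigenvalue of each is $0$, attained at $v$, so the second largest eigenvalue counted with multiplicity is at most the largest eigenvalue of its restriction to $v^\perp$. This is an equality when the kernel is one-dimensional. In general, since $\Lambda\preceq0$ and $v$ spans one eigenvector for the eigenvalue $0$, the spectrum on $v^\perp$ is the full spectrum with one copy of $0$ removed. Hence $l_1^k(\Lambda)=\max_{A\perp v,\|A\|_2=1}\langle A,\Lambda A\rangle$. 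The variational bound then gives
\begin{align*}
l_1^k(\LL+\KK)=\max_{A\perp v,\ \|A\|_2=1}\bigl(\langle A,\Lambda_k(\LL)A\rangle+\langle A,\Lambda_k(\KK)A\rangle\bigr)\le l_1^k(\LL)+l_1^k(\KK),
\end{align*}
because each maximum is taken separately over the same set $v^\perp$. The main point to check is this identification of the second eigenvalue with the Rayleigh maximum on $v^\perp$. It relies on the common eigenvector $v$, which is exactly what the shared stationary state provides.

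For the final claim, apply \cref{eq:tkrw} to $\LL+\KK$, which is a Lindbladian with the stationary state $\sigma\succ0$. If $l_1^k(\LL)+l_1^k(\KK)<0$, then $l_1^k(\LL+\KK)<0$. For every state $\rho$ the bound gives $\|e^{t(\LL+\KK)}(\rho)-\sigma\|_1^2\le e^{l_1^k(\LL+\KK)t}\chi_k^2(\rho,\sigma)\to0$, and $\chi_k^2(\rho,\sigma)$ is finite because $\sigma$ is full rank. Suppose $\rho'$ were another stationary state. Then $e^{t(\LL+\KK)}(\rho')=\rho'$ for all $t$, so $\|\rho'-\sigma\|_1=0$. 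Hence $\sigma$ is the unique stationary state and convergence to it is exponential.
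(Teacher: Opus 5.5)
Your proof is correct and follows the paper's argument essentially step for step: linearity of $\Lambda_k$ with $\Lambda_k(\KK)\preceq0$ from \cref{thm:tkrw}, the maximal Rayleigh quotient on the common orthogonal complement of $\sigma^{1/2}$ combined with the sum-of-suprema bound, and \cref{eq:tkrw} applied to $\LL+\KK$ for exponential convergence and uniqueness. Your hedge that the identification of $l_1^k$ with that Rayleigh maximum is an equality only for a one-dimensional kernel is unnecessary. As you note yourself, removing one copy of the eigenvalue $0$ makes it an equality in general.
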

The proof of \cref{prop:tkrw-stability} is in \cref{proof:tkrw-stability}. To make the resulting system-size dependence explicit, Eqs. (7) and (20) of \cite{TKRW2010} give
\begin{align}
    \chi_k^2(\rho,\sigma)\le\chi_0^2(\rho,\sigma)=\Tr(\rho^2\sigma^{-1})-1\le\lambda_{\min}(\sigma)^{-1},
\end{align}
where the last inequality uses $\sigma^{-1}\preceq\lambda_{\min}(\sigma)^{-1}I$ and $\Tr(\rho^2)\le1$. Applying \cref{eq:tkrw} to the sum, substituting this estimate, and using \cref{prop:tkrw-stability} yields
\begin{align}
    \sup_\rho\|e^{t(\mathcal L+\mathcal K)}(\rho)-\sigma\|_1
    \le\lambda_{\min}(\sigma)^{-1/2}e^{[l_1^k(\mathcal L)+l_1^k(\mathcal K)]t/2}.
\end{align}

In particular, for a family $(\LL_N)_N$ on $N$ qubits, if $-l_1^k(\mathcal L_N)\ge\gamma>0$ and $\lambda_{\min}(\sigma_N)\ge e^{-cN}$ with $c,\gamma$ independent of $N$, then every Lindbladian family $(\KK_N)_N$ satisfying $\mathcal K_N(\sigma_N)=0$ yields a fast-mixing family $(\LL_N+\KK_N)_N$ with constant decay rate:
\begin{align}
    \sup_\rho\|e^{t(\mathcal L_N+\mathcal K_N)}(\rho)-\sigma_N\|_1
    &\le e^{cN/2}e^{[l_1^k(\mathcal L_N)+l_1^k(\mathcal K_N)]t/2},\notag\\
    &\le e^{cN/2}e^{-\gamma t/2}.
\end{align}

Primitivity alone does not imply $l_1^k(\mathcal L)<0$. For example, the Lindbladian in \cref{lemma:rm_MLSI}, $\mathcal L(\rho)=-i[X,\rho]+\Diss{Z}(\rho)$, is primitive with stationary state $\BI/2$: its eigenvalues on traceless matrices are $-2$ and $-1\pm i\sqrt3$. Nevertheless,
\begin{align}
    \Delta_{\BI/2,\BI/2}&=L_{\BI/2}\circ R_{\BI/2}^{-1}=\tfrac12\,\mathrm{id}\circ2\,\mathrm{id}=\mathrm{id},\notag\\
    \Omega_{\BI/2}^k&=R_{\BI/2}^{-1}\circ k(\Delta_{\BI/2,\BI/2})=2k(1)\,\mathrm{id}=2\,\mathrm{id},
\end{align}
so we obtain
\begin{align}
    \Lambda_k(\mathcal L)(A)=\mathcal L(A)+\mathcal L^*(A)=2(ZAZ-A),
\end{align}
whose kernel contains both $\BI$ and $Z$. Hence $l_1^k(\mathcal L)=0$ for every $k\in K$.

\partbibliography

\clearpage
\begin{table}[p]
\centering
\caption{Positive rapid mixing and stability criteria proved here. The cover rows assume a polynomial number of regions, a size-independent positive margin $\Delta$, and $t_{\mathrm{loc}}=O(1)$.}
\label{tab:positive-summary}
\small
\renewcommand{\arraystretch}{1.15}
\begin{tabularx}{\textwidth}{@{}>{\raggedright\arraybackslash}p{0.18\textwidth} >{\raggedright\arraybackslash}X >{\raggedright\arraybackslash}p{0.27\textwidth} >{\raggedright\arraybackslash}p{0.16\textwidth}@{}}
\toprule
Setting & Sufficient hypothesis & Conclusion & Reference \\
\midrule
Dissipative cover & The incoming influence satisfies $\kappa_{\LL}t_{\mathrm{loc}}\leq1/2-\Delta$ & Unique fixed point and RM with constant decay rate & \cref{prop:stability-strong-bulk-dissipation} \\

Perturbed dissipative cover & The combined influence satisfies $\kappa_{\LL+\varepsilon\KK}t_{\mathrm{loc}}\leq1/2-\Delta$ & Unique fixed point and RM with constant decay rate & \cref{prop:stability-strong-bulk-dissipation,cor:commuting-strong-bulk-dissipation} \\

Gibbs-state preparation & High-temperature Gibbs samplers under the hypotheses of \cref{cor:main-gibbs-stability} & RM with constant decay rate, stable under sufficiently weak $k$-local Lindbladian perturbations & \cref{cor:main-gibbs-stability} \\

MPDO preparation & Simple injective fixed-point MPDO parent Lindbladian & RM with constant decay rate, stable under sufficiently weak $k$-local Lindbladian perturbations & \cref{cor:main-mpdo-stability} \\

Quasifree fermions: gapped generator & $\gamma(X)=\Omega(1)$, the Jordan transformation is polynomially well-conditioned, and the largest Jordan block has size $L=O(\log(N+1))$ & RM with constant decay rate; inverse-polylogarithmic $\gamma(X)$ still gives RM for $L=O(1)$ & \cref{prop:quasifree_rapid_mixing} \\

Quasifree fermions: gapped dissipation & $\lambda_{\min}(\operatorname{Re}\mathbf M)\geq\mu>0$ uniformly in system size & RM with constant decay rate, stable under arbitrary quasifree Lindbladian additions & \cref{prop:quasifree_rapid_mixing,cor:sum-gaussian} \\

Quasifree fermions: translation-invariant reference & The single-particle matrices $\mathbf X_{\LL_N}$ have a uniform gap as in \cref{eq:qf-ti-symbol-gap} & RM with constant decay rate, stable under sufficiently weak quasifree Lindbladian perturbations with uniformly bounded local strength & \cref{prop:qf-translation-invariant-stability} \\

Generic perturbation & A relaxing RM family with a unique stationary state; a Hermiticity-preserving, trace-annihilating perturbation of polynomial norm; contractive perturbed dynamics and sufficiently small inverse-polynomial strength & Unique, possibly different stationary state and RM & \cref{lem:perturbation_fixed_point} \\

MLSI stability & Positive MLSI, full-rank invariant states and the stated base limits; commuting stationary projections or $\ker\LL\subseteq\ker\KK$ & MLSI at least the minimum for commuting projections, the sum for identical projections, and $\alpha(\LL)$ under kernel inclusion & \cref{prop:mlsi-commuting,prop:mlsi-kernel} \\

$\chi^2$ framework & A common full-rank stationary state $\sigma$, with the same $k$ & $\chi^2$ contraction rate preserved or improved & \cref{prop:tkrw-stability} \\
\bottomrule
\end{tabularx}
\end{table}
\clearpage
\addtocontents{toc}{\protect\setcounter{tocdepth}{2}}
\hypersetup{bookmarksdepth=2}

\paperpart{sections-open_questions}{0}
\section{Conclusions and outlook}\label{sec:open}
Rapid mixing is important for thermalization, stability of local observables, efficient state preparation, and dynamical descriptions of mixed-state phases. Our results show that robustness of these mixing times requires assumptions beyond locality and rapid mixing of the ideal generator. Even two rapidly mixing generators with the same unique full-rank stationary state can have a sum whose mixing time grows polynomially with the system size. Conversely, our stability criteria identify regimes in which rapid mixing persists, including systems satisfying the bulk-dissipation condition, quasifree fermionic families, and generators with compatible stationary spaces and functional inequalities. Sufficiently strong local noise can also force rapid relaxation and obstruct the preparation of macroscopic order. These results motivate treating robustness of rapid mixing as an additional property of dissipative dynamics.

Size-independent perturbation thresholds are particularly relevant to scalability: preserving a favorable relaxation time need not require reducing the allowed local error as the system grows. Our applications to the high-temperature Gibbs sampler of \cite{RouzeFrancaAlhambra2024b} and the simple injective MPDO parent dynamics at renormalization fixed points of \cite{liu2026parent} give concrete guarantees for thermal-state and tensor-network-state preparation. Extending the stability guarantees beyond high temperature and MPDO renormalization fixed points would make them relevant to more strongly correlated thermal and tensor-network states. Allowing interactions in perturbations of quasifree dynamics would address errors that take an implementation outside its exactly solvable description.

The tools developed here may also be useful for dynamical descriptions of mixed-state phases, where equivalence is formulated through fast local dissipative transformations between specified states \cite{CP2019}. Although we have considered rapid mixing from arbitrary initial states, these tools may help bound relaxation times between particular states and establish the stability of those transformations under local perturbations. Such an extension would address the robustness of evolutions defining a phase even when convergence from every initial state is unnecessary.

A central remaining question is how broadly rapid mixing with constant decay rate is stable under arbitrarily weak local perturbations. The boundary-driven example shows that arbitrarily weak potentials can destroy rapid mixing, but does not settle stability with constant decay rate or under uniform rapid mixing. Resolving this question under physically natural assumptions, such as fixed local interactions and uniform mixing of local restrictions, would help distinguish fragile relaxation from robust rapidly mixing dissipation.

\begin{conjecture}[Rapid mixing is unstable under arbitrarily weak local perturbations]
\label{conj:uniform-rm-instability}
There exist geometrically local Lindbladian families $(\LL_N)_N$ and $(\KK_N)_N$ on lattices of $N$ sites, with range and local interaction strength bounded independently of $N$, and a constant $\eps_*>0$, such that $(\LL_N)_N$ is uniformly rapidly mixing, but $(\LL_N+\eps\KK_N)_N$ has a well-defined infinite-time limit and is not rapidly mixing for every fixed $0<\eps<\eps_*$. The two families are fixed independently of $\eps$.\footnote{The boundary-dissipation example in \cref{ex:sparse-boundary-potential} does not furnish a uniformly rapidly mixing family of this kind. The distance between dissipation sites grows with $N$, leaving arbitrarily large interior regions whose restricted dynamics is purely Hamiltonian.}
\end{conjecture}

Several questions remain open.
\begin{itemize}
\item \textbf{Davies dynamics for commuting Hamiltonians.} Let $(\LL_N)_N$ be a uniformly rapidly mixing Davies family for a fixed finite-range commuting Hamiltonian, at fixed temperature and with fixed local bath couplings and rates. Given a bound on the interaction strength and range, is there a size-independent $\eps_*>0$ such that, for every Lindbladian family $(\KK_N)_N$ satisfying these bounds, $(\LL_N+\eps\KK_N)_N$ remains rapidly mixing for $0\leq\eps<\eps_*$?
\item \textbf{Zero Hamiltonian part.} Let $(\KK_N)_N$ and $(\LL_N)_N$ be geometrically local Lindbladian families with uniformly bounded interaction strength and zero Hamiltonian part. If both families are rapidly mixing, have the same kernel, and share a full-rank invariant state, is $(\LL_N+\KK_N)_N$ rapidly mixing?
\item \textbf{Detailed balance and MLSI.} Does there exist a detailed-balance family that is rapidly mixing with a full-rank fixed point but has no uniformly positive MLSI constant?\footnote{Under the detailed-balance hypotheses of \cite{GR22}, the MLSI constant is positive at each finite size; the question concerns its uniformity in system size.}
\item \textbf{Rapid mixing without MLSI for purely dissipative dynamics.} Does rapid mixing of a Lindbladian with zero Hamiltonian part and a unique full-rank stationary state imply a positive fixed-point MLSI constant? Does uniform rapid mixing imply a size-independent positive MLSI constant under these assumptions?
\item \textbf{The $\chi^2$ divergence.} Does rapid mixing imply a gap of $\Lambda_k$ under suitable conditions?
\item \textbf{Relation to stability of expectations.} Are there examples showing that the results on stability of local observables \cite{CubittLuciaMichalakisPerezGarcia2015,LuciaCubittMichalakisPerezGarcia2015} can hold while stability of rapid mixing fails?
\item \textbf{Generic stability.} The counterexamples required fine tuning. Is rapid mixing \emph{generically} stable?
\end{itemize}

\section*{Acknowledgements}
We thank Georgios Styliaris for insightful discussions. B.R., S.St. and A.C. were financially supported by the DFG under grant TRR 352 -- Project-ID 470903074.
B.R. acknowledges financial support from the European Research Council through the ERC Starting Grant MaTCh, grant agreement n.~101117299.
B.R. also warmly acknowledges the GNFM (Gruppo Nazionale per la Fisica Matematica) - INDAM. J.A.M.L and R.T. acknowledge funding from the Munich Center for Quantum Science and Technology (MCQST), funded by the Deutsche Forschungsgemeinschaft (DFG) under Germany's Excellence Strategy (EXC2111-390814868). The research is part of the Munich Quantum Valley, which is supported by the Bavarian State Government with funds from the High tech Agenda Bayern Plus. This project was funded within the QuantERA II Programme, which received funding from the European Union's Horizon 2020 research and innovation programme under Grant Agreement No.~101017733.

AI tools were used for typesetting, straightforward calculations and analysis, and the refinement of individual steps in the proofs. The counterexamples were designed without the use of AI and predate ChatGPT 5.6 Sol.

\partbibliography

\appendix
\addtocontents{toc}{\protect\setcounter{tocdepth}{1}}
\hypersetup{bookmarksdepth=1}

\paperpart{appendices-counterexamples}{1}
\section{Instability results}\label{app:counterexamples}

\begin{example}[Restatement of \cref{ex:XZeps}]\label{app:ex:XZeps}
On a spin chain of $N$ sites, write $\LL=\LL_N$ and $\KK=\KK_N$ for the local generators of the families $(\LL_N)_N$ and $(\KK_N)_N$, with jump operator $Z_j$ at every site $j$ and Hamiltonians $X_j$ and $(\eps_j-1)X_j$, respectively:
\begin{align}
    \LL(\rho)=\sum_j\bigl(-i[X_j,\rho]+\Diss{Z_j}(\rho)\bigr),\qquad \KK(\rho)=\sum_j\bigl(-i[(\eps_j-1)X_j,\rho]+\Diss{Z_j}(\rho)\bigr),
\end{align}
where $\eps_j= 1/(j+1)$. Both generators have the unique full-rank fixed point $\sigma=(\BI/2)^{\otimes N}$, and both families are rapidly mixing. For every finite $N$, the sum $\LL+\KK$ is primitive and gapped, but its gap is $\Theta(N^{-2})$ and its mixing time is $\Omega(N^2)$, so $(\LL_N+\KK_N)_N$ is not rapidly mixing.
\end{example}

\begin{proof}[Proof of \cref{lem:XZeps}]
\prooflabel{lem:XZeps}{proof:XZeps}
We first prove rapid mixing of the two summands by obtaining a single-site mixing bound that is uniform in $j$ and then telescoping the tensor-product channels. We then compute the gap of their sum and use its slowest mode to rule out rapid mixing.

\emph{Rapid mixing of each summand.} For a single qubit, write
\begin{align}
    \mathcal G_{h,d}(\rho)&=-ih[X,\rho]+d\Diss{Z}(\rho),\qquad d>0.
\end{align}
The generator fixes $\BI/2$. Using $[X,Y]=2iZ$, $[X,Z]=-2iY$, $ZXZ=-X$, and $ZYZ=-Y$, its action on the traceless Pauli basis is
\begin{align}
    \mathcal G_{h,d}(X)&=-2dX,\notag\\
    \mathcal G_{h,d}(Y)&=-2dY+2hZ,\notag\\
    \mathcal G_{h,d}(Z)&=-2hY.
\end{align}
Thus its matrix on $(X,Y,Z)$ and its characteristic polynomial are
\begin{align}
    A_{h,d}&=\begin{pmatrix}-2d&0&0\\
    0&-2d&-2h\\
    0&2h&0\end{pmatrix},\notag\\
    \det(z\BI-A_{h,d})&=(z+2d)(z^2+2dz+4h^2).
\end{align}
The eigenvalues on the traceless subspace are consequently
\begin{align}
    \operatorname{spec}(A_{h,d})&=\{-2d,-d\pm\sqrt{d^2-4h^2}\}. \label{eq:XZeps-single-site-spectrum}
\end{align}
For $h\ne0$, all three eigenvalues have strictly negative real part: if $d^2-4h^2\geq0$, then $\sqrt{d^2-4h^2}<d$, while otherwise the two roots have real part $-d$. Hence $\mathcal G_{h,d}$ is primitive with unique stationary state $\BI/2$.

The single-site terms of $\LL$ and $\KK$ have $d=1$ and respectively $h=1$ and $h=\eps_j-1=-j/(j+1)$, so in both cases $1/2\leq|h|\leq1$. We need a mixing bound uniform in this range, including $|h|=1/2$, where the two eigenvalues of the $(Y,Z)$ block coincide. Write this block as $-\BI+B_h$, with
\begin{align}
    B_h&=\begin{pmatrix}-1&-2h\\
    2h&1\end{pmatrix}, & B_h^2&=-(4h^2-1)\BI.
\end{align}
The diagonal and off-diagonal parts have norms $1$ and $2|h|$, respectively, so $\|B_h\|\leq1+2|h|\leq3$. Set $\omega_h=\sqrt{4h^2-1}$. Expanding the exponential into its even and odd powers gives
\begin{align}
    \bigl\|e^{t(-\BI+B_h)}\bigr\| &\overset{(1)}{=}e^{-t}\left\|\cos(\omega_ht)\BI+ \frac{\sin(\omega_ht)}{\omega_h}B_h\right\|,\notag\\
    &\overset{(2)}{\leq}e^{-t}\left(|\cos(\omega_ht)|+ \left|\frac{\sin(\omega_ht)}{\omega_h}\right|\|B_h\|\right),\notag\\
    &\overset{(3)}{\leq}(1+3t)e^{-t}, \label{eq:XZeps-block-decay}
\end{align}
where $(1)$ uses $B_h^2=-\omega_h^2\BI$ and commutativity with the identity, $(2)$ is the triangle inequality, and $(3)$ uses $|\cos(\omega_ht)|\leq1$, $|\sin(\omega_ht)|\leq\omega_ht$, and $\|B_h\|\leq3$. At $\omega_h=0$, the quotient is interpreted by continuity as $t$, and the same estimate holds.

Let $P(O)=\tfrac12\Tr(O)\BI$. The map $e^{t\mathcal G_{h,1}}-P$ vanishes on the identity and has matrix $e^{tA_{h,1}}$ on the traceless Pauli basis. Since the space of single-qubit linear maps has fixed dimension, equivalence of norms gives a constant $C_0$, independent of $h,j,N$, such that
\begin{align}
    \|e^{t\mathcal G_{h,1}}-P\|_\diamond &\overset{(1)}{\leq}C_0\|e^{tA_{h,1}}\|,\notag\\
    &\overset{(2)}{\leq}C_0(1+3t)e^{-t},\notag\\
    &\overset{(3)}{\leq}Ce^{-t/2}, \label{eq:XZeps-local-mixing}
\end{align}
where $(1)$ is this norm comparison in the fixed Pauli basis, $(2)$ uses \cref{eq:XZeps-block-decay} and the decay $e^{-2t}$ on $X$, and $(3)$ uses $C:=C_0\sup_{s\geq0}(1+3s)e^{-s/2}<\infty$.

For either choice of the summand, let $T_{j,t}$ be its single-site channel. The tensor-product difference has the telescoping expansion
\begin{align}
    \bigotimes_{j=1}^NT_{j,t}-P^{\otimes N} &=\sum_{j=1}^NP^{\otimes(j-1)}\otimes(T_{j,t}-P) \otimes\bigotimes_{k=j+1}^NT_{k,t}.
\end{align}
Therefore
\begin{align}
    \left\|\bigotimes_{j=1}^NT_{j,t}-P^{\otimes N}\right\|_\diamond &\overset{(1)}{\leq}\sum_{j=1}^N \left\|P^{\otimes(j-1)}\otimes(T_{j,t}-P) \otimes\bigotimes_{k=j+1}^NT_{k,t}\right\|_\diamond,\notag\\
    &\overset{(2)}{\leq}\sum_{j=1}^N\|T_{j,t}-P\|_\diamond,\notag\\
    &\overset{(3)}{\leq}CNe^{-t/2},
\end{align}
where $(1)$ is the triangle inequality applied to the telescoping expansion, $(2)$ uses multiplicativity of the diamond norm and the fact that $P,T_{k,t}$ are channels with diamond norm one, and $(3)$ uses \cref{eq:XZeps-local-mixing}. Since $P^{\otimes N}(\rho)=\sigma=(\BI/2)^{\otimes N}$ for every state $\rho$, this proves rapid mixing with constant decay rate for both $\LL$ and $\KK$ towards $\sigma$. Stationarity follows from the single-site calculation, and convergence to $\sigma$ proves uniqueness. Both generators are unital and $\sigma$ is full rank.

\emph{Closing gap of the sum.} The Hamiltonian terms add to $\eps_jX_j$, while the dephasing strength doubles:
\begin{align}
    (\LL+\KK)(\rho) &=\sum_{j=1}^N\bigl(-i[\eps_jX_j,\rho]+2\Diss{Z_j}(\rho)\bigr).
\end{align}
Its single-site generator is $\mathcal G_{\eps_j,2}$. Substituting $h=\eps_j$ and $d=2$ into \cref{eq:XZeps-single-site-spectrum} gives the nonzero eigenvalues
\begin{align}
    -4,\qquad -2\pm2\sqrt{1-\eps_j^2}.
\end{align}
They are strictly negative because $0<\eps_j\leq1/2$. Hence the sum is primitive for every finite $N$, with the same stationary state $\sigma$. Since the single-site generators act on different tensor factors, the full spectrum consists of sums of their eigenvalues. Its gap is thus
\begin{align}
    g_N:=\operatorname{gap}(\LL+\KK) &\overset{(1)}{=}\min_{1\leq j\leq N}2\left(1-\sqrt{1-\eps_j^2}\right),\notag\\
    &\overset{(2)}{=}2\left(1-\sqrt{1-\eps_N^2}\right),\notag\\
    &\overset{(3)}{=}\frac{2\eps_N^2}{1+\sqrt{1-\eps_N^2}} \overset{(4)}{=}\Theta(N^{-2}), \label{eq:XZeps-sum-gap}
\end{align}
where $(1)$ uses the tensor-sum spectrum and the smallest single-site decay rate, $(2)$ uses that $\eps_j=1/(j+1)$ decreases with $j$ and $x\mapsto1-\sqrt{1-x^2}$ increases on $(0,1)$, $(3)$ rationalizes the numerator, and $(4)$ uses $\eps_N=(N+1)^{-1}$ and $1\leq1+\sqrt{1-\eps_N^2}\leq2$.

We finally check explicitly why this closing gap rules out rapid mixing. The real $(Y,Z)$ block of $\mathcal G_{\eps_N,2}$ has a real eigenvector with eigenvalue $-g_N$. Its corresponding Pauli combination gives a traceless Hermitian eigenoperator $F_N$, which we normalize by $\|F_N\|=1$. Its two eigenvalues are then $1$ and $-1$, so $(\BI+F_N)/2$ is a state and $\|F_N\|_1=2$. Choose the initial state
\begin{align}
    \rho_N&=(\BI/2)^{\otimes(N-1)}\otimes(\BI+F_N)/2.
\end{align}
Only the last site evolves away from its stationary state, and
\begin{align}
    \|e^{t(\LL+\KK)}(\rho_N)-\sigma\|_1 &\overset{(1)}{=}\left\|(\BI/2)^{\otimes(N-1)}\otimes \tfrac12 e^{t\mathcal G_{\eps_N,2}}(F_N)\right\|_1,\notag\\
    &\overset{(2)}{=}\tfrac12e^{-g_Nt} \left\|(\BI/2)^{\otimes(N-1)}\right\|_1\|F_N\|_1,\notag\\
    &\overset{(3)}{=}e^{-g_Nt},
\end{align}
where $(1)$ uses the tensor-product evolution, stationarity of $\BI/2$, and linearity, $(2)$ uses the eigenoperator equation and multiplicativity of the trace norm, and $(3)$ uses that a state has trace norm one and $\|F_N\|_1=2$. Therefore
\begin{align}
    \tau_{\operatorname{mix}}^{\LL+\KK}(1/2)\geq\frac{\log2}{g_N}=\Omega(N^2),
\end{align}
which rules out rapid mixing under \cref{eq:rapid-mixing-time}.
\end{proof}

\begin{example}[Restatement of \cref{ex:fermionic-addition}]
\label{app:ex:fermionic-addition}
For odd $N\geq5$, let $\Lambda_N=\mathbb Z/N\mathbb Z$, with two fermionic annihilation operators $a_x,b_x$ in each cell $x$. With indices understood modulo $N$, define
\begin{align}
    H_{\lambda,N}&=\sum_{x\in\Lambda_N}\left[\lambda a_x^\dagger b_x+\frac12(a_{x+1}^\dagger+a_{x-1}^\dagger)b_x+\mathrm{h.c.}\right],\label{eq:fermionic-addition-H}\\
    \LL_{\lambda,N}(\rho)&=-i[H_{\lambda,N},\rho]+\sum_{x\in\Lambda_N}\left(\Diss{a_x}(\rho)+\Diss{a_x^\dagger}(\rho)\right).\label{eq:fermionic-addition-L}
\end{align}
Define the families $(\LL_N^\pm)_N$ and $(\mathcal S_N)_N$ by $\LL_N^\pm=\LL_{\pm2,N}$ and $\mathcal S_N=\LL_N^++\LL_N^-$, and set $\sigma_N=4^{-N}\BI$. Then
\begin{align}
    \lVert e^{t\LL_N^\pm}-P_{\LL_N^\pm}\rVert_\diamond\leq12Ne^{-t/2},\qquad P_{\LL_N^\pm}(X)=\Tr(X)\sigma_N.\label{eq:fermionic-addition-individual}
\end{align}
The same estimate, with $N$ replaced by $|A|$, holds for the restriction to every nonempty set of cells $A$, retaining the on-site jumps and the Hamiltonian terms supported in $A$. The sum $\mathcal S_N=2\LL_{0,N}$ is primitive with fixed point $\sigma_N$, but
\begin{align}
    \tau_{\operatorname{mix}}^{\mathcal S_N}(\varepsilon)&\geq\frac{\log(\varepsilon^{-1})}{2r_N}
    \geq\frac{N^2}{2\pi^2}\log(\varepsilon^{-1}),
    &r_N&=1-\sqrt{1-4\sin^2\left(\frac{\pi}{2N}\right)}.\label{eq:fermionic-addition-lower}
\end{align}
The state establishing this bound has even parity.
\end{example}

We use the oscillator estimate of \cref{lem:qf-oscillator-mixing}. For a unital quasifree generator $\mathcal G$ on $m=2N$ Majoranas, its proof also gives the channel estimate needed here. Let $\mathbf X$ be its single-particle matrix and $P(B)=2^{-N}\Tr(B)\BI$. The expectation maps and commutator bounds used in that proof remain contractive after tensoring with the identity on any ancillary system. Since $P^*e^{t\mathcal G^*}=P^*$, the telescope in \cref{eq:qf-telescoping} and the closed evolution in \cref{eq:X_evolution} give
\begin{align}
    \|e^{t\mathcal G}-P\|_\diamond
    &=\|e^{t\mathcal G^*}-P^*\|_{\mathrm{cb}}\notag\\
    &\leq\sum_{j=1}^m\|\delta_j e^{t\mathcal G^*}\|_{\mathrm{cb}}
    \leq m\|e^{t\mathbf X}\|. \label{eq:fermionic-clifford-bound}
\end{align}
The first equality uses the duality between the diamond norm and the completely bounded norm. Here $\|\cdot\|_{\mathrm{cb}}$ is the operator norm maximized over ancillary extensions; each term in the sum is bounded by $\|e^{t\mathbf X}\|$ by the commutator argument in \cref{proof:qf-oscillator-mixing}.

\begin{proof}[Proof of \cref{lem:fermionic-addition}]
\prooflabel{lem:fermionic-addition}{proof:fermionic-addition}
We verify the claims for the construction in \cref{ex:fermionic-addition}.

\emph{Uniform rapid mixing of the summands.} For $q_k=2\pi k/N$, define
\begin{align}
    a_k=N^{-1/2}\sum_xe^{-iq_kx}a_x,\qquad b_k=N^{-1/2}\sum_xe^{-iq_kx}b_x.
\end{align}
Unitary mixing of the jump list leaves the dissipator invariant, and therefore
\begin{align}
    H_{\lambda,N}&=\sum_{k=0}^{N-1}g_\lambda(q_k)(a_k^\dagger b_k+b_k^\dagger a_k),
    &g_\lambda(q)&=\lambda+\cos q,\notag\\
    \sum_x(\Diss{a_x}+\Diss{a_x^\dagger})&=\sum_k(\Diss{a_k}+\Diss{a_k^\dagger}).
\end{align}
With $\Pi$ as in \cref{eq:qf-parity-dressing}, for $a=(c_1+ic_2)/2$ one has $\Diss{a}+\Diss{a^\dagger}=\tfrac12(\Diss{c_1}+\Diss{c_2})$, so the generators are unital. More generally, if the balanced gain and loss term has coefficient $\kappa>0$, the matrix governing the evolution of the linear observables in the dressed basis $(i\Pi a_k,i\Pi b_k)$ is
\begin{align}
    B_\kappa(g)&=\begin{pmatrix}-\kappa&-ig\\-ig&0\end{pmatrix},
    &z_\pm(g)&=-\frac\kappa2\pm\sqrt{\frac{\kappa^2}{4}-g^2}.\label{eq:fermionic-addition-drift}
\end{align}
For $\lambda=\pm2$, $1\leq|g_\lambda(q)|\leq3$. With $\kappa=1$ and $\omega_g=\sqrt{g^2-1/4}$,
\begin{align}
    e^{tB_1(g)}&=e^{-t/2}\left[\cos(\omega_gt)\BI+\frac{\sin(\omega_gt)}{\omega_g}\left(B_1(g)+\tfrac12\BI\right)\right],\notag\\
    \lVert e^{tB_1(g)}\rVert&\leq e^{-t/2}\left(1+\frac{|g|+1/2}{\sqrt{g^2-1/4}}\right)\leq3e^{-t/2}.
\end{align}
There are $4N$ Majoranas, so \cref{eq:fermionic-clifford-bound} gives \cref{eq:fermionic-addition-individual}, which proves primitivity and rapid mixing with constant decay rate.

For a restriction to cells $A$, define the real symmetric matrix
\begin{align}
    (J_A)_{xy}:=\frac12\mathbf1\bigl[\{x,y\}\text{ is an edge of the subgraph induced by }A\bigr],\qquad x,y\in A.
\end{align}
It satisfies $\lVert J_A\rVert\leq1$: the sum of the absolute values in each row is at most one, and $J_A$ is symmetric. Write $J_A=U_A\operatorname{diag}(\nu_1,\ldots,\nu_{|A|})U_A^T$, where $U_A$ is real orthogonal and $|\nu_j|\leq1$. Define modes
\begin{align*}
    \widetilde a_j&=\sum_{x\in A}(U_A)_{xj}a_x,\qquad
    \widetilde b_j=\sum_{x\in A}(U_A)_{xj}b_x.
\end{align*}
The same change of basis on both species diagonalizes the restricted hopping matrix $\pm2\BI+J_A$. Thus
\begin{align*}
    H_{\pm2,A}&=\sum_{j=1}^{|A|}(\pm2+\nu_j)
    (\widetilde a_j^\dagger\widetilde b_j+\widetilde b_j^\dagger\widetilde a_j).
\end{align*}
Orthogonality gives $\sum_j(U_A)_{xj}(U_A)_{yj}=\delta_{xy}$, so expanding both the jump and anticommutator terms yields
\begin{align*}
    \sum_{x\in A}(\Diss{a_x}+\Diss{a_x^\dagger})
    &=\sum_{j=1}^{|A|}(\Diss{\widetilde a_j}+\Diss{\widetilde a_j^\dagger}).
\end{align*}
The restricted drift therefore has blocks $B_1(\pm2+\nu_j)$ and their complex conjugates. Since $1\leq|\pm2+\nu_j|\leq3$, the bound above gives $\|e^{t\mathbf X_A}\|\leq3e^{-t/2}$, independently of the shape or connectedness of $A$. There are $4|A|$ Majoranas, and \cref{eq:fermionic-clifford-bound} now gives
\begin{align*}
    \|e^{t\LL_A^\pm}-P_{\LL_A^\pm}\|_\diamond
    &\leq12|A|e^{-t/2},\qquad
    P_{\LL_A^\pm}(X)=4^{-|A|}\Tr(X)\BI.
\end{align*}
This proves uniform rapid mixing with constant decay rate.

\emph{Slow mixing of the sum.} Adding the two generators cancels the on-site hopping and doubles the intercell hopping and the dissipator. Thus the matrices governing the linear-observable evolution are
\begin{align}
    B^{\mathrm{sum}}(q)=2B_1(\cos q)=\begin{pmatrix}-2&-2i\cos q\\-2i\cos q&0\end{pmatrix},
\end{align}
with eigenvalues $-1\pm\sqrt{1-4\cos^2q}$. Since $N$ is odd,
\begin{align}
    \min_k|\cos q_k|=\sin\left(\frac{\pi}{2N}\right)>0.
\end{align}
Every eigenvalue of $B^{\mathrm{sum}}(q_k)$ has strictly negative real part, so \cref{eq:fermionic-clifford-bound} gives convergence to the full-rank state $\sigma_N$ and hence primitivity of $\mathcal S_N$.

It remains to produce an even slow mode. Let $k_*\in\arg\min_k|\cos q_k|$, write $f=(a_{k_*},b_{k_*})^T$, and choose a unit eigenvector $w$ of $(B^{\mathrm{sum}}(q_{k_*}))^\dagger$ with eigenvalue $-r_N$. For a Hermitian $2\times2$ matrix $Q$, put
\begin{align}
    O_Q=f^\dagger Qf-\frac12\Tr(Q)\BI.
\end{align}
To verify its evolution, write $B=-ih-2P_a$, where $h=2\cos(q_{k_*})\left(\begin{smallmatrix}0&1\\1&0\end{smallmatrix}\right)$ and $P_a=\left(\begin{smallmatrix}1&0\\0&0\end{smallmatrix}\right)$. The CAR give the Hamiltonian contribution
\begin{align}
    i[H,f^\dagger Qf]&=f^\dagger i[h,Q]f,
\end{align}
while direct substitution in the balanced gain and loss dissipator gives
\begin{align}
    2(\Diss{a_{k_*}}+\Diss{a_{k_*}^\dagger})^*(f^\dagger Qf)
    &=-2f^\dagger(P_aQ+QP_a)f+2\Tr(P_aQ)\BI.
\end{align}
Combining the two identities and centering by $\Tr(Q)/2$ yields
\begin{align}
    \mathcal S_N^*(O_Q)=O_{B^\dagger Q+QB},\qquad B=B^{\mathrm{sum}}(q_{k_*}).
\end{align}
Taking $Q=ww^\dagger$ and $W=2O_Q$ yields $W=W^\dagger$, $W^2=\BI$, $\Tr W=0$, and $\mathcal S_N^*(W)=-2r_NW$. The state $\rho_N=\sigma_N(\BI+W)$ is positive, normalized, and has even parity. Trace-norm duality gives
\begin{align}
    \lVert e^{t\mathcal S_N}(\rho_N)-\sigma_N\rVert_1
    &\geq\left|\Tr\left[W(e^{t\mathcal S_N}(\rho_N)-\sigma_N)\right]\right|=e^{-2r_Nt}.
\end{align}
Finally, with $s=\sin(\pi/(2N))<1/2$,
\begin{align}
    2s^2\leq r_N=\frac{4s^2}{1+\sqrt{1-4s^2}}\leq4s^2\leq\frac{\pi^2}{N^2}.
\end{align}
This proves \cref{eq:fermionic-addition-lower} and completes the proof.
\end{proof}

\begin{lemma}[Repeated \cref{lemma:rm_MLSI}]\label{app:lemma:rm_MLSI}
The Lindbladian family $(\LL_N)_N$ defined by
\begin{align}
    \LL_N(\rho)=\sum_{j=1}^N\bigl(-i[X_j,\rho]+\Diss{Z_j}(\rho)\bigr)
\end{align}
has the unique full-rank fixed point $\BI/2^N$ and is rapidly mixing with constant decay rate, but it does not have a positive MLSI constant.
\end{lemma}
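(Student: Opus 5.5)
The first two claims are already contained in the proof of \cref{lem:XZeps}. The family $\LL_N$ coincides with the summand $\LL$ of \cref{ex:XZeps}. Each site carries the generator $\mathcal G_{1,1}$, which is primitive with stationary state $\BI/2$, and the telescoping argument there gives $\|e^{t\LL_N}-P\|_\diamond\le CNe^{-t/2}$. That bound yields convergence of every state to $(\BI/2)^{\otimes N}$, hence uniqueness and full rank of the fixed point, and rapid mixing with constant decay rate. I would simply cite that computation.

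For the failure of MLSI, I would follow the intuition given after the statement and exhibit a state with zero entropy production but positive relative entropy. Set $\sigma=\BI/2^N$, so $\log\sigma=-N\log2\,\BI$. Take $\rho$ full rank and diagonal in the computational ($Z$) basis but different from $\sigma$. Concretely, $\rho=\bigl(\tfrac12(\BI+cZ)\bigr)\otimes(\BI/2)^{\otimes(N-1)}$ with $0<c<1$, which also settles $N=1$ and every $N$ uniformly. Since $\rho$ commutes with every $Z_j$ and $Z_j^2=\BI$, each dissipator vanishes: $\Diss{Z_j}(\rho)=Z_j\rho Z_j-\rho=0$. Therefore $\LL_N(\rho)=-i[H,\rho]$ with $H=\sum_jX_j$.

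Next I compute the entropy production:
\begin{align*}
    \EP_{\LL_N}(\rho)=i\Tr\bigl([H,\rho](\log\rho+N\log2\,\BI)\bigr).
\end{align*}
The identity contributes $\Tr[H,\rho]=0$. The remaining term is $\Tr([H,\rho]\log\rho)=\Tr(H[\rho,\log\rho])=0$, because $\rho$ commutes with $\log\rho$. So $\EP_{\LL_N}(\rho)=0$. On the other hand $D(\rho\|\sigma)>0$ because $\rho\neq\sigma$. Then \cref{eq:MLSI} forces $2\alpha D(\rho\|\sigma)\le0$, so no $\alpha>0$ is admissible and $\alpha(\LL_N)=0$ for every $N$.

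The only point needing care is well-definedness: $\log\rho$ must exist and the MLSI must apply to this $\rho$. Choosing $\rho$ full rank, with $|c|<1$, handles both, so I do not expect a genuine obstacle.
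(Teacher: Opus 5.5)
Your proof is correct and follows the paper's route. Rapid mixing and the unique full-rank fixed point are cited from \cref{ex:XZeps}, and MLSI is refuted by a full-rank $Z$-diagonal state: the dephasing vanishes on it, so the purely Hamiltonian motion produces no entropy while $D(\rho\|\BI/2^N)>0$. The only difference is in verifying $\EP_{\LL_N}(\rho)=0$, where you use the cyclicity identity $\Tr([H,\rho]\log\rho)=\Tr(H[\rho,\log\rho])=0$, which needs no product structure. The paper instead takes $\rho=\omega^{\otimes N}$ and checks $\Tr[(-cY)(d_1\BI+d_2Z)]=0$ site by site.
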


\begin{proof}[Proof of \cref{lemma:rm_MLSI}]
\prooflabel{lemma:rm_MLSI}{proof:rm_MLSI}
Rapid mixing with constant decay rate and the unique full-rank stationary state $\sigma=2^{-N}\BI$ follow from \cref{ex:XZeps}, proved in \cref{proof:XZeps}. It remains to show that $\LL$ has no positive MLSI constant. Write $\LL=\LL_N=\sum_{j=1}^N\LL_j$, where $\LL_j(\rho)=-i[X_j,\rho]+\Diss{Z_j}(\rho)$.

Let $\omega=\frac12(\BI+cZ)$ with $c\in(-1,0)\cup(0,1)$ and $\rho=\omega^{\otimes N}$. The entropy production of $\rho$ is
\begin{align}
    \operatorname{EP}(\rho)=-\Tr[\LL(\rho)(\log\rho-\log(2^{-N}\BI))] =-\Tr[\LL(\rho)\log\rho],
\end{align}
since $\LL(\rho)$ is traceless and $\log(2^{-N}\BI)$ is a multiple of $\BI$. Using $\LL(\omega^{\otimes N})=\sum_{j=1}^N\omega^{\otimes\hat j}\otimes\LL_j(\omega)$, where $\omega^{\otimes\hat j}$ denotes the tensor product over all sites except $j$, we get:
\begin{align*}
    \operatorname{EP}(\rho) &=\sum_{j=1}^N-\Tr\bigl[(\omega^{\otimes\hat j}\otimes\LL_j(\omega)) \log(\omega^{\otimes N})\bigr],\\
    &=\sum_{j=1}^N-\Tr_j[\LL_j(\omega)\log\omega],
\end{align*}
where in the last expression we expand the logarithm $\log(A\otimes B)=\log A\otimes\BI+\BI\otimes\log B$ and notice that the cross terms vanish because $\Tr_j(\LL_j(\omega))=0$. For the single-site state $\omega$, the dissipative term vanishes because $Z\omega Z=\omega$, so $\LL_j(\omega)=-i[X,\omega]=-cY$, while $\log\omega=d_1\BI+d_2Z$ for $d_1=\frac12\log\frac{1-c^2}{4}$, $d_2=\frac12\log\frac{1+c}{1-c}$; hence
\begin{align*}
    \Tr_j[\LL_j(\omega)\log\omega] =\Tr[-cY(d_1\BI+d_2Z)]=0,
\end{align*}
because $\Tr Y=0=\Tr(YZ)$. Thus $\operatorname{EP}(\rho)=0$. On the other hand $\rho\ne2^{-N}\BI$, so by Klein's inequality $D(\rho\|2^{-N}\BI)>0$. Hence no positive $\alpha$ can satisfy the MLSI $2\alpha D(\rho\|2^{-N}\BI)\le \operatorname{EP}(\rho)$ for such $\rho$.
\end{proof}

\begin{example}[Checkerboard decomposition of Ising dynamics, \cite{GKZ2024}]
\label{app:ex:ising-kernel-mismatch}
Let $\Lambda_L=\{1,\ldots,L\}^2$, with free boundary conditions, even $L\geq4$, and $N=L^2$, and let
\begin{align}
    H_L=-\sum_{\langle x,y\rangle}Z_xZ_y, \qquad \sigma_{\beta,L}=\frac{e^{-\beta H_L}}{\Tr e^{-\beta H_L}}
\end{align}
be the Hamiltonian and Gibbs state of the two-dimensional Ising model.
There are two geometrically 5-local, zero-Hamiltonian families $(\LL_{1,N})_N$ and $(\LL_{2,N})_N$, written $\LL_i=\LL_{i,N}$ at each size, with the following properties:
\begin{enumerate}
\item each has $\sigma_{\beta,L}$ as a full rank fixed state, satisfies the fixed-point MLSI with a constant at least $1/2$, and obeys
\begin{align}
    \label{eq:app-ising-layer-rm} \sup_\rho\|e^{t\LL_i}(\rho)-P_{\LL_i}(\rho)\|_1 \leq 2N e^{-t},\qquad i\in\{1,2\};
\end{align}
\item their fixed-point spaces are different and strictly larger than the span of $\sigma_{\beta,L}$;
\item for every sufficiently large fixed $\beta$ and all sufficiently large $L$, their sum is primitive but
\begin{align}
    \label{eq:app-ising-slow} \tau_{\operatorname{mix}}^{\LL_1+\LL_2}(1/4) \geq e^{c_\beta L}=e^{c_\beta\sqrt N}
\end{align}
for some $c_\beta>0$.
\end{enumerate}
\end{example}

\begin{proof}[Details of the checkerboard decomposition]
\prooflabel{app:ex:ising-kernel-mismatch}{proof:ising-kernel-mismatch}
Let $C_1,C_2$ be the two checkerboard colors. For each site $x$, let $T_x$ be the conditional heat-bath reset: measure its neighbors in the $Z$ basis and replace spin $x$ by its conditional Gibbs state, acting as the identity elsewhere. Write $\partial x$ for the neighbors of $x$ and $\eta\in\{\pm1\}^{\partial x}$ for their spin configuration, with projector $\Pi_\eta:=\bigotimes_{y\in\partial x}\frac{\BI_y+\eta_yZ_y}{2}$. In the basis $Z|s\rangle=s|s\rangle$, the operator form is
\begin{align}
    p_x(s\mid\eta)&:=\frac{e^{\beta s\sum_{y\in\partial x}\eta_y}}{2\cosh\!\left(\beta\sum_{y\in\partial x}\eta_y\right)},\qquad s\in\{\pm1\},\notag\\
    J_{x,\eta,s,u}&:=\sqrt{p_x(s\mid\eta)}\,|s\rangle\langle u|_x\otimes\Pi_\eta,\qquad s,u\in\{\pm1\},\notag\\
    T_x(\rho)&:=\sum_{\eta\in\{\pm1\}^{\partial x}}\sum_{s,u\in\{\pm1\}}J_{x,\eta,s,u}\rho J_{x,\eta,s,u}^\dagger, \label{eq:ising-conditional-reset}
\end{align}
where each $J_{x,\eta,s,u}$ is tensored with the identity outside $\{x\}\cup\partial x$. Set
\begin{align}
    \LL_i&:=\sum_{x\in C_i}(T_x-\operatorname{id}),\qquad E_i:=\prod_{x\in C_i}T_x. \label{eq:ising-layer-construction}
\end{align}
These are geometrically 5-local, purely dissipative generators. Since $\sigma_{\beta,L}$ is diagonal, measuring the neighbors leaves it unchanged; resampling spin $x$ from its conditional Gibbs distribution also preserves it. Thus $T_x(\sigma_{\beta,L})=\sigma_{\beta,L}$ for every $x$, and $\LL_i(\sigma_{\beta,L})=0$ for $i=1,2$. The resets are idempotent and commute within each color. The standard conditional-reset MLSI gives $\alpha(T_x-\operatorname{id})\geq1/2$ \cite[Example~3.2]{Bardet2017}, and \cref{prop:mlsi-commuting} therefore gives $\alpha(\LL_i)\geq1/2$.

We now derive the mixing estimate in \cref{eq:app-ising-layer-rm}. For $x\in C_i$, idempotence gives
\begin{align}
    S_x(t):=e^{t(T_x-\operatorname{id})}
    =T_x+e^{-t}(\operatorname{id}-T_x).
\end{align}
Because the resets commute within one color, if $C_i=\{x_1,\ldots,x_m\}$ then
\begin{align}
    e^{t\LL_i}=\prod_{k=1}^m S_{x_k}(t),
    \qquad
    P_{\LL_i}=E_i=\prod_{k=1}^m T_{x_k}.
\end{align}
The difference of these products has the telescoping expansion
\begin{align}
    e^{t\LL_i}-E_i
    =\sum_{k=1}^m
    \left(\prod_{j<k}S_{x_j}(t)\right)
    \bigl(S_{x_k}(t)-T_{x_k}\bigr)
    \left(\prod_{j>k}T_{x_j}\right).
\end{align}
Every $S_x(t)$ and $T_x$ is a quantum channel and therefore has diamond norm one. Moreover,
$S_x(t)-T_x=e^{-t}(\operatorname{id}-T_x)$ and
$\|\operatorname{id}-T_x\|_\diamond\leq2$. Hence submultiplicativity and the triangle inequality give
\begin{align}
    \|e^{t\LL_i}-P_{\LL_i}\|_\diamond
    &\leq\sum_{k=1}^m e^{-t}\|\operatorname{id}-T_{x_k}\|_\diamond
    \leq2|C_i|e^{-t}\leq2Ne^{-t}.
\end{align}
Applying this channel bound to any state $\rho$ gives \cref{eq:app-ising-layer-rm}.

The map $P_{\LL_i}=E_i$ preserves an arbitrary diagonal marginal on the opposite color and puts $C_i$ in its conditional Gibbs state. Thus the two fixed-point spaces are different and nontrivial.

We use the criterion in \cite{Frigerio1978}: a finite-dimensional Lindbladian with a full-rank stationary state is primitive if the only operators commuting with its Hamiltonian, all jumps, and their adjoints are scalar multiples of the identity. Here the Hamiltonian vanishes, and $p_x(s\mid\eta)>0$ for every spin configuration. The jumps in \cref{eq:ising-conditional-reset} satisfy
\begin{align*}
    \sum_\eta p_x(s\mid\eta)^{-1/2}J_{x,\eta,s,u}
    =|s\rangle\langle u|_x\otimes\BI_{x^c}.
\end{align*}
Thus an operator commuting with every jump commutes with every single-site matrix unit, and must be a scalar multiple of $\BI$. Since $\sigma_{\beta,L}$ is full rank, the sum $\LL_1+\LL_2$ is primitive. The intersection of the fixed-point spaces is the span of $\sigma_{\beta,L}$.
The low-temperature mixing-time lower bound in \cref{eq:app-ising-slow} follows from \cite[Corollary~3.5 and Lemma~4.14]{GKZ2024} for the two-dimensional Ising model.
\end{proof}

\needspace{4\baselineskip}
\begin{example}[Restatement of \cref{ex:rm_sum_dissip_only}]\label{app:ex:rm_sum_dissip_only}
Consider a periodic chain of $N\geq2$ sites, with two qubits $a_i,b_i$ at each site $i$. Define the families $(\LL_N)_N$ and $(\KK_N)_N$ by writing $\LL=\LL_N$ and $\KK=\KK_N$ at each size:
\begin{align}
    \LL&=\sum_{i=1}^N\mathcal{D}[L_{a,i}]+\sum_{i=1}^N\mathcal{D}[L_{b,i}]+\sum_{i=1}^N\mathcal{D}[L_{a,b,i}],\notag\\
    \KK&=\sum_{i=1}^N\mathcal{D}[K_{a,i}]+\sum_{i=1}^N\mathcal{D}[L_{b,i}]+\sum_{i=1}^N\mathcal{D}[L_{a,b,i}],\label{rows-eq:two-dissipators}
\end{align}
with jump operators
\begin{align}
    L_{a,i}&=\begin{pmatrix}1&2\\
    0&-1\end{pmatrix}_{a_i},\qquad 1\le i\le N,\notag\\
    K_{a,i}&=\begin{pmatrix}1&-2\\
    0&-1\end{pmatrix}_{a_i},\qquad 1\le i\le N,\notag\\
    L_{b,i}&=\ketbra{0}{0}_{b_i}\otimes\ketbra{0}{1}_{b_{i+1}},\qquad 1\le i\le N,\notag\\
    L_{a,b,i}&=\sqrt{\gamma}\,\ketbra{1}{1}_{a_i}\otimes\ketbra{0}{1}_{b_i},\qquad 1\le i\le N.\label{rows-eq:two-dissipator-jumps}
\end{align}
Here $\gamma>0$ is independent of $N$, and the site labels are understood modulo $N$, so $b_{N+1}=b_1$. The bottom jump changes $b_{i+1}$ from $1$ to $0$ when $b_i$ is in state $0$, while the cross jump changes $b_i$ from $1$ to $0$ when $a_i$ is in state $1$. Thus $\LL$ and $\KK$ are translation invariant and have the same bottom and cross jumps, while the off-diagonal entries of their $a_i$ jumps have opposite signs.

Let $\eta=\begin{pmatrix}\frac56&-\frac13\\-\frac13&\frac16\end{pmatrix}$ be the unique fixed point of the single-qubit dissipator with jump operator $L_a=\begin{pmatrix}1&2\\0&-1\end{pmatrix}$. The Lindbladian $\LL$ has the unique fixed point $\sigma_\LL=\eta_a^{\otimes N}\otimes\ketbra{0^N}{0^N}_b$, while $\KK$ has the unique fixed point $\sigma_\KK=(Z\eta Z)_a^{\otimes N}\otimes\ketbra{0^N}{0^N}_b$. Both families $(\LL_N)_N$ and $(\KK_N)_N$ are dissipation-only and rapidly mixing with constant decay rate, but $(\LL_N+\KK_N)_N$ is not rapidly mixing: for every fixed $0<\delta<1/2$, $\tau_{\operatorname{mix}}^{\LL+\KK}(\delta)=\Omega(N)$.

Their sum has kernel
\begin{align}
    \ker(\LL+\KK)&=\ketbra{0^N}{0^N}_a\otimes\BB(\mathcal S_b),\qquad \mathcal S_b:=\operatorname{span}\{|0^N\rangle_b,|1^N\rangle_b\},
\end{align}
where $\BB(\mathcal S_b)$ is embedded in the $b$-qubit operator algebra and includes coherences between the strings. Its infinite-time limit exists, and the mixing time to this kernel is $\Omega(N)$.
\end{example}

\begin{proof}[Proof of \cref{lem:rm_sum_dissip_only}]
\prooflabel{lem:rm_sum_dissip_only}{proof:rm_sum_dissip_only}
\emph{Rapid mixing of each summand.} We prove rapid mixing for $\LL$ with the following strategy. First we bound how quickly the $b$ qubits approach $\ketbra{0^N}{0^N}_b$. At the intermediate time $s:=t/2$, we then replace the evolved state $\rho_s:=e^{s\LL}(\rho)$ by a nearby state $\zeta=\zeta_a\otimes\ketbra{0^N}{0^N}_b$ whose $b$ qubits have reached the steady state. Finally, on this exact $b$-vacuum, the bottom and cross terms vanish and the $a$ qubits evolve as the tensor product $T_s^{\otimes N}$ of the one-qubit semigroup $T_s:=e^{s\mathcal{D}[L_a]}$. The bounds proved below give
\begin{align}
    \|e^{t\LL}(\rho)-\sigma_\LL\|_1&\overset{(1)}{\le}\|\rho_s-\zeta\|_1+\|T_s^{\otimes N}(\zeta_a)-\eta_a^{\otimes N}\|_1,\notag\\
    &\overset{(2)}{\le}2\sqrt{8N}\,e^{-\lambda t}+CNe^{-ct/2},\notag\\
    &\overset{(3)}{\le}(2\sqrt8+C)Ne^{-at},\qquad a:=\min\{\lambda,c/2\},\label{rows-eq:upper}
\end{align}
where $(1)$ uses the triangle inequality proved in \cref{rows-eq:two-stage-triangle}, $(2)$ uses the $b$-qubit replacement bound \cref{rows-eq:projection-applied} and the $a$-qubit tensor-product bound \cref{rows-eq:top-product}, and $(3)$ uses $\sqrt N\le N$ and the definition of $a$. This is rapid mixing with constant decay rate. We now prove each of these bounds.

It is enough to prove this estimate for times satisfying $8Ne^{-2\lambda t}<1$. Indeed, for the complementary times the trivial estimate $\|e^{t\LL}(\rho)-\sigma_\LL\|_1\le2$ can be absorbed into the prefactor in \cref{rows-eq:upper}. This restriction will ensure that the normalized projection used below is well defined.

We first study how quickly the $b$ qubits approach $\ketbra{0^N}{0^N}_b$. Let $n_{b_j}:=\ketbra{1}{1}_{b_j}$ and $\rho_t:=e^{t\LL}(\rho)$. Our goal is to bound the decay of $\Tr(n_{b_j}\rho_t)$, the probability that qubit $b_j$ is occupied. It suffices to find a positive one-qubit matrix $h$ and a constant $\lambda>0$ such that the observable $h_{a_j}n_{b_j}$ on the pair $a_j,b_j$ satisfies
\begin{align}
    \LL^*(h_{a_j}n_{b_j})&\preceq-4\lambda h_{a_j}n_{b_j},\label{rows-eq:required-adjoint}\\
    \tfrac14 n_{b_j}&\preceq h_{a_j}n_{b_j}\preceq2n_{b_j}.\label{rows-eq:occupation-comparison}
\end{align}
The first inequality controls its decay; the second compares it with the occupation. If these inequalities hold, then
\begin{align}
    \frac{d}{dt}\Tr(h_{a_j}n_{b_j}\rho_t)&\overset{(1)}{\le}-4\lambda\Tr(h_{a_j}n_{b_j}\rho_t),\notag\\
    \Tr(h_{a_j}n_{b_j}\rho_t)&\overset{(2)}{\le}e^{-4\lambda t}\Tr(h_{a_j}n_{b_j}\rho),\label{rows-eq:weighted-decay}
\end{align}
where $(1)$ follows by taking the expectation of \cref{rows-eq:required-adjoint} in $\rho_t$, and $(2)$ follows by integration. Using this decay estimate, we obtain
\begin{align}
    \Tr(n_{b_j}\rho_t)&\overset{(1)}{\le}4\Tr(h_{a_j}n_{b_j}\rho_t),\notag\\
    &\overset{(2)}{\le}4e^{-4\lambda t}\Tr(h_{a_j}n_{b_j}\rho),\notag\\
    &\overset{(3)}{\le}8e^{-4\lambda t},\label{rows-eq:site-bound}
\end{align}
where $(1)$ uses the lower bound in \cref{rows-eq:occupation-comparison}, $(2)$ uses \cref{rows-eq:weighted-decay}, and $(3)$ uses the upper bound in \cref{rows-eq:occupation-comparison} together with $\Tr(n_{b_j}\rho)\le1$. We now construct $h$ and prove \cref{rows-eq:required-adjoint} and \cref{rows-eq:occupation-comparison}.

To obtain \cref{rows-eq:required-adjoint}, we look for a real symmetric matrix $h$ with $h_{00}=1$. On $h_{a_j}n_{b_j}$, the top jump contributes $\mathcal{D}[L_a]^*(h)_{a_j}n_{b_j}$, where $L_a=\left(\begin{smallmatrix}1&2\\0&-1\end{smallmatrix}\right)$. The cross jump contributes $-\gamma\{\ketbra{1}{1},h\}_{a_j}n_{b_j}/2$: its jump term vanishes because $\ketbra{1}{0}\ketbra{1}{1}\ketbra{0}{1}=0$. The incoming bottom edge removes the occupation at $b_j$ when $b_{j-1}$ is empty, while the outgoing edge and its adjoint commute with $h_{a_j}n_{b_j}$. All other jumps have disjoint support. Therefore
\begin{align}
    \LL^*(h_{a_j}n_{b_j})&=\bigl(\mathcal{D}[L_a]^*(h)-\tfrac\gamma2\{\ketbra{1}{1},h\}\bigr)_{a_j}n_{b_j}
    -h_{a_j}\ketbra{0}{0}_{b_{j-1}}n_{b_j},\label{rows-eq:pair-action}
\end{align}
where $j-1$ is understood modulo $N$.
For $h=\left(\begin{smallmatrix}1&h_{01}\\h_{01}&h_{11}\end{smallmatrix}\right)$, direct multiplication gives
\begin{align}
    \mathcal{D}[L_a]^*(h)-\tfrac\gamma2\{\ketbra{1}{1},h\}&=\begin{pmatrix}-2h_{01}&1-h_{11}-(4+\gamma/2)h_{01}\\
    1-h_{11}-(4+\gamma/2)h_{01}&4-6h_{01}-(4+\gamma)h_{11}\end{pmatrix}.\label{rows-eq:drift-matrix}
\end{align}
We make this matrix diagonal and negative. First impose $4-(4+\gamma)h_{11}=-\gamma/2$, so its lower-right entry becomes $-\gamma/2-6h_{01}$. Then cancel its off-diagonal entries. Solving these two equations in order gives
\begin{align}
    h_{11}&=\frac{4+\gamma/2}{4+\gamma}=\frac{\gamma+8}{2(\gamma+4)},\qquad h_{01}=\frac{1-h_{11}}{4+\gamma/2}=\frac{\gamma}{(\gamma+4)(\gamma+8)}>0.\notag
\end{align}
Thus both diagonal entries in \cref{rows-eq:drift-matrix} are negative. We have obtained
\begin{align}
    h&=\begin{pmatrix}1&\dfrac{\gamma}{(\gamma+4)(\gamma+8)}\\
    \dfrac{\gamma}{(\gamma+4)(\gamma+8)}&\dfrac{\gamma+8}{2(\gamma+4)}\end{pmatrix},\qquad \lambda:=\frac{h_{01}}4=\frac{\gamma}{4(\gamma+4)(\gamma+8)}.\notag
\end{align}
Here $1/2\le h_{11}\le1$ and $0<h_{01}\le1/8$. Bounding the cross term of its quadratic form by $2|u_0u_1|\le|u_0|^2+|u_1|^2$ gives $\frac38\BI\preceq h\preceq\frac98\BI$. In particular, $h$ is positive and $h\preceq2\BI$; also $0<\lambda\le1/32$. Substituting the chosen entries into \cref{rows-eq:drift-matrix} therefore yields
\begin{align}
    \mathcal{D}[L_a]^*(h)-\tfrac\gamma2\{\ketbra{1}{1},h\}&=\operatorname{diag}(-8\lambda,-\gamma/2-24\lambda)\preceq-8\lambda\BI\preceq-4\lambda h.\label{rows-eq:killed}
\end{align}
The last term of \cref{rows-eq:pair-action} is negative semidefinite because it is a tensor product of positive factors with a minus sign. Hence $\LL^*(h_{a_j}n_{b_j})\preceq-4\lambda h_{a_j}n_{b_j}$, proving \cref{rows-eq:required-adjoint}.

For \cref{rows-eq:occupation-comparison}, the bounds on $h$ give $\frac14\BI\preceq h\preceq2\BI$. Since $h_{a_j}$ and $n_{b_j}$ act on different qubits, tensoring with the positive projector $n_{b_j}$ gives $\frac14 n_{b_j}\preceq h_{a_j}n_{b_j}\preceq2n_{b_j}$. Both required inequalities are therefore established, so \cref{rows-eq:site-bound} holds for every initial state.

With $P:=\BI_a\otimes \ketbra{0^N}{0^N}_b$, the quantity $p_t:=\Tr((\BI-P)\rho_t)$ is the probability that at least one $b$ qubit is occupied. The indicator of this event is bounded by the number of occupied sites, so $\BI-P\preceq\sum_jn_{b_j}$. Summing the preceding estimate with \cref{rows-eq:site-bound} gives
\begin{align}
    p_t&\le\sum_{j=1}^N\Tr(n_{b_j}\rho_t)\le8N e^{-4\lambda t}.\label{rows-eq:clearing}
\end{align}
\cref{rows-eq:clearing} controls the weight outside the $b$-vacuum $\ketbra{0^N}{0^N}_b$. We now study how large $t$ needs to be so that projecting to a state with the $b$ qubits exactly in $\ketbra{0^N}{0^N}_b$ is a good enough approximation for rapid mixing. Set $s:=t/2$ and $\rho_s:=e^{s\LL}(\rho)$. We will construct a state $\zeta$ supported on $P$ and close to $\rho_s$, of the form $\zeta=\zeta_a\otimes\ketbra{0^N}{0^N}_b$. If we find such a $\zeta$, all bottom and cross jumps vanish on the $b$-vacuum, so only the top jumps remain and the $a$ qubits evolve independently. Thus, writing $T_u:=e^{u\mathcal{D}[L_a]}$, we have $e^{s\LL}(\zeta)=T_s^{\otimes N}(\zeta_a)\otimes\ketbra{0^N}{0^N}_b$. The triangle inequality and contraction of the trace norm under $e^{s\LL}$ therefore give
\begin{align}
    \|e^{t\LL}(\rho)-\sigma_\LL\|_1&=\|e^{s\LL}(\rho_s)-\sigma_\LL\|_1,\notag\\
    &\overset{(1)}{\le}\|e^{s\LL}(\rho_s)-e^{s\LL}(\zeta)\|_1+\|e^{s\LL}(\zeta)-\sigma_\LL\|_1,\notag\\
    &\overset{(2)}{\le}\|\rho_s-\zeta\|_1+\|\bigl(T_s^{\otimes N}(\zeta_a)-\eta_a^{\otimes N}\bigr)\otimes\ketbra{0^N}{0^N}_b\|_1,\notag\\
    &\overset{(3)}{=}\|\rho_s-\zeta\|_1+\|T_s^{\otimes N}(\zeta_a)-\eta_a^{\otimes N}\|_1,\label{rows-eq:two-stage-triangle}
\end{align}
where $(1)$ is the triangle inequality, $(2)$ uses contractivity for the first term and the expressions for $e^{s\LL}(\zeta)$ and $\sigma_\LL$, and $(3)$ uses multiplicativity of the trace norm together with $\|\ketbra{0^N}{0^N}_b\|_1=1$. Thus the first term in \cref{rows-eq:two-stage-triangle} is the error from replacing the $b$ qubits by the exact vacuum, while the second is the remaining $a$-qubit mixing error. We now construct $\zeta$ and bound these two terms.

For the first term, we use the following normalized-projector form of the gentle-measurement lemma \cite[Lemma~9]{winter1999coding}. For any state $\xi$, let $p:=\Tr((\BI-P)\xi)$ and, when $p<1$, let $\zeta:=P\xi P/(1-p)$. Then
\begin{align}
    \|\xi-\zeta\|_1&\le2\sqrt p.\label{rows-eq:projection}
\end{align}
For completeness, we give the short purification proof. Let $\lvert\Psi\rangle$ be a purification of $\xi$ on the system and a reference space, so $\Tr_{\mathrm{ref}}\ketbra{\Psi}{\Psi}=\xi$, and put $q:=\Tr(P\xi)=1-p$. First,
\begin{align}
    \|(P\otimes\BI)\lvert\Psi\rangle\|^2&\overset{(1)}{=}\langle\Psi\vert(P\otimes\BI)^2\vert\Psi\rangle,\notag\\
    &\overset{(2)}{=}\langle\Psi\vert(P\otimes\BI)\vert\Psi\rangle,\notag\\
    &\overset{(3)}{=}\Tr\!\left((P\otimes\BI)\ketbra{\Psi}{\Psi}\right)=\Tr(P\xi)=q,\label{rows-eq:projected-norm}
\end{align}
where $(1)$ is the definition of the squared norm, $(2)$ uses $P^2=P$, and $(3)$ uses the defining partial-trace identity for the purification. Hence, when $q>0$, the projected vector
\begin{align}
    \lvert\Phi\rangle&:=\frac{(P\otimes\BI)\lvert\Psi\rangle}{\sqrt q}.\notag
\end{align}
is normalized. It is a purification of $\zeta$ because
\begin{align}
    \Tr_{\mathrm{ref}}\ketbra{\Phi}{\Phi}&\overset{(1)}{=}\frac{P\bigl(\Tr_{\mathrm{ref}}\ketbra{\Psi}{\Psi}\bigr)P}{q},\notag\\
    &\overset{(2)}{=}\frac{P\xi P}{q}=\zeta,
\end{align}
where $(1)$ substitutes the definition of $\lvert\Phi\rangle$ and pulls the system operators through the partial trace, while $(2)$ uses $\Tr_{\mathrm{ref}}\ketbra{\Psi}{\Psi}=\xi$. Its overlap with the original purification is
\begin{align}
    \langle\Psi\vert\Phi\rangle&\overset{(1)}{=}\frac{\langle\Psi\vert(P\otimes\BI)\vert\Psi\rangle}{\sqrt q},\notag\\
    &\overset{(2)}{=}\frac{\Tr(P\xi)}{\sqrt q},\notag\\
    &\overset{(3)}{=}\frac{q}{\sqrt q}=\sqrt q=\sqrt{1-p},\label{rows-eq:purification-overlap}
\end{align}
where $(1)$ uses the definition of $\lvert\Phi\rangle$, $(2)$ uses the partial-trace identity, and $(3)$ uses $q=\Tr(P\xi)=1-p$. In particular, there is no hidden phase: $\langle\Psi\vert(P\otimes\BI)\vert\Psi\rangle$ is the expectation of a positive projector, so it is the real nonnegative number $q$. Contractivity under the partial trace and the pure-state trace-distance formula now give
\begin{align}
    \|\xi-\zeta\|_1&\overset{(1)}{\le}\|\ketbra{\Psi}{\Psi}-\ketbra{\Phi}{\Phi}\|_1,\notag\\
    &\overset{(2)}{=}2\sqrt{1-|\langle\Psi\vert\Phi\rangle|^2},\notag\\
    &\overset{(3)}{=}2\sqrt p,
\end{align}
where $(1)$ uses contractivity of the partial trace, $(2)$ is the trace distance between two pure states, and $(3)$ uses \cref{rows-eq:purification-overlap}. This proves \cref{rows-eq:projection} and is exactly where positivity of $\xi$ controls the coherences between the vacuum and nonvacuum sectors. In the time regime fixed above, \cref{rows-eq:clearing} gives $p_s\le8Ne^{-4\lambda s}=8Ne^{-2\lambda t}<1$, so the normalized projected state is well defined. Applying the estimate to $\xi=\rho_s$ gives
\begin{align}
    \|\rho_s-\zeta\|_1&\overset{(1)}{\le}2\sqrt{p_s},\notag\\
    &\overset{(2)}{\le}2\sqrt{8N}\,e^{-2\lambda s},\notag\\
    &\overset{(3)}{=}2\sqrt{8N}\,e^{-\lambda t},\label{rows-eq:projection-applied}
\end{align}
where $(1)$ uses \cref{rows-eq:projection}, $(2)$ uses \cref{rows-eq:clearing} at time $s$, and $(3)$ uses $s=t/2$.

It remains to bound the second term in \cref{rows-eq:two-stage-triangle}. The one-qubit semigroup $T_t$ is primitive with stationary state $\eta$: indeed, $\eta$ is positive with trace $1$, direct substitution gives $\mathcal{D}[L_a](\eta)=0$, and direct diagonalization shows that the eigenvalue $0$ of $\mathcal{D}[L_a]$ is simple, while its nonzero eigenvalues are $-2,-4,-6$. Since this is a fixed finite-dimensional semigroup, there are constants $C,c>0$, independent of $N$, such that its asymptotic channel $E_a(B):=\eta\Tr B$ satisfies
\begin{align}
    \|T_t-E_a\|_\diamond&\le Ce^{-ct}.\label{row:Tt-Etop}
\end{align}
We may therefore telescope directly in diamond norm:
\begin{align}
    T_t^{\otimes N}-E_a^{\otimes N}&=\sum_{j=1}^NE_a^{\otimes(j-1)}\otimes(T_t-E_a)\otimes T_t^{\otimes(N-j)}.
\end{align}
Both $T_t$ and $E_a$ are channels and hence have diamond norm $1$. The triangle inequality and multiplicativity of the diamond norm thus give
\begin{align}
    \|T_t^{\otimes N}(\zeta_a)-\eta_a^{\otimes N}\|_1&\overset{(1)}{\le}\|T_t^{\otimes N}-E_a^{\otimes N}\|_\diamond,\notag\\
    &\overset{(2)}{\le}\sum_{j=1}^N\|T_t-E_a\|_\diamond,\notag\\
    &\overset{(3)}{\le}CNe^{-ct},\label{rows-eq:top-product}
\end{align}
for every $a$-qubit state $\zeta_a$, where $(1)$ uses $E_a^{\otimes N}(\zeta_a)=\eta_a^{\otimes N}$ and the definition of the diamond norm, $(2)$ uses the triangle inequality applied to the telescoping identity together with $\|T_t\|_\diamond=\|E_a\|_\diamond=1$, and $(3)$ uses \cref{row:Tt-Etop}.

Together, \cref{rows-eq:projection-applied,rows-eq:top-product} establish the second step of \cref{rows-eq:upper}, and hence prove the stated constant-decay-rate rapid-mixing estimate for $\LL$.

It remains to verify stationarity and transfer the bound to $\KK$. The stationary $a$-qubit state and the $b$-vacuum restriction show that $\LL(\sigma_\LL)=0$, and convergence in \cref{rows-eq:upper} proves uniqueness. Put $U:=Z_a^{\otimes N}\otimes\BI_b$ and $\mathcal U(B):=UBU^\dagger$. We have $\KK=\mathcal U\LL\mathcal U^{-1}$ because $ZL_aZ=K_a$ and $Z\ketbra{1}{1}Z=\ketbra{1}{1}$, while $U$ acts trivially on the $b$ qubits. Its unique stationary state is therefore $\sigma_\KK=\mathcal U(\sigma_\LL)$, and unitary invariance of trace norm gives the same bound \cref{rows-eq:upper} for $\KK$. Thus both generators are rapidly mixing with the same constant decay rate.

\emph{Stationary space and infinite-time limit of the sum.} Write $s_i=\ketbra{0}{1}_{a_i}$, $n_i^a=\ketbra{1}{1}_{a_i}$, and $n_i^b=\ketbra{1}{1}_{b_i}$. Expanding the two top dissipators gives
\begin{align}
    \mathcal D[Z+2s]+\mathcal D[Z-2s]&=2\mathcal D[Z]+8\mathcal D[s],\notag\\
    (\LL+\KK)^*(n_i^a)&=-8n_i^a.
\end{align}
Thus every stationary state has zero $a$-qubit occupation and, by positivity, is supported on the $a$-vacuum. On this sector the cross dissipators vanish. The total $b$-qubit occupation then satisfies
\begin{align}
    (\LL+\KK)^*\left(\sum_{i=1}^N n_i^b\right) &=-2\sum_{i=1}^{N}(\BI-n_i^b)n_{i+1}^b,
\end{align}
where this identity is restricted to the $a$-vacuum sector and $n_{N+1}^b=n_1^b$. Each summand on the right is a positive projector onto a cyclic neighboring $01$ pattern. A stationary state must therefore be supported on cyclic strings without $01$. Any nonconstant cyclic binary string contains both a transition from $0$ to $1$ and a transition from $1$ to $0$, so the only such strings are $|0^N\rangle_b$ and $|1^N\rangle_b$, precisely the strings spanning $\mathcal S_b$. Conversely, each bottom jump annihilates $\mathcal S_b$, so both its jump term and its anticommutator vanish on every operator in $\BB(\mathcal S_b)$.

To establish convergence on all operators, order the computational matrix units $|x\rangle\langle y|$ by the total number of ones in $x$ and $y$. Each lowering jump lowers this number when its jump term is nonzero; the anticommutator terms and the $a$-qubit $Z$ dissipators are diagonal in this basis. The generator is therefore triangular with real nonpositive diagonal. Its zero diagonal entries are exactly the matrix units supported on the $a$-vacuum and $\mathcal S_b$ in both ket and bra. The semigroup is bounded in induced trace norm, so zero has no nontrivial Jordan block: such a block would produce polynomial growth on its generalized eigenspace. All other eigenvalues are strictly negative. This proves the asserted kernel, and $P_{\LL+\KK}$ exists by \cref{lem:infinite-time-limit}.

\emph{Mixing-time lower bound for the sum.} We prove the lower bound on the mixing time of $\LL+\KK$ by finding an initial state that takes $\Omega(N)$ time to mix. For $0\le j\le N-1$, write $\lvert d_j\rangle_b:=\lvert0^{N-j}1^j\rangle_b$ and define
\begin{align}
    \omega_j&:=\ketbra{0^N}{0^N}_a\otimes\ketbra{d_j}{d_j}_b.
\end{align}
We take $\rho_{\mathrm{in}}:=\omega_{N-1}$. Since $L_a=Z+2\ketbra{0}{1}$ and $K_a=Z-2\ketbra{0}{1}$, the mixed terms in their dissipators cancel, and the remaining top dissipators vanish on $\ketbra{0}{0}$. Every cross jump contains the $a$-qubit projector $\ketbra{1}{1}$, which annihilates $\lvert0\rangle$, so both the jump and anticommutator terms of its dissipator vanish on the $a$-vacuum. Finally, for $1\leq j\leq N-1$, the cyclic string $d_j=0^{N-j}1^j$ has a unique neighboring pattern $01$, at sites $(N-j,N-j+1)$; the wraparound pair is $10$. Thus only $L_{b,N-j}$ acts. Since each bottom dissipator appears once in $\LL$ and once in $\KK$, we obtain
\begin{align}
    (\LL+\KK)(\omega_j)&=2(\omega_{j-1}-\omega_j)\qquad(1\le j\le N-1),\notag\\
    (\LL+\KK)(\omega_0)&=0.\label{rows-eq:dissip-omega-action}
\end{align}
Thus the span of the $\omega_j$ is invariant and the evolution on it is classical. In particular,
\begin{align}
    e^{t(\LL+\KK)}(\rho_{\mathrm{in}})&=\sum_{j=0}^{N-1}q_j(t)\omega_j,
\end{align}
where $q_{N-1}(0)=1$, $q_j(0)=0$ for $j<N-1$, and the coefficients satisfy the system of differential equations
\begin{align}
    q_{N-1}'(t)&=-2q_{N-1}(t),\notag\\
    q_j'(t)&=2q_{j+1}(t)-2q_j(t)\qquad(1\le j<N-1),\notag\\
    q_0'(t)&=2q_1(t).\label{rows-eq:bottom-ode}
\end{align}
The first equation gives $q_{N-1}(t)\to0$, and the middle equations then give $q_j(t)\to0$ successively for every $j\ge1$. Since $\sum_jq_j(t)=1$, we have $q_0(t)\to1$. Thus this trajectory converges to $\sigma_{\LL+\KK}:=\omega_0$, and the existence of $P_{\LL+\KK}$ just proved implies $P_{\LL+\KK}(\rho_{\mathrm{in}})=\sigma_{\LL+\KK}=\omega_0$.

With the $b$-vacuum projector $P:=\BI_a\otimes\ketbra{0^N}{0^N}_b$ defined above, let $p_t:=\Tr((\BI-P)e^{t(\LL+\KK)}(\rho_{\mathrm{in}}))$ be the weight outside that subspace. Since $\omega_0$ has all $b$ qubits in state $0$ and every $\omega_j$ with $j\ge1$ is supported outside this subspace,
\begin{align}
    p_t&=\sum_{j=1}^{N-1}q_j(t)=1-q_0(t).\label{rows-eq:front-weight}
\end{align}
Substituting the expressions for the evolved state and its limit gives
\begin{align}
    \|e^{t(\LL+\KK)}(\rho_{\mathrm{in}})-P_{\LL+\KK}(\rho_{\mathrm{in}})\|_1&\overset{(1)}{=}\left\|\sum_{j=0}^{N-1}q_j(t)\omega_j-\omega_0\right\|_1,\notag\\
    &\overset{(2)}{=}\left\|(q_0(t)-1)\omega_0+\sum_{j=1}^{N-1}q_j(t)\omega_j\right\|_1,\notag\\
    &\overset{(3)}{=}|q_0(t)-1|+\sum_{j=1}^{N-1}q_j(t),\notag\\
    &\overset{(4)}{=}2p_t,\label{rows-eq:front-identity}
\end{align}
where $(1)$ substitutes $e^{t(\LL+\KK)}(\rho_{\mathrm{in}})=\sum_jq_j(t)\omega_j$ and $P_{\LL+\KK}(\rho_{\mathrm{in}})=\omega_0$, $(2)$ separates the $j=0$ term, $(3)$ uses that the $\omega_j$ have mutually orthogonal supports, so the operator is block diagonal and its trace norm is the sum of the trace norms of its blocks, and $(4)$ uses \cref{rows-eq:front-weight}.

It remains to lower-bound $p_t$. Define the weighted mass
\begin{align}
    a(t)&:=\sum_{j=0}^{N-1}(N-1-j)q_j(t).
\end{align}
Differentiating and using \cref{rows-eq:bottom-ode} gives
\begin{align}
    a'(t)&\overset{(1)}{=}2(N-1)q_1(t)+2\sum_{j=1}^{N-2}(N-1-j)\bigl(q_{j+1}(t)-q_j(t)\bigr),\notag\\
    &\overset{(2)}{=}2\sum_{j=1}^{N-1}(N-j)q_j(t)-2\sum_{j=1}^{N-2}(N-1-j)q_j(t),\notag\\
    &\overset{(3)}{=}2\sum_{j=1}^{N-2}\bigl((N-j)-(N-1-j)\bigr)q_j(t)+2q_{N-1}(t),\notag\\
    &\overset{(4)}{=}2\sum_{j=1}^{N-1}q_j(t)=2p_t\le2,\qquad a(0)=0,\label{rows-eq:weighted-mass-derivative}
\end{align}
where $(1)$ differentiates the definition of $a(t)$ and substitutes \cref{rows-eq:bottom-ode}; the term containing $q_{N-1}'(t)$ has coefficient $N-1-(N-1)=0$. In $(2)$ we shift the index in the incoming terms. In $(3)$ we combine the coefficients of each $q_j(t)$; for $1\le j\le N-2$, the coefficient is $2((N-j)-(N-1-j))=2$, while the endpoint $q_{N-1}(t)$ also has coefficient $2$. Finally, $(4)$ uses \cref{rows-eq:front-weight}.

Hence $a(t)\le2t$. Since every term in $a(t)$ is nonnegative and its first term is $(N-1)q_0(t)$,
\begin{align}
    q_0(t)&\le\frac{a(t)}{N-1}\le\frac{2t}{N-1},\notag\\
    p_t&=1-q_0(t)\ge1-\frac{2t}{N-1}.\label{rows-eq:front-probability}
\end{align}
Combining \cref{rows-eq:front-identity,rows-eq:front-probability} gives
\begin{align}
    \|e^{t(\LL+\KK)}(\rho_{\mathrm{in}})-P_{\LL+\KK}(\rho_{\mathrm{in}})\|_1&\ge2\left(1-\frac{2t}{N-1}\right).\label{rows-eq:front-distance}
\end{align}
For every $0\le t\le(N-1)/8$, this distance is at least $3/2$. The strict mixing-time definition therefore yields
\begin{align}
    \tau_{\operatorname{mix}}^{\LL+\KK}(\delta)&\ge\frac{N-1}{8}\qquad(N\ge2,\ 0<\delta<3/2).\notag
\end{align}
\end{proof}

\begin{example}[Restatement of \cref{ex:rows-ham}]\label{app:ex:rows-ham}
Let $\LL$ be the Lindbladian from Example~\ref{ex:rm_sum_dissip_only}. It has the unique fixed point $\sigma_\LL=\eta_a^{\otimes N}\otimes\ketbra{0^N}{0^N}_b$ and satisfies $\tau_{\operatorname{mix}}^\LL(\varepsilon)=\mathcal{O}(\log(N/\varepsilon))$.

Let $H_i=Y_{a_i}$ and define $(\KK_N)_N$ by $\KK_N=\KK=-i[\sum_{i=1}^NH_i,\cdot]$. Then $(\LL_N+\KK_N)_N$ is not rapidly mixing: for every fixed $0<\delta<1/2$, $\tau_{\operatorname{mix}}^{\LL+\KK}(\delta)=\Omega(N)$.
\end{example}

\begin{proof}[Proof of \cref{lem:rm_sum_hamiltonian}]
\prooflabel{lem:rm_sum_hamiltonian}{proof:rows-ham}
The stationary state and mixing bound for $\LL$ were proved in Example~\ref{ex:rm_sum_dissip_only}. First, the Hamiltonian cancels the action of the top dissipator on $P_0=\ketbra{0}{0}$:
\begin{align}
    \mathcal D[Z+2\ketbra{0}{1}](P_0)&=-X,\qquad -i[Y,P_0]=X.
\end{align}
For completeness, the full one-site top generator $\mathcal G=\mathcal D[Z+2\ketbra{0}{1}]-i[Y,\cdot]$ acts on the matrix units $F_{uv}=\ketbra{u}{v}$ as
\begin{align}
    \mathcal G(F_{00})&=0,\qquad \mathcal G(F_{01})=-4F_{01},\qquad \mathcal G(F_{10})=-4F_{10},\notag\\
    \mathcal G(F_{11})&=4F_{00}-4F_{11}-4F_{01}-4F_{10}.
\end{align}
Order the computational matrix units first by their ket occupation pattern and then by their bra occupation pattern, with lower occupations first. The displayed top-site action preserves a matrix unit or sends it to earlier basis elements. For every bottom or cross jump $L$, the term $L\rho L^\dagger$ has the same property, while the anticommutator term is diagonal because the corresponding $L^\dagger L$ is diagonal in the computational basis. The full generator is therefore triangular in this ordered basis, with real nonpositive diagonal. A nontrivial Jordan block at eigenvalue zero would make $e^{t(\LL+\KK)}$ grow polynomially in $t$, contradicting the uniform boundedness of this quantum channel. Thus zero is semisimple, and $P_{\LL+\KK}$ exists by \cref{lem:infinite-time-limit}.

For the lower bound, we use the same initial state $\rho_{\mathrm{in}}=\omega_{N-1}$ and invariant family as in the preceding example. The cancellation above makes the total $a$-qubit generator vanish on the $\omega_j$, and the cross terms vanish because their $a$-qubit projector annihilates the vacuum. Only $L_{b,N-j}$ acts on $\omega_j$ when $j\ge1$, and here each bottom dissipator appears only once. Therefore
\begin{align}
    (\LL+\KK)(\omega_j)&=\omega_{j-1}-\omega_j\qquad(1\le j\le N-1),\notag\\
    (\LL+\KK)(\omega_0)&=0.\label{rows-eq:ham-bitstring-action}
\end{align}
Comparing \cref{rows-eq:ham-bitstring-action} with \cref{rows-eq:dissip-omega-action} shows that, on the invariant span of the $\omega_j$, the present generator is exactly one half of the generator in the preceding example. Its entire evolution is therefore the earlier evolution with $t$ replaced by $t/2$:
\begin{align}
    e^{t(\LL+\KK)}(\rho_{\mathrm{in}})&=\sum_{j=0}^{N-1}q_j(t/2)\omega_j.
\end{align}
Since $q_0(t/2)\to1$, the preceding identity gives $P_{\LL+\KK}(\rho_{\mathrm{in}})=\sigma_{\LL+\KK}:=\omega_0$. The dynamics are therefore exactly those of the preceding example with $t$ replaced by $t/2$. Hence the same initial-state distance remains at least $3/2$ for twice as long, and the mixing-time lower bound in the preceding example doubles:
\begin{align}
    \tau_{\operatorname{mix}}^{\LL+\KK}(\delta)&\ge\frac{N-1}{4}\qquad(N\ge2,\ 0<\delta<3/2).\notag
\end{align}
\end{proof}

\begin{example}[Restatement of \cref{ex:sparse-boundary-potential}]\label{ex:sparse-boundary-potential-app} Consider an open chain of $m\geq 2$ qubits with a hopping Hamiltonian and boundary dissipation:
\begin{align}
    H&=-J\sum_{j=1}^{m-1}(\sigma_j^+\sigma_{j+1}^-+\sigma_j^-\sigma_{j+1}^+),\notag\\
    \LL^{(0)}&=-i[H,\,\cdot\,]+\gamma\sum_{s\in\{1,m\}}\bigl(\Diss{\sigma_s^-}+\Diss{\sigma_s^+}\bigr),\label{eq:app-sparse-H0}
\end{align}
with $J>0$ and $\gamma=J/2$. The generator $\LL^{(0)}$ is primitive with fixed state $\sigma=2^{-m}\BI$ and has mixing time
\begin{align}
    \tau_{\operatorname{mix}}^{\LL^{(0)}}(\delta)\leq\frac{9(m+1)^9}{2J}\left[\left(\frac m2+1\right)\log2+\log\frac1\delta\right],\qquad 0<\delta<1.\label{eq:sparse-uniform-time-bound}
\end{align}
For $m=3\ell$ and $\eps>0$, let $n_j:=\sigma_j^+\sigma_j^-$ be the occupation operator at site $j$ and define
\begin{align}
    V&=\eps\left(\sum_{j=1}^{\ell}n_j+\sum_{j=2\ell+1}^{3\ell}n_j\right),\notag\\
    \LL^{(V)}&=\LL^{(0)}-i[V,\,\cdot\,].\label{eq:app-sparse-potential}
\end{align}
This generator is also primitive with fixed state $\sigma$. Whenever $\ell\geq\max\{2,4J/\eps\}$, it satisfies
\begin{align}
    \tau_{\operatorname{mix}}^{\LL^{(V)}}(1/2)\geq\frac1{16J}\left(1+\frac\eps{2J}\right)^{\ell-1}.\label{eq:app-sparse-main-lower}
\end{align}
Thus, for fixed $J,\eps>0$, the unperturbed mixing time is at most $\mathrm{poly}(m)$, while the perturbed mixing time is at least $\exp(\Omega(m))$. With physical chain length $m=3\lfloor a\log N/3\rfloor$ for fixed $a>0$ and auxiliary parameter $N$, these give a polylogarithmic upper bound and a polynomial lower bound in $N$, respectively, as in \cref{ex:sparse-boundary-potential}.
\end{example}

\begin{proof}[Proof of \cref{ex:sparse-boundary-potential-app}]
\prooflabel{ex:sparse-boundary-potential-app}{proof:sparse-boundary-estimates}
We first prove contraction over each time interval of length $2(m+1)^2/J$, which yields the polynomial mixing bound, and then show that the gap vanishes with the system size. The mechanism is that hopping transports information from every site to the boundary baths. For the perturbed generator, we construct a localized fermionic mode whose occupation changes exponentially slowly. In each part, we give the bound first and then prove the estimates it needs. Finally, we verify primitivity for the potential in the statement.

\emph{Polynomial upper bound.} Since we only have dissipation on the boundary, the degrees of freedom in the bulk need to be transported to the boundary via the hopping Hamiltonian before they can dissipate, so this suggests a timescale growing at least linearly with $m$. We quantify this by showing that after time $T=\mathrm{poly}(m)$, the quantum channel $e^{T\LL^{(0)}}$ is strictly contractive outside of the steady-state subspace in Hilbert--Schmidt norm. Let $T=2(m+1)^2/J$ and $S_t=e^{t(\LL^{(0)})^*}$. For every state $\rho$,
\begin{align}
    \|e^{t\LL^{(0)}}(\rho)-\sigma\|_1&\overset{(1)}{=}\sup_{\substack{O=O^\dagger\\
    \|O\|\leq1}}\left|\Tr\left[Oe^{t\LL^{(0)}}(\rho-\sigma)\right]\right|,\notag\\
    &\overset{(2)}{=}\sup_{\substack{O=O^\dagger\\
    \|O\|\leq1}}\left|\Tr\left[(S_tO)(\rho-\sigma)\right]\right|,\notag\\
    &\overset{(3)}{\leq}\sup_{\substack{O=O^\dagger\\
    \|O\|\leq1}}\left\|S_t\bigl(O-\Tr(O)\sigma\bigr)\right\|_2\|\rho-\sigma\|_2,\notag\\
    &\overset{(4)}{\leq}2^{m/2}\sup_{\substack{O=O^\dagger,\,\Tr O=0\\
    \|O\|_2\leq1}}\|S_tO\|_2,\notag\\
    &\overset{(5)}{\leq}2^{m/2+1}\exp\left[-\frac{2Jt}{25(m+1)^7}\right],\quad\text{for }t\geq T,\label{eq:sparse-trace-bound}
\end{align}
where $(1)$ uses trace-norm duality and stationarity of $\sigma$, $(2)$ uses that $S_t=e^{t(\LL^{(0)})^*}$ is the Hilbert--Schmidt adjoint of $e^{t\LL^{(0)}}$, $(3)$ uses $S_t(\BI)=\BI$ and $\Tr(\rho-\sigma)=0$ to subtract the scalar part of $O$, followed by the Cauchy--Schwarz inequality, $(4)$ uses $\|\rho-\sigma\|_2^2=\Tr(\rho^2)-2^{-m}\leq1$ and $\|O-\Tr(O)\sigma\|_2^2=\Tr(O^2)-2^{-m}|\Tr O|^2\leq2^m\|O\|^2$ to reduce to traceless Hermitian operators of Hilbert--Schmidt norm at most one, and $(5)$ applies \cref{eq:sparse-HS-decay}, which we prove below. Requiring the last expression in \cref{eq:sparse-trace-bound} to be at most $\delta$ gives \cref{eq:sparse-uniform-time-bound}, since $25(m+1)^7\leq9(m+1)^9$ for $m\geq2$ and the stated mixing-time bound is at least $T$. This also proves primitivity of $\LL^{(0)}$, because every state converges to $\sigma$.

We prove the bound on $\|S_tO\|_2$, \cref{eq:sparse-HS-decay}, for traceless $O$ and $t\geq T$ by showing that it is a strict contraction at time $T$. Then we pass from strict contraction over one interval to decay at every time. Setting $b=8/(25(m+1)^5)$ and writing $t=kT+s$, where $k=\lfloor t/T\rfloor$ and $0\leq s<T$, yields
\begin{align}
    \|S_tO\|_2&\overset{(1)}{\leq}(1-b)^{k/2}\|O\|_2,\notag\\
    &\overset{(2)}{\leq}e^{b/2}e^{-bt/(2T)}\|O\|_2,\notag\\
    &\overset{(3)}{\leq}2\exp\left[-\frac{2Jt}{25(m+1)^7}\right]\|O\|_2,\label{eq:sparse-HS-decay}
\end{align}
where $(1)$ iterates \cref{eq:sparse-window-target} and uses contractivity for the remaining time $s$, $(2)$ uses $1-b\leq e^{-b}$ and $k\geq t/T-1$, and $(3)$ substitutes $b,T$ and uses $e^{b/2}<2$.

We now prove \cref{eq:sparse-window-target}. Define the Hamiltonian commutator and the dissipative part of the adjoint generator by
\begin{align}
    \mathcal A(O)&=i[H,O],\notag\\
    \mathcal B&=\gamma\sum_{s\in\{1,m\}}\bigl(\mathcal D[\sigma_s^-]^*+\mathcal D[\sigma_s^+]^*\bigr).\label{eq:sparse-A-B}
\end{align}
Then $(\LL^{(0)})^*=\mathcal A+\mathcal B$. Throughout this proof, $\mathcal B^{1/2}:=i(-\mathcal B)^{1/2}$, where $(-\mathcal B)^{1/2}$ is the positive semidefinite square root; hence $(\mathcal B^{1/2})^2=\mathcal B$ and $\|\mathcal B^{1/2}X\|_2=\|(-\mathcal B)^{1/2}X\|_2$. The equal-rate sum of raising and lowering dissipators is self-adjoint:
\begin{align}
    \mathcal B(O)&=\frac\gamma2\sum_{s\in\{1,m\}}(X_sOX_s+Y_sOY_s-2O),\notag\\
    -\langle O,\mathcal BO\rangle&=\frac\gamma4\sum_{s\in\{1,m\}}\bigl(\|[X_s,O]\|_2^2+\|[Y_s,O]\|_2^2\bigr),\qquad \|\mathcal B\|_{2\to2}\leq2J.\label{eq:sparse-dirichlet}
\end{align}
We used $\gamma=J/2$. Thus $\mathcal B$ is self-adjoint and negative semidefinite, while $\mathcal A$ is skew-adjoint, and $S_t=e^{t(\mathcal A+\mathcal B)}$. Differentiate the squared Hilbert--Schmidt norm:
\begin{align}
    \frac d{dt}\|S_tO\|_2^2&\overset{(1)}{=}2\Rea\langle S_tO,(\mathcal A+\mathcal B)S_tO\rangle,\notag\\
    &\overset{(2)}{=}2\langle S_tO,\mathcal BS_tO\rangle,\notag\\
    &\overset{(3)}{=}-2\|\mathcal B^{1/2}S_tO\|_2^2,\label{eq:sparse-energy-differential}
\end{align}
where $(1)$ uses the product rule and $\frac d{dt}S_tO=(\mathcal A+\mathcal B)S_tO$, while $(2)$ uses $\mathcal A^*=-\mathcal A$, so its quadratic form has zero real part, and $\mathcal B^*=\mathcal B$, and $(3)$ uses $-\mathcal B\succeq0$ and $\|\mathcal B^{1/2}X\|_2=\|(-\mathcal B)^{1/2}X\|_2$. Integrating from $0$ to $T$, with $S_0O=O$, gives
\begin{align}
    \|S_TO\|_2^2=\|O\|_2^2-2\int_0^T\|\mathcal B^{1/2}S_tO\|_2^2\,dt.\label{eq:sparse-energy-identity}
\end{align}
The derivative in \cref{eq:sparse-energy-differential} is nonpositive, so $S_t$ is a contraction. To establish strict contraction we will compare the evolution $S_t$ under $(\LL^{(0)})^*$ to the evolution $\mathcal U_t=e^{t\mathcal A}$ under the Hamiltonian only. Since $\LL^{(0)}(\BI)=0$ and $i[H,\BI]=0$, both $S_t,\mathcal U_t$ are trace-preserving, so they keep traceless $O$ in the traceless subspace. We now use \cref{eq:sparse-energy-identity} to show that every traceless operator loses a positive fraction, depending on $m$, of its Hilbert--Schmidt norm over each interval of length $T$:
\begin{align}
    \|S_TO\|_2^2&\overset{(1)}{\leq}\|O\|_2^2-\frac2{(1+2JT)^2}\int_0^T\|\mathcal B^{1/2}\mathcal U_tO\|_2^2\,dt,\notag\\
    &\overset{(2)}{\leq}\left(1-\frac8{(m+1)(1+4(m+1)^2)^2}\right)\|O\|_2^2,\notag\\
    &\overset{(3)}{\leq}\left(1-\frac8{25(m+1)^5}\right)\|O\|_2^2,\label{eq:sparse-window-target}
\end{align}
where $(1)$ is \cref{eq:sparse-observation-comparison}, $(2)$ is \cref{eq:sparse-observability} with $2JT=4(m+1)^2$, and $(3)$ uses $1+4(m+1)^2\leq5(m+1)^2$. We now establish these two estimates:
\begin{align}
    \int_0^T\|\mathcal B^{1/2}S_tO\|_2^2\,dt&\geq\frac1{(1+2JT)^2}\int_0^T\|\mathcal B^{1/2}\mathcal U_tO\|_2^2\,dt,\label{eq:sparse-observation-comparison}\\
    \int_0^T\|\mathcal B^{1/2}\mathcal U_tO\|_2^2\,dt&\geq\frac4{m+1}\|O\|_2^2,\qquad \Tr O=0.\label{eq:sparse-observability}
\end{align}
The first compares the dissipative trajectory with free evolution. The second quantifies how transport lets the boundary dissipation reach fluctuations at every site. That is, any initial nonzero traceless observable $O$, during evolution under the Hamiltonian over $[0,T]$, develops a nonzero component in the subspace that is not fixed by the dissipation, i.e. $(\mathrm{ker}\mathcal{B})^{\perp}$, where $\mathcal B$ has strictly negative spectrum, so $e^{t\mathcal B}$ strictly contracts this component for $t>0$. For example, if $O$ is initially supported only in the bulk, then $\mathcal{B}(O)=\mathcal B(\BI)O=0$ and it can't be dissipated at $t=0$, but \cref{eq:sparse-observability} guarantees a positive integrated dissipative form over $[0,T]$.

We begin by proving \cref{eq:sparse-observation-comparison}. Duhamel's identity gives
\begin{align}
    &\mathcal U_tO=S_tO-\int_0^t\mathcal U_{t-s}\mathcal BS_sO\,ds,\notag\\
    &\left(\int_0^T\|\mathcal B^{1/2}S_tO\|_2^2\,dt\right)^{1/2}\geq\left(\int_0^T\|\mathcal B^{1/2}\mathcal U_tO\|_2^2\,dt\right)^{1/2}-\left(\int_0^T\Big\|\mathcal B^{1/2}\int_0^t\mathcal U_{t-s}\mathcal BS_sO\,ds\Big\|_2^2dt\right)^{1/2},\label{eq:sparse-energy-duhamel}
\end{align}
where the second line applies $\mathcal B^{1/2}$ and the triangle inequality in $L^2([0,T])$.

We upper bound the squared second term:
\begin{align}
    &\int_0^T\left\|\mathcal B^{1/2}\int_0^t\mathcal U_{t-s}\mathcal BS_sO\,ds\right\|_2^2dt,\notag\\
    &\overset{(1)}{\leq}\int_0^T\Big(\int_0^t\|\mathcal B^{1/2}\mathcal U_{t-s}\mathcal B^{1/2}\|_{2\to2}^2ds\Big)\Big(\int_0^t\|\mathcal B^{1/2}S_sO\|_2^2\,ds\Big)\,dt,\notag\\
    &\overset{(2)}{\leq}(2J)^2\int_0^Tt\int_0^t\|\mathcal B^{1/2}S_sO\|_2^2\,ds\,dt,\notag\\
    &\overset{(3)}{\leq}(2JT)^2\int_0^T\|\mathcal B^{1/2}S_sO\|_2^2\,ds,\label{eq:sparse-duhamel-remainder}
\end{align}
where $(1)$ uses $\mathcal B=\mathcal B^{1/2}\mathcal B^{1/2}$ and Cauchy--Schwarz in $s$, $(2)$ uses submultiplicativity, $\|\mathcal U_{t-s}\|_{2\to2}=1$, and $\|\mathcal B^{1/2}\|_{2\to2}^2=\|\mathcal B\|_{2\to2}\leq2J$, and $(3)$ bounds $t$ and the length of the outer integral by $T$.

Taking square roots in \cref{eq:sparse-duhamel-remainder} and substituting into the inequality in \cref{eq:sparse-energy-duhamel}, then moving the resulting term to the left, gives
\begin{align}
    (1+2JT)\left(\int_0^T\|\mathcal B^{1/2}S_tO\|_2^2\,dt\right)^{1/2}\geq\left(\int_0^T\|\mathcal B^{1/2}\mathcal U_tO\|_2^2\,dt\right)^{1/2}.
\end{align}
Squaring and rearranging gives
\begin{align}
    \int_0^T\|\mathcal B^{1/2}S_tO\|_2^2\,dt\geq\frac1{(1+2JT)^2}\int_0^T\|\mathcal B^{1/2}\mathcal U_tO\|_2^2\,dt,\label{eq:sparse-comparison-proof}
\end{align}
which proves \cref{eq:sparse-observation-comparison}.

We now prove \cref{eq:sparse-observability}. For $1\leq j\leq m$, the Jordan--Wigner transformation defines the fermionic annihilation operators
\begin{align}
    f_j=\left(\prod_{r<j}(-Z_r)\right)\sigma_j^-,\qquad \{f_j,f_k\}=0,\qquad \{f_j,f_k^\dagger\}=\delta_{jk}\BI.\label{eq:sparse-fermions}
\end{align}
The anticommutation relations follow by moving the earlier site through the parity string, using $(-Z_j)\sigma_j^-=\sigma_j^-$ and $\sigma_j^-(-Z_j)=-\sigma_j^-$. The strings in a quadratic operator cancel outside the interval between its two indices:
\begin{align}
    f_j^\dagger f_k=\sigma_j^+\left(\prod_{r=j+1}^{k-1}(-Z_r)\right)\sigma_k^-\quad(j<k),\qquad f_j^\dagger f_j=n_j.\label{eq:sparse-string-support}
\end{align}
In particular, the hopping Hamiltonian is
\begin{align}
    H=\sum_{j,k=1}^m\mathbf h_{jk}f_j^\dagger f_k,\qquad \mathbf h_{jk}=-J(\delta_{j,k+1}+\delta_{j+1,k}).\label{eq:sparse-one-particle-H}
\end{align}
Here $\mathbf h$ acts on the $m$-dimensional single-particle space; $H$ acts on the full $2^m$-dimensional Hilbert space.

We use the Majorana notation of Appendix~\ref{app:quasifree-fermionic}, specifically \cref{eq:qf-parity-dressing,eq:dk_commutators,eq:qf-expectation}. For $1\leq j\leq m$ and $1\leq i\leq2m$, set
\begin{align}
    c_{2j-1}&=f_j+f_j^\dagger,\qquad c_{2j}=i(f_j-f_j^\dagger),\notag\\
    \Pi&=(-i)^m c_1\cdots c_{2m},\qquad \widetilde c_i=i\Pi c_i,\notag\\
    \delta_iO&=\frac{i}{2}\Pi[c_i,O],\qquad \mathcal E_i(O)=\frac12(O+c_iOc_i).\label{eq:sparse-reset-definitions}
\end{align}
The first line gives Hermitian Majoranas satisfying $\{c_i,c_j\}=2\delta_{ij}\BI$. The last line is the commutator representation of the deletion quasiderivations on ordered products of the dressed Majoranas. As in the quasifree section, $\Pi$ and $\widetilde c_i$ are unitary, and
\begin{align}
    (\id-\mathcal E_i)(O)&=\widetilde c_i\delta_iO,\notag\\
    \|\delta_iO\|_2&=\frac12\|[c_i,O]\|_2,\qquad \mathcal E_1\cdots\mathcal E_{2m}(O)=\Tr(O)\sigma.\label{eq:sparse-quasiderivation-properties}
\end{align}
The norm identity holds in Hilbert--Schmidt norm by the same unitary-invariance argument as for the operator norm. These formulas apply to all operators, without a parity restriction.

To diagonalize $H$, define the real sine matrix and the normal-mode annihilation operators by
\begin{align}
    \mathbf S_{jk}&=\sqrt{\frac2{m+1}}\sin\frac{jk\pi}{m+1},\qquad b_k=\sum_{j=1}^m\mathbf S_{jk}f_j,\notag\\
    \omega_k&=-2J\cos\frac{k\pi}{m+1},\qquad 1\leq k\leq m.\label{eq:sparse-sine-modes}
\end{align}
Sine orthogonality gives $\sum_j\mathbf S_{jk}\mathbf S_{jl}=\delta_{kl}$, so $\mathbf S^T\mathbf S=\BI$ and $f_j=\sum_k\mathbf S_{jk}b_k$. Extend the sine formula to the zero boundary values $\mathbf S_{0k}=\mathbf S_{m+1,k}=0$. Then
\begin{align}
    H&=\sum_{k,l=1}^m(\mathbf S^T\mathbf h\mathbf S)_{kl}b_k^\dagger b_l=\sum_{k=1}^m\omega_k b_k^\dagger b_k.\label{eq:sparse-diagonal-H}
\end{align}
The anticommutation relations give $[b_l^\dagger b_l,b_k]=-\delta_{lk}b_k$. Thus
\begin{align}
    \mathcal A(b_k)&=i[H,b_k]=-i\omega_kb_k,\notag\\
    \mathcal U_t(b_k)&=e^{-i\omega_kt}b_k,\qquad \mathcal U_t(b_k^\dagger)=e^{i\omega_kt}b_k^\dagger.\label{eq:sparse-mode-evolution}
\end{align}
Write $u_k=\mathbf S_{1k}$. Since $f_1=\sum_ku_kb_k$, unitary conjugation gives
\begin{align}
    \|[f_1,\mathcal U_tO]\|_2&\overset{(1)}{=}\|[\mathcal U_{-t}(f_1),O]\|_2\overset{(2)}{=}\left\|\sum_ku_ke^{i\omega_kt}[b_k,O]\right\|_2,\notag\\
    \sum_k\bigl(\|[b_k,O]\|_2^2+\|[b_k^\dagger,O]\|_2^2\bigr)&\overset{(3)}{=}\sum_j\bigl(\|[f_j,O]\|_2^2+\|[f_j^\dagger,O]\|_2^2\bigr),\notag\\
    &\overset{(4)}{=}\frac12\sum_{i=1}^{2m}\|[c_i,O]\|_2^2\overset{(5)}{=}2\sum_{i=1}^{2m}\|\delta_iO\|_2^2,\label{eq:sparse-mode-reset-norms}
\end{align}
where $(1)$ is unitary invariance and $(2)$ uses \cref{eq:sparse-mode-evolution}. For $(3)$, since $[b_k,O]=\sum_j\mathbf S_{jk}[f_j,O]$ and $\mathbf S^T\mathbf S=\BI$, the orthogonal transformation preserves the sum of the squared Hilbert--Schmidt norms. The same argument applies to the adjoint commutators. Step $(4)$ is the parallelogram identity with $f_j=(c_{2j-1}-ic_{2j})/2$, and $(5)$ is \cref{eq:sparse-quasiderivation-properties}. The first identity also holds for $f_1^\dagger$ and $b_k^\dagger$, with the frequency signs reversed.

Using $f_1=\sigma_1^-$, $X_1=f_1+f_1^\dagger$ and $Y_1=i(f_1-f_1^\dagger)$ we can now obtain
\begin{align}
    \int_0^T\|\mathcal B^{1/2}\mathcal U_tO\|_2^2\,dt &\overset{(1)}{\geq}\frac\gamma4\int_0^T\bigl(\|[X_1,\mathcal U_tO]\|_2^2+\|[Y_1,\mathcal U_tO]\|_2^2\bigr)dt,\notag\\
    &\overset{(2)}{=}\frac\gamma2\int_0^T\bigl(\|[f_1,\mathcal U_tO]\|_2^2+\|[f_1^\dagger,\mathcal U_tO]\|_2^2\bigr)dt,\notag\\
    &\overset{(3)}{\geq}\frac{2\gamma T}{(m+1)^3}\sum_k\bigl(\|[b_k,O]\|_2^2+\|[b_k^\dagger,O]\|_2^2\bigr),\notag\\
    &\overset{(4)}{=}\frac{4\gamma T}{(m+1)^3}\sum_{i=1}^{2m}\|\delta_iO\|_2^2,\notag\\
    &\overset{(5)}{\geq}\frac{4\gamma T}{(m+1)^3}\|O-\Tr(O)\sigma\|_2^2,\notag\\
    &\overset{(6)}{=}\frac4{m+1}\|O-\Tr(O)\sigma\|_2^2,\label{eq:sparse-observation-proof}
\end{align}
where $(1)$ writes $\|\mathcal B^{1/2}\mathcal U_tO\|_2^2=-\langle\mathcal U_tO,\mathcal B\mathcal U_tO\rangle$ using \cref{eq:sparse-dirichlet} and discards the nonnegative contribution from the bath at site $m$, $(2)$ uses $X_1=f_1+f_1^\dagger$, $Y_1=i(f_1-f_1^\dagger)$, and the parallelogram identity, $(3)$ applies \cref{eq:sparse-mode-integral-target} with $W_k=[b_k,O]$ and applies its reversed-frequency version with $W_k=[b_k^\dagger,O]$, $(4)$ uses \cref{eq:sparse-mode-reset-norms}, $(5)$ uses \cref{eq:sparse-HS-telescope}, and $(6)$ substitutes $\gamma=J/2$ and $T=2(m+1)^2/J$. For traceless $O$, this is \cref{eq:sparse-observability}. Both baths remain in the actual evolution. We now prove the two remaining bounds.

The maps $\mathcal E_i$ are commuting orthogonal projections on Hilbert--Schmidt space: the signs from anticommuting $c_i,c_j$ cancel in conjugation. Set $P_0=\id$ and $P_i=\prod_{j=1}^i\mathcal E_j$. By \cref{eq:sparse-quasiderivation-properties}, $P_{2m}$ is the projection onto scalar operators. Therefore
\begin{align}
    \|O-\Tr(O)\sigma\|_2^2 &\overset{(1)}{=}\langle O,(\id-P_{2m})O\rangle,\notag\\
    &\overset{(2)}{=}\sum_{i=1}^{2m}\langle O,P_{i-1}(\id-\mathcal E_i)O\rangle,\notag\\
    &\overset{(3)}{\leq}\sum_{i=1}^{2m}\left|\langle O,P_{i-1}(\id-\mathcal E_i)O\rangle\right|,\notag\\
    &\overset{(4)}{=}\sum_{i=1}^{2m}\left|\langle(\id-\mathcal E_i)O,P_{i-1}(\id-\mathcal E_i)O\rangle\right|,\notag\\
    &\overset{(5)}{\leq}\sum_{i=1}^{2m}\|(\id-\mathcal E_i)O\|_2^2,\notag\\
    &\overset{(6)}{=}\sum_{i=1}^{2m}\|\widetilde c_i\delta_iO\|_2^2 =\sum_{i=1}^{2m}\|\delta_iO\|_2^2,\label{eq:sparse-HS-telescope}
\end{align}
where $(1)$ uses that $\id-P_{2m}$ is an orthogonal projection, $(2)$ is the telescoping identity $\id-P_{2m}=\sum_iP_{i-1}(\id-\mathcal E_i)$, and $(3)$ upper bounds each summand by its absolute value. For $(4)$, set $Q_i=\id-\mathcal E_i$. Since $Q_i^*=Q_i$, $Q_i^2=Q_i$, and $Q_iP_{i-1}=P_{i-1}Q_i$, we have
\begin{align*}
    P_{i-1}Q_i=Q_iP_{i-1}Q_i, \qquad \langle O,P_{i-1}Q_iO\rangle =\langle Q_iO,P_{i-1}Q_iO\rangle.
\end{align*}
For $(5)$, Cauchy--Schwarz and the contractivity of the orthogonal projection $P_{i-1}$ give
\begin{align*}
    \left|\langle Q_iO,P_{i-1}Q_iO\rangle\right| \leq\|Q_iO\|_2\|P_{i-1}Q_iO\|_2 \leq\|Q_iO\|_2^2.
\end{align*}
Finally, $(6)$ uses \cref{eq:sparse-quasiderivation-properties} and unitarity of $\widetilde c_i$.

To prove \cref{eq:sparse-mode-integral-target}, let $(E_\alpha)_\alpha$ be an orthonormal basis of $\mathbb C^{2^m\times 2^m}$ with respect to the Hilbert--Schmidt inner product, let $W_1,\ldots,W_m\in\mathbb C^{2^m\times 2^m}$ be arbitrary matrices, and write $w_{\alpha k}=\langle E_\alpha,W_k\rangle$. Parseval's identity gives
\begin{align}
    \int_0^T\left\|\sum_ku_ke^{i\omega_kt}W_k\right\|_2^2\,dt =\sum_\alpha\int_0^T\left|\sum_ku_kw_{\alpha k}e^{i\omega_kt}\right|^2\,dt.\label{eq:sparse-mode-coordinate-parseval}
\end{align}

Define $\Delta=\min_{k\ne l}|\omega_k-\omega_l|$. For each $\alpha$, apply \cref{eq:sparse-ingham} with
\begin{align}
    N&=0,\qquad N'=m,\qquad \lambda_0=\omega_1-\Delta,\qquad \lambda_k=\omega_k\quad(1\leq k\leq m),\notag\\
    a_k&=u_kw_{\alpha k}e^{i\omega_kT/2},\qquad \gamma_{\mathrm I}=\Delta,\qquad T_{\mathrm I}=\frac T2,\qquad \varepsilon_{\mathrm I}=\frac{\Delta T}{2}-\pi.\label{eq:sparse-ingham-substitution}
\end{align}
The separation condition holds by the definition of $\Delta$, including $\lambda_1-\lambda_0=\Delta$. Moreover, \cref{eq:sparse-frequency-separation}, proved below, gives $T\Delta\geq24>2\pi$, so $\varepsilon_{\mathrm I}>0$ and $T_{\mathrm I}=(\pi+\varepsilon_{\mathrm I})/\gamma_{\mathrm I}$. Set
\begin{align}
    f_\alpha(s)=\sum_{k=1}^m a_ke^{-i\lambda_ks} =\sum_{k=1}^m u_kw_{\alpha k}e^{i\omega_k(T/2-s)}.
\end{align}
For these parameters, Ingham's constant in \cref{lem:sparse-ingham} below becomes
\begin{align}
    C_{\mathrm{Ing}}(\varepsilon_{\mathrm I}) &=\frac{\pi(\Delta T/2)^2}{2(\Delta T/2-\pi)(\Delta T/2+\pi)} =\frac{\pi}{2\left(1-4\pi^2/(T^2\Delta^2)\right)}.\label{eq:sparse-ingham-constant-substitution}
\end{align}
We now obtain
\begin{align}
    \int_0^T\left|\sum_ku_kw_{\alpha k}e^{i\omega_kt}\right|^2\,dt &\overset{(1)}{=}\int_{-T/2}^{T/2}|f_\alpha(s)|^2\,ds,\notag\\
    &\overset{(2)}{\geq}\frac{T}{C_{\mathrm{Ing}}(\varepsilon_{\mathrm I})}\sum_k|a_k|^2,\notag\\
    &\overset{(3)}{=}\frac{2T}{\pi}\left(1-\frac{4\pi^2}{T^2\Delta^2}\right) \sum_ku_k^2|w_{\alpha k}|^2,\label{eq:sparse-coordinate-ingham-bound}
\end{align}
where $(1)$ is the change of variables $t=T/2-s$, $(2)$ applies Ingham's scalar inequality (\cref{lem:sparse-ingham}), and $(3)$ uses $|a_k|=u_k|w_{\alpha k}|$ and \cref{eq:sparse-ingham-constant-substitution}.

Summing \cref{eq:sparse-coordinate-ingham-bound} over $\alpha$ and using $\sum_\alpha|w_{\alpha k}|^2=\|W_k\|_2^2$ in \cref{eq:sparse-mode-coordinate-parseval} gives
\begin{align}
    \int_0^T\left\|\sum_ku_ke^{i\omega_kt}W_k\right\|_2^2\,dt &\overset{(1)}{\geq}\frac{2T}{\pi}\left(1-\frac{4\pi^2}{T^2\Delta^2}\right)\sum_ku_k^2\|W_k\|_2^2,\notag\\
    &\overset{(2)}{\geq}\frac T2\sum_ku_k^2\|W_k\|_2^2,\notag\\
    &\overset{(3)}{\geq}\frac{4T}{(m+1)^3}\sum_k\|W_k\|_2^2,\label{eq:sparse-mode-integral-target}
\end{align}
where $(1)$ sums \cref{eq:sparse-coordinate-ingham-bound} over $\alpha$ and uses $\sum_\alpha|w_{\alpha k}|^2=\|W_k\|_2^2$ in \cref{eq:sparse-mode-coordinate-parseval}, $(2)$ uses $T\Delta\geq24$ and $\frac2\pi(1-\pi^2/144)>1/2$, and $(3)$ uses \cref{eq:sparse-boundary-weights}, proved below. Reversing all frequency signs and reordering the modes leaves $\Delta$ and the conclusion unchanged, giving the reversed-frequency version of \cref{eq:sparse-mode-integral-target}.

\Needspace{18\baselineskip}
\begin{lemma}[Ingham's scalar inequality, {\cite[Theorem~1]{ingham1936some}}]\label{lem:sparse-ingham}
Let $N<N'$ be integers, let $a_n\in\mathbb C$ for $N<n\leq N'$, and let $\lambda_N,\ldots,\lambda_{N'}$ be real frequencies separated as below, with $\gamma>0$. For $\varepsilon>0$,
\begin{align}
    f(t)&=\sum_{n=N+1}^{N'}a_ne^{-i\lambda_n t},\qquad \lambda_n-\lambda_{n-1}\geq\gamma>0\quad(N<n\leq N'),\notag\\
    T&=\frac{\pi+\varepsilon}{\gamma}>\frac\pi\gamma,\notag\\
    \sum_n|a_n|^2&\leq\frac{C_{\mathrm{Ing}}(\varepsilon)}{2T}\int_{-T}^T|f(t)|^2\,dt,\qquad C_{\mathrm{Ing}}(\varepsilon)=\frac{\pi(\pi+\varepsilon)^2}{2\varepsilon(2\pi+\varepsilon)}.\label{eq:sparse-ingham}
\end{align}
\end{lemma}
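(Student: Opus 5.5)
We follow Ingham's original weighted-kernel argument. Set $T=(\pi+\varepsilon)/\gamma$ and introduce the nonnegative weight $k(t)=\cos\bigl(\pi t/(2T)\bigr)$ for $|t|\leq T$ and $k(t)=0$ otherwise. Since $0\leq k\leq1$ on $[-T,T]$, we have $\int_{-T}^T|f|^2\,dt\geq\int_{\mathbb R}k(t)|f(t)|^2\,dt$. The whole proof then reduces to lower-bounding the weighted integral by a multiple of $\sum_n|a_n|^2$.

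\textbf{Fourier transform of the weight.} A direct computation gives
\begin{align*}
    K(u):=\int_{\mathbb R}k(t)e^{iut}\,dt=\frac{4\pi T\cos(uT)}{\pi^2-4T^2u^2},
\end{align*}
so $K(0)=4T/\pi$ and $|K(u)|\leq 4\pi T/(4T^2u^2-\pi^2)$ whenever $2T|u|>\pi$.

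\textbf{Expanding the weighted integral.} Expanding $|f|^2$ gives
\begin{align*}
    \int k|f|^2\,dt=\sum_{n,m}a_n\overline{a_m}\,K(\lambda_m-\lambda_n).
\end{align*}
The diagonal terms contribute $(4T/\pi)\sum_n|a_n|^2$. For the off-diagonal terms, the ordered separation yields $|\lambda_n-\lambda_m|\geq|n-m|\gamma$. Combining this with $|a_n\overline{a_m}|\leq\tfrac12(|a_n|^2+|a_m|^2)$, the off-diagonal part is bounded in absolute value by
\begin{align*}
    \sum_n|a_n|^2\sum_{j\neq0}\frac{4\pi T}{4T^2\gamma^2j^2-\pi^2}.
\end{align*}
Because $T\gamma>\pi>\pi/2$, every denominator here is positive.

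\textbf{Summing the tail.} This is the one delicate step, since it fixes the exact constant. Write $a:=(T\gamma/\pi)^2=(\pi+\varepsilon)^2/\pi^2>1$. Then $4T^2\gamma^2j^2-\pi^2=\pi^2(4aj^2-1)\geq\pi^2a(4j^2-1)$. Using the telescoping identity $\sum_{j\geq1}1/(4j^2-1)=1/2$, the off-diagonal total is at most $\bigl(4T/(\pi a)\bigr)\sum_n|a_n|^2$. We therefore obtain
\begin{align*}
    \int_{-T}^T|f|^2\,dt\geq\frac{4T}{\pi}\Bigl(1-\frac{\pi^2}{(\pi+\varepsilon)^2}\Bigr)\sum_n|a_n|^2=\frac{4T\varepsilon(2\pi+\varepsilon)}{\pi(\pi+\varepsilon)^2}\sum_n|a_n|^2 .
\end{align*}
The right-hand side equals $\bigl(2T/C_{\mathrm{Ing}}(\varepsilon)\bigr)\sum_n|a_n|^2$, which is exactly \cref{eq:sparse-ingham}.

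The only places that need care are computing $K$ correctly and choosing the crude bound $4aj^2-1\geq a(4j^2-1)$, which is what produces precisely the stated $C_{\mathrm{Ing}}$. The extra frequency $\lambda_N$ enters only through the separation hypothesis and does not affect the argument.
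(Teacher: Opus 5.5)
Your proof is correct and complete. It is Ingham's original cosine-weight argument, which is what the paper relies on: the paper does not reprove the lemma but only cites Ingham's Theorem~1. Your steps reproduce the stated constant $C_{\mathrm{Ing}}(\varepsilon)$ exactly, namely the kernel $K(u)=4\pi T\cos(uT)/(\pi^2-4T^2u^2)$, the symmetric bound $|a_n\overline{a_m}|\leq\tfrac12(|a_n|^2+|a_m|^2)$ on the off-diagonal terms, and the estimate $4aj^2-1\geq a(4j^2-1)$ combined with $\sum_{j\geq1}(4j^2-1)^{-1}=1/2$.
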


We now prove \cref{eq:sparse-boundary-weights,eq:sparse-frequency-separation}. For $1\leq k\leq m$,
\begin{align}
    \sin\frac{k\pi}{m+1}&\overset{(1)}{=}\sin\frac{\min\{k,m+1-k\}\pi}{m+1},\notag\\
    &\overset{(2)}{\geq}\frac{2\min\{k,m+1-k\}}{m+1}\overset{(3)}{\geq}\frac2{m+1},\notag\\
    u_k^2&=\frac2{m+1}\sin^2\frac{k\pi}{m+1}\geq\frac8{(m+1)^3},\label{eq:sparse-boundary-weights}
\end{align}
where $(1)$ uses $\sin(\pi-x)=\sin x$, $(2)$ uses $\sin x\geq2x/\pi$ for $0\leq x\leq\pi/2$, and $(3)$ uses $\min\{k,m+1-k\}\geq1$.

The frequencies $\omega_k=-2J\cos(k\pi/(m+1))$ are strictly increasing, so their minimum separation occurs between consecutive indices. The cosine-difference identity gives
\begin{align}
    \Delta&\overset{(1)}{=}\min_{1\leq k<m}4J\sin\frac{(2k+1)\pi}{2(m+1)}\sin\frac{\pi}{2(m+1)},\notag\\
    &\overset{(2)}{=}4J\sin\frac{3\pi}{2(m+1)}\sin\frac{\pi}{2(m+1)}\overset{(3)}{\geq}\frac{12J}{(m+1)^2},\label{eq:sparse-frequency-separation}
\end{align}
where $(1)$ uses the ordering and the cosine-difference identity. For $(2)$, the first sine has its smallest value at $k=1$ or $k=m-1$, by symmetry about $\pi/2$. For $(3)$, both displayed arguments lie in $[0,\pi/2]$ because $m\geq2$, so $\sin x\geq2x/\pi$ applies to each factor. These bounds complete \cref{eq:sparse-mode-integral-target,eq:sparse-observation-proof} and establish the polynomial upper bound.

\emph{Vanishing gap.} For a generator $\mathcal G$, define $\operatorname{gap}(\mathcal G)=\min\{-\Rea z:z\in\spec(\mathcal G),\ z\ne0\}$. We show that
\begin{align}
    \operatorname{gap}(\LL^{(0)})\leq\frac{2J}{m}.\label{eq:app-sparse-gap-claim}
\end{align}
It suffices to find an invariant subspace whose average eigenvalue decay rate is at most $2J/m$. Consider the traceless quadratic operators
\begin{align}
    F_{jk}=f_j^\dagger f_k-\frac12\delta_{jk}\BI,\qquad 1\leq j,k\leq m.
\end{align}
They are linearly independent: restricting a vanishing linear combination first to the fermionic vacuum makes its scalar term vanish, and restricting next to the single-particle subspace makes its coefficient matrix vanish. Their span therefore has dimension $m^2$. The quadratic Hamiltonian preserves this span, since
\begin{align}
    i[H,F_{jk}]=i\sum_{l=1}^m\bigl(\mathbf h_{lj}F_{lk}-\mathbf h_{kl}F_{jl}\bigr).
\end{align}
The boundary dissipators preserve it as well:
\begin{align}
    \gamma\bigl(\mathcal D[\sigma_s^-]^*+\mathcal D[\sigma_s^+]^*\bigr)(F_{jk})=-\gamma(\delta_{sj}+\delta_{sk})F_{jk},\qquad s\in\{1,m\}.\label{eq:sparse-quadratic-damping}
\end{align}
For $j=k=s$, the one-qubit dissipator sends $n_s-\BI/2$ to $-2\gamma(n_s-\BI/2)$. For $j\ne k$, the support formula \cref{eq:sparse-string-support} shows that a boundary bath either damps an endpoint factor $\sigma_s^\pm$ at rate $\gamma$, or acts trivially.

Let $z_1,\ldots,z_{m^2}$ be the eigenvalues of this restriction, with algebraic multiplicity. They are nonzero because the traceless subspace decays by \cref{eq:sparse-HS-decay}. The Hamiltonian part has zero real trace, so
\begin{align}
    \operatorname{gap}(\LL^{(0)})&\overset{(1)}{\leq}\min_a(-\Rea z_a)\overset{(2)}{\leq}\frac1{m^2}\sum_{a=1}^{m^2}(-\Rea z_a),\notag\\
    &\overset{(3)}{=}\frac\gamma{m^2}\sum_{j,k=1}^m\sum_{s\in\{1,m\}}(\delta_{sj}+\delta_{sk})\overset{(4)}{=}\frac{4\gamma}{m}=\frac{2J}{m},
\end{align}
where $(1)$ uses invariance and equality of the real parts of the spectra of a map and its adjoint, $(2)$ bounds the minimum by the average, and $(4)$ counts the two endpoint contributions for each of the two indices. For $(3)$, the trace of the restriction is the sum of the coefficients of $F_{jk}$ in $(\LL^{(0)})^*(F_{jk})$. The displayed Hamiltonian action and \cref{eq:sparse-quadratic-damping} therefore give
\begin{align*}
    \sum_{a=1}^{m^2}z_a
    &=\sum_{j,k=1}^m\left[i(\mathbf h_{jj}-\mathbf h_{kk})-\gamma\sum_{s\in\{1,m\}}(\delta_{sj}+\delta_{sk})\right]
    =-\gamma\sum_{j,k=1}^m\sum_{s\in\{1,m\}}(\delta_{sj}+\delta_{sk}),
\end{align*}
since the two Hamiltonian sums cancel. Taking minus the real part and dividing by $m^2$ gives $(3)$. This proves \cref{eq:app-sparse-gap-claim}.

\emph{Exponential lower bound.} Let $m=3\ell$ and let $V$ be the potential in \cref{eq:app-sparse-potential}. Assume $\ell\geq\max\{2,4J/\eps\}$. We will construct a Hermitian observable $Q$ such that
\begin{align}
    Q^2=\BI,\qquad \Tr Q=0,\qquad \|(\LL^{(V)})^*(Q)\|\leq8J\left(1+\frac\eps{2J}\right)^{-(\ell-1)}.\label{eq:sparse-witness-target}
\end{align}
Assume these properties for now; they will be verified using \cref{eq:sparse-mode-observable,eq:sparse-witness-proof}. They make $\rho_0=2^{-m}(\BI+Q)$ a density matrix with $\Tr(Q\rho_0)=1$, while $\Tr(Q\sigma)=0$. For $\rho_t=e^{t\LL^{(V)}}(\rho_0)$, set $x(t)=\Tr(Q\rho_t)$. Since $x(0)=1$, we obtain
\begin{align}
    \|\rho_t-\sigma\|_1&\overset{(1)}{\geq}|\Tr(Q(\rho_t-\sigma))|=|x(t)|,\notag\\
    &\overset{(2)}{\geq}1-8Jt\left(1+\frac\eps{2J}\right)^{-(\ell-1)},\label{eq:sparse-trace-lower}
\end{align}
where $(1)$ is trace-norm duality with $\|Q\|=1$, and $(2)$ integrates the derivative bound in \cref{eq:sparse-witness-differential}, which we prove below, from $x(0)=1$. The last expression is greater than $1/2$ for every $t<(16J)^{-1}(1+\eps/(2J))^{\ell-1}$, proving \cref{eq:app-sparse-main-lower}.

To prove the derivative bound used in \cref{eq:sparse-trace-lower}, differentiate the expectation:
\begin{align}
    x'(t)&\overset{(1)}{=}\Tr\bigl((\LL^{(V)})^*(Q)\rho_t\bigr),\notag\\
    &\overset{(2)}{\geq}-\|(\LL^{(V)})^*(Q)\|,\notag\\
    &\overset{(3)}{\geq}-8J\left(1+\frac\eps{2J}\right)^{-(\ell-1)},\label{eq:sparse-witness-differential}
\end{align}
where $(1)$ is the master equation and adjointness, $(2)$ bounds the expectation of a Hermitian observable in a state by its operator norm, and $(3)$ is the last estimate in \cref{eq:sparse-witness-target}. Integrating from $0$ to $t$ and using $x(0)=1$ gives $x(t)\geq1-8Jt(1+\eps/(2J))^{-(\ell-1)}$; together with $|x(t)|\geq x(t)$, this yields the second inequality in \cref{eq:sparse-trace-lower}.

We now construct $Q$ by considering a localized fermionic mode. Let $|1\rangle,\ldots,|m\rangle$ be the standard basis of the single-particle space, and set
\begin{align}
    \mathbf h^{(V)}&=\mathbf h+\eps\left(\sum_{j=1}^{\ell}|j\rangle\langle j|+\sum_{j=2\ell+1}^{3\ell}|j\rangle\langle j|\right),\notag\\
    H+V&=\sum_{j,k=1}^m\mathbf h^{(V)}_{jk}f_j^\dagger f_k.
\end{align}
We will find a real normalized single-particle vector $|v\rangle=\sum_jv_j|j\rangle$ and a real number $E_0$ such that
\begin{align}
    v_1=v_m=0,\qquad \mathbf h^{(V)}|v\rangle=E_0|v\rangle+|r\rangle,\qquad |r\rangle=-Jv_2|1\rangle-Jv_{m-1}|m\rangle,\label{eq:sparse-mode-residual}\\
    |v_2|+|v_{m-1}|\leq2\left(1+\frac\eps{2J}\right)^{-(\ell-1)}.\label{eq:sparse-boundary-decay}
\end{align}
The zero endpoint coefficients will make this mode's occupation invisible to both baths, while the residual $|r\rangle$ will control its Hamiltonian derivative. Assume \cref{eq:sparse-mode-residual} and \cref{eq:sparse-boundary-decay} for now; they will follow from the construction below and \cref{eq:sparse-boundary-decay-proof}.

Define the mode annihilation operator, its occupation, and the witness by
\begin{align}
    d=\sum_{j=1}^m v_jf_j,\qquad n_d=d^\dagger d,\qquad Q=2n_d-\BI.\label{eq:sparse-mode-observable}
\end{align}
Normalization of $|v\rangle$ and the anticommutation relations imply $d^2=0$ and $dd^\dagger=\BI-n_d$. Consequently,
\begin{align}
    n_d^2&=d^\dagger(\BI-n_d)d=n_d,\notag\\
    \Tr n_d&=\Tr(dd^\dagger)=2^m-\Tr n_d.
\end{align}
Thus $\Tr n_d=2^{m-1}$, so $n_d$ is a rank-$2^{m-1}$ projection and $Q$ satisfies $Q^2=\BI$ and $\Tr Q=0$. The initial state above is $\rho_0=2^{1-m}n_d$, the maximally mixed state with this fermionic mode occupied. The same projection identity gives $\|d\|=1$.

For $r_j=\langle j|r\rangle$, write $d_r=\sum_jr_jf_j$. Applying the same argument to the normalized coefficient vector of $d_r$ gives $\|d_r\|=\||r\rangle\|_2$, with the zero vector treated trivially. Since $v_1=v_m=0$, the expansion $n_d=\sum_{j,k}v_jv_kf_j^\dagger f_k$ acts trivially on both endpoint sites by \cref{eq:sparse-string-support}. Each endpoint dissipator therefore annihilates $Q$. For the Hamiltonian part, the identity $[f_j^\dagger f_k,f_l]=-\delta_{jl}f_k$ gives
\begin{align}
    [H+V,d]&\overset{(1)}{=}-\sum_{j,k=1}^m\mathbf h^{(V)}_{jk}v_jf_k\overset{(2)}{=}-E_0d-d_r,\notag\\ [H+V,n_d]
    &\overset{(3)}{=}(E_0d^\dagger+d_r^\dagger)d-d^\dagger(E_0d+d_r)=d_r^\dagger d-d^\dagger d_r,\label{eq:sparse-mode-commutator}
\end{align}
where $(1)$ expands the quadratic Hamiltonian, $(2)$ uses real symmetry of $\mathbf h^{(V)}$ and \cref{eq:sparse-mode-residual}, and $(3)$ uses the product rule for the commutator and the adjoint of the first identity. It follows that
\begin{align}
    \|(\LL^{(V)})^*(Q)\|&\overset{(1)}{=}2\|d_r^\dagger d-d^\dagger d_r\|,\notag\\
    &\overset{(2)}{\leq}4\|d_r\|\|d\|=4\||r\rangle\|_2,\notag\\
    &\overset{(3)}{\leq}4J(|v_2|+|v_{m-1}|),\notag\\
    &\overset{(4)}{\leq}8J\left(1+\frac\eps{2J}\right)^{-(\ell-1)},\label{eq:sparse-witness-proof}
\end{align}
where $(1)$ uses the vanishing bath terms and \cref{eq:sparse-mode-commutator}, $(2)$ is the triangle inequality, $(3)$ uses the residual in \cref{eq:sparse-mode-residual}, and $(4)$ is \cref{eq:sparse-boundary-decay}. This proves \cref{eq:sparse-witness-target}, subject to constructing $|v\rangle$.

We now construct $|v\rangle$. Let $\mathbf h_D^{(V)}$ be the restriction of $\mathbf h^{(V)}$ to sites $2,\ldots,m-1$. Choose a real normalized ground-state vector $|v\rangle$ with energy $E_0$, and extend it by $v_1=v_m=0$. The eigenvalue equation holds at every interior site, while the endpoint rows contain only the adjacent hopping term because $v_1=v_m=0$:
\begin{align*}
    \langle j|\mathbf h^{(V)}|v\rangle&=E_0v_j\qquad(2\leq j\leq m-1),\\
    \langle1|\mathbf h^{(V)}|v\rangle&=-Jv_2,\qquad \langle m|\mathbf h^{(V)}|v\rangle=-Jv_{m-1}.
\end{align*}
Subtracting $E_0|v\rangle$ gives precisely the residual in \cref{eq:sparse-mode-residual}. We now prove the small boundary-amplitude estimate.

We may choose every interior coefficient $v_j$ positive. Replacing a real energy minimizer by its componentwise absolute values cannot increase its energy, because all nonzero off-diagonal entries are negative. The resulting nonnegative minimizer remains an eigenvector. If an interior coefficient vanished, its eigenvalue equation would force its neighbors to vanish, and then all coefficients would vanish.

The normalized trial vector $|\phi\rangle:=\ell^{-1/2}\sum_{j=\ell+1}^{2\ell}|j\rangle$ is supported on the middle third, where the potential vanishes. The variational principle gives
\begin{align}
    E_0&=\min_{\|w\|_2=1}\langle w|\mathbf h_D^{(V)}|w\rangle
    \overset{(1)}{\leq}\langle\phi|\mathbf h_D^{(V)}|\phi\rangle
    =-\frac{J}{\ell}\sum_{j=\ell+1}^{2\ell-1}(1+1),\notag\\
    &=-\frac{2J(\ell-1)}{\ell}=-2J+\frac{2J}{\ell}\overset{(2)}{\leq}-2J+\frac\eps2,\label{eq:sparse-Ebound}
\end{align}
Here $(1)$ evaluates the minimum on $|\phi\rangle$; each of its $\ell-1$ internal bonds contributes twice, and $(2)$ uses $\ell\geq4J/\eps$. On the left barrier, the eigenvalue equation gives
\begin{align}
    v_{j+1}=\frac{\eps-E_0}{J}v_j-v_{j-1},\qquad 2\leq j\leq\ell,\qquad \frac{\eps-E_0}{J}\geq2+\frac\eps{2J}.\label{eq:sparse-barrier-recurrence}
\end{align}
Starting from $v_2>v_1=0$, suppose inductively that $v_j\geq v_{j-1}$. Then
\begin{align}
    v_{j+1}&\overset{(1)}{\geq}\left(2+\frac\eps{2J}\right)v_j-v_{j-1}\overset{(2)}{\geq}\left(1+\frac\eps{2J}\right)v_j\geq v_j,
\end{align}
where $(1)$ is \cref{eq:sparse-barrier-recurrence} and $(2)$ is the induction hypothesis. Iterating for $j=2,\ldots,\ell$ gives
\begin{align}
    v_2&\leq\left(1+\frac\eps{2J}\right)^{-(\ell-1)}v_{\ell+1}\leq\left(1+\frac\eps{2J}\right)^{-(\ell-1)},\notag\\
    |v_2|+|v_{m-1}|&\leq2\left(1+\frac\eps{2J}\right)^{-(\ell-1)}.\label{eq:sparse-boundary-decay-proof}
\end{align}
The first line uses normalization, which implies $v_{\ell+1}\leq1$. The second follows by applying the same induction from the right endpoint, starting at $v_m=0$. This proves \cref{eq:sparse-boundary-decay} and completes the exponential lower bound.

\Needspace{9\baselineskip}
\emph{Primitivity of the perturbed generator.} It remains to prove that the specific generator $\LL^{(V)}$ in \cref{eq:app-sparse-potential} converges to the same unique stationary state. Equal raising and lowering rates give $\LL^{(V)}(\sigma)=0$. Its Hamiltonian commutator is skew-adjoint, so the energy calculation \cref{eq:sparse-energy-differential} applies with $H+V$ and again gives Hilbert--Schmidt contractivity. It suffices to show that every eigenoperator with an imaginary eigenvalue is scalar.

Let $\omega\in\mathbb R$ and suppose $O\ne0$ satisfies $(\LL^{(V)})^*(O)=i\omega O$. Taking the real part of its Hilbert--Schmidt inner product with $O$ gives
\begin{align}
    0=\Rea\langle O,(\LL^{(V)})^* O\rangle=-\frac\gamma4\sum_{s\in\{1,m\}}\bigl(\|[X_s,O]\|_2^2+\|[Y_s,O]\|_2^2\bigr).
\end{align}
Every commutator in this sum vanishes. Thus the dissipative part annihilates $O$, $[H+V,O]=\omega O$, and $O=\BI_1\otimes O_{2,\ldots,m}$. Suppose inductively that $O$ acts trivially on sites $1,\ldots,r$, with $r<m$. In $[H+V,O]=\omega O$, the only terms with nonidentity dependence on site $r$ are
\begin{align}
    -J\sigma_r^+\otimes[\sigma_{r+1}^-,O_{r+1,\ldots,m}]-J\sigma_r^-\otimes[\sigma_{r+1}^+,O_{r+1,\ldots,m}].
\end{align}
The on-site potential contributes only terms proportional to the identity on site $r$, as do the right-hand side and the remaining bonds. Linear independence of $\BI_r,\sigma_r^+,\sigma_r^-$ therefore forces both displayed commutators to vanish. Since $\sigma_{r+1}^\pm$ generate the one-qubit algebra, $O$ also acts trivially on site $r+1$. Induction gives $O\in\C\BI$ and $\omega=0$.

A nontrivial Jordan block on the imaginary axis would cause polynomial growth, contradicting contractivity. Hence zero is a simple eigenvalue and every other eigenvalue has negative real part. Trace preservation and stationarity of the identity identify the limiting Heisenberg projection as $O\mapsto\Tr(O)\sigma$. This proves primitivity of $\LL^{(V)}$ and completes the proof.
\end{proof}

\begin{proof}[Proof of \cref{lem:sparse-boundary-potential}]
\prooflabel{lem:sparse-boundary-potential}{proof:sparse-boundary-potential}
The unperturbed generators in \cref{ex:sparse-boundary-potential-app} are restrictions of the fixed nearest-neighbor hopping interaction, with equal-rate endpoint baths as boundary conditions. Evaluating this family at physical chain length $m=3\lfloor a\log N/3\rfloor=\Theta(\log N)$ gives the generators $\LL_N$ in \cref{lem:sparse-boundary-potential}. For each $N$, take $\KK_N=-i[V_N,\cdot]$ with $V_N$ from \cref{eq:sparse-potential}. Each on-site term has induced trace norm at most $2$, so their local strength is uniformly bounded, and this choice is independent of $\eps$. At fixed accuracy, \cref{eq:sparse-uniform-time-bound} gives an $O((\log N)^{10})$ unperturbed mixing time. For every fixed $\eps>0$, the condition $\ell\geq\max\{2,4J/\eps\}$ holds for sufficiently large $N$, and \cref{eq:app-sparse-main-lower} gives the lower bound $N^{c_\eps+o(1)}$, with $c_\eps=\frac a3\log(1+\eps/(2J))>0$. Both generators are primitive with the same maximally mixed stationary state by the preceding example. These are the asserted polylogarithmic and polynomial bounds in the auxiliary parameter $N$.
\end{proof}

\partbibliography

\addtocontents{toc}{\protect\setcounter{tocdepth}{2}}
\hypersetup{bookmarksdepth=2}

\paperpart{appendices-dissipative_covers}{1}
\section{Strong bulk dissipation}\label{app:dissipative-covers}
Throughout this appendix, we consider Lindbladians $\LL$ and $\GG=\LL+V$ on a lattice $\Lambda$. This will allow us to study rapid mixing of $\LL$ itself, or establish rapid mixing of $\GG$ by comparing it with a simpler Lindbladian $\LL$ that has a bulk-dissipative cover and satisfies the influence condition. The perturbation $V$ need not itself be a Lindbladian, but we assume that it is locally trace-annihilating: writing $V=\sum_{B\in\mathcal S_V}V_B$, we require
\begin{align}
    V_B^*(\BI_B\otimes Y_{B^c}) &=0 \qquad \text{for every $Y_{B^c}$.} \label{eq:app-perturbation-locally-trace-preserving}
\end{align}

We begin with a technical extension of the construction in \cref{sec:dissipative-covers}. There the cover generators are individual terms of the displayed decomposition of $\LL$; here each $\MM_\alpha$ may instead be a sum of terms of $\LL$. Fix a decomposition $\LL=\sum_{B\in\mathcal S_{\LL}}\LL_B$ into local Lindblad generators. For each $\alpha=1,\ldots,M$, choose a subfamily $\mathcal S_{\MM_\alpha}\subseteq\mathcal S_{\LL}$ and set
\begin{align}
    \MM_\alpha&:=\sum_{B\in\mathcal S_{\MM_\alpha}}\LL_B,\qquad
    A_\alpha:=\bigcup_{B\in\mathcal S_{\MM_\alpha}}B.
\end{align}
We define the bulk of $\MM_\alpha$ to be the largest region on which every stationary observable acts trivially, as illustrated in \cref{fig:bulk-dissipative-cover}b):
\begin{align}
    A_\alpha^\circ
    &:=\left\{x\in A_\alpha:\ker\MM_\alpha^*\subseteq
    \BI_x\otimes\mathcal B(\mathcal H_{x^c})\right\}. \label{eq:app-bulk-dissipation}
\end{align}
Equivalently,
\begin{align}
    \ker\MM_\alpha^*&\subseteq
    \BI_{A_\alpha^\circ}\otimes
    \mathcal B(\mathcal H_{(A_\alpha^\circ)^c}).
\end{align}

\begin{definition}[Bulk-dissipative cover]
\label{def:app-bulk-dissipative-cover}
A collection $\mathfrak U := \left\{ (A_\alpha^\circ,A_\alpha,\MM_\alpha): \alpha=1,\ldots,M \right\}$ constructed from the fixed decomposition above is a bulk-dissipative cover on $\Lambda$ with local mixing time $t_{\mathrm{loc}}>0$ if:

\begin{enumerate}
\item The bulk regions cover the lattice:
\begin{align}
    \bigcup_{\alpha=1}^M A_\alpha^\circ &= \Lambda.
\end{align}

\item Every $\MM_\alpha$ has a well-defined infinite-time limit and, uniformly in $\alpha$, mixes locally by time $t_{\mathrm{loc}}$:
\begin{align}
    ||e^{t_{\mathrm{loc}}\MM_\alpha}-P_{\MM_\alpha}||_\diamond &\leq\frac12, \label{eq:app-cover-local-mixing}
\end{align}
\end{enumerate}
\end{definition}

\begin{figure}[!t]
\centering
\begin{minipage}[t]{0.49\linewidth}
\textbf{a)}\par\smallskip
\includegraphics[width=\linewidth]{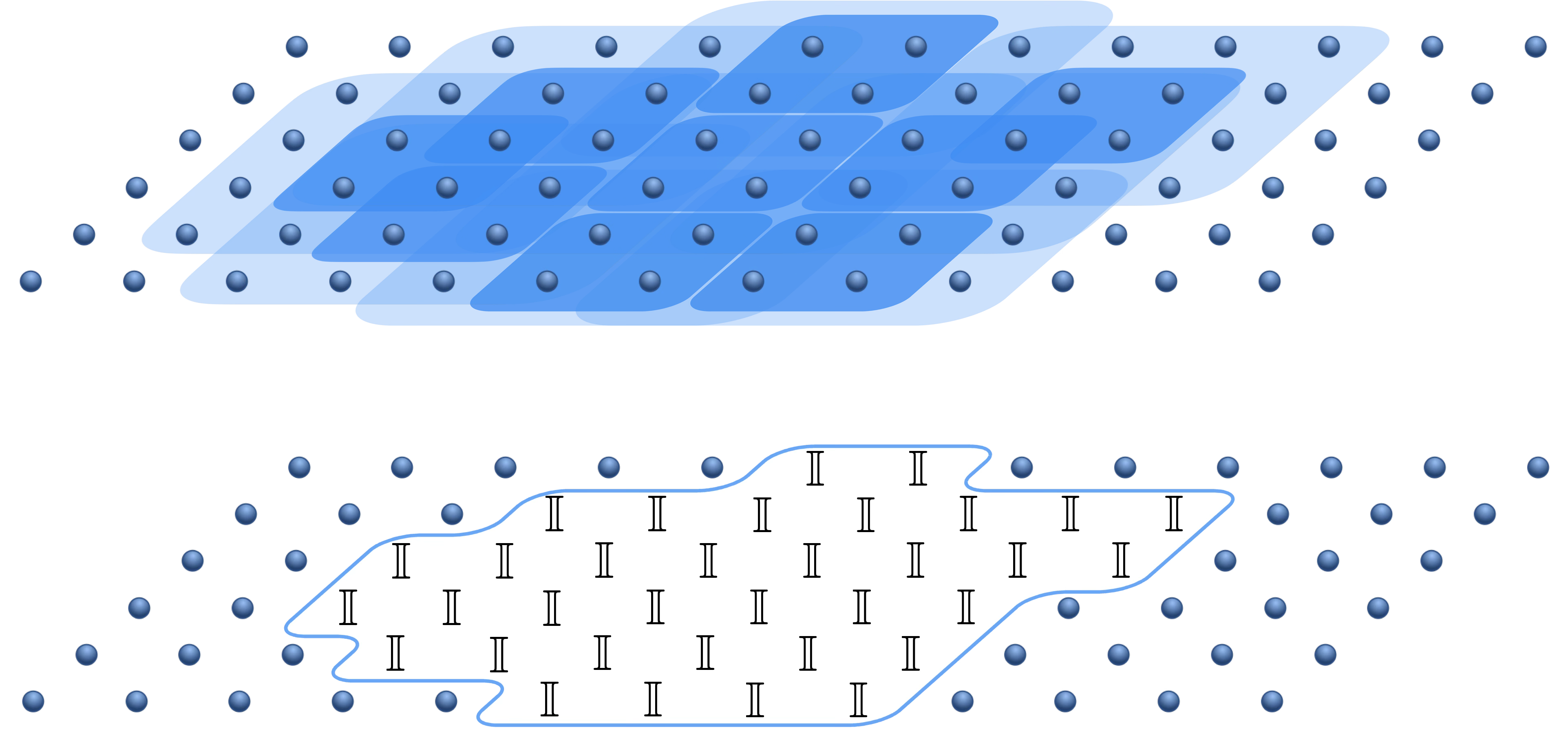}
\end{minipage}\hfill
\begin{minipage}[t]{0.49\linewidth}
\textbf{b)}\par\smallskip
\includegraphics[width=\linewidth]{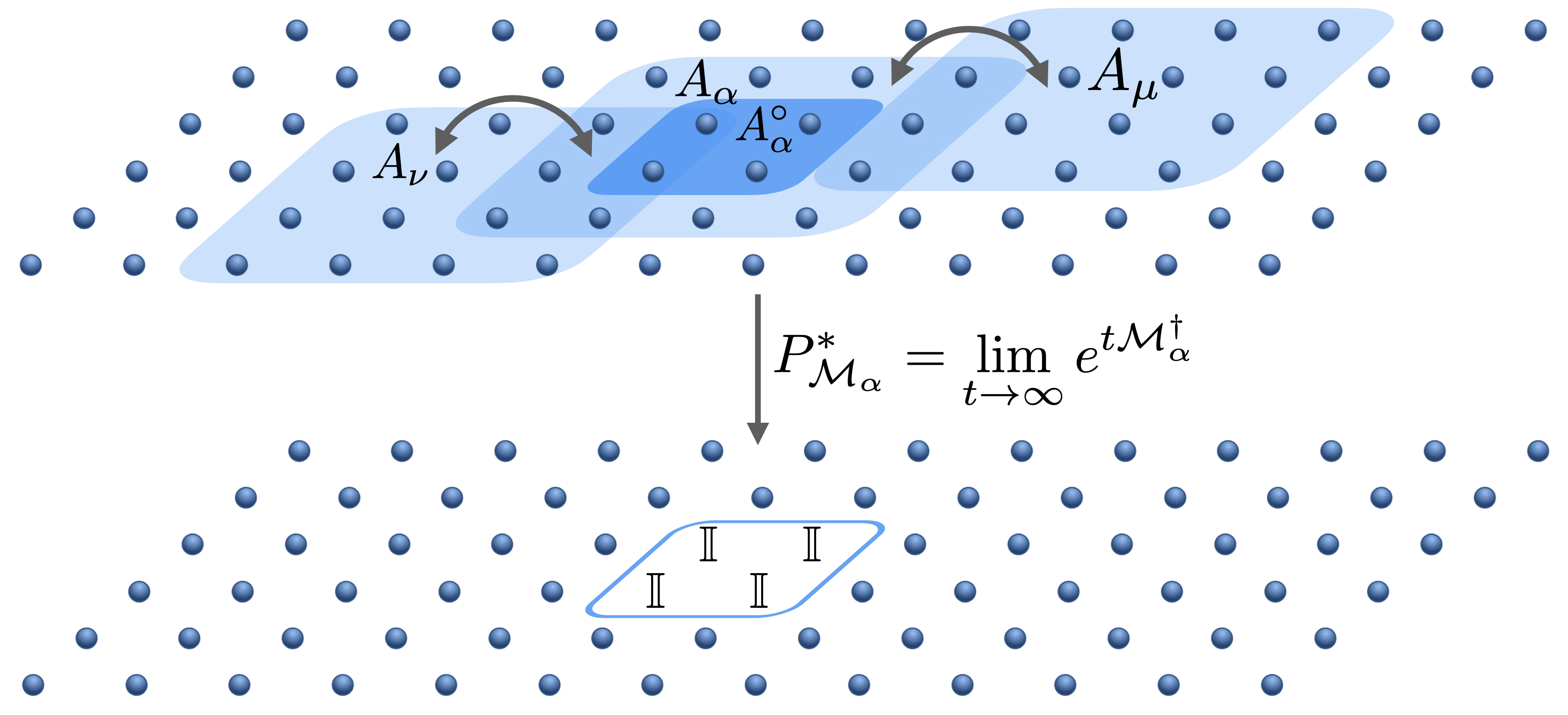}
\end{minipage}
\caption{A bulk-dissipative cover and the local-to-global mixing mechanism. \textbf{a)} The bulk regions cover the lattice. Auxiliary reset maps replace an observable's action on each $A_\alpha^\circ$ by the identity while preserving the identity on regions already treated. Composing these maps therefore leaves only a scalar multiple of the identity everywhere. \textbf{b)} Each local generator $\MM_\alpha$ is supported on $A_\alpha$ (top) and drives observables toward operators that act as the identity on its bulk $A_\alpha^\circ$ (bottom). When this local dissipation dominates the influence of the remaining interactions, as quantified by \cref{app:eq:strong-bulk-stability-condition}, these local decay estimates control convergence of the full system to its unique stationary state.}
\label{fig:bulk-dissipative-cover}
\end{figure}

We quantify the size of the enlarged regions and their incidence by
\begin{align}
    a &:= \max_\alpha|A_\alpha|, \quad \mathfrak d := \max_{x\in\Lambda} \#\left\{ \alpha\in\{1,\ldots,M\}:x\in A_\alpha \right\}.
\end{align}
Equivalently, we say that $\LL$ contains the cover $\mathfrak U$. Thus $\MM_\alpha$ need not contain every term supported in $A_\alpha$. We keep the decomposition and subfamilies $\mathcal S_{\MM_\alpha}$ fixed throughout. The same bulk-dissipative cover used for $\LL$ is thus available for $\GG=\LL+V$, and we only need to control the incoming influence of $V$ relative to $\mathfrak U$.

Define the dual projections and the associated quasiderivations by
\begin{align}
    P_{\MM_\alpha}^*&:=\lim_{t\to\infty}e^{t\MM_\alpha^*}, & \delta_\alpha&:=\operatorname{id}-P_{\MM_\alpha}^*.
\end{align}
Both $P_{\MM_\alpha}^*$ and $\delta_\alpha$ are projectors: $(P_{\MM_\alpha}^*)^2=P_{\MM_\alpha}^*$ and $\delta_\alpha^2=\delta_\alpha$. The map $\delta_\alpha$ removes the component of an observable in the steady-state subspace of $\MM_\alpha^*$. We therefore say that $\delta_\alpha$ measures the \textit{fluctuations} of an observable relative to the local bulk generator $\MM_\alpha$.

The definition of $t_{\mathrm{loc}}$ gives a decay estimate that we will use repeatedly; it quantifies the local relaxation depicted in \cref{fig:bulk-dissipative-cover}b). The duality between the diamond norm and the completely bounded norm, together with \cref{eq:app-cover-local-mixing}, implies
\begin{align}
    \left\|e^{t_{\mathrm{loc}}\MM_\alpha^*}-P_{\MM_\alpha}^*\right\|_{\mathrm{cb}}
    &=\left\|e^{t_{\mathrm{loc}}\MM_\alpha}-P_{\MM_\alpha}\right\|_\diamond\leq\frac12. \label{eq:app-one-step-local-contraction}
\end{align}
For $t=nt_{\mathrm{loc}}+r$, where $n=\lfloor t/t_{\mathrm{loc}}\rfloor$ and $0\leq r<t_{\mathrm{loc}}$, the identities $P_{\MM_\alpha}^*\delta_\alpha=0$ and $e^{s\MM_\alpha^*}P_{\MM_\alpha}^*=P_{\MM_\alpha}^*$ give
\begin{align}
    e^{t\MM_\alpha^*}\delta_\alpha
    &=e^{r\MM_\alpha^*}
      \left(e^{t_{\mathrm{loc}}\MM_\alpha^*}-P_{\MM_\alpha}^*\right)^n
      \delta_\alpha.
\end{align}
The remaining evolution $e^{r\MM_\alpha^*}$ is a unital contraction. Hence, with
\begin{align}
    g(t)&:=2^{-\lfloor t/t_{\mathrm{loc}}\rfloor}. \label{eq:app-local-mixing-kernel}
\end{align}
we obtain
\begin{align}
    \left\|e^{t\MM_\alpha^*}\delta_\alpha O\right\|
    &\leq g(t)\left\|\delta_\alpha O\right\|. \label{eq:app-iterated-local-contraction}
\end{align}

It will be useful to introduce a quasiderivation associated to a reset map. Choose an arbitrary state $\tau_\alpha$ on $A_\alpha^\circ$ and define
\begin{align}
    \mathcal R_\alpha(X_{A_\alpha^\circ}\otimes Y_{(A_\alpha^\circ)^c}) &:= \Tr(\tau_\alpha X_{A_\alpha^\circ})\BI_{A_\alpha^\circ}\otimes Y_{(A_\alpha^\circ)^c}, & \eta_\alpha&:=\operatorname{id}-\mathcal R_\alpha.
\end{align}
The map $\eta_\alpha$ vanishes on observables that act trivially on $A_\alpha^\circ$:
\begin{align}
    \eta_\alpha(\BI_{A_\alpha^\circ}\otimes Y_{(A_\alpha^\circ)^c}) &=0.
\end{align}
By \cref{eq:app-bulk-dissipation},
\begin{align}
    \operatorname{Ran}P_{\MM_\alpha}^* &= \ker\MM_\alpha^* \subseteq \BI_{A_\alpha^\circ}\otimes \mathcal B(\mathcal H_{(A_\alpha^\circ)^c}),
\end{align}
and hence $\mathcal R_\alpha P_{\MM_\alpha}^*=P_{\MM_\alpha}^*$. Therefore
\begin{align}
    \eta_\alpha\delta_\alpha &= (\operatorname{id}-\mathcal R_\alpha) (\operatorname{id}-P_{\MM_\alpha}^*), \notag\\
    &= \operatorname{id}-P_{\MM_\alpha}^*-\mathcal R_\alpha +\mathcal R_\alpha P_{\MM_\alpha}^*, \notag\\
    &= \operatorname{id}-\mathcal R_\alpha =\eta_\alpha. \label{eq:app-eta-absorbs-delta}
\end{align}
Moreover, $\mathcal R_\alpha$ is a unital contraction, so $||\mathcal R_\alpha||_{\infty\to\infty}=1$. Using \cref{eq:app-eta-absorbs-delta},
\begin{align}
    ||\eta_\alpha(O)|| &= ||\eta_\alpha\delta_\alpha(O)||, \notag\\
    &\leq \bigl( ||\operatorname{id}||_{\infty\to\infty} +||\mathcal R_\alpha||_{\infty\to\infty} \bigr) ||\delta_\alpha(O)||, \notag\\
    &\leq 2||\delta_\alpha(O)||. \label{eq:app-eta-controlled-by-delta}
\end{align}

The reason for introducing $\eta_\alpha$ is that composing $\mathcal{R}_\alpha$ yields an observable that is proportional to the identity in their joint support, as we now show. This property will be used at several points in the proof later. Suppose that $\alpha_1,\ldots,\alpha_m$ is any ordered collection of reset labels and define
\begin{align}
    A_{1,\ldots,j}^\circ &:= \bigcup_{i=1}^jA_{\alpha_i}^\circ, & Q_j &:= \mathcal R_{\alpha_1}\circ\cdots\circ\mathcal R_{\alpha_j}.
\end{align}
Directly from the definition of the reset, for every $B\subseteq\Lambda$,
\begin{align}
    O\in \BI_B\otimes \mathcal B(\mathcal H_{B^c}) \quad\Longrightarrow\quad \mathcal R_{\alpha_j}(O) \in \BI_{B\cup A_{\alpha_j}^\circ}\otimes \mathcal B(\mathcal H_{(B\cup A_{\alpha_j}^\circ)^c}). \label{eq:app-one-reset-enlarges-trivial-region}
\end{align}
The reset leaves the identity on $B\setminus A_{\alpha_j}^\circ$ unchanged and replaces the action on $A_{\alpha_j}^\circ$ by the identity. Iterating \cref{eq:app-one-reset-enlarges-trivial-region} yields
\begin{align}
    \operatorname{Ran}Q_j &\subseteq \BI_{A_{1,\ldots,j}^\circ}\otimes \mathcal B(\mathcal H_{(A_{1,\ldots,j}^\circ)^c}). \label{eq:app-reset-composition-range}
\end{align}
In particular, if the reset regions cover the full lattice, then their composition maps every observable to a scalar multiple of the identity. This process is illustrated in \cref{fig:bulk-dissipative-cover}a).

Given a bulk-dissipative cover $\mathfrak U$, define the oscillator norm using the $\delta_\alpha$ quasiderivations by
\begin{align}
    \osc{O}_{\mathfrak U} := \sum_{\alpha=1}^M||\delta_\alpha(O)||.
\end{align}

\begin{lemma}
\label{lem:app-trace-oscillator}
Let $\GG$ be a Lindbladian on $\Lambda$, let $\sigma$ be a fixed state of $\GG$, and let $\mathfrak U$ be a bulk-dissipative cover on $\Lambda$. Then
\begin{align}
    ||e^{t\GG}(\rho)-\sigma||_1 &\leq 2||\rho-\sigma||_1\sup_{||O||\leq1} \osc{e^{t\GG^*}(O)}_{\mathfrak U}. \label{eq:app-trace-oscillator-bound}
\end{align}
\end{lemma}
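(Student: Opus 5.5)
We pass to the Heisenberg picture by trace-norm duality and then reduce the observable to a scalar using the reset maps. Since $\sigma$ is stationary, $e^{t\GG}(\sigma)=\sigma$, so
\begin{align*}
    \|e^{t\GG}(\rho)-\sigma\|_1=\sup_{\|O\|\leq1}\bigl|\Tr\bigl[e^{t\GG^*}(O)(\rho-\sigma)\bigr]\bigr|.
\end{align*}
Fix $O$ and set $O_t:=e^{t\GG^*}(O)$. Because $\Tr(\rho-\sigma)=0$, we may subtract from $O_t$ any scalar multiple of $\BI$ without changing the pairing. Hence it suffices to show that there is a scalar $c$ with
\begin{align*}
    \|O_t-c\BI\|\leq2\osc{O_t}_{\mathfrak U}.
\end{align*}
Hölder's inequality then gives the claim: the pairing is bounded by $\|\rho-\sigma\|_1\|O_t-c\BI\|$, and we take the supremum over $\|O\|\leq1$.

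To construct $c$, order the cover labels $\alpha=1,\ldots,M$ and use the composed resets $Q_j=\mathcal R_1\circ\cdots\circ\mathcal R_j$ with $Q_0=\operatorname{id}$. Since the bulks cover $\Lambda$, \cref{eq:app-reset-composition-range} shows that $Q_M(O_t)$ acts trivially on all of $\Lambda$, so we set $Q_M(O_t)=c\BI$. The telescoping identity
\begin{align*}
    O_t-Q_M(O_t)=\sum_{j=1}^M Q_{j-1}(\operatorname{id}-\mathcal R_j)(O_t)=\sum_{j=1}^M Q_{j-1}\eta_j(O_t)
\end{align*}
holds because $Q_{j-1}-Q_j=Q_{j-1}(\operatorname{id}-\mathcal R_j)$. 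Each $Q_{j-1}$ is a composition of unital contractions, and each $\mathcal R_\alpha$ acts as $\Tr_{A_\alpha^\circ}[(\tau_\alpha\otimes\BI)\,\cdot\,]\otimes\BI$, which is a unital completely positive map. Therefore $\|Q_{j-1}\|_{\infty\to\infty}\leq1$, and the triangle inequality together with \cref{eq:app-eta-controlled-by-delta} gives
\begin{align*}
    \|O_t-c\BI\|\leq\sum_{j=1}^M\|\eta_j(O_t)\|\leq2\sum_{j=1}^M\|\delta_j(O_t)\|=2\osc{O_t}_{\mathfrak U}.
\end{align*}

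The argument is short, and the main point to check is the contractivity of the resets in operator norm. We make sure $\mathcal R_\alpha$ is well defined and unital completely positive by writing it explicitly as the partial-trace form above. The key input is then \cref{eq:app-eta-absorbs-delta}, whose derivation uses that $\operatorname{Ran}P_{\MM_\alpha}^*$ lies in the operators acting trivially on $A_\alpha^\circ$, by definition of the bulk. With these two facts the bound follows without using any property of $\GG$ beyond stationarity of $\sigma$.
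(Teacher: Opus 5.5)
Your proposal is correct and follows essentially the same route as the paper: trace-norm duality to move to the Heisenberg picture, subtraction of the scalar part of $e^{t\GG^*}(O)$ using $\Tr(\rho-\sigma)=0$, Hölder, and then the reset telescope $\operatorname{id}-\mathcal R_{\alpha_1}\circ\cdots\circ\mathcal R_{\alpha_M}=\sum_i\mathcal R_{\alpha_1}\circ\cdots\circ\mathcal R_{\alpha_{i-1}}\circ\eta_{\alpha_i}$ together with contractivity of the resets and $\|\eta_\alpha(O)\|\leq2\|\delta_\alpha(O)\|$. The only cosmetic difference is that you take the supremum over all $O$ with $\|O\|\leq1$ and allow a complex scalar $c$, whereas the paper restricts to Hermitian $O$ and real $c$; both are valid.
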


\begin{proof}[Proof of \cref{lem:app-trace-oscillator}]
\prooflabel{lem:app-trace-oscillator}{proof:app-trace-oscillator}
Consider the time evolution:
\begin{align}
    ||e^{t\GG}(\rho)-\sigma||_1 &\overset{(1)}{=} \sup_{O=O^\dagger,\,||O||\leq1}\Tr(Oe^{t\GG}(\rho-\sigma)),\notag\\
    &\overset{(2)}{=} \sup_{O=O^\dagger,\,||O||\leq1}\Tr(e^{t\GG^*}(O)(\rho-\sigma)),\notag\\
    &\overset{(3)}{=} \sup_{O=O^\dagger,\,||O||\leq1}\inf_{c\in\R} \Tr((e^{t\GG^*}(O)-c\BI)(\rho-\sigma)),\notag\\
    &\overset{(4)}{\leq} ||\rho-\sigma||_1\sup_{O=O^\dagger,\,||O||\leq1}\inf_{c\in\R} ||e^{t\GG^*}(O)-c\BI||,
\end{align}
where $(1)$ uses the variational definition of the trace norm, $(2)$ that $\GG^*$ is the Hilbert--Schmidt adjoint, $(3)$ uses that $\Tr(c\BI(\rho-\sigma))=c\Tr(\rho-\sigma)=0$, and $(4)$ uses H\"older's inequality.

We upper bound $q(O):=\inf_{c\in\R}||O-c\BI||$ using a telescoping argument over the resets. Order the cover labels as $\alpha_1,\ldots,\alpha_M$. Since every site belongs to some $A_\alpha^\circ$, \cref{eq:app-reset-composition-range} gives
\begin{align}
    \mathcal R_{\alpha_1}\circ\cdots\circ\mathcal R_{\alpha_M}(O) &=c(O,\alpha_1,\ldots,\alpha_M)\BI, \label{eq:app-R-product-scalar}
\end{align}
where the proportionality constant depends on the ordering of $\alpha_1,...,\alpha_M$. This yields the telescoping sum
\begin{align}
    \operatorname{id}-\mathcal R_{\alpha_1}\circ\cdots\circ\mathcal R_{\alpha_M} &= \sum_{i=1}^M \mathcal R_{\alpha_1}\circ\cdots\circ\mathcal R_{\alpha_{i-1}}\circ(\operatorname{id}-\mathcal{R}_{\alpha_i}), \label{eq:app-R-telescope}
\end{align}
which allows us to obtain the bound
\begin{align}
    q(O) &\leq||O-c(O,\alpha_1,\ldots,\alpha_M)\BI||,\notag\\
    &\overset{(1)}{\leq} \sum_{i=1}^M ||\mathcal R_{\alpha_1}\circ\cdots\circ \mathcal R_{\alpha_{i-1}}\circ\eta_{\alpha_i}(O)||,\notag\\
    &\overset{(2)}{\leq} \sum_{i=1}^M||\eta_{\alpha_i}(O)||, \notag\\
    &\overset{(3)}{\leq} 2\sum_{i=1}^M||\delta_{\alpha_i}(O)|| =2\osc{O}_{\mathfrak U},
\end{align}
where $(1)$ uses \cref{eq:app-R-telescope} and the triangle inequality, $(2)$ that the resets are unital contractions $||\mathcal{R}_\alpha||_{\infty\to\infty}\leq 1$, and $(3)$ uses \cref{eq:app-eta-controlled-by-delta}. Applying this estimate to $e^{t\GG^*}(O)$ proves \cref{eq:app-trace-oscillator-bound}.
\end{proof}

We will study the decay of $\osc{e^{t\GG^*}(O)}_{\mathfrak U}$ by comparing the decay given by the bulk-dissipative cover to the interactions between the local bulks and other regions. In order to quantify the strength of these interactions we need to make some definitions. First, consider a map $W_B$ that is locally trace-annihilating in $B$. Choose an ordered list $\alpha_{B,1},\ldots,\alpha_{B,M_B}$ without repetition such that
\begin{align}
    B &\subseteq \bigcup_{i=1}^{M_B}A_{\alpha_{B,i}}^\circ, \qquad A_{\alpha_{B,i}}^\circ\cap B\neq\varnothing. \label{eq:app-bulk-list-covers-B}
\end{align}
Such a list may be chosen with $M_B\leq|B|$ by choosing one bulk region for each site of $B$ and removing repetitions. Fix these lists throughout. In particular,
\begin{align}
    \sum_{i=1}^{M_B}\mathbf 1[\alpha_{B,i}=\nu] &\leq \min\{1,|A_\nu^\circ\cap B|\}. \label{eq:app-list-multiplicity}
\end{align}
Define
\begin{align}
    Q_B &:= \mathcal R_{\alpha_{B,1}}\circ\cdots\circ \mathcal R_{\alpha_{B,M_B}}.
\end{align}
Since the bulk regions in the list cover $B$, \cref{eq:app-reset-composition-range} shows that the range of $Q_B$ is contained in the observables that act trivially on $B$. Consequently,
\begin{align}
    W_B^* Q_B &=0. \label{eq:app-TQ-zero}
\end{align}
We will use the telescoping sum
\begin{align}
    \operatorname{id}-Q_B &= \sum_{i=1}^{M_B} \mathcal R_{\alpha_{B,1}}\circ\cdots\circ \mathcal R_{\alpha_{B,i-1}}\circ\eta_{\alpha_{B,i}}, \label{eq:app-reset-telescope}
\end{align}
where an empty product of reset maps is understood to be the identity map. It gives
\begin{align}
    W_B^* X &\overset{(1)}{=} W_B^*(\operatorname{id}-Q_B)X, \notag\\
    &\overset{(2)}{=} \sum_{i=1}^{M_B} W_B^* \mathcal R_{\alpha_{B,1}}\circ\cdots\circ \mathcal R_{\alpha_{B,i-1}} \circ \eta_{\alpha_{B,i}}X, \label{eq:app-bulk-single-telescope}
\end{align}
where $(1)$ uses \cref{eq:app-TQ-zero}, and $(2)$ uses \cref{eq:app-reset-telescope}. Since the resets are contractions, \cref{eq:app-eta-controlled-by-delta} gives
\begin{align}
    ||\delta_\alpha W_B^* X|| &\leq 2||\delta_\alpha W_B^*||_{\infty\to\infty} \sum_{i=1}^{M_B}||\delta_{\alpha_{B,i}}X||. \label{eq:app-single-telescope-source-bound}
\end{align}

Define
\begin{align}
    \kappa_{\LL}^{\sharp} &:= 2\max_{\nu\in\{1,\ldots,M\}} \sum_{\alpha=1}^M \sum_{\substack{B\in\mathcal S_{\LL}\setminus\mathcal S_{\MM_\alpha}\\ [\MM_\alpha,\LL_B]
    \neq0}} \sum_{i=1}^{M_B} \mathbf 1[\alpha_{B,i}=\nu] ||\delta_\alpha\LL_B^*||_{\infty\to\infty}, \label{eq:app-bulk-incoming-L}\\
    \kappa_V^{\mathrm{bulk}} &:= 2\max_{\nu\in\{1,\ldots,M\}} \sum_{\alpha=1}^M \sum_{B\in\mathcal S_V} \sum_{i=1}^{M_B} \mathbf 1[\alpha_{B,i}=\nu] ||\delta_\alpha V_B^*||_{\infty\to\infty}. \label{eq:app-bulk-incoming-V}
\end{align}
We assume that $V$ has finite local strength $s_V:=\max_{x\in\Lambda}\sum_{B\ni x}\|V_B\|_{1\to1}$, without requiring a bound on its support sizes. The following matrix measures the strength of the couplings between different terms in the cover:
\begin{align}
    K_{\alpha\nu} &:= 2\sum_{\substack{B\in\mathcal S_{\LL}\setminus\mathcal S_{\MM_\alpha}\\ [\MM_\alpha,\LL_B]
    \neq0}}\sum_{i=1}^{M_B}\mathbf 1[\alpha_{B,i}=\nu]||\delta_\alpha\LL_B^*||_{\infty\to\infty} +2\sum_{B\in\mathcal S_V}\sum_{i=1}^{M_B}\mathbf 1[\alpha_{B,i}=\nu]||\delta_\alpha V_B^*||_{\infty\to\infty}, \label{eq:app-bulk-comparison-matrix}\\
    \kappa_{\GG}^{\mathrm{bulk}} &:= \max_{1\leq\nu\leq M}\sum_{\alpha=1}^M K_{\alpha\nu}. \label{eq:app-general-influence-load}
\end{align}
These quantities are relative to the fixed cover, decomposition, and covering lists. The superscript ``bulk'' distinguishes this direct coefficient from the count-based coefficient in \cref{eq:main-local-influence}: the direct matrix charges every perturbation term, including those commuting with the cover. When $V=0$, $\kappa_{\LL}^{\mathrm{bulk}}=\kappa_{\LL}^{\sharp}$. For the individual-term covers of the main text, the refined reference coefficient satisfies $\kappa_{\LL}^{\sharp}\leq\kappa_{\LL}$, as shown below in \cref{eq:app-sharp-main-comparison}. In particular, $\kappa_{\GG}^{\mathrm{bulk}} \leq \kappa_{\LL}^{\sharp}+\kappa_V^{\mathrm{bulk}}$.

For the reference Lindbladian, we also define the cover-coupling strength by
\begin{align}
    n_{\mathfrak U}(B) &:= \#\left\{\alpha:B\notin\mathcal S_{\MM_\alpha},\ [\MM_\alpha,\LL_B]\neq0\right\}, \notag\\
    s_{\LL,\mathfrak U} &:= \max_{x\in\Lambda}\sum_{\substack{B\in\mathcal S_{\LL}\\
    x\in B}}n_{\mathfrak U}(B)||\LL_B||_{1\to1}. \label{eq:app-cover-coupling-strength}
\end{align}
While $s_{\LL}$ measures the total interaction strength per site of the Lindbladian $\LL$, only those terms that don't commute with the cover will be detrimental to the local-to-global rapid mixing proof. Thus $s_{\LL,\mathfrak U}$ measures this detrimental interaction strength per site, weighting each term by the number $n_{\mathfrak U}(B)$ of bulk generators it neither belongs to nor commutes with. In particular, $s_{\LL,\mathfrak U}=0$ if every term either belongs to or commutes with each bulk generator.

\begin{proposition}[Bulk-cover rapid mixing with constant decay rate and stability]
\label{prop:app-rapid-mixing-bulk-cover}
Let $\mathfrak U$ be a bulk-dissipative cover with local mixing time $t_{\mathrm{loc}}$ and $M$ terms, and let $\LL$ be a Lindbladian that contains $\mathfrak U$. Let $V$ be a locally trace-annihilating map with finite local strength $s_V$, and suppose that $\GG=\LL+V$ is a Lindbladian. Define
\begin{align}
    \Delta_{\GG} &:= \frac12-\kappa_{\GG}^{\mathrm{bulk}}t_{\mathrm{loc}}.
\end{align}
If
\begin{align}
    \Delta_{\GG} &>0, \label{eq:app-general-bulk-condition}
\end{align}
then $\GG$ has a unique fixed state $\sigma$ and, for every state $\rho$,
\begin{align}
    ||e^{t\GG}(\rho)-\sigma||_1 &\leq 4D_{\GG}M e^{-\gamma_{\GG}t}||\rho-\sigma||_1, \label{eq:app-general-bulk-mixing-bound}
\end{align}
where
\begin{align}
    \gamma_{\GG} &:= \frac{\Delta_{\GG}}{2t_{\mathrm{loc}}}>0, & D_{\GG} &:= e^{\Delta_{\GG}/2}\frac{\frac12+\kappa_{\GG}^{\mathrm{bulk}}t_{\mathrm{loc}}}{\Delta_{\GG}}>0. \label{eq:app-general-bulk-explicit-constants}
\end{align}
For a family $(\GG_N)_N$ with $\GG=\GG_N$ at each size, fixed $t_{\mathrm{loc}}$ and $\kappa_{\GG}^{\mathrm{bulk}}$ independent of the system size, the strict inequality gives a decay rate $\Omega(1)$ and hence rapid mixing with constant decay rate when $M$ grows polynomially. More generally, if $t_{\mathrm{loc}}=O(\operatorname{polylog}|\Lambda|)$ and $\Delta_\GG$ is bounded below by a positive constant, the same estimate gives rapid mixing.
\end{proposition}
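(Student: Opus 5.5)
The plan is to combine \cref{lem:app-trace-oscillator} with a Gr\"onwall-type estimate for the vector of local fluctuations $u_\alpha(t):=\|\delta_\alpha e^{t\GG^*}(O)\|$. By \cref{lem:app-trace-oscillator} it suffices to show $\sum_\alpha u_\alpha(t)\leq 2D_\GG M e^{-\gamma_\GG t}$ for $\|O\|\leq1$, since $u_\alpha(0)\leq 2\|O\|$. Uniqueness of the fixed state then follows because any two fixed states $\sigma,\sigma'$ satisfy $\|\sigma'-\sigma\|_1=\|e^{t\GG}\sigma'-\sigma\|_1\to0$. (If a fixed state $\sigma$ is needed before running the argument, one exists by finite-dimensionality.)

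\textbf{Step 1: Duhamel with respect to $\MM_\alpha$.} For each $\alpha$, write $\GG^*=\MM_\alpha^*+(\GG-\MM_\alpha)^*$ and $O_t:=e^{t\GG^*}O$. Then
\begin{align*}
O_t=e^{t\MM_\alpha^*}O+\int_0^t e^{(t-s)\MM_\alpha^*}(\GG-\MM_\alpha)^*O_s\,ds .
\end{align*}
Next apply $\delta_\alpha$. Since $\MM_\alpha^*$ commutes with its own projection $P_{\MM_\alpha}^*$, the map $\delta_\alpha$ commutes with $e^{r\MM_\alpha^*}$, so \cref{eq:app-iterated-local-contraction} gives $\|\delta_\alpha e^{r\MM_\alpha^*}X\|\leq g(r)\|\delta_\alpha X\|$. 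In $(\GG-\MM_\alpha)^*=\sum_{B\notin\mathcal S_{\MM_\alpha}}\LL_B^*+\sum_B V_B^*$, I split off the terms $\LL_B$ with $[\MM_\alpha,\LL_B]=0$. Each of these commutes with the semigroup $e^{r\MM_\alpha^*}$ and with $P^*_{\MM_\alpha}$, hence with $\delta_\alpha$. I therefore treat them as part of a modified reference generator: apply Duhamel with respect to $\MM_\alpha^*+\sum_{\text{commuting }B}\LL_B^*$, whose semigroup factorizes as $e^{r\MM_\alpha^*}$ composed with a unital contraction commuting with $\delta_\alpha$. This keeps the bound $g(r)$.

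The remaining noncommuting $\LL_B$ terms and all $V_B$ terms are estimated with \cref{eq:app-single-telescope-source-bound}, which gives $\|\delta_\alpha W_B^*X\|\leq 2\|\delta_\alpha W_B^*\|\sum_i\|\delta_{\alpha_{B,i}}X\|$. Collecting coefficients yields exactly the matrix $K$ of \cref{eq:app-bulk-comparison-matrix}:
\begin{align*}
u_\alpha(t)\leq g(t)u_\alpha(0)+\int_0^t g(t-s)\sum_\nu K_{\alpha\nu}u_\nu(s)\,ds .
\end{align*}

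\textbf{Step 2: Summing and solving the renewal inequality.} Summing over $\alpha$ and using the column-sum bound $\sum_\alpha K_{\alpha\nu}\leq\kappa^{\mathrm{bulk}}_\GG$ gives, for $U(t)=\sum_\alpha u_\alpha(t)$,
\begin{align*}
U(t)\leq g(t)U(0)+\kappa^{\mathrm{bulk}}_\GG\int_0^t g(t-s)U(s)\,ds .
\end{align*}
I bound $g(t)\leq 2e^{-t\log 2/t_{\mathrm{loc}}}$, or more cleverly use $\int_0^\infty g=t_{\mathrm{loc}}$, and solve this renewal inequality by Laplace transform or by comparison with its solution. The kernel $\kappa g$ has total mass $\kappa^{\mathrm{bulk}}_\GG t_{\mathrm{loc}}=\tfrac12-\Delta_\GG<\tfrac12$. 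Weighting with $e^{\gamma t}$ for $\gamma=\Delta_\GG/(2t_{\mathrm{loc}})$, the weighted kernel mass stays below $1$, with margin of order $\Delta_\GG$. Take $\sup_{s\leq T}e^{\gamma s}U(s)$ and bound $e^{\gamma t}g(t)\leq e^{\Delta_\GG/2}$, using that $g$ is piecewise constant. This yields $\sup_t e^{\gamma t}U(t)\leq D_\GG U(0)$ with $D_\GG$ of the stated form $e^{\Delta/2}(\tfrac12+\kappa t_{\mathrm{loc}})/\Delta$. Inserting $U(0)\leq 2M$ into \cref{lem:app-trace-oscillator} gives the claimed bound $4D_\GG M e^{-\gamma_\GG t}$.

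\textbf{Main obstacle.} I expect Step 2 to be the delicate part. The piecewise-constant kernel $g$ does not have a clean weighted mass $\int e^{\gamma t}g(t)\,dt$ below $1$ without slack, and getting constants matching $\gamma_\GG,D_\GG$ needs care. I would first try bounding $e^{\gamma t}g(t)$ on each interval $[nt_{\mathrm{loc}},(n+1)t_{\mathrm{loc}})$ by $e^{\gamma t_{\mathrm{loc}}}(2^{-1}e^{\gamma t_{\mathrm{loc}}})^n$, then summing the geometric series. A secondary subtlety is the handling of commuting terms in Step 1. They must commute with $\delta_\alpha$, which holds because $[\MM_\alpha,\LL_B]=0$ implies $\LL_B$ commutes with every spectral projection of $\MM_\alpha$, including $P_{\MM_\alpha}$.
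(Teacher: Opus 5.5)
Your proposal is correct and follows the paper's proof essentially step by step. The shared steps are: Duhamel with respect to $\MM_\alpha^*+\LL_{\alpha,c}^*$; the telescoping source bound producing the matrix $K$; the column-sum reduction to a scalar Volterra inequality; the weighted-supremum argument with the per-interval geometric series of \cref{lem:app-volterra-decay}; and finally \cref{lem:app-trace-oscillator} together with $\osc{O}_{\mathfrak U}\le 2M\|O\|$.

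One correction to a side remark: $\int_0^\infty g=2t_{\mathrm{loc}}$, not $t_{\mathrm{loc}}$. The unweighted kernel mass is therefore $2\kappa^{\mathrm{bulk}}_\GG t_{\mathrm{loc}}=1-2\Delta_\GG$, which is exactly why the threshold is $\tfrac12$. For the same reason, the cruder bound $g(t)\le 2e^{-t\log 2/t_{\mathrm{loc}}}$ closes only when $\kappa^{\mathrm{bulk}}_\GG t_{\mathrm{loc}}<\tfrac12\log 2$. The per-interval geometric sum you propose in your obstacle paragraph is what is actually needed.
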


\begin{proof}[Proof of \cref{prop:app-rapid-mixing-bulk-cover}]
\prooflabel{prop:app-rapid-mixing-bulk-cover}{proof:app-rapid-mixing-bulk-cover}
The bound $\|\delta_\alpha\|_{\infty\to\infty}\leq2$ and \cref{eq:app-list-multiplicity} give $\kappa_{\LL}^{\sharp}\leq4a s_{\LL,\mathfrak U}$ and $\kappa_V^{\mathrm{bulk}}\leq4Ma s_V$.

In order to bound the decay of the oscillator norm, we estimate each term in its defining sum. Note that $\delta_\alpha(O) =0$ implies that $O$ is in the kernel of $\mathcal{M}_\alpha^*$. We will use that certain terms in $\mathcal{G}$ do not introduce fluctuations into the support $A_{\alpha}$, i.e. there are terms $\mathcal{C}_\alpha$ such that $[\mathcal{C}_\alpha^*,\delta_\alpha]=0$. Using the fixed decomposition of $\LL$ that contains $\mathfrak U$, for each $\alpha$, split the terms of $\LL-\MM_\alpha$ into those that commute with $\MM_\alpha$ and those that do not:
\begin{align}
    \LL_{\alpha,c} &:= \sum_{\substack{B\in\mathcal S_{\LL}\setminus\mathcal S_{\MM_\alpha}\\ [\MM_\alpha,\LL_B]
    =0}}\LL_B, & \LL_{\alpha,\not c} &:= \sum_{\substack{B\in\mathcal S_{\LL}\setminus\mathcal S_{\MM_\alpha}\\ [\MM_\alpha,\LL_B]
    \neq0}}\LL_B.
\end{align}
Thus for each $\alpha$ we obtain a decomposition of $\GG^*$:
\begin{align}
    \GG^* &= \MM_\alpha^*+\LL_{\alpha,c}^*+\LL_{\alpha,\not c}^*+V^*.
\end{align}
Since $P_{\MM_\alpha}^*=\lim_{t\to\infty}e^{t\MM_\alpha^*}$, we have $[\MM_\alpha^*,\delta_\alpha]=0$. Moreover, $[\MM_\alpha,\LL_{\alpha,c}]=0$, so $[\LL_{\alpha,c}^*,\delta_\alpha]=0$ under composition. Let $S_t^\alpha:=e^{t(\MM_\alpha^*+\LL_{\alpha,c}^*)}$ and define $O_t:=e^{t\GG^*}(O)$. Then
\begin{align}
    \frac{d}{ds}\Bigl(S_{t-s}^\alpha\delta_\alpha O_s\Bigr) &= -S_{t-s}^\alpha(\MM_\alpha^*+\LL_{\alpha,c}^*)\delta_\alpha O_s+S_{t-s}^\alpha\delta_\alpha\GG^* O_s, \notag\\
    &= S_{t-s}^\alpha\delta_\alpha(\LL_{\alpha,\not c}^*+V^*)O_s.
\end{align}
Integrating from $s=0$ to $s=t$ yields
\begin{align}
    \delta_\alpha O_t &= S_t^\alpha\delta_\alpha O+\int_0^t S_{t-s}^\alpha\delta_\alpha(\LL_{\alpha,\not c}^*+V^*)O_s\,ds. \label{eq:app-bulk-duhamel}
\end{align}
Because $\MM_\alpha$ and $\LL_{\alpha,c}$ commute, $S_t^\alpha=e^{t\LL_{\alpha,c}^*}e^{t\MM_\alpha^*}$, and therefore
\begin{align}
    ||S_t^\alpha\delta_\alpha O|| &\overset{(1)}{\leq} ||e^{t\MM_\alpha^*}\delta_\alpha O||, \notag\\
    &\overset{(2)}{\leq} g(t)||\delta_\alpha O||,
\end{align}
where $(1)$ uses that $e^{t\LL_{\alpha,c}^*}$ is a unital contraction, and $(2)$ is \cref{eq:app-iterated-local-contraction}. For the integrand in \cref{eq:app-bulk-duhamel}, set
\begin{align}
    X_{\alpha,s}&:=\delta_\alpha(\LL_{\alpha,\not c}^*+V^*)O_s.
\end{align}
Since $\delta_\alpha^2=\delta_\alpha$, we have $X_{\alpha,s}=\delta_\alpha X_{\alpha,s}$. Commutation of $\MM_\alpha$ and $\LL_{\alpha,c}$ and contractivity of $e^{u\LL_{\alpha,c}^*}$ therefore give, for $0\leq s\leq t$,
\begin{align}
    \left\|S_{t-s}^\alpha X_{\alpha,s}\right\|
    &\leq\left\|e^{(t-s)\MM_\alpha^*}\delta_\alpha X_{\alpha,s}\right\|
    \leq g(t-s)\left\|X_{\alpha,s}\right\|, \label{eq:app-bulk-integral-kernel}
\end{align}
where the second inequality again uses \cref{eq:app-iterated-local-contraction}. Applying \cref{eq:app-single-telescope-source-bound} term by term gives
\begin{align}
    ||\delta_\alpha(\LL_{\alpha,\not c}^*+V^*)O_s||  &\leq 2\sum_{\substack{B\in\mathcal S_{\LL}\setminus\mathcal S_{\MM_\alpha}\\ [\MM_\alpha,\LL_B]
    \neq0}}\sum_{i=1}^{M_B}||\delta_\alpha\LL_B^*||_{\infty\to\infty}||\delta_{\alpha_{B,i}}O_s||+2\sum_{B\in\mathcal S_V}\sum_{i=1}^{M_B}||\delta_\alpha V_B^*||_{\infty\to\infty}||\delta_{\alpha_{B,i}}O_s||,\notag\\
    &= \sum_{\nu=1}^M K_{\alpha\nu} ||\delta_\nu O_s||.\label{eq:app-bulk-source-bound}
\end{align}
Writing $f_\mu(t):=||\delta_\mu O_t||$, $\mu=1,\ldots,M$, and
\begin{align}
    \mathbf f(t) &:= (f_1(t),\ldots,f_M(t))^\top,
\end{align}
the bound for $||\delta_\alpha O_t||$ takes the vector Volterra form
\begin{align}
    \mathbf f(t) &\leq g(t)\mathbf f(0)+K\int_0^t g(t-s)\mathbf f(s)\,ds. \label{eq:app-direct-vector-volterra}
\end{align}
Summing over components yields
\begin{align}
    \osc{O_t}_{\mathfrak U} &\leq g(t)\osc{O}_{\mathfrak U}+\kappa_{\GG}^{\mathrm{bulk}}\int_0^t g(t-s)\osc{O_s}_{\mathfrak U}\,ds. \label{eq:app-direct-scalar-volterra}
\end{align}
Applying \cref{lem:app-volterra-decay} with $b=\kappa_{\GG}^{\mathrm{bulk}}$ gives
\begin{align}
    \osc{O_t}_{\mathfrak U} &\leq D_{\GG}e^{-\gamma_{\GG}t}\osc{O}_{\mathfrak U}.
\end{align}
Since $||\delta_\alpha(O)||\leq2||O||$, we have
\begin{align}
    \osc{O}_{\mathfrak U} &\leq 2M||O||.
\end{align}
Combining this with \cref{lem:app-trace-oscillator} proves \cref{eq:app-general-bulk-mixing-bound}. Applying the same estimate to two stationary states proves uniqueness.
\end{proof}

It remains to show exponential decay for the staircase kernel obtained from the local mixing time.

\begin{lemma}[Exponential decay for local-mixing-time Volterra inequalities]
\label{lem:app-volterra-decay}
Let \(T>0\), \(b\geq0\), and
\begin{align}
    g(t)&:=2^{-\lfloor t/T\rfloor}.
\end{align}
Suppose \(S:[0,\infty)\to[0,\infty)\) is locally bounded and measurable and satisfies
\begin{align}
    S(t)&\leq g(t)S(0)+b\int_0^t g(t-s)S(s)\,ds \label{eq:app-volterra-S}
\end{align}
for all \(t\geq0\). Define
\begin{align}
    \Delta&:=\frac12-b T.
\end{align}
If \(\Delta>0\), then
\begin{align}
    S(t)&\leq De^{-\gamma t}S(0),\qquad t\geq0,
\end{align}
where
\begin{align}
    \gamma&:=\frac{\Delta}{2T},\qquad
    D:=e^{\Delta/2}\frac{\frac12+b T}{\Delta}. \label{eq:app-volterra-constants}
\end{align}
\end{lemma}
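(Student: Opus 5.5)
I will prove the claim with an exponentially weighted supremum (a Grönwall-type bootstrap), using that the staircase kernel $g$ has integral $\int_0^\infty g=2T$ but contracts by $1/2$ over each window of length $T$.

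The first step is to compare $g$ with an exponential. Since $\lfloor t/T\rfloor\geq t/T-1$, we have $g(t)\leq 2\cdot 2^{-t/T}$. It is cleaner, though, to work window by window.

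Fix $\gamma=\Delta/(2T)$ and set $M(t):=\sup_{0\leq s\leq t}e^{\gamma s}S(s)$, which is finite by local boundedness. Multiplying \cref{eq:app-volterra-S} by $e^{\gamma t}$ gives
\begin{align*}
e^{\gamma t}S(t)\leq e^{\gamma t}g(t)S(0)+b\,M(t)\int_0^t e^{\gamma u}g(u)\,du .
\end{align*}
Splitting into windows $[kT,(k+1)T)$, on which $g=2^{-k}$, yields
\begin{align*}
\int_0^\infty e^{\gamma u}g(u)\,du=\frac{e^{\gamma T}-1}{\gamma}\sum_{k\geq0}\bigl(e^{\gamma T}/2\bigr)^k=\frac{e^{\gamma T}-1}{\gamma}\cdot\frac{1}{1-e^{\gamma T}/2}.
\end{align*}
With $\gamma T=\Delta/2\leq1/4$, the task is to check that $b$ times this quantity is strictly less than $1$. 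Equivalently, using $e^x-1\leq xe^x$, it suffices that $bTe^{\Delta/2}<1-e^{\Delta/2}/2$.

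This is the delicate point. Near $\Delta\to0$ the condition becomes $1/2\leq1/2$, so only the margin $\Delta$ helps, and the inequality has to be verified at first order in $\Delta$. If this direct route fails, I would instead use the alternative below.

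The alternative is a discrete renewal argument: a one-window bound for $\sup_{[0,T]}S$, then iteration over windows, with the constants $D=e^{\Delta/2}(\tfrac12+bT)/\Delta$ emerging from summing a geometric series with ratio $1-\Delta$ or similar. The factor $(\tfrac12+bT)/\Delta$ strongly suggests a resolvent bound of the form $\sum_n (bT+\tfrac12)^n$-type or $1/(1-\tfrac12-bT)$. So I would first try the Laplace-transform or resolvent view: the minimal solution of the Volterra equality is $S_{\max}=R*(\ldots)$, where $R$ is the resolvent kernel.

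Its Laplace transform at $-\gamma$ is finite when $b\hat g(-\gamma)<1$, with $\hat g(-\gamma)\leq (e^{\gamma T}-1)/(\gamma(1-e^{\gamma T}/2))$. A comparison principle, valid since $g,b\geq0$, shows that $S\leq S_{\max}$, and then I would bound $e^{\gamma t}S_{\max}(t)$ pointwise.

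The constants themselves I would obtain by the sup argument. From the displayed inequality, with $e^{\gamma t}g(t)\leq e^{\gamma T}=e^{\Delta/2}$ (the window factor is maximal at the right end of a window, and decreases in $k$), we get
\begin{align*}
M(t)\leq e^{\Delta/2}S(0)+b\,C\,M(t), \qquad C:=\int_0^\infty e^{\gamma u}g(u)\,du .
\end{align*}
Hence $M\leq e^{\Delta/2}S(0)/(1-bC)$, and the main obstacle is showing $1/(1-bC)\leq(\tfrac12+bT)/\Delta$, an elementary but tight inequality in $x=\Delta/2\in(0,1/4]$ with $bT=\tfrac12-2x$.

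If this computation resists, I would accept a slightly different explicit constant, since only $D=O(1/\Delta)$ and $\gamma=\Theta(\Delta/T)$ matter for \cref{prop:app-rapid-mixing-bulk-cover}.
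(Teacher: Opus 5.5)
Your main line is the paper's argument. You multiply by $e^{\gamma t}$, take the running supremum $M(t)$ (finite by local boundedness), bound $\sup_t e^{\gamma t}g(t)\le e^{\Delta/2}$, and evaluate $C=\int_0^\infty e^{\gamma u}g(u)\,du$ by summing over the windows where $g$ is constant. All of that is correct.

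The gap is the step you call the main obstacle. As written, you never prove $bC<1$, let alone $1/(1-bC)\le(\tfrac12+bT)/\Delta$. You also say you would settle for a different constant. The lemma asserts this specific $D$, so the inequality must be proved.

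The step is short. Put $x=\gamma T=\Delta/2\in(0,\tfrac14]$ and use two elementary facts on $[0,1)$:
\[
e^x\le\frac1{1-x},\qquad \frac{e^x-1}{x}=\sum_{n\ge0}\frac{x^n}{(n+1)!}\le\frac1{1-x}.
\]
The first gives $1-\tfrac12e^x\ge(\tfrac12-x)/(1-x)$. Combining the two,
\[
C=T\,\frac{e^x-1}{x\,(1-\frac12e^x)}\le\frac{T}{\frac12-x}=\frac{2T}{\frac12+bT},
\]
because $\tfrac12-x=\tfrac12(\tfrac12+bT)$. Therefore
\[
1-bC\ge\frac{\frac12-bT}{\frac12+bT}=\frac{\Delta}{\frac12+bT}>0.
\]
This settles the positivity you found delicate, and $M(t)\le e^{\Delta/2}S(0)/(1-bC)$ then gives exactly $D$.

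Your cruder sufficient condition $bTe^x<1-\tfrac12e^x$ is not actually delicate. It reduces to $(1-2x)e^x<1$, which follows from $\log(1-2x)\le-2x$. However, it only yields some finite constant, not the stated $D$. The renewal and resolvent fallbacks are not needed.
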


\begin{proof}[Proof of \cref{lem:app-volterra-decay}]
\prooflabel{lem:app-volterra-decay}{proof:app-volterra-decay}
Set
\begin{align}
    g_\gamma(t)&:=e^{\gamma t}g(t),&
    S_\gamma(t)&:=e^{\gamma t}S(t),&
    G_\gamma&:=\int_0^\infty g_\gamma(t)\,dt,&
    M_\gamma&:=\sup_{t\geq0}g_\gamma(t).
\end{align}
Multiplying \cref{eq:app-volterra-S} by \(e^{\gamma t}\) gives
\begin{align}
    S_\gamma(t)&\leq g_\gamma(t)S(0)
    +b\int_0^t g_\gamma(t-s)S_\gamma(s)\,ds. \label{eq:app-weighted-volterra}
\end{align}
For \(R\geq0\), let \(Q_\gamma(R):=\sup_{0\leq t\leq R}S_\gamma(t)\). For every \(t\in[0,R]\),
\begin{align}
    S_\gamma(t)&\leq M_\gamma S(0)+b G_\gamma Q_\gamma(R).
\end{align}
Taking the supremum and rearranging, whenever \(b G_\gamma<1\),
\begin{align}
    S(t)&\leq\frac{M_\gamma}{1-b G_\gamma}e^{-\gamma t}S(0). \label{eq:app-volterra-decay-intermediate}
\end{align}

It remains to bound the two quantities in this estimate. Put
\begin{align}
    x&:=\gamma T=\frac{\Delta}{2}.
\end{align}
Since \(0<x\leq1/4<\log2\), summing over the intervals
\([nT,(n+1)T)\) gives
\begin{align}
    G_\gamma
    &=\sum_{n=0}^\infty2^{-n}\int_{nT}^{(n+1)T}e^{\gamma t}\,dt \notag\\
    &=T\frac{e^x-1}{x\left(1-\frac12e^x\right)}. \label{eq:app-staircase-exponential-moment}
\end{align}
For \(0\leq x<1\), the inequalities
\begin{align}
    e^x&\leq\frac{1}{1-x},\quad
    \frac{e^x-1}{x}\leq\frac{1}{1-x}
\end{align}
imply
\begin{align}
    1-\frac12e^x&\geq\frac{\frac12-x}{1-x},\quad
    G_\gamma\leq\frac{T}{\frac12-x}
    =\frac{2T}{\frac12+b T}.
\end{align}
Consequently,
\begin{align}
    1-b G_\gamma
    &\geq\frac{\frac12-b T}{\frac12+b T}
    =\frac{\Delta}{\frac12+b T}>0. \label{eq:app-staircase-resolvent-margin}
\end{align}
Moreover, for \(nT\leq t<(n+1)T\),
\begin{align}
    g_\gamma(t)&\leq e^x\left(\frac{e^x}{2}\right)^n,
\end{align}
and therefore \(M_\gamma\leq e^x=e^{\Delta/2}\). Substituting these bounds into
\cref{eq:app-volterra-decay-intermediate} yields
\begin{align}
    \frac{M_\gamma}{1-b G_\gamma}
    &\leq e^{\Delta/2}\frac{\frac12+b T}{\Delta}=D.
\end{align}
This proves the claim.
\end{proof}

\subsection{Local perturbations}

The direct estimate obtained from \cref{eq:app-bulk-comparison-matrix} applies to arbitrary summable perturbations. Below we obtain a simpler bound when $V=\sum_{B\in\mathcal S_V}V_B$ is local and the terms supported away from each $A_\alpha$ can be included in a contractive reference evolution. For every $\alpha$, define
\begin{align}
    \LL_{\alpha,c} &:= \sum_{\substack{B\in\mathcal S_{\LL}\setminus\mathcal S_{\MM_\alpha}\\ [\MM_\alpha,\LL_B]
    =0}}\LL_B, & \LL_{\alpha,\not c} &:= \sum_{\substack{B\in\mathcal S_{\LL}\setminus\mathcal S_{\MM_\alpha}\\ [\MM_\alpha,\LL_B]
    \neq0}}\LL_B, \notag\\
    V_{\alpha,\mathrm{far}} &:= \sum_{\substack{B\in\mathcal S_V\\
    B\cap A_\alpha=\varnothing}}V_B, & \mathcal C_\alpha &:= \LL_{\alpha,c}+V_{\alpha,\mathrm{far}}. \label{eq:app-local-perturbation-reference-generator}
\end{align}
Since every term of $V_{\alpha,\mathrm{far}}$ is supported away from $A_\alpha$,
\begin{align}
    [\MM_\alpha,\mathcal C_\alpha] &=0.
\end{align}

We choose $V_{\alpha,\mathrm{far}}$ by disjoint support so that, in the applications below, $\mathcal C_\alpha$ is a Lindblad generator and $e^{t\mathcal C_\alpha^*}$ is a contraction. Commutation with $\MM_\alpha$ alone would not ensure this, since the terms $V_B$ need not be Lindblad generators.

\begin{lemma}[Per-site influence estimate]
\label{lem:app-per-site-influence}
Suppose that $\mathcal C_\alpha$ generates a unital contraction for every $\alpha$:
\begin{align}
    ||e^{t\mathcal C_\alpha^*}||_{\infty\to\infty} &\leq1, \qquad t\geq0. \label{eq:app-local-reference-contraction}
\end{align}
In particular, this holds if every $\mathcal C_\alpha$ is a Lindblad generator. Define
\begin{align}
    \zeta(V) &:= \max_{x\in\Lambda} \sum_{\substack{B\in\mathcal S_V\\
    x\in B}} |B|\,||V_B||_{1\to1}.
\end{align}
Then \cref{eq:app-scalar-volterra} holds with the local coefficient $\kappa_{\GG}^{\mathrm{loc}}$ defined in \cref{eq:app-local-influence}, satisfying
\begin{align}
    \kappa_{\GG}^{\mathrm{loc}} &\leq \kappa_{\LL}^{\sharp}+4a\mathfrak d\,\zeta(V). \label{eq:app-local-perturbation-first-moment-load}
\end{align}
If the summands of $V$ act on at most $r$ sites and
\begin{align}
    s_V &:= \max_{x\in\Lambda} \sum_{\substack{B\in\mathcal S_V\\
    x\in B}} ||V_B||_{1\to1},
\end{align}
then
\begin{align}
    \kappa_{\GG}^{\mathrm{loc}} &\leq \kappa_{\LL}^{\sharp}+4a\mathfrak d\,rs_V. \label{eq:app-local-perturbation-bounded-body-load}
\end{align}
Moreover, if $\LL$ is $l$-local with interaction strength bounded by $s_{\LL}$, then
\begin{align}
    \kappa_{\LL}^{\sharp} &\leq 4a\,s_{\LL,\mathfrak U}\leq 4a\mathfrak d\,ls_{\LL}. \label{eq:app-local-perturbation-base-load}
\end{align}
Consequently,
\begin{align}
    \kappa_{\GG}^{\mathrm{loc}} &\leq 4a\bigl(s_{\LL,\mathfrak U}+\mathfrak d\,rs_V\bigr). \label{eq:app-bounded-body-load}
\end{align}
The rapid-mixing criterion and bounds of \cref{prop:app-rapid-mixing-bulk-cover} therefore also hold with $\kappa_{\GG}^{\mathrm{bulk}}$ replaced by $\kappa_{\GG}^{\mathrm{loc}}$.
\end{lemma}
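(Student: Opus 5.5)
The plan is to rerun the Duhamel argument from \cref{proof:app-rapid-mixing-bulk-cover}, but with the enlarged reference generator $\MM_\alpha+\mathcal C_\alpha$ in place of $\MM_\alpha+\LL_{\alpha,c}$. The far part of the perturbation is absorbed into a contractive evolution that commutes with $\delta_\alpha$. Only the near terms $V_{\alpha,\mathrm{near}}:=V-V_{\alpha,\mathrm{far}}$ (those with $B\cap A_\alpha\neq\varnothing$) then remain as sources.

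The first step is the commutation. Since $[\MM_\alpha,\mathcal C_\alpha]=0$, the two semigroups commute, so $\mathcal C_\alpha^*$ commutes with $P_{\MM_\alpha}^*=\lim e^{t\MM_\alpha^*}$ and hence with $\delta_\alpha$. Setting $S_t^\alpha:=e^{t\mathcal C_\alpha^*}e^{t\MM_\alpha^*}$, we get $\|S_t^\alpha\delta_\alpha X\|\leq g(t)\|\delta_\alpha X\|$ from \cref{eq:app-local-reference-contraction} and \cref{eq:app-iterated-local-contraction}. The same Duhamel identity as \cref{eq:app-bulk-duhamel} then holds with source $\delta_\alpha(\LL_{\alpha,\not c}^*+V_{\alpha,\mathrm{near}}^*)O_s$.

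Next, apply the telescoping estimate \cref{eq:app-single-telescope-source-bound} term by term. This yields the matrix
\begin{align*}
K^{\mathrm{loc}}_{\alpha\nu}:=2\sum_{\substack{B\in\mathcal S_{\LL}\setminus\mathcal S_{\MM_\alpha}\\ [\MM_\alpha,\LL_B]\neq0}}\sum_{i}\mathbf 1[\alpha_{B,i}=\nu]\|\delta_\alpha\LL_B^*\|_{\infty\to\infty}+2\sum_{\substack{B\in\mathcal S_V\\ B\cap A_\alpha\neq\varnothing}}\sum_i\mathbf 1[\alpha_{B,i}=\nu]\|\delta_\alpha V_B^*\|_{\infty\to\infty}.
\end{align*}
Define $\kappa_{\GG}^{\mathrm{loc}}:=\max_\nu\sum_\alpha K^{\mathrm{loc}}_{\alpha\nu}$, which is \cref{eq:app-local-influence}. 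Summing components gives the scalar Volterra inequality \cref{eq:app-scalar-volterra}. From there, \cref{lem:app-volterra-decay} and \cref{lem:app-trace-oscillator} reproduce the conclusion of \cref{prop:app-rapid-mixing-bulk-cover} with $\kappa^{\mathrm{loc}}_{\GG}$. The first sum in $K^{\mathrm{loc}}$ is exactly the one defining $\kappa_{\LL}^\sharp$.

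The main obstacle is the counting bound on the $V$ part, where the restriction to near terms is what removes the factor $M$. Use $\|\delta_\alpha V_B^*\|_{\infty\to\infty}\leq2\|V_B^*\|_{\infty\to\infty}=2\|V_B\|_{1\to1}$, where the adjoint is computed after tensoring with the identity. Use also that $\mathbf 1[\alpha_{B,i}=\nu]$ summed over $i$ is at most $\mathbf 1[A_\nu^\circ\cap B\neq\varnothing]$, by \cref{eq:app-list-multiplicity}. For fixed $\nu$, the $V$ contribution is then at most
\begin{align*}
4\sum_{B:\,B\cap A_\nu^\circ\neq\varnothing}\#\{\alpha:A_\alpha\cap B\neq\varnothing\}\,\|V_B\|_{1\to1}\leq 4\sum_{x\in A_\nu}\sum_{B\ni x}\mathfrak d\,|B|\,\|V_B\|_{1\to1}\leq 4a\mathfrak d\,\zeta(V).
\end{align*}
Here the first inequality charges each $B$ to a site $x\in A_\nu^\circ\cap B$. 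It also bounds the number of $\alpha$ with $A_\alpha$ meeting $B$ by $\sum_{y\in B}\#\{\alpha:y\in A_\alpha\}\leq\mathfrak d|B|$. This gives \cref{eq:app-local-perturbation-first-moment-load}. Bounding $|B|\leq r$ gives $\zeta(V)\leq rs_V$ and hence \cref{eq:app-local-perturbation-bounded-body-load}.

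For \cref{eq:app-local-perturbation-base-load}, bound $\|\delta_\alpha\LL_B^*\|\leq2\|\LL_B\|_{1\to1}$ and exchange the sums over $\alpha$ and $B$. The count over $\alpha$ is then $n_{\mathfrak U}(B)$, and charging $B$ to a site of $A_\nu^\circ\subseteq A_\nu$ gives $4a\,s_{\LL,\mathfrak U}$. Finally, $n_{\mathfrak U}(B)\leq\mathfrak d|B|\leq\mathfrak d l$, since a term not commuting with $\MM_\alpha$ must overlap $A_\alpha$. This gives $s_{\LL,\mathfrak U}\leq\mathfrak d l s_{\LL}$, and combining the pieces yields \cref{eq:app-bounded-body-load}.
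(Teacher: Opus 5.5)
Your proposal is correct and follows the paper's proof essentially step for step. It uses the same Duhamel identity with the commuting reference semigroup $e^{t(\MM_\alpha^*+\mathcal C_\alpha^*)}$, the same termwise telescoping bound producing $K^{\mathrm{loc}}$ with only the near perturbation terms, and the same column-sum counting, charging each $B$ to a site of $A_\nu^\circ$ and using $\#\{\alpha: B\cap A_\alpha\neq\varnothing\}\leq\mathfrak d|B|$, for both the perturbation and base contributions.
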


\begin{proof}[Proof of \cref{lem:app-per-site-influence}]
\prooflabel{lem:app-per-site-influence}{proof:app-per-site-influence}
By the definition of $\mathcal C_\alpha$,
\begin{align}
    \GG-\MM_\alpha-\mathcal C_\alpha &= \LL_{\alpha,\not c}+\sum_{\substack{B\in\mathcal S_V\\
    B\cap A_\alpha\neq\varnothing}} V_B. \label{eq:app-local-perturbation-source-generator}
\end{align}
Let $S_t^\alpha:=e^{t(\MM_\alpha^*+\mathcal C_\alpha^*)}$ and define $O_t:=e^{t\GG^*}(O)$. Since $[\MM_\alpha^*,\delta_\alpha]=[\mathcal C_\alpha^*,\delta_\alpha]=0$ under composition, we have
\begin{align}
    \frac{d}{ds}\Bigl(S_{t-s}^\alpha\delta_\alpha O_s\Bigr) &= -S_{t-s}^\alpha(\MM_\alpha^*+\mathcal C_\alpha^*)\delta_\alpha O_s+S_{t-s}^\alpha\delta_\alpha\GG^* O_s, \notag\\
    &= S_{t-s}^\alpha\delta_\alpha(\GG^*-\MM_\alpha^*-\mathcal C_\alpha^*)O_s.
\end{align}
Integrating from $s=0$ to $s=t$ yields
\begin{align}
    \delta_\alpha O_t &= S_t^\alpha\delta_\alpha O+\int_0^t S_{t-s}^\alpha\delta_\alpha(\GG^*-\MM_\alpha^*-\mathcal C_\alpha^*)O_s\,ds. \label{eq:app-local-perturbation-duhamel}
\end{align}
Because $\MM_\alpha$ and $\mathcal C_\alpha$ commute, $S_t^\alpha=e^{t\mathcal C_\alpha^*}e^{t\MM_\alpha^*}$, and therefore
\begin{align}
    ||S_t^\alpha\delta_\alpha O|| &\overset{(1)}{\leq} ||e^{t\MM_\alpha^*}\delta_\alpha O||, \notag\\
    &\overset{(2)}{\leq} g(t)||\delta_\alpha O||,
\end{align}
where $(1)$ uses \cref{eq:app-local-reference-contraction}, and $(2)$ uses \cref{eq:app-iterated-local-contraction}. For the integral term, define
\begin{align}
    X_{\alpha,s}&:=\delta_\alpha(\GG^*-\MM_\alpha^*-\mathcal C_\alpha^*)O_s.
\end{align}
The projector identity $\delta_\alpha^2=\delta_\alpha$ gives $X_{\alpha,s}=\delta_\alpha X_{\alpha,s}$. Thus commutation of $\MM_\alpha$ and $\mathcal C_\alpha$, \cref{eq:app-local-reference-contraction}, and \cref{eq:app-iterated-local-contraction} give
\begin{align}
    \left\|S_{t-s}^\alpha X_{\alpha,s}\right\|
    &\leq\left\|e^{(t-s)\MM_\alpha^*}\delta_\alpha X_{\alpha,s}\right\|
    \leq g(t-s)\left\|X_{\alpha,s}\right\|. \label{eq:app-local-integral-kernel}
\end{align}

Applying \cref{eq:app-single-telescope-source-bound} term by term to \cref{eq:app-local-perturbation-source-generator} gives
\begin{align}
    &||\delta_\alpha(\GG^*-\MM_\alpha^*-\mathcal C_\alpha^*)O_s|| \notag\\
    &\quad\leq 2\sum_{\substack{B\in\mathcal S_{\LL}\setminus\mathcal S_{\MM_\alpha}\\ [\MM_\alpha,\LL_B]
    \neq0}}\sum_{i=1}^{M_B}||\delta_\alpha\LL_B^*||_{\infty\to\infty}||\delta_{\alpha_{B,i}}O_s||+2\sum_{\substack{B\in\mathcal S_V\\
    B\cap A_\alpha\neq\varnothing}}\sum_{i=1}^{M_B}||\delta_\alpha V_B^*||_{\infty\to\infty}||\delta_{\alpha_{B,i}}O_s||. \label{eq:app-local-perturbation-source-bound}
\end{align}
Writing $f_\mu(t):=||\delta_\mu O_t||$, $\mu=1,\ldots,M$, define
\begin{align}
    K^{\mathrm{loc}}_{\alpha\nu} &:= 2\sum_{\substack{B\in\mathcal S_{\LL}\setminus\mathcal S_{\MM_\alpha}\\ [\MM_\alpha,\LL_B]
    \neq0}}\sum_{i=1}^{M_B}\mathbf 1[\alpha_{B,i}=\nu]||\delta_\alpha\LL_B^*||_{\infty\to\infty} +2\sum_{\substack{B\in\mathcal S_V\\
    B\cap A_\alpha\neq\varnothing}}\sum_{i=1}^{M_B}\mathbf 1[\alpha_{B,i}=\nu]||\delta_\alpha V_B^*||_{\infty\to\infty}, \notag\\
    \mathbf f(t) &:= (f_1(t),\ldots,f_M(t))^\top.
\end{align}
Set
\begin{align}
    \kappa_{\GG}^{\mathrm{loc}} &:= \max_{1\leq\nu\leq M}\sum_{\alpha=1}^M K^{\mathrm{loc}}_{\alpha\nu}. \label{eq:app-local-influence}
\end{align}
The superscript ``loc'' distinguishes this sharper coefficient from the general incoming influence $\kappa_{\GG}^{\mathrm{bulk}}$ in \cref{eq:app-general-influence-load}. The general comparison matrix charges every summable term $V_B$. In the present local-perturbation decomposition, terms with $B\cap A_\alpha=\varnothing$ have instead been absorbed into the contractive reference generator $\mathcal C_\alpha$, so $K^{\mathrm{loc}}$ charges only the remaining terms with $B\cap A_\alpha\neq\varnothing$. Thus $\kappa_{\GG}^{\mathrm{loc}}$ is the exact maximum column sum for this local decomposition, and $\kappa_{\GG}^{\mathrm{loc}}\leq\kappa_{\GG}^{\mathrm{bulk}}$ for the same split and covering lists. At $V=0$, both matrices coincide and $\kappa_{\LL}^{\mathrm{loc}}=\kappa_{\LL}^{\sharp}$. For the individual-term covers of the main text, $\kappa_{\LL}^{\mathrm{loc}}\leq\kappa_{\LL}$, with $\kappa_{\LL}$ defined in \cref{eq:main-local-influence}.
The bound for $||\delta_\alpha O_t||$ takes the vector Volterra form
\begin{align}
    \mathbf f(t) &\leq g(t)\mathbf f(0)+K^{\mathrm{loc}}\int_0^t g(t-s)\mathbf f(s)\,ds. \label{eq:app-vector-volterra}
\end{align}
Summing over components yields
\begin{align}
    \osc{O_t}_{\mathfrak U} &\leq g(t)\osc{O}_{\mathfrak U}+\kappa_{\GG}^{\mathrm{loc}}\int_0^t g(t-s)\osc{O_s}_{\mathfrak U}\,ds. \label{eq:app-scalar-volterra}
\end{align}

The base-generator contribution to $\kappa_{\GG}^{\mathrm{loc}}$ is bounded by $\kappa_{\LL}^{\sharp}$. Since $||\delta_\alpha||_{\infty\to\infty}\leq2$, the perturbation contribution is bounded by the maximum column sum of
\begin{align}
    K^V_{\alpha\nu} &:= 4\sum_{\substack{B\in\mathcal S_V\\
    B\cap A_\alpha\neq\varnothing}} \sum_{i=1}^{M_B} \mathbf 1[\alpha_{B,i}=\nu]||V_B||_{1\to1}. \label{eq:app-local-perturbation-comparison-matrix}
\end{align}
For fixed $\nu$,
\begin{align}
    \sum_{\alpha=1}^M K^V_{\alpha\nu} &\overset{(1)}{\leq} 4 \sum_{B\in\mathcal S_V}\left(\sum_{i=1}^{M_B}\mathbf 1[\alpha_{B,i}=\nu]\right)\#\left\{\alpha:B\cap A_\alpha\neq\varnothing\right\} ||V_B||_{1\to1}, \notag\\
    &\overset{(2)}{\leq} 4\mathfrak d\sum_{B\in\mathcal S_V}|A_\nu^\circ\cap B||B|\,||V_B||_{1\to1}, \notag\\
    &\overset{(3)}{\leq} 4\mathfrak d\sum_{x\in A_\nu^\circ}\sum_{B\ni x}|B|\,||V_B||_{1\to1}, \notag\\
    &\overset{(4)}{\leq}4a\mathfrak d\,\zeta(V), \label{eq:app-local-perturbation-column-sum}
\end{align}
where $(1)$ sums the comparison matrix over $\alpha$, $(2)$ uses $\sum_{i=1}^{M_B}\mathbf 1[\alpha_{B,i}=\nu]\leq|A_\nu^\circ\cap B|$ and
\begin{align}
    \#\left\{\alpha:B\cap A_\alpha\neq\varnothing\right\}&\leq \sum_{x\in B}\#\left\{\alpha:x\in A_\alpha\right\} \leq \mathfrak d|B|,
\end{align}
$(3)$ writes $|A_\nu^\circ\cap B|$ as a sum over sites, and $(4)$ uses $|A_\nu^\circ|\leq|A_\nu|\leq a$. This proves \cref{eq:app-local-perturbation-first-moment-load}. Since $\zeta(V)\leq rs_V$, it also proves \cref{eq:app-local-perturbation-bounded-body-load}.

For the base-generator estimate, use $||\delta_\alpha\LL_B^*||_{\infty\to\infty}\leq2||\LL_B||_{1\to1}$ in the first sum of \cref{eq:app-local-perturbation-source-bound} and take
\begin{align}
    K^{\LL}_{\alpha\nu} &:= 4\sum_{\substack{B\in\mathcal S_{\LL}\setminus\mathcal S_{\MM_\alpha}\\ [\MM_\alpha,\LL_B]
    \neq0}} \sum_{i=1}^{M_B} \mathbf 1[\alpha_{B,i}=\nu]||\LL_B||_{1\to1}. \label{eq:app-local-perturbation-base-comparison-matrix}
\end{align}
Exchanging the sums and using \cref{eq:app-list-multiplicity} gives, for every $\nu$,
\begin{align}
    \sum_{\alpha=1}^M K^{\LL}_{\alpha\nu} &=4\sum_{B\in\mathcal S_{\LL}}\left(\sum_{i=1}^{M_B}\mathbf 1[\alpha_{B,i}=\nu]\right)n_{\mathfrak U}(B)||\LL_B||_{1\to1}, \notag\\
    &\leq4\sum_{x\in A_\nu^\circ}\sum_{\substack{B\in\mathcal S_{\LL}\\
    x\in B}}n_{\mathfrak U}(B)||\LL_B||_{1\to1}, \notag\\
    &\leq4|A_\nu^\circ|s_{\LL,\mathfrak U}\leq4a\,s_{\LL,\mathfrak U}. \label{eq:app-base-column-sum-sharp}
\end{align}
Noncommutation implies $B\cap A_\alpha\neq\varnothing$, so $n_{\mathfrak U}(B)\leq\mathfrak d|B|\leq\mathfrak d\,l$ and $s_{\LL,\mathfrak U}\leq\mathfrak d\,ls_{\LL}$. Hence
\begin{align}
    \kappa_{\LL}^{\sharp} &\leq \max_\nu\sum_{\alpha=1}^M K^{\LL}_{\alpha\nu}\leq4a\,s_{\LL,\mathfrak U}\leq 4a\mathfrak d\,ls_{\LL}.
\end{align}
This proves \cref{eq:app-local-perturbation-base-load}. Adding this comparison matrix to \cref{eq:app-local-perturbation-comparison-matrix} proves \cref{eq:app-bounded-body-load}. Applying \cref{lem:app-volterra-decay} with $b=\kappa_{\GG}^{\mathrm{loc}}$ to \cref{eq:app-scalar-volterra}, followed by \cref{lem:app-trace-oscillator}, gives the stated rapid-mixing criterion and bounds.
\end{proof}

\begin{proposition}[General form of \cref{prop:stability-strong-bulk-dissipation}: Stability of strong bulk dissipation]
\label{app:prop:stability-strong-bulk-dissipation}
Let $\LL$ be a local Lindbladian containing a bulk-dissipative cover $\mathfrak U$ with local mixing time $t_{\mathrm{loc}}$ and cover coupling strength $s_{\LL,\mathfrak U}$. Let $\KK$ be a local Lindbladian with cover coupling strength $s_{\KK,\mathfrak U}$. For $\varepsilon\geq0$, use the coefficients $\kappa_{\LL}^{\sharp}$ and $\kappa_{\LL+\varepsilon\KK}^{\mathrm{loc}}$ from \cref{eq:app-bulk-incoming-L,eq:app-local-influence}, respectively, with the commuting perturbation terms included in $\mathcal C_\alpha$. Suppose that
\begin{align}
    \kappa_{\LL}^{\sharp}t_{\mathrm{loc}}&<\frac12. \label{app:eq:strong-bulk-unperturbed-condition}
\end{align}
Then $\LL$ has a unique fixed point. If, in addition,
\begin{align}
    \kappa_{\LL+\varepsilon\KK}^{\mathrm{loc}} t_{\mathrm{loc}}&<\frac12, \label{app:eq:strong-bulk-stability-condition}
\end{align}
then $\LL+\varepsilon\KK$ has a unique fixed point. For families $(\LL_N)_N$ and $(\KK_N)_N$ with $\LL=\LL_N$, $\KK=\KK_N$ and $M=\operatorname{poly}(|\Lambda|)$, either $(\LL_N)_N$ or $(\LL_N+\varepsilon\KK_N)_N$ is rapidly mixing if its corresponding margin, $1/2-\kappa_{\LL}^{\sharp}t_{\mathrm{loc}}$ or $1/2-\kappa_{\LL+\varepsilon\KK}^{\mathrm{loc}}t_{\mathrm{loc}}$, is bounded below by a positive constant and $t_{\mathrm{loc}}=O(\operatorname{polylog}|\Lambda|)$. The decay rate is constant when $t_{\mathrm{loc}}=O(1)$.
\end{proposition}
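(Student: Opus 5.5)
The proof reduces to the two earlier results of this appendix: the direct estimate \cref{prop:app-rapid-mixing-bulk-cover} for the unperturbed generator, and \cref{lem:app-per-site-influence} for the perturbed one. The work is to check that their hypotheses are met, with $V=0$ in the first case and $V=\varepsilon\KK$ in the second.

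\emph{Unperturbed generator.} I apply \cref{prop:app-rapid-mixing-bulk-cover} with $\GG=\LL$ and $V=0$. Then $V$ is trivially locally trace-annihilating with $s_V=0$. The comparison matrix \cref{eq:app-bulk-comparison-matrix} keeps only its $\LL$-part, so $\kappa_{\GG}^{\mathrm{bulk}}=\kappa_{\LL}^{\sharp}$. Condition \cref{app:eq:strong-bulk-unperturbed-condition} is therefore exactly $\Delta_{\GG}>0$. The proposition then yields a unique fixed state together with the bound \cref{eq:app-general-bulk-mixing-bound}, with rate $\gamma=\Delta/(2t_{\mathrm{loc}})$ and prefactor $4DM$.

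\emph{Perturbed generator.} I set $V=\varepsilon\KK=\sum_B\varepsilon\KK_B$. Each $\KK_B$ is a Lindbladian supported on $B$, hence trace-preserving on $B$, so \cref{eq:app-perturbation-locally-trace-preserving} holds. Locality gives a finite local strength.

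The one hypothesis needing care is \cref{eq:app-local-reference-contraction}, namely that each reference generator $\mathcal C_\alpha$ generates a unital contraction. The statement modifies $\mathcal C_\alpha$ by also absorbing the perturbation terms that commute with $\MM_\alpha$. In both the original and the modified version, $\mathcal C_\alpha$ is a sum of genuine Lindblad generators: the terms $\LL_B$, the terms $\varepsilon\KK_B$ with $B\cap A_\alpha=\varnothing$, and the commuting $\varepsilon\KK_B$. For $\varepsilon\ge0$ such a sum is a Lindbladian, so $e^{t\mathcal C_\alpha^*}$ is unital and completely positive, hence contractive.

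Every term absorbed into $\mathcal C_\alpha$ commutes with $\MM_\alpha$, so $[\mathcal C_\alpha^*,\delta_\alpha]=0$. The Duhamel argument in the proof of \cref{lem:app-per-site-influence} therefore goes through unchanged, and the only effect is that $K^{\mathrm{loc}}$ loses the absorbed terms. Note that commuting terms need not have disjoint support, so this step relies only on commutation plus the Lindblad property, not on the disjoint-support choice in \cref{eq:app-local-perturbation-reference-generator}. With this, \cref{lem:app-per-site-influence} gives the scalar Volterra inequality \cref{eq:app-scalar-volterra} with $b=\kappa^{\mathrm{loc}}_{\LL+\varepsilon\KK}$.

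\emph{Conclusion.} Given \cref{app:eq:strong-bulk-stability-condition}, \cref{lem:app-volterra-decay} turns the Volterra inequality into oscillator-norm decay at rate $\gamma=(1/2-bt_{\mathrm{loc}})/(2t_{\mathrm{loc}})$. \Cref{lem:app-trace-oscillator} together with $\osc{O}_{\mathfrak U}\le 2M\|O\|$ then gives
\[
\|e^{t\GG}(\rho)-\sigma\|_1\le 4DMe^{-\gamma t}\|\rho-\sigma\|_1 .
\]
For uniqueness, suppose $\sigma,\sigma'$ are two fixed states. Applying this bound with $\rho=\sigma'$ gives $\|\sigma'-\sigma\|_1\to0$, so $\sigma'=\sigma$.

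For the family statements, take $\Delta$ bounded below by a constant and $M=\mathrm{poly}(|\Lambda|)$. The bound then gives mixing time
\[
\tau_{\mathrm{mix}}\le \gamma^{-1}\log(8DM)=O\bigl(t_{\mathrm{loc}}\log|\Lambda|\bigr),
\]
which is polylogarithmic when $t_{\mathrm{loc}}=O(\operatorname{polylog}|\Lambda|)$. When $t_{\mathrm{loc}}=O(1)$, $\gamma$ is constant and $D=O(1)$, so the decay rate is constant with polynomial prefactor.
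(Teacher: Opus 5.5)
Your proposal is correct and follows essentially the paper's route. You absorb the commuting perturbation terms, which are genuine Lindbladians, into $\mathcal C_\alpha$ and run the Duhamel argument of \cref{lem:app-per-site-influence} to get the scalar Volterra inequality with $b=\kappa^{\mathrm{loc}}_{\LL+\varepsilon\KK}$, or $b=\kappa^{\sharp}_{\LL}$ at $\varepsilon=0$, where the bulk and local coefficients coincide; you then conclude with \cref{lem:app-volterra-decay} and \cref{lem:app-trace-oscillator}, exactly as the paper does. The only slip is inessential: since $\|\rho-\sigma\|_1\le2$, the mixing-time bound should read $\gamma^{-1}\log(16DM)$ rather than $\gamma^{-1}\log(8DM)$, which leaves the $O(t_{\mathrm{loc}}\log|\Lambda|)$ conclusion unchanged.
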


The main text restricts the cover generators to individual terms of $\LL$ and uses the explicit coefficient $\kappa_{\LL}$ in \cref{eq:main-local-influence}. For these covers, $n_{\mathfrak U}(B)=n_{\LL}(B)$. Each covering list has no repetitions and contains $\nu$ only if $B\cap A_\nu^\circ\neq\varnothing$. Hence $\|\delta_\alpha\LL_B^*\|_{\infty\to\infty}\leq2\|\LL_B\|_{1\to1}$ and \cref{eq:app-local-perturbation-base-comparison-matrix} give
\begin{align}
    \kappa_{\LL}^{\sharp}
    &\leq4\max_\nu\sum_{\substack{B\in\mathcal S_{\LL}\\ B\cap A_\nu^\circ\neq\varnothing}}n_{\LL}(B)\|\LL_B\|_{1\to1}
    =\kappa_{\LL}. \label{eq:app-sharp-main-comparison}
\end{align}
For the Lindbladian perturbation, include all terms commuting with $\MM_\alpha$ in $\mathcal C_\alpha$, as in the proof below. The same comparison gives $\kappa_{\LL+\varepsilon\KK}^{\mathrm{loc}}\leq\kappa_{\LL+\varepsilon\KK}\leq\kappa_{\LL}+\varepsilon\kappa_{\KK}$, with the main-text coefficients computed for the same fixed cover and decomposition. The increment bound
\begin{align}
    \kappa_{\LL+\varepsilon\KK}&\leq\kappa_{\LL}+4a\mathfrak d\,\varepsilon k s_{\KK}
\end{align}
is used there only to turn an assumed positive margin for $\kappa_{\LL}t_{\mathrm{loc}}<1/2$ into the explicit perturbation threshold in \cref{eq:main-constant-perturbation-threshold}. The separate estimate $\kappa_{\LL}\leq4a\mathfrak d\,l s_{\LL}$ records a size-independent bound from the local parameters; it is not used in the stability condition.

\begin{proof}[Proof of \cref{prop:stability-strong-bulk-dissipation}]
\prooflabel{prop:stability-strong-bulk-dissipation}{proof:stability-strong-bulk-dissipation}
We can write $\mathcal{G}=\LL + \varepsilon \KK = \LL +V$, with $V = \sum_{B\in \mathcal{S}_\GG} V_B$, padding with possibly zero terms. For every $\alpha$ the map
\begin{align}
    \mathcal C_\alpha &= \LL_{\alpha,c}+\varepsilon\sum_{\substack{B\in\mathcal S_{\KK}\\
    [\MM_\alpha,\KK_B]=0}}\KK_B
\end{align}
is a Lindblad generator that commutes with $\MM_\alpha$. Applying the termwise comparison-matrix bound \cref{eq:app-local-perturbation-comparison-matrix} to $V=\varepsilon\KK$, the perturbation part of every column satisfies
\begin{align}
    \sum_{\alpha=1}^M K^{\KK}_{\alpha\nu}
    &=4\varepsilon\sum_{B\in\mathcal S_{\KK}}\left(\sum_{i=1}^{M_B}\mathbf 1[\alpha_{B,i}=\nu]\right)\#\{\alpha:[\MM_\alpha,\KK_B]\neq0\}||\KK_B||_{1\to1}\notag\\
    &\leq4a\varepsilon s_{\KK,\mathfrak U}. \label{eq:app-strong-bulk-perturbation-column}
\end{align}
Together with \cref{eq:app-local-perturbation-base-load}, this gives
\begin{align}
    \kappa_{\LL+\varepsilon\KK}^{\mathrm{loc}} &\leq 4a\bigl(s_{\LL,\mathfrak U}+\varepsilon s_{\KK,\mathfrak U}\bigr).
\end{align}
The two hypotheses are precisely the strict criterion in \cref{eq:app-general-bulk-condition} for the exact influence parameters $\kappa_{\LL}^{\sharp}$ and $\kappa_{\LL+\varepsilon\KK}^{\mathrm{loc}}$, respectively. For either coefficient, \cref{eq:app-scalar-volterra} therefore satisfies the hypotheses of \cref{lem:app-volterra-decay}; its oscillator decay, combined with \cref{lem:app-trace-oscillator}, proves the two conclusions and the stated family scalings. For the individual-term covers, the bounds $\kappa_{\LL}^{\sharp}\leq\kappa_{\LL}$ and $\kappa_{\LL+\varepsilon\KK}^{\mathrm{loc}}\leq\kappa_{\LL+\varepsilon\KK}$ prove the main-text statement as well. Indeed, choose a common upper bound $T$ on the local mixing times. The threshold $\varepsilon_*=\Delta/(8a\mathfrak d\,k s_{\KK}T)$ preserves a margin at least $\Delta/2$ for every perturbation obeying the fixed strength bound and every $0\leq\varepsilon\leq\varepsilon_*$. For $s_{\KK}=0$, any positive threshold works.
\end{proof}

\begin{corollary}[General form of \cref{cor:commuting-strong-bulk-dissipation}: Commuting strong bulk dissipation]
\label{app:cor:commuting-strong-bulk-dissipation}
Let $\LL=\sum_{B\in\mathcal S_{\LL}}\LL_B$ be a local Lindbladian containing a bulk-dissipative cover $\mathfrak U$ with local mixing time $t_{\mathrm{loc}}$. Suppose additionally that its local terms commute:
\begin{align}
    [\LL_B,\LL_{B'}]&=0\qquad\text{for every }B,B'\in\mathcal S_{\LL}.
\end{align}
Let $\KK$ be a local Lindbladian with cover coupling strength $s_{\KK,\mathfrak U}$. For $\varepsilon\geq0$, let $\kappa_\varepsilon^{\mathrm{com}}:=\kappa_{\LL+\varepsilon\KK}^{\mathrm{loc}}$. Then
\begin{align}
    \kappa_\varepsilon^{\mathrm{com}}&\leq4a\varepsilon s_{\KK,\mathfrak U}.
\end{align}
If
\begin{align}
    \kappa_\varepsilon^{\mathrm{com}}t_{\mathrm{loc}}&<\frac12, \label{app:eq:commuting-strong-bulk-stability-condition}
\end{align}
then $\LL+\varepsilon\KK$ has a unique fixed point. For families $(\LL_N)_N$ and $(\KK_N)_N$ with $\LL=\LL_N$, $\KK=\KK_N$ and $M=\operatorname{poly}(|\Lambda|)$, a size-independent positive margin and $t_{\mathrm{loc}}=O(\operatorname{polylog}|\Lambda|)$ imply rapid mixing of $(\LL_N+\varepsilon\KK_N)_N$; the decay rate is constant when $t_{\mathrm{loc}}=O(1)$.
\end{corollary}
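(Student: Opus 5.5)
The plan is to reduce the corollary to \cref{app:prop:stability-strong-bulk-dissipation} with $V=\varepsilon\KK$. The only new ingredient is that commutativity of the local terms of $\LL$ makes the reference contribution vanish. Fix the decomposition $\LL=\sum_{B\in\mathcal S_{\LL}}\LL_B$ and the cover $\mathfrak U$, so each $\MM_\alpha=\sum_{B\in\mathcal S_{\MM_\alpha}}\LL_B$. Since $[\LL_B,\LL_{B'}]=0$ for all pairs, linearity gives $[\MM_\alpha,\LL_B]=0$ for every $\alpha$ and every $B\in\mathcal S_{\LL}\setminus\mathcal S_{\MM_\alpha}$. Hence $\LL_{\alpha,\not c}=0$ and $\LL_{\alpha,c}=\LL-\MM_\alpha$ for every $\alpha$. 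In particular $n_{\mathfrak U}(B)=0$ for all $B$, so $s_{\LL,\mathfrak U}=0$. The first sum in $K^{\mathrm{loc}}_{\alpha\nu}$ of \cref{eq:app-local-influence} is therefore empty, and $\kappa_{\LL}^{\sharp}=0$.

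Next I will bound the perturbation column sums. As in the proof of \cref{prop:stability-strong-bulk-dissipation}, put into $\mathcal C_\alpha$ the terms $\LL_{\alpha,c}$ together with $\varepsilon\sum_{B:[\MM_\alpha,\KK_B]=0}\KK_B$. This $\mathcal C_\alpha$ is a sum of Lindblad generators, hence a Lindblad generator. It commutes with $\MM_\alpha$ and generates a unital contraction, so \cref{lem:app-per-site-influence} applies. The Duhamel argument then yields the scalar Volterra inequality \cref{eq:app-scalar-volterra}, with $K^{\mathrm{loc}}$ charging only the terms $\varepsilon\KK_B$ that fail to commute with $\MM_\alpha$. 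The column estimate \cref{eq:app-strong-bulk-perturbation-column} uses $\|\delta_\alpha\|_{\infty\to\infty}\le2$, the multiplicity bound \cref{eq:app-list-multiplicity}, and $|A_\nu^\circ|\le a$. It gives $\sum_\alpha K_{\alpha\nu}\le4a\varepsilon s_{\KK,\mathfrak U}$, and with the vanishing base part this is the claimed bound $\kappa^{\mathrm{com}}_\varepsilon\le4a\varepsilon s_{\KK,\mathfrak U}$.

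Under \cref{app:eq:commuting-strong-bulk-stability-condition}, \cref{lem:app-volterra-decay} with $b=\kappa^{\mathrm{com}}_\varepsilon$ and $T=t_{\mathrm{loc}}$ gives exponential oscillator decay at rate $\Delta/(2t_{\mathrm{loc}})$. Combining this with $\osc{O}_{\mathfrak U}\le 2M\|O\|$ and \cref{lem:app-trace-oscillator} gives the bound \cref{eq:app-general-bulk-mixing-bound}. Applying that bound to two stationary states forces them to coincide, which gives uniqueness. For families, $M=\operatorname{poly}(|\Lambda|)$, a margin bounded below by a constant, and $t_{\mathrm{loc}}=O(\operatorname{polylog})$ give a mixing time of order $t_{\mathrm{loc}}\log M/\Delta$, which is polylogarithmic. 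When $t_{\mathrm{loc}}=O(1)$ the decay rate is constant.

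The only delicate point I expect is the bookkeeping. The commuting perturbation terms must genuinely be placed in $\mathcal C_\alpha$ rather than in the general comparison matrix \cref{eq:app-bulk-comparison-matrix}; otherwise all perturbation terms would be charged. We also need $\mathcal C_\alpha$ to stay contractive, which holds because its pieces are Lindbladians. Everything else is inherited directly from the preceding proposition.
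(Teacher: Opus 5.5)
Your proposal is correct and follows the paper's proof. Pairwise commutativity empties every noncommuting reference term, so $\kappa_{\LL}^{\sharp}=0$; the commuting perturbation terms are absorbed into $\mathcal C_\alpha$, so the column estimate \cref{eq:app-strong-bulk-perturbation-column} gives $\kappa^{\mathrm{com}}_\varepsilon\le4a\varepsilon s_{\KK,\mathfrak U}$; and uniqueness and mixing then follow from \cref{app:prop:stability-strong-bulk-dissipation}, that is, \cref{lem:app-volterra-decay} together with \cref{lem:app-trace-oscillator}. You write out the Volterra, uniqueness and family-scaling steps that the paper inherits directly from the proposition, but the route is the same.
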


\begin{proof}[Proof of \cref{cor:commuting-strong-bulk-dissipation}]
\prooflabel{cor:commuting-strong-bulk-dissipation}{proof:commuting-strong-bulk-dissipation}
Pairwise commutativity gives $\kappa_{\LL}^{\sharp}=0$, while \cref{eq:app-strong-bulk-perturbation-column} gives $\kappa_\varepsilon^{\mathrm{com}}\leq4a\varepsilon s_{\KK,\mathfrak U}$. The result follows directly from \cref{app:prop:stability-strong-bulk-dissipation}. For individual-term covers, $n_{\LL}(B)=0$ for every $B$, so $\kappa_{\LL}=0$ and \cref{prop:stability-strong-bulk-dissipation} gives the main-text corollary. In particular, when $M=\operatorname{poly}(|\Lambda|)$ and $t_{\mathrm{loc}}=O(\operatorname{polylog}|\Lambda|)$, the unperturbed commuting generator is rapidly mixing without any restriction on the growth of $a$.
\end{proof}

We can now apply the results above to several perturbations of interest. In each application, the cover $\mathfrak U$ and its quasiderivations remain fixed. Successive perturbations are combined into their total difference from $\LL$; the triangle inequality bounds the incoming influence and per-site strengths of this difference by the sums of the individual bounds. The same cover therefore continues to give rapid mixing with constant decay rate whenever the total influence satisfies the corresponding mixing condition.

\begin{corollary}[Difference of local Lindbladians]
\label{cor:app-difference}\resulttoc{cor:app-difference}{Difference of local Lindbladians}
Let $(\LL_N)_N$ and $(\GG_N)_N$ be local Lindbladian families, and write $\LL=\LL_N$ and $\GG=\GG_N$ at each size. After adding zero terms so that their local decompositions have the same supports, write
\begin{align}
    \LL &=\sum_{B\in\mathcal S}\LL_B, & \GG&=\sum_{B\in\mathcal S}\GG_B, & V&:=\GG-\LL=\sum_{B\in\mathcal S}V_B, & V_B&:=\GG_B-\LL_B.
\end{align}
Suppose that $\LL$ is $l$-local with interaction strength bounded by $s_{\LL}$ and contains a bulk-dissipative cover $\mathfrak U$ with local mixing time $t_{\mathrm{loc}}$ in the displayed decomposition, and that $\GG$ is $k$-local with interaction strength bounded by $s_{\GG}$. Assume that $t_{\mathrm{loc}},a,\mathfrak d$ are bounded uniformly in the system size. We use the same cover to study $\GG=\LL+V$. Let $r:=\max\{l,k\}$ and define
\begin{align}
    s_V &:= \max_{x\in\Lambda} \sum_{B\ni x}||V_B||_{1\to1}.
\end{align}
Then
\begin{align}
    \kappa_{\GG}^{\mathrm{loc}} &\leq\kappa_{\LL}^{\sharp}+4a\mathfrak d\,r s_V. \label{eq:app-difference-sharp-load}
\end{align}
Moreover,
\begin{align}
    \kappa_{\LL}^{\sharp} &\leq 4a\,s_{\LL,\mathfrak U}, & s_V &\leq s_{\LL}+s_{\GG}, \notag\\
    \kappa_{\GG}^{\mathrm{loc}} &\leq4a\bigl(s_{\LL,\mathfrak U}+\mathfrak d\,r(s_{\LL}+s_{\GG})\bigr). && \label{eq:app-difference-coarse-load}
\end{align}
If
\begin{align}
    \kappa_{\GG}^{\mathrm{loc}}t_{\mathrm{loc}}&\leq\frac12-\Delta,
\end{align}
for some $\Delta>0$ independent of the system size, then $(\GG_N)_N$ has a unique fixed point at each size and is rapidly mixing with constant decay rate.
\end{corollary}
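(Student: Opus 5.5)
The plan is to reduce \cref{cor:app-difference} directly to \cref{lem:app-per-site-influence} and \cref{prop:app-rapid-mixing-bulk-cover}, with $V=\GG-\LL$ as the perturbation.

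\emph{Hypotheses on $V$.} Each $V_B=\GG_B-\LL_B$ is a difference of Lindblad generators supported on $B$. Both are trace-preserving, so $V_B^*(\BI_B\otimes Y_{B^c})=0$, and $V$ is locally trace-annihilating in the sense of \cref{eq:app-perturbation-locally-trace-preserving}. Its terms act on at most $r=\max\{l,k\}$ sites. By the triangle inequality,
\[
\sum_{B\ni x}\|V_B\|_{1\to1}\leq\sum_{B\ni x}\|\LL_B\|_{1\to1}+\sum_{B\ni x}\|\GG_B\|_{1\to1}\leq s_{\LL}+s_{\GG},
\]
which gives $s_V\leq s_{\LL}+s_{\GG}$.

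\emph{Contractivity of the reference generators.} This is the step that needs care. \Cref{lem:app-per-site-influence} requires $\mathcal C_\alpha=\LL_{\alpha,c}+V_{\alpha,\mathrm{far}}$ to generate a unital contraction, and the $V_B$ alone are not Lindbladians. Because the decompositions share supports, I would rewrite $V_{\alpha,\mathrm{far}}=\sum_{B\cap A_\alpha=\varnothing}(\GG_B-\LL_B)$ and absorb the $-\LL_B$ terms into $\LL_{\alpha,c}$. Every $\LL_B$ with $B\cap A_\alpha=\varnothing$ commutes with $\MM_\alpha$ and does not belong to $\mathcal S_{\MM_\alpha}$, so it already lies in $\LL_{\alpha,c}$. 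This gives
\[
\mathcal C_\alpha=\sum_{\substack{B\cap A_\alpha\neq\varnothing,\ B\notin\mathcal S_{\MM_\alpha}\\ [\MM_\alpha,\LL_B]=0}}\LL_B+\sum_{B\cap A_\alpha=\varnothing}\GG_B,
\]
a sum of Lindblad generators. Hence $\mathcal C_\alpha$ is itself a Lindbladian, $e^{t\mathcal C_\alpha^*}$ is a unital contraction, and $[\MM_\alpha,\mathcal C_\alpha]=0$ by disjointness of supports together with the commuting choice.

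\emph{Influence bounds.} With these hypotheses, \cref{eq:app-local-perturbation-bounded-body-load} gives $\kappa_{\GG}^{\mathrm{loc}}\leq\kappa_{\LL}^{\sharp}+4a\mathfrak d\,rs_V$, which is \cref{eq:app-difference-sharp-load}. \Cref{eq:app-local-perturbation-base-load} gives $\kappa_{\LL}^{\sharp}\leq4as_{\LL,\mathfrak U}$. Substituting $s_V\leq s_{\LL}+s_{\GG}$ yields \cref{eq:app-difference-coarse-load}.

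\emph{Conclusion.} Assume $\kappa_{\GG}^{\mathrm{loc}}t_{\mathrm{loc}}\leq1/2-\Delta$. The last sentence of \cref{lem:app-per-site-influence} lets \cref{prop:app-rapid-mixing-bulk-cover} be applied with $\kappa^{\mathrm{loc}}_{\GG}$ in place of $\kappa^{\mathrm{bulk}}_{\GG}$. This gives a unique fixed state and the bound $4D_\GG Me^{-\gamma_\GG t}$ with $\gamma_\GG\geq\Delta/(2t_{\mathrm{loc}})$. Uniform boundedness of $t_{\mathrm{loc}}$ makes this decay rate size-independent, and $D_\GG$ is bounded by constants depending only on $\Delta$. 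For $M=\operatorname{poly}(|\Lambda|)$, we may take one cover term per site-covering bulk, using bounded incidence $\mathfrak d$, as in the main-text setting. This gives rapid mixing with constant decay rate.
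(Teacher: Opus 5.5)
Your proposal is correct and follows the paper's proof essentially step by step. The steps match: local trace annihilation of $V$; rewriting $\mathcal C_\alpha$ as the commuting near terms $\LL_B$ plus the far terms $\GG_B$, so that \cref{lem:app-per-site-influence} applies; the triangle inequality for $s_V$; and the Volterra/oscillator-norm conclusion with rate at least $\Delta/(2t_{\mathrm{loc}})$. The only loose point is your final remark on $M$: you do not need to prune the cover, because bounded incidence directly gives $M\leq\sum_\alpha|A_\alpha|\leq\mathfrak d|\Lambda|$, which is how the paper obtains a prefactor that grows at most linearly.
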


\begin{proof}[Proof of \cref{cor:app-difference}]
\prooflabel{cor:app-difference}{proof:app-difference}
Since each local Lindbladian is trace-annihilating on its support,
\begin{align}
    V_B^*(\BI_B\otimes O_{B^c}) &= \GG_B^*(\BI_B\otimes O_{B^c})-\LL_B^*(\BI_B\otimes O_{B^c})=0.
\end{align}
Thus $V$ is locally trace-annihilating. For every $\alpha$, the reference generator in \cref{eq:app-local-perturbation-reference-generator} is
\begin{align}
    \mathcal C_\alpha &= \sum_{\substack{B\in\mathcal S\setminus\mathcal S_{\MM_\alpha}\\
    B\cap A_\alpha\neq\varnothing\\ [\MM_\alpha,\LL_B]
    =0}}\LL_B+\sum_{\substack{B\in\mathcal S\\
    B\cap A_\alpha=\varnothing}}\GG_B.
\end{align}
This is a Lindblad generator. Therefore \cref{lem:app-per-site-influence} gives \cref{eq:app-scalar-volterra} and \cref{eq:app-difference-sharp-load}. \cref{eq:app-local-perturbation-base-load} gives the bound on $\kappa_{\LL}^{\sharp}$, while the triangle inequality gives
\begin{align}
    s_V &\leq \max_{x\in\Lambda}\sum_{B\ni x}\bigl(||\GG_B||_{1\to1}+||\LL_B||_{1\to1}\bigr)\leq s_{\GG}+s_{\LL}.
\end{align}
These two estimates give \cref{eq:app-difference-coarse-load}. The conclusion follows by applying \cref{lem:app-volterra-decay,lem:app-trace-oscillator} to \cref{eq:app-scalar-volterra} with the exact coefficient $\kappa_{\GG}^{\mathrm{loc}}$. The assumed margin gives a decay rate at least $\Delta/(2t_{\mathrm{loc}})$ and a uniformly bounded Volterra prefactor. Since $M\leq\mathfrak d|\Lambda|$, the trace-norm prefactor grows at most linearly with the system size.
\end{proof}

Perturbations of jump operators have also been controlled in proofs of rapid mixing for Gibbs samplers at high temperature \cites[Appendix~C]{RouzeFrancaAlhambra2024b}[Appendix~A]{SmidMeisterBertaBondesan2025rm}, as well as in ground-state preparation for weak perturbations of noninteracting fermionic Hamiltonians \cite[Appendix~I]{ZhanDingHuhnGrayPreskillChanLin2025}. In those works, the estimates concern the filtered jump operators associated with their specific state-preparation constructions.

\begin{corollary}[Perturbation of Hamiltonians and jump operators]
\label{cor:app-perturbed-hamiltonians-jumps}\resulttoc{cor:app-perturbed-hamiltonians-jumps}{Perturbation of Hamiltonians and jump operators}
Consider the Lindbladian families $(\GG_{\varepsilon,N})_N$, with $\GG_\varepsilon=\GG_{\varepsilon,N}$ at each size and
\begin{align}
    \GG_\varepsilon(\rho) &:= -i[H_0+\varepsilon H_1,\rho] +\sum_{B,\mu} \Diss{L_{B,\mu,0}+\varepsilon L_{B,\mu,1}}(\rho), \qquad H_j=\sum_Xh_{X,j},
\end{align}
where $\varepsilon\in\mathbb R$, $\mu$ labels the different jump operators supported on $B$, $h_{X,j}=h_{X,j}^\dagger$ for $j=0,1$, and all displayed local terms have supports of size at most $r$. Let $\LL:=\GG_0$, with the local decomposition given by the displayed Hamiltonian terms and dissipators, and suppose that this decomposition contains a bulk-dissipative cover $\mathfrak U$ with local mixing time $t_{\mathrm{loc}}$. Assume that $t_{\mathrm{loc}},a,\mathfrak d$ are bounded uniformly in the system size. Define
\begin{align}
    s_1 &:= \max_{x\in\Lambda} \left( 2\sum_{X\ni x}||h_{X,1}|| +4\sum_{\substack{B\ni x\\
    \mu}} ||L_{B,\mu,0}||||L_{B,\mu,1}|| \right), \notag\\
    s_2 &:= 2\max_{x\in\Lambda} \sum_{\substack{B\ni x\\
    \mu}} ||L_{B,\mu,1}||^2.
\end{align}
The same cover is used for every $\varepsilon$, and \cref{eq:app-local-perturbation-base-load} gives $\kappa_{\LL}^{\sharp}\leq4a\,s_{\LL,\mathfrak U}$. If
\begin{align}
    \bigl(\kappa_{\LL}^{\sharp}+4a\mathfrak d\,r(|\varepsilon|s_1+\varepsilon^2s_2)\bigr)t_{\mathrm{loc}}&\leq\frac12-\Delta,
\end{align}
for some $\Delta>0$ independent of the system size, then $(\GG_{\varepsilon,N})_N$ has a unique fixed point at each size and is rapidly mixing with constant decay rate. No tracelessness assumption on the jump operators is required.
\end{corollary}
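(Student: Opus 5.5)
The plan is to reduce the statement to \cref{cor:app-difference}, or more precisely to \cref{lem:app-per-site-influence} with \cref{prop:app-rapid-mixing-bulk-cover}. We set $\LL=\GG_0$ and $V:=\GG_\varepsilon-\GG_0$, using the same supports as in the displayed decomposition. For each Hamiltonian support $X$ set $V_X:=-i\varepsilon[h_{X,1},\cdot]$. For each dissipator label $(B,\mu)$ set
\begin{align*}
V_{B,\mu}:=\Diss{L_{B,\mu,0}+\varepsilon L_{B,\mu,1}}-\Diss{L_{B,\mu,0}}.
\end{align*}
Each $V_{B,\mu}$ is the difference of two Lindblad generators supported on $B$, and $V_X$ is a commutator with an operator on $X$. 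Therefore every term is locally trace-annihilating in the sense of \cref{eq:app-perturbation-locally-trace-preserving}, and no tracelessness of the jumps is needed. The reference generators $\mathcal C_\alpha$ of \cref{eq:app-local-perturbation-reference-generator} are formed from terms $\LL_B$ commuting with $\MM_\alpha$ together with terms $V_B=\GG_{\varepsilon,B}-\LL_B$ on supports disjoint from $A_\alpha$. We will instead absorb the full perturbed terms $\GG_{\varepsilon,B}$ for $B\cap A_\alpha=\varnothing$, as in the proof of \cref{cor:app-difference}. The resulting $\mathcal C_\alpha$ is a genuine Lindblad generator and gives a unital contraction, so \cref{lem:app-per-site-influence} applies.

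The key quantitative step is to bound the per-site strength $s_V$. For the Hamiltonian part, $\|[h,\cdot]\|_{1\to1}\le 2\|h\|$, even after tensoring with the identity. For the dissipative part, expand in powers of $\varepsilon$. Write $L_0=L_{B,\mu,0}$ and $L_1=L_{B,\mu,1}$. The linear term is
\begin{align*}
\varepsilon\bigl(L_0\rho L_1^\dagger+L_1\rho L_0^\dagger-\tfrac12\{L_0^\dagger L_1+L_1^\dagger L_0,\rho\}\bigr),
\end{align*}
whose induced trace norm is at most $4|\varepsilon|\|L_0\|\|L_1\|$. The quadratic term is $\varepsilon^2\Diss{L_1}$, with norm at most $2\varepsilon^2\|L_1\|^2$. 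Summing over supports containing a site $x$ gives $s_V\le |\varepsilon|s_1+\varepsilon^2 s_2$. Since all supports have size at most $r$, \cref{eq:app-local-perturbation-bounded-body-load} then yields
\begin{align*}
\kappa_{\GG_\varepsilon}^{\mathrm{loc}}\le\kappa_{\LL}^{\sharp}+4a\mathfrak d\,r(|\varepsilon|s_1+\varepsilon^2s_2),
\end{align*}
and \cref{eq:app-local-perturbation-base-load} supplies $\kappa_{\LL}^{\sharp}\le4a\,s_{\LL,\mathfrak U}$.

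To conclude, the hypothesis gives $\kappa^{\mathrm{loc}}_{\GG_\varepsilon}t_{\mathrm{loc}}\le\frac12-\Delta$. \cref{lem:app-volterra-decay} applied to \cref{eq:app-scalar-volterra} then gives exponential decay of the oscillator norm at rate at least $\Delta/(2t_{\mathrm{loc}})$ with a bounded prefactor. Combined with \cref{lem:app-trace-oscillator} and $M\le\mathfrak d|\Lambda|$, this yields a unique fixed point and rapid mixing with constant decay rate. The main obstacle I anticipate is making sure the norm bounds on the cross terms hold in the ``tensored with identity'' sense and that absorbing the far perturbed terms into $\mathcal C_\alpha$ preserves commutation with $\MM_\alpha$. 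Disjoint supports handle the commutation, and the fact that the estimates are stable under ancilla extension handles the norm bounds.
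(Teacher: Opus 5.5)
Your proposal is correct and follows the paper's proof essentially step for step: the same splitting $V_B=\GG_{\varepsilon,B}-\LL_B$, the same cross-term expansion with norm bounds $2\|h\|$, $4\|L_0\|\|L_1\|$ and $2\|L_1\|^2$, the same reference generator $\mathcal C_\alpha$, and the same appeal to \cref{lem:app-per-site-influence}, \cref{lem:app-volterra-decay} and \cref{lem:app-trace-oscillator}. Your ``instead'' is not actually a deviation: $\LL_{\alpha,c}$ already contains every far term $\LL_B$, so the definition $\LL_{\alpha,c}+V_{\alpha,\mathrm{far}}$ in \cref{eq:app-local-perturbation-reference-generator} is exactly the generator you describe.
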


\begin{proof}[Proof of \cref{cor:app-perturbed-hamiltonians-jumps}]
\prooflabel{cor:app-perturbed-hamiltonians-jumps}{proof:app-perturbed-hamiltonians-jumps}
For operators $L_0,L_1$ supported on $B$, define
\begin{align}
    \mathcal C[L_0,L_1](\rho) &:= L_0\rho L_1^\dagger+L_1\rho L_0^\dagger-\frac12\{L_0^\dagger L_1+L_1^\dagger L_0,\rho\}.
\end{align}
Expanding the dissipator gives
\begin{align}
    \Diss{L_0+\varepsilon L_1} &= \Diss{L_0}+\varepsilon\mathcal C[L_0,L_1]+\varepsilon^2\Diss{L_1}. \label{eq:app-jump-expansion}
\end{align}
If $X=\BI_B\otimes O_{B^c}$, then $X$ commutes with $L_0$ and $L_1$, and hence
\begin{align}
    \mathcal C[L_0,L_1]^*(X) &= L_0^\dagger XL_1+L_1^\dagger XL_0-\frac12\{L_0^\dagger L_1+L_1^\dagger L_0,X\}=0.
\end{align}
Thus the cross term is locally trace-annihilating. Moreover,
\begin{align}
    ||\mathcal C[L_0,L_1]||_{1\to1} &\leq 4||L_0||||L_1||, \notag\\
    ||\Diss{L_1}||_{1\to1} &\leq 2||L_1||^2, \notag\\
    ||-i[h,\,\cdot\,]||_{1\to1} &\leq 2||h||, \label{eq:app-jump-local-norms}
\end{align}
where the estimates follow from $||A\rho B||_1\leq||A||||B||||\rho||_1$ and the triangle inequality. It follows that $\GG_\varepsilon-\LL$ is locally trace-annihilating and has local interaction strength at most $|\varepsilon|s_1+\varepsilon^2s_2$.

Align the displayed Hamiltonian terms and dissipators, adding zero terms when necessary, and write
\begin{align}
    \LL &= \sum_{B\in\mathcal S}\LL_B, & \GG_\varepsilon &= \sum_{B\in\mathcal S}\GG_{\varepsilon,B}, & V_B&:=\GG_{\varepsilon,B}-\LL_B.
\end{align}
For every $\alpha$, the reference generator in \cref{eq:app-local-perturbation-reference-generator} is
\begin{align}
    \mathcal C_\alpha &= \sum_{\substack{B\in\mathcal S\setminus\mathcal S_{\MM_\alpha}\\
    B\cap A_\alpha\neq\varnothing\\ [\MM_\alpha,\LL_B]
    =0}}\LL_B+\sum_{\substack{B\in\mathcal S\\
    B\cap A_\alpha=\varnothing}}\GG_{\varepsilon,B}.
\end{align}
Every term in this expression is a Lindblad generator. Hence \cref{lem:app-per-site-influence} gives $\kappa_{\GG_\varepsilon}^{\mathrm{loc}}\leq \kappa_{\LL}^{\sharp}+4a\mathfrak d\,r(|\varepsilon|s_1+\varepsilon^2s_2)$. The assumed margin and \cref{lem:app-volterra-decay,lem:app-trace-oscillator} give a decay rate at least $\Delta/(2t_{\mathrm{loc}})$ and a trace-norm prefactor of order $M\leq\mathfrak d|\Lambda|$, proving the stated uniform mixing bound. The calculation also shows directly that no tracelessness assumption is needed.
\end{proof}

\begin{corollary}[Stability of the Gibbs sampler in \cite{RouzeFrancaAlhambra2024b}]
\label{cor:app-rfa-gibbs-sampler-stability}\resulttoc{cor:app-rfa-gibbs-sampler-stability}{Stability of the Gibbs sampler}
Let $\Lambda$ be a $D$-dimensional lattice of qubits and let $H=\sum_Xh_X$ have finite interaction range and be $(k,l)$-local: each term acts on at most $k$ sites, and each site meets at most $l$ terms. Assume $||h_X||\leq h$, with the range and $D,h,k,l$ bounded uniformly in the system size. Write $\LL^{(\beta)}=\LL_N^{(\beta)}$ for the family $(\LL_N^{(\beta)})_N$ of Gibbs samplers in \cite{RouzeFrancaAlhambra2024b} with single-qubit Pauli bath couplings. The infinite-temperature generator $\LL^{(0)}$ contains the singleton bulk-dissipative cover $\mathfrak U_0$ defined in \cref{eq:app-rfa-bulk-cover}, with local mixing time $t_{\mathrm{loc}}^{(0)}=\sqrt{2}e^{1/4}\log4$ and $a=\mathfrak d=1$. There exists a size-independent $\beta_*>0$ such that, for every fixed locality $r$ and interaction-strength bound $s_{\KK}$, there is a size-independent $\varepsilon_*>0$ with the following property: for every $r$-local Lindbladian family $(\KK_N)_N$ with interaction strength at most $s_{\KK}$, $(\LL_N^{(\beta)}+\varepsilon\KK_N)_N$ has a unique fixed point at each size and is rapidly mixing with constant decay rate whenever $0\leq\beta<\beta_*$ and $0\leq\varepsilon<\varepsilon_*$.
\label{eq:app-rfa-stability-radius}
\end{corollary}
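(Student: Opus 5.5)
The plan is to treat $\LL^{(\beta)}+\varepsilon\KK$ as a perturbation of the infinite-temperature generator $\LL^{(0)}$ and apply \cref{cor:app-difference}, or equivalently \cref{lem:app-per-site-influence} with \cref{prop:app-rapid-mixing-bulk-cover}, using the fixed singleton cover $\mathfrak U_0$. Write
\begin{align*}
    \GG_{\beta,\varepsilon}:=\LL^{(\beta)}+\varepsilon\KK=\LL^{(0)}+V_\beta+\varepsilon\KK,\qquad V_\beta:=\LL^{(\beta)}-\LL^{(0)}.
\end{align*}
At $\beta=0$ the sampler of \cite{RouzeFrancaAlhambra2024b} with Pauli couplings reduces to a sum of single-qubit depolarizing-type generators $\MM_x$ that commute pairwise. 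Hence $\kappa_{\LL^{(0)}}^{\sharp}=0$.

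\textbf{Step 1: local mixing time of the cover.} For each site $x$, I would check that $\ker\MM_x^*=\BI_x\otimes\BB(\HH_{x^c})$, so that the bulk is $\{x\}$. I would then compute $\|e^{t\MM_x}-P_{\MM_x}\|_\diamond$ explicitly. The filter-function normalization of \cite{RouzeFrancaAlhambra2024b} gives the rate constant $e^{-1/4}/\sqrt2$, and solving $\|e^{t\MM_x}-P_{\MM_x}\|_\diamond\le 1/2$ gives the quoted value $t_{\mathrm{loc}}^{(0)}$. This yields a cover with $a=\mathfrak d=1$.

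\textbf{Step 2: quasi-locality of the thermal perturbation.} The jumps of $\LL^{(\beta)}$ are filtered Heisenberg evolutions of Paulis, so $V_\beta$ is only quasilocal. I would decompose $V_\beta=\sum_B V_{\beta,B}$ by truncating the filtered jumps to balls of increasing radius, in the standard telescoping (Lieb--Robinson) way. Each piece is then locally trace-annihilating, by the argument of \cref{cor:app-perturbed-hamiltonians-jumps}: differences of dissipators with nested truncated jumps, plus cross terms, annihilate $\BI_B\otimes O_{B^c}$.

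With this decomposition, the high-temperature estimates of \cite[Appendix~C]{RouzeFrancaAlhambra2024b} should give
\begin{align*}
    \zeta(V_\beta)=\max_x\sum_{B\ni x}|B|\,\|V_{\beta,B}\|_{1\to1}\le c\,\beta,
\end{align*}
with $c$ depending only on $D,h,k,l$ and the range. The factor $|B|$ is absorbed by the exponential decay in the radius. Since $\zeta$ is used rather than a bounded-support bound, \cref{eq:app-local-perturbation-first-moment-load} applies directly. I also need each $\mathcal C_\alpha$ to be a Lindblad generator. Here I would group the far terms of $V_\beta$ together with the corresponding terms of $\LL^{(0)}$, so that they recombine into genuine truncated thermal dissipators, as in \cref{cor:app-difference}.

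\textbf{Step 3: combine.} With $a=\mathfrak d=1$, \cref{lem:app-per-site-influence} gives
\begin{align*}
    \kappa^{\mathrm{loc}}_{\GG_{\beta,\varepsilon}}\le 4c\beta+4r\varepsilon s_{\KK}.
\end{align*}
I would choose $\beta_*$ so that $4c\beta_*t_{\mathrm{loc}}^{(0)}\le 1/4$, and then $\varepsilon_*:=1/(32\,r s_{\KK}t_{\mathrm{loc}}^{(0)})$. This leaves a margin of at least $1/8$ in $\kappa t_{\mathrm{loc}}\le 1/2-\Delta$. \cref{prop:app-rapid-mixing-bulk-cover} then gives a unique fixed point and the bound $4DMe^{-\gamma t}$ with $M=|\Lambda|$, which is rapid mixing with constant decay rate.

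\textbf{Main obstacle.} The hard step is Step 2. It requires a local decomposition of the quasilocal difference $\LL^{(\beta)}-\LL^{(0)}$ whose first moment is $O(\beta)$ uniformly in the system size, and whose far parts still form Lindblad generators. I would first try to import the truncation bounds for filtered jump operators from \cite{RouzeFrancaAlhambra2024b} verbatim, and bound dissipator differences via \cref{eq:app-jump-expansion} and \cref{eq:app-jump-local-norms}.
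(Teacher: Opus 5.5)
Your route is viable, but it is organized differently from the paper's proof.

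\textbf{The paper's route.} The paper never builds a locally trace-annihilating decomposition of $\LL^{(\beta)}-\LL^{(0)}$, and it does not apply \cref{lem:app-per-site-influence} as stated. It keeps the bath-site decomposition $\LL^{(\beta)}=\sum_b\LL^{(\beta)}_b$ and takes the reference generator $\mathcal C_a=\sum_{b\neq a}\LL^{(\beta)}_b+\varepsilon\sum_{B\not\ni a}\KK_B$. This is a Lindbladian for free, but it does not commute with $\delta_a$. The noncommutation is moved into the source term $\mathcal F^{(\beta)}_a=\delta_a\bigl((\LL^{(\beta)}_a)^*-\MM_a^*\bigr)+\sum_{b\neq a}[\delta_a,(\LL^{(\beta)}_b)^*]$. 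The singleton identity $\delta_a\MM_a^*=-\lambda_0\delta_a$ then gives a scalar damping that commutes with $\mathcal C_a^*$. This yields a Volterra inequality with kernel $e^{-\lambda_0 t}$ and $b=\kappa_\beta+4\varepsilon r s_{\KK}$. The only thermal input is $\sum_a\|\mathcal F^{(\beta)}_a(O)\|\le\kappa_\beta\sum_c\|\delta_c O\|$, which is exactly the commutator and diagonal-difference estimate proved in \cite{RouzeFrancaAlhambra2024b}.

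\textbf{Your route.} You instead force $[\MM_a,\mathcal C_a]=0$ by truncation, and this does work. Take the nested telescope $V_{\beta,b,0}=\LL^{(\beta)}_{b,0}-\MM_b$ and $V_{\beta,b,R}=\LL^{(\beta)}_{b,R}-\LL^{(\beta)}_{b,R-1}$. The pieces far from $a$, together with $\sum_{b\neq a}\MM_b$, telescope to $\sum_{b\neq a}\LL^{(\beta)}_{b,d(a,b)-1}$, which is a genuine Lindbladian. The general lemma then gives $\kappa^{\mathrm{loc}}\le 4\zeta(V_\beta)+4\varepsilon r s_{\KK}$.

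\textbf{Trade-off.} What you gain is that the Gibbs sampler becomes one more instance of the \cref{cor:app-difference} machinery, with an explicit threshold. What it costs is the bound $\zeta(V_\beta)=O(\beta)$, with decay in $R$ fast enough to beat both the $|B|$ weight and the number of balls through a site. This bound is not stated in \cite{RouzeFrancaAlhambra2024b} in this form. You would have to rederive it from Lieb--Robinson truncation bounds for the filtered jumps and the coherent term, uniformly as $\beta\to0$. Your crude estimate $\|\delta_a V_B^*\|_{\infty\to\infty}\le 2\|V_B\|_{1\to1}$ also discards the commutator cancellations that the paper's source term retains. That is harmless here only because $\beta$ is small.
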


\begin{proof}[Proof of \cref{cor:app-rfa-gibbs-sampler-stability}]
\prooflabel{cor:app-rfa-gibbs-sampler-stability}{proof:app-rfa-gibbs-sampler-stability}
Write $\KK=\KK_N$ and define
\begin{align}
    \lambda_0 &:= \frac{1}{\sqrt{2}e^{1/4}},\qquad t_{\mathrm{loc}}^{(0)}:=\lambda_0^{-1}\log4.
\end{align}
Define the cover
\begin{align}
    \mathfrak U_0 &:= \left\{\bigl(\{a\},\{a\},\MM_a\bigr):a\in\Lambda\right\}, & \MM_a^*(O) &:= \lambda_0\left(\frac12\BI_a\otimes\Tr_a(O)-O\right). \label{eq:app-rfa-bulk-cover}
\end{align}

We first verify that $\mathfrak U_0$ is a bulk-dissipative cover with local mixing time $t_{\mathrm{loc}}^{(0)}$ and that $\LL^{(0)}$ contains it. Using $ \mathcal R_a(O) := \frac12\BI_a\otimes\Tr_a(O), \MM_a^* = \lambda_0(\mathcal R_a-\operatorname{id})$, at infinite temperature we have
\begin{align}
    \LL^{(0)} &= \sum_{a\in\Lambda}\MM_a.
\end{align}
The asymptotic projection of $\MM_a^*$ is $P_{\MM_a}^*=\mathcal R_a$, and
\begin{align}
    e^{t\MM_a^*}-P_{\MM_a}^* &=e^{-\lambda_0t}\left(\operatorname{id}-\mathcal R_a\right).
\end{align}
Therefore
\begin{align}
    ||e^{t\MM_a}-P_{\MM_a}||_\diamond &\leq2e^{-\lambda_0t}.
\end{align}
Moreover,
\begin{align}
    \ker\MM_a^* &=\BI_a\otimes\mathcal B(\mathcal H_{a^c}).
\end{align}
At $t=t_{\mathrm{loc}}^{(0)}$, this diamond-norm bound is at most $1/2$. The regions $A_a=A_a^\circ=\{a\}$ cover the lattice, each $\MM_a$ is a term of $\LL^{(0)}$, and their size and incidence are both one. Hence $\mathfrak U_0$ is the asserted bulk-dissipative cover and $\LL^{(0)}$ contains it.

Let $\kappa_\beta$ denote the size-independent coefficient in the commutator and diagonal-difference bounds below. We now combine the commutator estimate of \cite[Appendix~B.1]{RouzeFrancaAlhambra2024b} with the local perturbation estimate. Write $\LL^{(\beta)}=\sum_b\LL_b^{(\beta)}$, where each bath-site term $\LL_b^{(\beta)}$ is a Lindbladian, and put $\delta_a=\operatorname{id}-\mathcal R_a$. Define
\begin{align}
    \mathcal F_a^{(\beta)} &:=\delta_a((\LL_a^{(\beta)})^*-\MM_a^*) +\sum_{b\neq a}[\delta_a,(\LL_b^{(\beta)})^*].
\end{align}
In this work \cite[Eq.~(25)]{RouzeFrancaAlhambra2024b}, the commutators and the diagonal difference are bounded separately. Summing those estimates gives, for every observable $O$,
\begin{align}
    \sum_a||\mathcal F_a^{(\beta)}(O)|| &\leq\kappa_\beta\sum_c||\delta_c O||. \label{eq:app-rfa-source-bound}
\end{align}
For the perturbation, the singleton reset telescope used in \cref{lem:app-per-site-influence} gives
\begin{align}
    \sum_a\left\|\sum_{B:a\in B}\delta_a\KK_B^*(O)\right\| &\overset{(1)}{\leq}4\sum_B |B|\,||\KK_B||_{1\to1}\sum_{c\in B}||\delta_c O||,\notag\\
    &\overset{(2)}{\leq}4rs_{\KK}\sum_c||\delta_c O||, \label{eq:app-rfa-local-bound}
\end{align}
where $(1)$ uses $\KK_B^*(\BI)=0$, $||\delta_a||\leq2$, and the reset telescoping sum \cref{eq:app-reset-telescope} on $B$, and $(2)$ exchanges the sums and uses $|B|\leq r$ and $\sum_{B\ni c}||\KK_B||_{1\to1}\leq s_{\KK}$.
Set $\GG:=\LL^{(\beta)}+\varepsilon\KK$ and $O_t:=e^{t\GG^*}(O)$. For each site $a$, define the Lindblad generator
\begin{align}
    \mathcal C_a&:=\sum_{b\neq a}\LL_b^{(\beta)}+\varepsilon\sum_{B:a\notin B}\KK_B.
\end{align}
Differentiating and grouping the terms gives
\begin{align}
    \frac{d}{dt}(\delta_a O_t)
    &=\delta_a\bigl((\LL^{(\beta)})^*+\varepsilon\KK^*\bigr)(O_t),\notag\\
    &\overset{(1)}{=}\left(\sum_{b\neq a}(\LL_b^{(\beta)})^*-\lambda_0\operatorname{id}\right)(\delta_a O_t)
      +\mathcal F_a^{(\beta)}(O_t)
      +\varepsilon\sum_{B:a\notin B}\KK_B^*(\delta_a O_t)
      +\varepsilon\sum_{B:a\in B}\delta_a\KK_B^*(O_t),\notag\\
    &\overset{(2)}{=}(\mathcal C_a^*-\lambda_0\operatorname{id})(\delta_a O_t)
      +\mathcal F_a^{(\beta)}(O_t)
      +\varepsilon\sum_{B:a\in B}\delta_a\KK_B^*(O_t). \label{eq:app-rfa-damped-source}
\end{align}
Here $(1)$ uses $\delta_a(\LL_b^{(\beta)})^*=(\LL_b^{(\beta)})^*\delta_a+[\delta_a,(\LL_b^{(\beta)})^*]$, $\delta_a\MM_a^*=-\lambda_0\delta_a$, and the definition of $\mathcal F_a^{(\beta)}$; perturbation terms supported away from $a$ commute with $\delta_a$. Step $(2)$ is the definition of $\mathcal C_a$.
The scalar damping commutes with $\mathcal C_a^*$, whose semigroup is a unital contraction. Integrating \cref{eq:app-rfa-damped-source} gives
\begin{align}
    \delta_a O_t
    &=e^{-\lambda_0t}e^{t\mathcal C_a^*}\delta_a O
      +\int_0^t e^{-\lambda_0(t-s)}e^{(t-s)\mathcal C_a^*}
      \left(\mathcal F_a^{(\beta)}(O_s)
      +\varepsilon\sum_{B:a\in B}\delta_a\KK_B^*(O_s)\right)ds.
\end{align}
Taking norms, summing over $a$, and using \cref{eq:app-rfa-source-bound,eq:app-rfa-local-bound}, we obtain, for $S(t):=\sum_a\|\delta_a O_t\|$ and $b:=\kappa_\beta+4\varepsilon r s_{\KK}$,
\begin{align}
    S(t)&\leq e^{-\lambda_0t}S(0)+b\int_0^t e^{-\lambda_0(t-s)}S(s)\,ds. \label{eq:app-rfa-volterra}
\end{align}
A sufficient condition for a positive decay rate is
\begin{align}
    \left(\kappa_\beta+4\varepsilon r s_{\KK}\right)t_{\mathrm{loc}}^{(0)} &<\frac12. \label{eq:app-rfa-stability-condition}
\end{align}
Since $e^{-\lambda_0t}\leq2^{-\lfloor t/t_{\mathrm{loc}}^{(0)}\rfloor}$, \cref{lem:app-volterra-decay} applies under \cref{eq:app-rfa-stability-condition}. Combining its oscillator decay with \cref{lem:app-trace-oscillator}, for the reset cover $\mathfrak U_0$ with $M=|\Lambda|$, proves uniqueness and the stated mixing bound. For fixed $\beta,r,s_{\KK},\varepsilon$ satisfying \cref{eq:app-rfa-stability-condition}, the coefficient $b$ and the local mixing time are uniform in the system size, so the decay rate is constant.

The high-temperature estimates of \cite{RouzeFrancaAlhambra2024b} give $\beta_*>0$ and a uniform margin $\Delta>0$ such that $\kappa_\beta t_{\mathrm{loc}}^{(0)}\leq1/2-\Delta$ for $0\leq\beta<\beta_*$. Choosing $\varepsilon_*>0$ so that $4\varepsilon_*r s_{\KK}t_{\mathrm{loc}}^{(0)}\leq\Delta/2$ retains a uniform positive margin in \cref{eq:app-rfa-stability-condition}, proving the corollary and \cref{cor:main-gibbs-stability}.
\end{proof}

We next apply the cover criterion to the parent Lindbladian for simple injective matrix product density operators (MPDOs) at a renormalization fixed point constructed in \cite{liu2026parent}.

\begin{corollary}[Stability of simple injective MPDO parent Lindbladians]
\label{cor:app-mpdo-parent-stability}\resulttoc{cor:app-mpdo-parent-stability}{Stability of simple injective MPDO parent Lindbladians}
Consider the parent Lindbladian family $(\LL_N)_N$ of the simple injective renormalization-fixed-point construction in \cite[Theorem~III.9 and Section~V~A]{liu2026parent}, on a periodic chain of $N\geq5$ sites. Fix the local tensor data independently of $N$, with sites understood after any fixed blocking. Work on the complete canonical physical space with positive sector weights $a_u b_v>0$. Each $\LL_N$ contains the three-site bulk-dissipative cover $\mathfrak U_{\mathrm{MPDO}}$ defined in \cref{eq:app-mpdo-cover}, with local mixing time $t_{\mathrm{loc}}=\log4$ and $a=\mathfrak d=3$. For every fixed locality $k$ and per-site interaction-strength bound $s_{\KK}$, there is a size-independent $\varepsilon_*>0$ such that, for every $k$-local Lindbladian family $(\KK_N)_N$ with interaction strength at most $s_{\KK}$, $(\LL_N+\varepsilon\KK_N)_N$ has a unique stationary state at each size and is rapidly mixing with constant decay rate for all $0\leq\varepsilon\leq\varepsilon_*$. At $\varepsilon=0$, the stationary state is the specified MPDO.
\end{corollary}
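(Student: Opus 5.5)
The proof will use \cref{prop:stability-strong-bulk-dissipation}, in its general form \cref{app:prop:stability-strong-bulk-dissipation}, together with \cref{lem:app-per-site-influence}. So the main work is to exhibit a three-site bulk-dissipative cover $\mathfrak U_{\mathrm{MPDO}}$ inside the parent Lindbladian of \cite{liu2026parent}, with constant local mixing time and \emph{zero} unperturbed influence $\kappa_{\LL}^{\sharp}=0$.

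\textbf{Step 1: describe the local terms.} At a renormalization fixed point, the simple injective MPDO has a local structure in which each site splits into left/right virtual factors with sector weights $a_u b_v$. The parent Lindbladian of \cite[Section~V~A]{liu2026parent} is then a sum of three-site terms $\LL_{\{x-1,x,x+1\}}=\mathcal T_x-\operatorname{id}$. Here $\mathcal T_x$ is a quantum channel that resets the degrees of freedom of site $x$ to their fixed-point marginal, conditioned on the boundary data carried by $x\pm1$. It is analogous to the conditional heat-bath reset $T_x$ in \cref{app:ex:ising-kernel-mismatch}.

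\textbf{Step 2: verify the channel properties.} From the fixed-point (zero correlation length) property I will check that each $\mathcal T_x$ is idempotent and that the $\mathcal T_x$ pairwise commute. I expect the resets to act on disjoint tensor factors: the right half of site $x-1$, all of $x$, and the left half of $x+1$, in the canonical sector decomposition. Commutativity will be the main obstacle. If the blocked form in \cite{liu2026parent} does not give commuting terms directly, I will fall back on \cref{app:prop:stability-strong-bulk-dissipation} with $\kappa_{\LL}^{\sharp}>0$ computed explicitly. That would require a bound on $\|\delta_\alpha\LL_B^*\|$, which the fixed-point structure should make vanish or keep small.

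\textbf{Step 3: define the cover and check bulk dissipation.} Take
\[
\MM_x:=\mathcal T_x-\operatorname{id}, \qquad A_x=\{x-1,x,x+1\}, \qquad A_x^\circ=\{x\}.
\]
This gives $a=\mathfrak d=3$ and $M=N$. Idempotence gives
\[
e^{t\MM_x}=\mathcal T_x+e^{-t}(\operatorname{id}-\mathcal T_x), \qquad P_{\MM_x}=\mathcal T_x,
\]
so $\|e^{t\MM_x}-P_{\MM_x}\|_\diamond\le 2e^{-t}$, and $t_{\mathrm{loc}}=\log4$. For bulk dissipation I must show $\ker\MM_x^*=\operatorname{Ran}\mathcal T_x^*\subseteq\BI_x\otimes\mathcal B(\mathcal H_{x^c})$. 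The Heisenberg reset traces site $x$ against a conditional state, and positivity of the sector weights $a_ub_v>0$ guarantees that this state is faithful on the complete canonical space. The identity is therefore the only surviving action on $x$.

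\textbf{Step 4: apply the stability result.} With commuting terms, \cref{cor:commuting-strong-bulk-dissipation} applies, giving $\kappa_{\LL}=0$ and the threshold \cref{eq:main-constant-perturbation-threshold} with $\Delta=1/2$, namely
\[
\varepsilon_*=\frac{1}{16\cdot 9\,k\,s_{\KK}\log4}.
\]
This yields uniqueness and rapid mixing with constant decay rate. At $\varepsilon=0$, the MPDO is stationary because each $\mathcal T_x$ preserves it by construction, so uniqueness identifies it as the stationary state.
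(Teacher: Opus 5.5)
This is essentially the paper's proof, down to the cover with the middle site in the bulk, $t_{\mathrm{loc}}=\log4$ from idempotence, $a=\mathfrak d=3$, the use of the commuting corollary, the identification of the MPDO at $\varepsilon=0$, and $\varepsilon_*=(144\,k\,s_{\KK}\log4)^{-1}$. The one point to fix is your expectation that the resets act on disjoint factors: $\mathcal T_x$ and $\mathcal T_{x+1}$ both erase $x_r$ and $(x+1)_l$, so their commutativity is a genuine fixed-point property and not a support argument. The paper cites \cite[Eqs.~(57)--(58)]{liu2026parent} for it; you can also check it directly, since both orders insert $\bigoplus_{w,w'}\eta_{u,w}\otimes\eta_{w,w'}\otimes\eta_{w',v}/(a_ub_v)$ on $(x-1)_r,x,x+1,(x+2)_l$, with $u,v$ the sectors of $x-1$ and $x+2$, using $\Tr\eta_{u,v}=a_ub_v$. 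Likewise, the bulk property follows from $\mathcal T_x$ depending on its input only through $\Tr_x$, not from faithfulness of the inserted state; positivity of $a_ub_v$ is what makes the channel well defined.
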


\begin{proof}[Proof of \cref{cor:app-mpdo-parent-stability}]
\prooflabel{cor:app-mpdo-parent-stability}{proof:app-mpdo-parent-stability}
The complete canonical physical space and bond data are
\begin{align}
    \HH&=\bigoplus_u\bigl(\HH_l^{(u)}\otimes\HH_r^{(u)}\bigr), \qquad \eta_{u,v}\succeq0,\qquad \Tr\eta_{u,v}=a_u b_v,\qquad \sum_u a_u b_u=1, \label{eq:app-mpdo-canonical-data}
\end{align}
where $\eta_{u,v}$ acts on $\HH_r^{(u)}\otimes\HH_l^{(v)}$. Let $\mathcal E_i$ be the three-site channel of Eq.~(55) of that reference, translated to sites $i,i+1,i+2$, with indices taken modulo $N$, and set
\begin{align}
    \LL=\LL_N&=\sum_{i=1}^N\LL_i,\qquad \LL_i=\mathcal E_i-\operatorname{id}.
\end{align}
Define
\begin{align}
    \mathfrak U_{\mathrm{MPDO}} &:=\left\{\bigl(A_i^\circ,\{i,i+1,i+2\},\LL_i\bigr):i=1,\ldots,N\right\} \label{eq:app-mpdo-cover}
\end{align}
and verify below that it is a bulk-dissipative cover with local mixing time $t_{\mathrm{loc}}=\log4$ and $a=\mathfrak d=3$, contained in $\LL$. Here $A_i^\circ$ is the bulk of $\LL_i$ defined in \cref{eq:app-bulk-dissipation}; it contains the middle site $i+1$.

We verify the cover hypotheses and apply the local-mixing-time criterion. On a patch of sites $1,2,3$, define the sector maps $Q_u:\HH\to\HH_l^{(u)}\otimes\HH_r^{(u)}$ by
\begin{align*}
    Q_u\left(\bigoplus_v\psi_v\right)&:=\psi_u,\qquad \psi_v\in\HH_l^{(v)}\otimes\HH_r^{(v)}.
\end{align*}
They satisfy $Q_uQ_v^\dagger=0$ for $u\neq v$, $Q_uQ_u^\dagger=\BI_{\HH_l^{(u)}\otimes\HH_r^{(u)}}$, and $\sum_uQ_u^\dagger Q_u=\BI_\HH$. Set
\begin{align}
    R_{u,v}(X) &:=\Tr_{1r,2,3l}\!\left[(Q_u\otimes\BI_2\otimes Q_v)X(Q_u^\dagger\otimes\BI_2\otimes Q_v^\dagger)\right],\notag\\
    \Omega_{u,v}&:=\bigoplus_w\bigl(\eta_{u,w}\otimes\eta_{w,v}\bigr). \label{eq:app-mpdo-channel-factors}
\end{align}
The channel of Eq.~(55) of \cite{liu2026parent} is
\begin{align}
    \mathcal E(X)&=\bigoplus_{u,v}\left[R_{u,v}(X)\otimes\frac{\Omega_{u,v}}{a_u b_v}\right], \label{eq:app-mpdo-channel}
\end{align}
with tensor factors placed in physical order: $R_{u,v}$ retains $1l,3r$, while $\Omega_{u,v}$ occupies $1r,2,3l$. The bond normalization gives
\begin{align}
    \Tr\Omega_{u,v} &\overset{(1)}{=}\sum_w(a_u b_w)(a_w b_v) \overset{(2)}{=}a_u b_v, \label{eq:app-mpdo-channel-normalization}
\end{align}
where $(1)$ uses the bond traces and $(2)$ uses $\sum_w a_w b_w=1$. Thus the channel measures the complete outer-sector decomposition, traces the selected factors, and inserts a normalized state. Orthogonality of the sectors then gives
\begin{align}
    R_{u,v}(\mathcal E(X))&=R_{u,v}(X)\frac{\Tr\Omega_{u,v}}{a_u b_v}=R_{u,v}(X), \qquad \mathcal E^2(X)=\mathcal E(X). \label{eq:app-mpdo-idempotence}
\end{align}
Consequently,
\begin{align}
    e^{t\LL_i}&=\mathcal E_i+e^{-t}(\operatorname{id}-\mathcal E_i),\notag\\
    \|e^{t\LL_i}-\mathcal E_i\|_\diamond&\leq2e^{-t}. \label{eq:app-mpdo-local-decay}
\end{align}
Each $\LL_i$ is a Lindbladian: if $F_\nu$ are Kraus operators of $\mathcal E_i$, their completeness relation gives $\mathcal E_i-\operatorname{id}=\sum_\nu\Diss{F_\nu}$.

Since the sector maps act only on the outer sites, $R_{u,v}(X)$ depends on $X$ only through $\Tr_2X$. Hence $\mathcal E=\Phi\circ\Tr_2$, where $\Phi$ performs the same outer-sector measurement and normalized-state insertion. Taking adjoints gives
\begin{align}
    \ker(\mathcal E^*-\operatorname{id}) &\overset{(1)}{=}\operatorname{Ran}\mathcal E^* \overset{(2)}{\subseteq}\BI_2\otimes\BB(\HH_{\{1,3\}}), \label{eq:app-mpdo-bulk-erasure}
\end{align}
where $(1)$ uses idempotence and $(2)$ uses that $(\Tr_2)^*$ inserts the identity at site $2$. This proves the bulk condition for the entire middle site, including its sector label, and remains true after tensoring with the identity outside the patch.

At $t=\log4$, \cref{eq:app-mpdo-local-decay} is at most $1/2$. Since $i+1\in A_i^\circ$ for every $i$, the bulk regions cover the chain. Each enlarged region has three sites, and each site belongs to three enlarged regions. Thus \cref{eq:app-mpdo-local-decay,eq:app-mpdo-bulk-erasure} give the claimed cover, with $M=N$, $t_{\mathrm{loc}}=\log4$, and $a=\mathfrak d=3$. The local terms commute by Section~V~A, Eqs.~(57)--(58), of \cite{liu2026parent}; triples sharing only one endpoint act on its opposite left and right factors within the same diagonal sector.
Write $\KK=\KK_N$. For every perturbation term $\KK_B$, $n_{\mathfrak U_{\mathrm{MPDO}}}(B)\leq\mathfrak d|B|\leq3k$, and hence $s_{\KK,\mathfrak U_{\mathrm{MPDO}}}\leq3ks_{\KK}$. The commuting incoming influence therefore satisfies
\begin{align}
    \kappa_\varepsilon^{\mathrm{com}}\leq4a\varepsilon s_{\KK,\mathfrak U_{\mathrm{MPDO}}}&\leq36\varepsilon k s_{\KK}.
\end{align}
A sufficient condition is therefore
\begin{align}
    36\varepsilon k s_{\KK}\,t_{\mathrm{loc}}&<\frac12,\qquad t_{\mathrm{loc}}=\log4. \label{eq:app-mpdo-stability-condition}
\end{align}
This implies $\kappa_\varepsilon^{\mathrm{com}}t_{\mathrm{loc}}<1/2$, and \cref{cor:commuting-strong-bulk-dissipation} proves uniqueness and rapid mixing with constant decay rate. At $\varepsilon=0$, the parent construction fixes the specified MPDO, which is therefore the unique stationary state. For fixed $k$ and $s_{\KK}>0$, choose $\varepsilon_*=(144k s_{\KK}\log4)^{-1}$. Then the left-hand side of \cref{eq:app-mpdo-stability-condition} is at most $1/4$ for every permitted perturbation and every $0\leq\varepsilon\leq\varepsilon_*$. When $s_{\KK}=0$, the perturbation vanishes and any positive threshold works. This proves the corollary and \cref{cor:main-mpdo-stability}.
\end{proof}

\subsection{Strong local noise and preparation of macroscopic order}

\begin{proof}[Proof of \cref{cor:main-local-noise}]
\prooflabel{cor:main-local-noise}{proof:main-local-noise}
Let $\LL=\LL_N=\sum_B\LL_B$ be a geometrically local Lindbladian on the fixed-dimensional lattice, with $\zeta(\LL)=\max_x\sum_{B\ni x}|B|\,\|\LL_B\|_{1\to1}$ bounded independently of the system size. Let $\mathcal S_{\NN}$ be a family of bounded, possibly overlapping regions of size at most $a$ that cover the lattice, with each site contained in at most $\mathfrak d$ regions, and write $\NN=\NN_N=\sum_{B\in\mathcal S_{\NN}}\NN_B$. Assume that $\NN$ is geometrically local and has uniformly bounded per-site interaction strength. The setting is illustrated in \cref{fig:local-noise-obstructs-order}a). Each $\NN_B$ may be any locally relaxing Lindbladian with a unique stationary state on its support $B$. Thus its whole support is a bulk. Suppose uniformly in $B$ that
\begin{align}
    \left\lVert e^{t_{\mathrm{noise}}\NN_B}-P_{\NN_B}\right\rVert_\diamond&\leq\frac12.
\end{align}
Let $\kappa_{\NN}^{\sharp}$ be the reference coefficient in \cref{eq:app-bulk-incoming-L}, with $\LL=\NN$ and this noise cover. Thus it is the same refinement as $\kappa_{\LL}^{\sharp}$, and \cref{eq:app-sharp-main-comparison} gives $\kappa_{\NN}^{\sharp}\leq\kappa_{\NN}$, where $\kappa_{\NN}$ is the main-text coefficient. Consequently, the main-text assumption implies $1/2-\kappa_{\NN}^{\sharp}t_{\mathrm{noise}}\geq\Delta>0$. Scaling the noise by $\varepsilon$ changes this rate to $\varepsilon\kappa_{\NN}^{\sharp}$ and the local mixing time to $t_{\mathrm{noise}}/\varepsilon$. Let $\Delta_{\NN}:=1/2-\kappa_{\NN}^{\sharp}t_{\mathrm{noise}}$.
By \cref{lem:app-per-site-influence}, the bound $\kappa_{\LL+\varepsilon\NN}^{\mathrm{loc}}\leq\varepsilon\kappa_{\NN}^{\sharp}+4a\mathfrak d\,\zeta(\LL)$ shows that, if $\Delta_{\NN}>0$, the sum $\LL+\varepsilon\NN$ is rapidly mixing whenever
\begin{align}
    \varepsilon&>\frac{4a\mathfrak d\,\zeta(\LL)t_{\mathrm{noise}}}{\Delta_{\NN}}. \label{eq:local-noise-order-threshold}
\end{align}
For a $k$-local $\LL$ with interaction strength bounded by $s_{\LL}$, one may use $\zeta(\LL)\leq k s_{\LL}$. If the $\NN_B$ commute, then $\kappa_{\NN}^{\sharp}=0$, so the mixing estimate imposes no additional smallness condition on the interaction strength of $\NN$, and a sufficient condition becomes
\begin{align}
    \varepsilon&>8a\mathfrak d\,\zeta(\LL)t_{\mathrm{noise}}. \label{eq:commuting-local-noise-order-threshold}
\end{align}
To obtain a single threshold for the main-text family, choose uniform upper bounds $Z\geq\zeta(\LL_N)$ and $T\geq t_{\mathrm{noise}}$, and take $\varepsilon_*:=4a\mathfrak d ZT/\Delta$. Every fixed $\varepsilon>\varepsilon_*$ leaves a margin at least $\Delta(1-\varepsilon_*/\varepsilon)>0$. The cover has $M=O(|\Lambda|)$ and local mixing time $O(1)$, proving the corollary and giving the rapid relaxation illustrated in \cref{fig:local-noise-obstructs-order}b).

For every fixed $\varepsilon$, $\LL_N+\varepsilon\NN_N$ is geometrically local with uniformly bounded per-site interaction strength. Preparing a state with nonvanishing connected correlations between bounded-support observables separated by distance $R$ from a product state requires time $\Omega(R)$ at sufficiently small fixed trace-norm error \cite{BravyiHastingsVerstraete2006,Poulin2010}. The same lower bound holds for topological target states covered by \cite{KoenigPastawski2014}, with separation scale $R$. For $R$ proportional to the lattice diameter, these bounds are incompatible with the $O(\log|\Lambda|)$ preparation time supplied by rapid mixing on a fixed-dimensional lattice.

\end{proof}

\begin{figure}[!t]
\centering
\begin{minipage}[t]{0.392\linewidth}
a)\par\smallskip
\centering
\includegraphics[width=\linewidth]{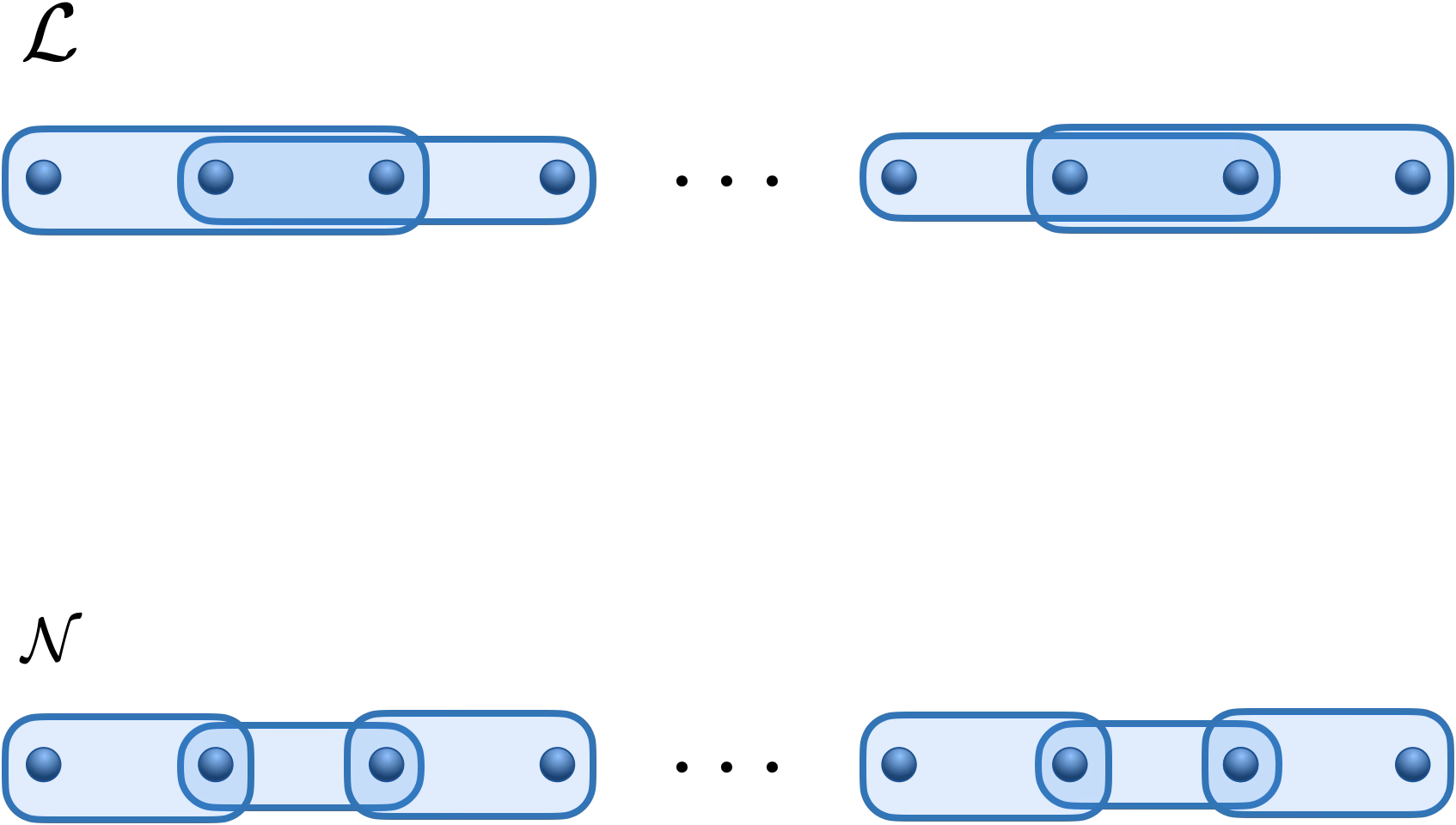}
\end{minipage}\hspace{0.02\linewidth}
\begin{minipage}[t]{0.392\linewidth}
b)\par\smallskip
\centering
\includegraphics[width=\linewidth]{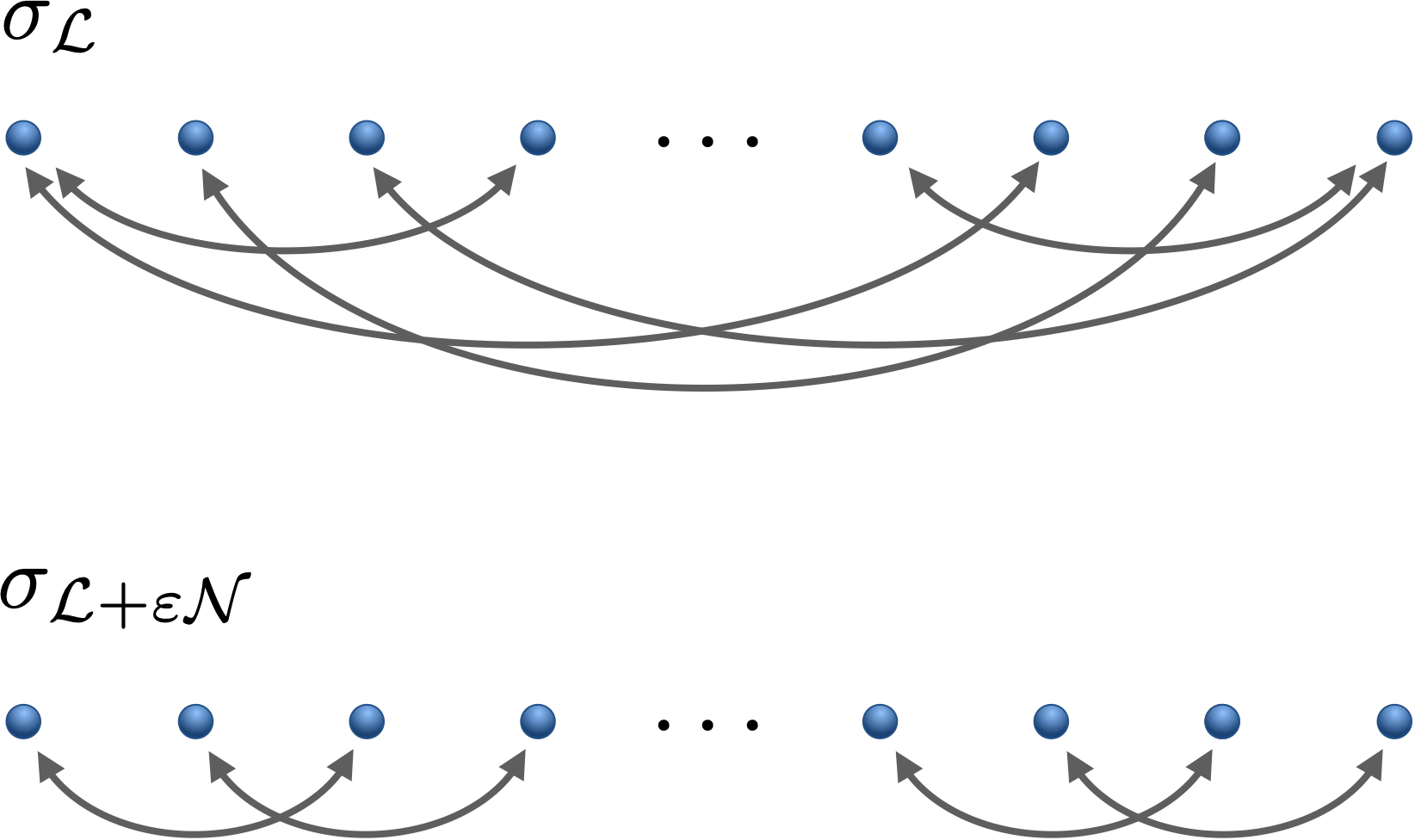}
\end{minipage}
\caption{Local noise obstructs dissipative preparation of macroscopic order. a) The geometrically local Lindbladian $\LL$ aims to prepare a state $\sigma_\LL$ with nonvanishing connected correlations between bounded-support observables at macroscopic separation. Both $\LL$ and the geometrically local noise $\NN$ have uniformly bounded per-site interaction strength. In the one-dimensional setting illustrated, preparation from a product state to sufficiently small fixed trace-norm error requires time $\Omega(N)$. The noise terms act on bounded, possibly overlapping regions that cover the lattice, each with a unique local fixed point and local mixing time $t_{\mathrm{noise}}$. b) Suppose the refined noise influence satisfies $\kappa_{\NN}^{\sharp}t_{\mathrm{noise}}\leq1/2-\Delta$ uniformly for some $\Delta>0$. With $k$ bounding the support sizes of the terms of $\LL$, and $s_\LL,T$ uniform upper bounds on its interaction strength and the noise mixing time, the threshold $\varepsilon_*:=4a\mathfrak d\,k s_\LL T/\Delta$ ensures rapid mixing with constant decay rate for every fixed $\varepsilon>\varepsilon_*$. Its $O(\log N)$ preparation time excludes the macroscopic correlations in a). If the noise terms commute, one may take $\Delta=1/2$.}
\label{fig:local-noise-obstructs-order}
\end{figure}

\partbibliography

\addtocontents{toc}{\protect\setcounter{tocdepth}{1}}
\hypersetup{bookmarksdepth=1}

\paperpart{appendices-quasifree_fermionic}{1}
\section{Quasifree fermionic Lindbladians}
\label{app:quasifree-fermionic}
Consider $N\geq1$ fermionic modes written as $m=2N$ Majoranas $c_1,\ldots,c_m$ satisfying $c_i^\dagger=c_i$ and $\{c_i,c_j\}=2\delta_{ij}\BI$. These generate the Clifford algebra $\mathcal A_m$ on a $2^N$-dimensional Hilbert space.

\begin{definition}[Quasifree fermionic Lindbladian]
The Lindbladian
\begin{align*}
    \LL(\rho)=-i[H,\rho]+\sum_\mu\left(2L_\mu\rho L_\mu^\dagger-\{L_\mu^\dagger L_\mu,\rho\}\right)
\end{align*}
is called \emph{quasifree} if $H$ is quadratic in the Majorana operators and $L_\mu$ are linear:
\begin{align*}
    H=\frac{i}{4}\sum_{j,k=1}^m \mathbf{H}_{j,k}c_jc_k,\qquad L_\mu=\sum_{j=1}^m l_{\mu,j}c_j,
\end{align*}
where $\mathbf{H}=-\mathbf{H}^T\in\mathbb R^{m\times m}$.
\end{definition}
The dynamics and spectrum of this family of Lindbladians can be studied efficiently \cite{Prosen2008}. We follow the presentation of \cite{bravyi2011classical}. The Hamiltonian operator is denoted by $H$, and its coefficient matrix by $\mathbf{H}$. Define the positive semidefinite Hermitian matrix $\mathbf{M}$ and the real matrix $\mathbf{X}$ as
\begin{align}
    \mathbf{M}_{i,j}:=\sum_\mu l_{\mu,i}l_{\mu,j}^*,\qquad \mathbf{X}:=-\mathbf{H}-2(\mathbf{M}+\mathbf{M}^T)=-\mathbf{H}-4\operatorname{Re}(\mathbf{M}). \label{app:eq:quasifree_X}
\end{align}
Define the parity operator and the parity-dressed Majoranas by
\begin{align}
    \Pi:=(-i)^N c_1\cdots c_m,\qquad \widetilde c_i:=i\Pi c_i. \label{eq:qf-parity-dressing}
\end{align}
Reversing the product defining $\Pi$ uses $m(m-1)/2$ anticommutations, while moving $c_i$ through its other $m-1$ factors changes its sign. Since $m=2N$, these give $\Pi^\dagger=\Pi$, $\Pi^2=\BI$, and $\Pi c_i\Pi=-c_i$. Consequently,
\begin{align}
    \widetilde c_i^\dagger&=-ic_i\Pi=\widetilde c_i,\notag\\
    \widetilde c_i\widetilde c_j&=-\Pi c_i\Pi c_j=c_ic_j, \label{eq:qf-dressed-products}
\end{align}
Thus the $\widetilde c_i$ satisfy the same Majorana relations. Pairing successive factors in their product also gives $(-i)^N\widetilde c_1\cdots\widetilde c_m=\Pi$, and $c_i=-i\Pi\widetilde c_i$, so they generate the same algebra. An operator $O$ is \emph{even} if $[O,\Pi]=0$, equivalently $\Pi O\Pi=O$; it is \emph{odd} if $\Pi O\Pi=-O$. Equivalently, its Majorana expansion contains only even or only odd degrees, respectively.

We write products of the dressed Majoranas in increasing label order: for $S=\{j_1,\ldots,j_l\}\subseteq[m]$, with $j_1<\cdots<j_l$, define $\widetilde c_S:=\widetilde c_{j_1}\cdots\widetilde c_{j_l}$ and $\widetilde c_\emptyset:=\BI$. Define the expectation maps and quasiderivations for label $k$ by linear extension of
\begin{align}
    \mathcal E_k(\widetilde c_S)&:=\begin{cases}0,&k\in S,\\
    \widetilde c_S,&k\notin S,\end{cases} &\delta_k(\widetilde c_S)&:=\begin{cases}(-1)^{r-1}\widetilde c_{S\setminus\{k\}},&k=j_r\in S,\\
    0,&k\notin S.\end{cases} \label{eq:qf-quasiderivation-definition}
\end{align}
They satisfy $\mathcal E_k(\BI)=\BI$, $\delta_k(\BI)=0$, $[\mathcal E_i,\mathcal E_j]=0$ and $\{\delta_i,\delta_j\}=0$. Another useful identity is:
\begin{align}
    (\operatorname{id}-\mathcal E_k)(\widetilde c_S)&= \begin{cases}\widetilde c_S,&k\in S,\\
    0,&k\notin S,\end{cases} \overset{(1)}{=} \widetilde c_k \delta_k(\widetilde c_S), \notag\\
    (\operatorname{id}-\mathcal E_k)(O)&\overset{(2)}{=}\widetilde c_k\delta_kO,\label{eq:id-Ef-cdelta}
\end{align}
where $(1)$ uses that the $(-1)^{r-1}$ is cancelled by moving $\widetilde c_k$ past $r-1$ terms and $(2)$ extends the result to arbitrary observables $O$ by linearity, so we can also write $\operatorname{id}-\mathcal{E}_k = \widetilde c_k \delta_k$. We will also use the following expressions for an observable $O$ of parity $p\in \{+1,-1\}$:
\begin{align}
    \widetilde c_k \mathcal{E}_k(O)&\overset{(1)}{=}\frac12 (\widetilde c_k O + pO \widetilde c_k), \label{eq:ckEk-p}\\
    \delta_k(O)&\overset{(2)}{=}\frac12 (\widetilde c_k O - pO \widetilde c_k),\label{eq:dk-p}
\end{align}
where $(1)$ uses that the right hand side is $0$ for monomials containing $k$ and left multiplication by $\widetilde c_k$ otherwise, and $(2)$ uses \cref{eq:id-Ef-cdelta}. The Majorana quasiderivations are analogous to differentiation of Grassmann variables \cite[Eqs.~(7)--(8)]{BravyiLagrangian}. For an operator $O$ of parity $p\in\{+1,-1\}$:
\begin{align}
    \delta_kO &\overset{(1)}{=}\frac12\left(\widetilde c_kO-pO\widetilde c_k\right),\notag\\
    &\overset{(2)}{=}\frac{i}{2}\left(\Pi c_kO-pO\Pi c_k\right),\notag\\
    &\overset{(3)}{=}\frac{i}{2}\Pi[c_k,O], \label{eq:dk_commutators}
\end{align}
where $(1)$ uses \cref{eq:dk-p}, $(2)$ uses $\widetilde c_k = i\Pi c_k$, and $(3)$ uses $pO\Pi=\Pi O$. Every operator is the sum of its even and odd parts $(O\pm\Pi O\Pi)/2$, so the last expression holds for arbitrary $O$ by linearity. It follows that
\begin{align}
    \mathcal E_k(O)&\overset{(1)}{=}O-\widetilde c_k\delta_kO\overset{(2)}{=}\frac12(O+c_kOc_k),\label{eq:qf-expectation}\\
    \|\delta_kO\|&\overset{(3)}{=}\frac12\|[c_k,O]\|\overset{(4)}{\leq}\|O\|, \label{eq:qf-expectation-contraction}
\end{align}
where $(1)$ uses \cref{eq:id-Ef-cdelta}, $(2)$ substitutes \cref{eq:dk_commutators} and uses $\Pi c_k\Pi=-c_k$, $(3)$ uses unitarity of $i\Pi$, and $(4)$ is the triangle inequality and unitarity of $c_k$. In particular, $\mathcal E_k$ is an average of two unitary conjugations, so it is trace preserving and $\|\mathcal E_k\|_{\infty\to\infty}\leq1$ on the full algebra. Its composition over all labels removes every nonempty monomial and preserves the trace:
\begin{align}
    \mathcal E_1\circ\cdots\circ\mathcal E_m(O)=2^{-N}\operatorname{Tr}(O)\BI. \label{eq:qf-full-expectation}
\end{align}

For $q(O):=\inf_{z\in\mathbb C}\|O-z\BI\|$ defined using the operator norm, this equation yields the telescoping identity
\begin{align}
    q(O)&\overset{(1)}{\leq}\|O-2^{-N}\operatorname{Tr}(O)\BI\|,\notag\\
    &\overset{(2)}{=}\left\|\sum_{i=1}^m\mathcal E_1\cdots\mathcal E_{i-1}(\operatorname{id}-\mathcal E_i)(O)\right\|,\notag\\
    &\overset{(3)}{\leq}\sum_{i=1}^m\|(\operatorname{id}-\mathcal E_i)(O)\|,\notag\\
    &\overset{(4)}{\leq}\sum_{i=1}^m\|\widetilde c_i\delta_iO\|,\notag\\
    &\overset{(5)}{=}\sum_{i=1}^m\|\delta_iO\|, \label{eq:qf-telescoping}
\end{align}
where $(1)$ chooses $z=2^{-N}\operatorname{Tr}(O)$, $(2)$ is a telescoping sum with \cref{eq:qf-full-expectation} on one endpoint, $(3)$ uses the triangle inequality and $||\mathcal{E}_k||_{\infty\to\infty}\leq 1$, $(4)$ uses \cref{eq:id-Ef-cdelta} and $(5)$ uses unitarity of $\widetilde c_i$. This motivates the Majorana oscillator seminorm
\begin{align}
    \oscm{O}:=\sum_{i=1}^m\|\delta_iO\|. \label{eq:qf-oscillator-definition}
\end{align}
\cref{eq:qf-expectation-contraction,eq:qf-telescoping} give $q(O)\leq\oscm{O}\leq m\|O\|$.

We consider the Heisenberg evolution of an observable $O$ under $\LL^*$, $T_t(O) = e^{t\LL^*}(O)$. We will use the notation $T_t O$ for clarity, and composition is understood. We want to study how $\oscm{T_tO}$ decays with time. We first show that the Majorana quasiderivations have a closed first-order evolution.

\begin{lemma}[Adjoint identities for Majorana quasiderivations]
\label{lem:qf-adjoint-identities}
Let $\LL^*$ be the Hilbert--Schmidt adjoint of $\LL$. For every operator $O\in \mathcal{A}_m$,
\begin{align}
    \LL^*(O)&=\sum_{i,j=1}^m\mathbf{X}_{i,j}\widetilde c_i\delta_jO+\sum_{i,j=1}^m(\mathbf{H}_{i,j}+4\mathbf{M}_{i,j})\delta_i\delta_jO, \label{eq:qf_adjoint_quasiderivations}\\
    \delta_i\LL^* O&=\LL^*\delta_iO+\sum_{r=1}^m\mathbf{X}_{i,r}\delta_rO. \label{eq:one_quasiderivation}
\end{align}
\end{lemma}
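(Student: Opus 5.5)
Both identities are linear in $O$ and in the generator, so I will check them on the monomial basis $\widetilde c_S$, treating the Hamiltonian part and each jump term separately. The tools are the commutator form $\delta_kO=\tfrac{i}{2}\Pi[c_k,O]$ from \cref{eq:dk_commutators}, the parity relations $\Pi c_k\Pi=-c_k$ and $[\Pi,H]=0$, and the Clifford relations. A useful reduction: since $\widetilde c_i\widetilde c_j=c_ic_j$, quadratic expressions in the $c$'s can be rewritten in dressed Majoranas without sign changes. For $O$ of parity $p$, \cref{eq:dk-p} gives left multiplication $\widetilde c_kO=\widetilde c_k\mathcal E_k O+\delta_kO$ and right multiplication $O\widetilde c_k=p(\widetilde c_k\mathcal E_kO-\delta_kO)$. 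With these, every product of $O$ with a Majorana can be written using only $\widetilde c$, $\mathcal E$ and $\delta$.

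\emph{Hamiltonian term.} I will compute $i[H,O]=-\tfrac14\sum_{j,k}\mathbf H_{jk}[c_jc_k,O]$ and expand $[c_jc_k,O]=c_j[c_k,O]+[c_j,O]c_k$. Each commutator $[c,O]$ is $-2i\Pi\delta O$. Next I move $\Pi$ and the remaining $c$ into dressed form. Antisymmetry of $\mathbf H$ should collapse the result to $-\sum\mathbf H_{ij}\widetilde c_i\delta_jO+\sum\mathbf H_{ij}\delta_i\delta_jO$. The $\mathcal E$-type pieces should cancel, because $\delta_i\delta_j$ is antisymmetric. This matches the $\mathbf H$ parts of $\mathbf X$ and of $\mathbf H+4\mathbf M$.

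\emph{Dissipative term.} For $L=\sum l_jc_j$, the adjoint is $\mathcal D^*(O)=2L^\dagger OL-\{L^\dagger L,O\}=L^\dagger[O,L]+[L^\dagger,O]L$, with the normalization of the definition. This gives $\sum_{i,j}l_i^*l_j\bigl(c_i[O,c_j]+[c_i,O]c_j\bigr)$. The same substitution as in the Hamiltonian case turns each piece into a $\widetilde c\,\delta$ term or a $\delta\delta$ term.

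The $\widetilde c\delta$ terms should combine with the symmetric part $\mathbf M+\mathbf M^T=2\operatorname{Re}\mathbf M$ into $-4\operatorname{Re}(\mathbf M)_{ij}\widetilde c_i\delta_jO$. The $\delta\delta$ terms should come with coefficient $4\mathbf M_{ij}$. Only the antisymmetric (imaginary) part of $\mathbf M$ survives there, which is consistent with the source term $8\operatorname{Im}\mathbf M$ in \cref{eq:quasifree-covariance-evolution}. Summing the two contributions gives \cref{eq:qf_adjoint_quasiderivations}. The main obstacle is this sign bookkeeping: which index carries $\mathbf M$ versus $\mathbf M^T$, and the factor $p$. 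I will fix it by checking degrees $0$, $1$ and $2$ directly, where $\LL^*(\widetilde c_k)$ must reproduce the linear drift $\mathbf X$. I then extend to all degrees by the graded Leibniz structure.

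\emph{Commutation identity.} Once \cref{eq:qf_adjoint_quasiderivations} holds, I apply $\delta_i$ to it. Graded anticommutativity gives $\delta_i\delta_j\delta_k$-type reorderings $\delta_i\delta_a\delta_b=\delta_a\delta_b\delta_i$, since each transposition contributes a sign and there are two. So the second-order part commutes with $\delta_i$. For the first-order part, $\delta_i\widetilde c_a=-\widetilde c_a\delta_i+\delta_{ia}$ by the derivation rule. This yields $\sum_{a,b}\mathbf X_{ab}\widetilde c_a\delta_b\delta_iO$ plus an extra $\sum_b\mathbf X_{ib}\delta_bO$. That is exactly $\LL^*\delta_iO+\sum_r\mathbf X_{ir}\delta_rO$, which is \cref{eq:one_quasiderivation}.
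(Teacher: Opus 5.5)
Your plan is essentially the paper's proof. The paper also works with $O$ of definite parity and reduces $i[H,O]$ and each dissipator to the identities $[\widetilde c_i\widetilde c_j,O]=2\widetilde c_i\delta_jO-2\widetilde c_j\delta_iO-4\delta_i\delta_jO$ and $2c_iOc_j-\{c_ic_j,O\}=-2\widetilde c_i\delta_jO-2\widetilde c_j\delta_iO-4\delta_i\delta_jO$, which your expansions $c_j[c_k,O]+[c_j,O]c_k$ and $L^\dagger[O,L]+[L^\dagger,O]L$ reproduce; it then derives the second identity exactly as you do, from $\delta_i(\widetilde c_aX)=\delta_{ia}X-\widetilde c_a\delta_iX$ and $\delta_i\delta_a\delta_b=\delta_a\delta_b\delta_i$.

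One correction to your bookkeeping: the $\mathcal E$-pieces produced by your multiplication rules, such as $-2\widetilde c_k\mathcal E_k\delta_jO$ in $[c_j,O]c_k$, do not cancel by antisymmetry but must be rewritten via $\widetilde c_k\mathcal E_kX=\widetilde c_kX-\delta_kX$ (dropping them would, e.g., halve every $\delta\delta$ coefficient). Once this is done the computation holds for every definite-parity $O$, so no low-degree check or extension by a ``graded Leibniz structure'' is needed. That extension would not be justified on its own anyway, since $\LL^*$ is not a graded derivation.
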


\begin{proof}[Proof of \cref{lem:qf-adjoint-identities}]
\prooflabel{lem:qf-adjoint-identities}{proof:qf-adjoint-identities}
We first prove \cref{eq:qf_adjoint_quasiderivations} for an operator $O$ of parity $p\in\{+1,-1\}$. \cref{eq:dk_commutators} gives
\begin{align}
    O\widetilde c_j=p\widetilde c_jO-2p\delta_jO. \label{eq:right_mult_parity}
\end{align}
Since $\delta_i$ removes one Majorana factor, $\delta_iO$ has parity $-p$. The relation $\delta_j\delta_i=-\delta_i\delta_j$ follows directly from \cref{eq:qf-quasiderivation-definition}: when both distinct labels occur, removing the earlier factor shifts the position of the later one by one, reversing the sign between the two orders. If either label is absent, or the labels agree, both compositions vanish. Applying \cref{eq:right_mult_parity} to $\delta_iO$ therefore yields
\begin{align}
    (\delta_iO)\widetilde c_j&\overset{(1)}{=}-p\widetilde c_j\delta_iO+2p\delta_j\delta_iO,\notag\\
    &\overset{(2)}{=}-p\widetilde c_j\delta_iO-2p\delta_i\delta_jO, \label{eq:right_mult_derivative}
\end{align}
where $(1)$ uses its parity $-p$ and $(2)$ uses anticommutation of the quasiderivations. Using these two identities, we obtain
\begin{align}
    O\widetilde c_i\widetilde c_j&\overset{(1)}{=}(p\widetilde c_iO-2p\delta_iO)\widetilde c_j,\notag\\
    &\overset{(2)}{=}p\widetilde c_i(p\widetilde c_jO-2p\delta_jO)-2p(-p\widetilde c_j\delta_iO-2p\delta_i\delta_jO),\notag\\
    &\overset{(3)}{=}\widetilde c_i\widetilde c_jO-2\widetilde c_i\delta_jO+2\widetilde c_j\delta_iO+4\delta_i\delta_jO, \label{eq:qf-right-quadratic}
\end{align}
where $(1)$ uses \cref{eq:right_mult_parity}, $(2)$ uses \cref{eq:right_mult_parity} and \cref{eq:right_mult_derivative}, and $(3)$ expands and uses $p^2=1$. Subtracting \cref{eq:qf-right-quadratic} from $\widetilde c_i\widetilde c_jO$ gives
\begin{align}
    [\widetilde c_i\widetilde c_j,O]=2\widetilde c_i\delta_jO-2\widetilde c_j\delta_iO-4\delta_i\delta_jO. \label{eq:quadratic_commutator}
\end{align}
Since $\widetilde c_i \widetilde c_j = c_i c_j$, the Hamiltonian has the same coefficients in the dressed Majoranas. Consequently,
\begin{align}
    i[H,O]&\overset{(1)}{=}-\frac14\sum_{i,j=1}^m\mathbf{H}_{i,j}[\widetilde c_i\widetilde c_j,O],\notag\\
    &\overset{(2)}{=}-\frac12\sum_{i,j=1}^m\mathbf{H}_{i,j}\widetilde c_i\delta_jO+\frac12\sum_{i,j=1}^m\mathbf{H}_{i,j}\widetilde c_j\delta_iO+\sum_{i,j=1}^m\mathbf{H}_{i,j}\delta_i\delta_jO,\notag\\
    &\overset{(3)}{=}-\sum_{i,j=1}^m\mathbf{H}_{i,j}\widetilde c_i\delta_jO+\sum_{i,j=1}^m\mathbf{H}_{i,j}\delta_i\delta_jO, \label{eq:hamiltonian_quasiderivations}
\end{align}
where $(1)$ substitutes the Hamiltonian, $(2)$ uses \cref{eq:quadratic_commutator}, and $(3)$ exchanges $i,j$ in the second sum and uses $\mathbf{H}_{j,i}=-\mathbf{H}_{i,j}$.

We next consider the dissipative part. Since $\widetilde c_i = i\Pi c_i$ and $\Pi O\Pi=pO$, moving $\Pi$ through $\widetilde c_i$ gives
\begin{align}
    p\widetilde c_iO\widetilde c_j=c_iOc_j,\qquad c_ic_j=\widetilde c_i\widetilde c_j. \label{eq:qf-recycling-parity}
\end{align}
\cref{eq:right_mult_parity} also gives $p\widetilde c_iO\widetilde c_j=\widetilde c_i\widetilde c_jO-2\widetilde c_i\delta_jO$. Thus
\begin{align}
    2p\widetilde c_iO\widetilde c_j-\{\widetilde c_i\widetilde c_j,O\} &\overset{(1)}{=}2\widetilde c_i\widetilde c_jO-4\widetilde c_i\delta_jO-\widetilde c_i\widetilde c_jO-O\widetilde c_i\widetilde c_j,\notag\\
    &\overset{(2)}{=}-2\widetilde c_i\delta_jO-2\widetilde c_j\delta_iO-4\delta_i\delta_jO, \label{eq:dissipator_quasiderivations}
\end{align}
where $(1)$ uses the preceding identity and expands the anticommutator, and $(2)$ substitutes \cref{eq:qf-right-quadratic} and collects terms. Note that the right hand side does not depend on the parity $p$, so we can use it for an operator $O$ of either parity. Since $\mathbf{M}_{i,j}=\sum_\mu l_{\mu,i}l_{\mu,j}^*$, we obtain
\begin{align}
    \sum_\mu\left(2L_\mu^\dagger OL_\mu-\{L_\mu^\dagger L_\mu,O\}\right) &\overset{(1)}{=}\sum_{i,j=1}^m\mathbf{M}_{j,i}\left(2p\widetilde c_iO\widetilde c_j-\{\widetilde c_i\widetilde c_j,O\}\right),\notag\\
    &\overset{(2)}{=}-2\sum_{i,j=1}^m\mathbf{M}_{j,i}\widetilde c_i\delta_jO-2\sum_{i,j=1}^m\mathbf{M}_{j,i}\widetilde c_j\delta_iO-4\sum_{i,j=1}^m\mathbf{M}_{j,i}\delta_i\delta_jO,\notag\\
    &\overset{(3)}{=}-2\sum_{i,j=1}^m(\mathbf{M}_{i,j}+\mathbf{M}_{j,i})\widetilde c_i\delta_jO+4\sum_{i,j=1}^m\mathbf{M}_{i,j}\delta_i\delta_jO, \label{eq:dissipative_part_quasiderivations}
\end{align}
where $(1)$ expands the jumps and uses \cref{eq:qf-recycling-parity}, $(2)$ uses \cref{eq:dissipator_quasiderivations}, and $(3)$ exchanges $i,j$ in the second and third sums and uses $\delta_j\delta_i=-\delta_i\delta_j$. Adding \cref{eq:hamiltonian_quasiderivations} and \cref{eq:dissipative_part_quasiderivations} gives
\begin{align}
    \LL^*(O)&\overset{(1)}{=}-\sum_{i,j=1}^m\mathbf{H}_{i,j}\widetilde c_i\delta_jO+\sum_{i,j=1}^m\mathbf{H}_{i,j}\delta_i\delta_jO-2\sum_{i,j=1}^m(\mathbf{M}_{i,j}+\mathbf{M}_{j,i})\widetilde c_i\delta_jO+4\sum_{i,j=1}^m\mathbf{M}_{i,j}\delta_i\delta_jO,\notag\\
    &\overset{(2)}{=}-\sum_{i,j=1}^m\bigl(\mathbf{H}_{i,j}+2\mathbf{M}_{i,j}+2\mathbf{M}_{j,i}\bigr)\widetilde c_i\delta_jO+\sum_{i,j=1}^m(\mathbf{H}_{i,j}+4\mathbf{M}_{i,j})\delta_i\delta_jO,\notag\\
    &\overset{(3)}{=}\sum_{i,j=1}^m\mathbf{X}_{i,j}\widetilde c_i\delta_jO+\sum_{i,j=1}^m(\mathbf{H}_{i,j}+4\mathbf{M}_{i,j})\delta_i\delta_jO, \label{eq:qf_adjoint_quasiderivations_derived}
\end{align}
where $(1)$ substitutes \cref{eq:hamiltonian_quasiderivations} into $i[H,O]$ and \cref{eq:dissipative_part_quasiderivations} into the dissipative part, $(2)$ collects the two first-order sums, and $(3)$ uses $\mathbf{X}_{i,j}=-\mathbf{H}_{i,j}-2(\mathbf{M}_{i,j}+\mathbf{M}_{j,i})$ from \cref{app:eq:quasifree_X}. This is \cref{eq:qf_adjoint_quasiderivations} on each parity sector, hence on every operator by linearity.

We now derive the second identity. Directly from \cref{eq:qf-quasiderivation-definition},
\begin{align}
    \delta_i\delta_j=-\delta_j\delta_i,\qquad \delta_i(\widetilde c_pO)=\delta_{i,p}O-\widetilde c_p\delta_iO, \label{eq:delta_relations}
\end{align}
where in the second equation we used that moving $\delta_i$ past the first factor contributes the minus sign, with the additional term $O$ when $i=p$. Applying \cref{eq:delta_relations} to \cref{eq:qf_adjoint_quasiderivations_derived} gives
\begin{align}
    \delta_i\LL^* O&\overset{(1)}{=}\sum_{p,q=1}^m\mathbf{X}_{p,q}\delta_i(\widetilde c_p\delta_qO)+\sum_{p,q=1}^m(\mathbf{H}_{p,q}+4\mathbf{M}_{p,q})\delta_i\delta_p\delta_qO,\notag\\
    &\overset{(2)}{=}\sum_{q=1}^m\mathbf{X}_{i,q}\delta_qO+\sum_{p,q=1}^m\mathbf{X}_{p,q}\widetilde c_p\delta_q\delta_iO+\sum_{p,q=1}^m(\mathbf{H}_{p,q}+4\mathbf{M}_{p,q})\delta_p\delta_q\delta_iO,\notag\\
    &\overset{(3)}{=}\LL^*\delta_iO+\sum_{q=1}^m\mathbf{X}_{i,q}\delta_qO, \notag
\end{align}
where $(1)$ substitutes \cref{eq:qf_adjoint_quasiderivations_derived}, $(2)$ uses both relations in \cref{eq:delta_relations}, and $(3)$ applies \cref{eq:qf_adjoint_quasiderivations_derived} to $\delta_iO$, which is allowed because that identity holds on both parity sectors. We have therefore proved, for every $O$,
\begin{align*}
    \LL^*(O)&=\sum_{i,j=1}^m\mathbf{X}_{i,j}\widetilde c_i\delta_jO+\sum_{i,j=1}^m(\mathbf{H}_{i,j}+4\mathbf{M}_{i,j})\delta_i\delta_jO,\\
    \delta_i\LL^* O&=\LL^*\delta_iO+\sum_{r=1}^m\mathbf{X}_{i,r}\delta_rO.
\end{align*}
\end{proof}

\begin{lemma}[Closed evolution of quasiderivations]
\label{lem:qf-closed-evolution}
Let $T_t=e^{t\LL^*}$. For every operator $O\in \mathcal{A}_m$,
\begin{align}
    \delta_iT_tO=\sum_{p=1}^m(e^{t\mathbf{X}})_{i,p}T_t(\delta_pO). \label{eq:X_evolution}
\end{align}
\end{lemma}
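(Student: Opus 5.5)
The plan is to prove \cref{eq:X_evolution} by an ODE uniqueness argument, using the commutation identity \cref{eq:one_quasiderivation} of \cref{lem:qf-adjoint-identities}. Fix $O$ and, for each $i$, set
\begin{align*}
    F_i(t):=\delta_iT_tO,\qquad G_i(t):=\sum_{p=1}^m(e^{t\mathbf X})_{i,p}T_t(\delta_pO).
\end{align*}
Both are smooth operator-valued functions on the finite-dimensional algebra $\mathcal A_m$. At $t=0$ they agree, since $F_i(0)=\delta_iO=G_i(0)$. We will then show that the vector $(F_1,\ldots,F_m)$ and the vector $(G_1,\ldots,G_m)$ satisfy the same linear differential equation, and conclude by uniqueness.

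For $F$, the maps $\delta_i$ are linear and time independent, so $\frac{d}{dt}F_i=\delta_i\LL^*T_tO$. Applying \cref{eq:one_quasiderivation} to the operator $T_tO$ gives
\begin{align*}
    \tfrac{d}{dt}F_i(t)=\LL^*F_i(t)+\sum_{r=1}^m\mathbf X_{i,r}F_r(t).
\end{align*}
For $G$, differentiate the product and use that $\mathbf X$ commutes with $e^{t\mathbf X}$ and $\LL^*$ commutes with $T_t$:
\begin{align*}
    \tfrac{d}{dt}G_i=\sum_p(\mathbf Xe^{t\mathbf X})_{i,p}T_t\delta_pO+\sum_p(e^{t\mathbf X})_{i,p}\LL^*T_t\delta_pO=\sum_r\mathbf X_{i,r}G_r+\LL^*G_i.
\end{align*}
Here the scalar matrix entries pass through the linear map $\LL^*$.

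This is the linear system $\dot Y=(\operatorname{id}_m\otimes\LL^*+\mathbf X\otimes\operatorname{id})Y$ on $\mathcal A_m^{\oplus m}$. Both $F$ and $G$ solve it with the same initial data, so the Picard--Lindelöf uniqueness theorem in finite dimension gives $F\equiv G$, which is \cref{eq:X_evolution}. Equivalently, the solution is $e^{t\mathbf X}\otimes T_t$ applied to the initial vector, because the two generator pieces commute.

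There is no real obstacle here; all of the substantive work sits in \cref{eq:one_quasiderivation}. The only point to check is that \cref{eq:one_quasiderivation} was proved for arbitrary operators of either parity, so it may be applied to $T_tO$ without any parity restriction. \cref{lem:qf-adjoint-identities} established this explicitly.
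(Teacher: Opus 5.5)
Your proof is correct and is essentially the paper's argument: both use the commutation identity \cref{eq:one_quasiderivation} to show that $\delta_iT_tO$ solves a linear ODE with initial value $\delta_iO$, and then conclude by uniqueness. The only difference is cosmetic. The paper passes to the interaction picture $G_i(t)=T_{-t}\delta_iT_tO$, so that only $\dot G=\mathbf X G$ remains. You instead compare two solutions of the full system generated by $\operatorname{id}_m\otimes\LL^*+\mathbf X\otimes\operatorname{id}$, which has the minor advantage of never needing the inverse map $T_{-t}$.
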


\begin{proof}[Proof of \cref{lem:qf-closed-evolution}]
\prooflabel{lem:qf-closed-evolution}{proof:qf-closed-evolution}
Let $G_i(t)=T_{-t}\delta_iT_tO$. Here $T_{-t}=e^{-t\LL^*}$ is used only as an invertible linear map, even though it is not a quantum channel. Differentiating gives
\begin{align}
    \frac{d}{dt}G_i(t)&\overset{(1)}{=}-T_{-t}\LL^*\delta_iT_tO+T_{-t}\delta_i\LL^* T_tO,\notag\\
    &\overset{(2)}{=}\sum_{r=1}^m\mathbf{X}_{i,r}T_{-t}\delta_rT_tO,\notag\\
    &\overset{(3)}{=}\sum_{r=1}^m\mathbf{X}_{i,r}G_r(t),\qquad G_i(0)=\delta_iO, \label{eq:G_ode}
\end{align}
where $(1)$ is the product rule, $(2)$ uses \cref{eq:one_quasiderivation}, and $(3)$ is the definition of $G_r$. We verify the solution explicitly by setting
\begin{align*}
    \widetilde G_i(t):=\sum_{p=1}^m(e^{t\mathbf{X}})_{i,p}\delta_pO.
\end{align*}
The derivative of the matrix exponential gives
\begin{align*}
    \frac{d}{dt}\widetilde G_i(t)&\overset{(1)}{=}\sum_{r,p=1}^m\mathbf{X}_{i,r}(e^{t\mathbf{X}})_{r,p}\delta_pO\overset{(2)}{=}\sum_{r=1}^m\mathbf{X}_{i,r}\widetilde G_r(t),\\
    \widetilde G_i(0)&\overset{(3)}{=}\sum_{p=1}^m\delta_{i,p}\delta_pO=\delta_iO,
\end{align*}
where $(1)$ uses $\frac{d}{dt}e^{t\mathbf{X}}=\mathbf{X}e^{t\mathbf{X}}$, $(2)$ is the definition of $\widetilde G_r$, $(3)$ uses $e^{0\mathbf{X}}=\BI$. Thus $G$ and $\widetilde G$ solve \cref{eq:G_ode} with the same initial value. Uniqueness gives $G_i(t)=\widetilde G_i(t)$. Applying $T_t$ and using $T_tT_{-t}=\operatorname{id}$ yields
\begin{align*}
    \delta_iT_tO=\sum_{p=1}^m(e^{t\mathbf{X}})_{i,p}T_t(\delta_pO).
\end{align*}
\end{proof}

\begin{lemma}[Oscillator and trace-norm estimate]
\label{lem:qf-oscillator-mixing}
Let $\LL$ be a quasifree Lindbladian and $\sigma$ a state satisfying $\LL(\sigma)=0$. Then for every state $\rho$ and every $t\geq0$,
\begin{align}
    \|e^{t\LL}(\rho)-\sigma\|_1\leq m\|e^{t\mathbf{X}}\| \cdot||\rho-\sigma||_1. \label{eq:qf_general_mixing_bound}
\end{align}
\end{lemma}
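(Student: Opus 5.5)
The proof reduces the trace-norm distance to an oscillator bound on the Heisenberg-evolved observable, and then controls that oscillator by the closed evolution of the quasiderivations. We begin with trace-norm duality exactly as in \cref{lem:app-trace-oscillator}. Since $\LL(\sigma)=0$ and $\Tr(\rho-\sigma)=0$, for every observable $O$ with $\|O\|\leq1$ and every $z\in\C$ we have
\begin{align*}
    \bigl|\Tr\bigl(O\,e^{t\LL}(\rho-\sigma)\bigr)\bigr|=\bigl|\Tr\bigl((T_tO-z\BI)(\rho-\sigma)\bigr)\bigr|\leq\|\rho-\sigma\|_1\,\|T_tO-z\BI\|.
\end{align*}
Optimising over $z$ gives $\|e^{t\LL}(\rho)-\sigma\|_1\leq\|\rho-\sigma\|_1\sup_{\|O\|\leq1}q(T_tO)$. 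The telescoping estimate \cref{eq:qf-telescoping} then yields $q(T_tO)\leq\oscm{T_tO}=\sum_i\|\delta_iT_tO\|$. No factor of $2$ is lost here, because $q$ is already the infimum over scalar shifts.

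Next we apply \cref{lem:qf-closed-evolution}, which gives $\delta_iT_tO=\sum_p(e^{t\mathbf X})_{i,p}T_t(\delta_pO)$. A naive triangle inequality would produce $\sum_{i,p}|(e^{t\mathbf X})_{i,p}|$, which can be as large as $m\|e^{t\mathbf X}\|$ per row and would cost an extra factor of $m$. We avoid this as follows. For each fixed $i$, write the vector $v_i:=(e^{t\mathbf X})_{i,\cdot}\in\R^m$. The combination $\sum_pv_{i,p}\delta_pO$ is itself a quasiderivation along the real linear combination $\sum_pv_{i,p}\widetilde c_p$. By \cref{eq:dk_commutators},
\begin{align*}
    \sum_pv_{i,p}\delta_pO=\frac{i}{2}\Pi\Bigl[\sum_pv_{i,p}c_p,O\Bigr].
\end{align*}
Because $\Pi$ is unitary, the operator norm of this quantity equals $\tfrac12\|[c(v_i),O]\|$. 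The element $c(v_i)=\sum_pv_{i,p}c_p$ satisfies $c(v_i)^2=\|v_i\|_2^2\BI$, so it is $\|v_i\|_2$ times a unitary. Hence $\|\sum_pv_{i,p}\delta_pO\|\leq\|v_i\|_2\|O\|\leq\|e^{t\mathbf X}\|\,\|O\|$.

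Finally, $T_t$ is a unital channel and therefore a contraction in operator norm. Since $T_t$ is linear, it commutes with taking the linear combination over $p$. Combining these facts,
\begin{align*}
    \|\delta_iT_tO\|=\Bigl\|T_t\Bigl(\sum_pv_{i,p}\delta_pO\Bigr)\Bigr\|\leq\|e^{t\mathbf X}\|\,\|O\|.
\end{align*}
Summing over the $m$ labels $i$ gives $\oscm{T_tO}\leq m\|e^{t\mathbf X}\|\|O\|$, which is \cref{eq:qf_general_mixing_bound}. For real $\mathbf X$, the Euclidean norm of a row is bounded by the operator norm, so no complex subtleties arise.

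We expect the main obstacle to be exactly the row-norm step. The bound must be phrased via the commutator representation, which holds for operators of arbitrary parity by the linearity remark after \cref{eq:dk_commutators}, rather than via entrywise bounds on $e^{t\mathbf X}$. Otherwise the prefactor degrades from $m$ to $m^{3/2}$ or $m^2$.
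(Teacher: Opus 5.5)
Your proposal is correct and follows the paper's proof essentially step for step: trace-norm duality with a scalar shift, the telescoping bound $q(T_tO)\leq\oscm{T_tO}$, and the key move of combining the quasiderivations in \cref{eq:X_evolution} into a single quasiderivation along the real row vector of $e^{t\mathbf X}$ before taking norms, using $\|c(v)\|=\|v\|_2$ and operator-norm contractivity of $T_t$. One side remark is slightly off: by Cauchy--Schwarz, the naive entrywise route costs at most an extra factor $\sqrt m$ per row rather than $m$. This does not affect your argument.
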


\begin{proof}[Proof of \cref{lem:qf-oscillator-mixing}]
\prooflabel{lem:qf-oscillator-mixing}{proof:qf-oscillator-mixing}
For a real vector $v\in\mathbb R^m$, write $c(v):=\sum_pv_pc_p$ and $\delta(v):=\sum_pv_p\delta_p$. The Majorana relations give $c(v)^2=\|v\|_2^2\BI$, so $\|c(v)\|=\|v\|_2$. By \cref{eq:dk_commutators},
\begin{align*}
    \delta(v)O&=\frac{i}{2}\Pi[c(v),O],&
    \|\delta(v)O\|&\leq\|v\|_2\|O\|.
\end{align*}
Put $A:=e^{t\mathbf X}$ and let $e_i$ be the $i$-th coordinate vector. Combining the quasiderivations in \cref{eq:X_evolution} before taking norms gives
\begin{align*}
    \delta_iT_tO&=T_t\bigl(\delta(A^Te_i)O\bigr),\\
    \|\delta_iT_tO\|&\leq\|A^Te_i\|_2\|O\|\leq\|A\|\|O\|.
\end{align*}
Here $A$ is real, and the first inequality uses contractivity of the unital completely positive map $T_t$. The second uses $\|A^Te_i\|_2\leq\|A^T\|\|e_i\|_2=\|A\|$, since $\|e_i\|_2=1$ and the Euclidean operator norm is invariant under transposition. Summing over $i$ yields
\begin{align}
    \oscm{T_tO}\leq m\|e^{t\mathbf X}\|\|O\|.
    \label{eq:qf-oscillator-contraction}
\end{align}

This allows us to bound:
\begin{align}
    \|e^{t\LL}(\rho)-\sigma\|_1 &\overset{(1)}{=}\sup_{\|O\|\leq1}\left|\operatorname{Tr}\left[O\left(e^{t\LL}(\rho)-\sigma\right)\right]\right|,\notag\\
    &\overset{(2)}{=}\sup_{\|O\|\leq1}\left|\operatorname{Tr}\left[(T_tO)(\rho-\sigma)\right]\right|,\notag\\
    &\overset{(3)}{=}\sup_{\|O\|\leq1}\inf_{z\in\mathbb C}\left|\operatorname{Tr}\left[(T_tO-z\BI)(\rho-\sigma)\right]\right|,\notag\\
    &\overset{(4)}{\leq}\|\rho -\sigma \|_1\sup_{\|O\|\leq1}\inf_{z\in\mathbb C}\left\|T_tO-z\BI\right\|, \label{eq:qf-trace-duality}
\end{align}
where $(1)$ uses the variational definition of the trace norm $\|A\|_1=\sup_{\|O\|\leq1}\left|\operatorname{Tr}(OA)\right|$, $(2)$ uses adjointness and stationarity of $\sigma$, $(3)$ uses $\operatorname{Tr}(\rho-\sigma)=0$, and $(4)$ uses H\"older's inequality.

Finally, using \cref{eq:qf-telescoping} and then \cref{eq:qf-oscillator-contraction} we have:
\begin{align}
    \inf_{z\in\mathbb C}\left\|T_tO-z\BI\right\|&=q(T_tO)\leq \oscm{T_tO}\leq m\|e^{t\mathbf{X}}\|\|O\|,
\end{align}

which yields the desired inequality
\begin{align*}
    \|e^{t\LL}(\rho)-\sigma\|_1\leq m\|e^{t\mathbf{X}}\|\cdot ||\rho-\sigma||_1.
\end{align*}
\end{proof}

\begin{proposition}[Rapid mixing of quasifree fermionic Lindbladians]
\label{prop:quasifree_rapid_mixing}\resulttoc{prop:quasifree_rapid_mixing}{Rapid mixing of quasifree fermionic Lindbladians}
Let $\LL$ be a quasifree fermionic Lindbladian on $m=2N$ Majoranas, and let $\mathbf{M},\mathbf{X}$ be defined in \cref{app:eq:quasifree_X}. Let $\sigma$ be a state satisfying $\LL(\sigma)=0$. Then for every state $\rho$ and every $t\geq 0$,
\begin{align*}
    \|e^{t\LL}(\rho)-\sigma\|_1\leq m\|e^{t\mathbf{X}}\|\cdot||\rho-\sigma||_1.
\end{align*}
If $\mathbf{X}=SJS^{-1}$ is a Jordan decomposition, the largest Jordan block has size $L$, and
\begin{align*}
    \gamma(\mathbf{X}):=\min_{z\in\operatorname{spec}(\mathbf{X})}-\operatorname{Re}(z)>0,
\end{align*}
then
\begin{align}
    \|e^{t\LL}(\rho)-\sigma\|_1\leq m\kappa(S)e^{-t\gamma(\mathbf{X})}\left(\sum_{j=0}^{L-1}\frac{t^j}{j!}\right)\|\rho-\sigma\|_1,\qquad \kappa(S):=\|S\|\|S^{-1}\|. \notag
\end{align}
The following bound also holds:
\begin{align}
    \|e^{t\LL}(\rho)-\sigma\|_1\leq me^{-4t\lambda_{\min}(\operatorname{Re}(\mathbf{M}))}\|\rho-\sigma\|_1, \notag
\end{align}
Thus, for a family $(\LL_N)_N$ with $\LL=\LL_N$ at each size, a uniform bound $\operatorname{Re}(\mathbf{M})\succeq\mu\BI$ with $\mu>0$ gives rapid mixing with constant decay rate $4\mu$.

In particular, if $\gamma(\mathbf{X})=\Omega((\log N)^{-r})$ for some fixed $r\geq0$, $\kappa(S)=\operatorname{poly}(N)$, and $L=O(1)$, then $(\LL_N)_N$ is rapidly mixing. For $\gamma(\mathbf X)=\Omega(1)$, constant-rate rapid mixing also holds with $L=O(\log(N+1))$ and $\kappa(S)=\operatorname{poly}(N)$. The same conclusions hold if $\lambda_{\min}(\operatorname{Re}(\mathbf{M}))$ obeys the corresponding inverse-polylogarithmic or constant lower bound. Each condition also implies uniqueness of the stationary state.
\end{proposition}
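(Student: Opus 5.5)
The first bound is exactly \cref{lem:qf-oscillator-mixing}, so I would quote it and spend the proof on estimating $\|e^{t\mathbf X}\|$ in two ways. For the Jordan bound, I write $e^{t\mathbf X}=Se^{tJ}S^{-1}$, so that $\|e^{t\mathbf X}\|\le\kappa(S)\|e^{tJ}\|$. The matrix $e^{tJ}$ is block diagonal, so its norm is the maximum over Jordan blocks. A block $\lambda\BI+N_k$ of size $k\le L$, with $N_k$ nilpotent, has exponential $e^{t\lambda}\sum_{j<k}t^jN_k^j/j!$. Using $\|N_k\|\le1$ and $|e^{t\lambda}|=e^{t\operatorname{Re}\lambda}\le e^{-t\gamma(\mathbf X)}$, the norm of each block is at most $e^{-t\gamma}\sum_{j=0}^{L-1}t^j/j!$. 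Substituting this into \cref{eq:qf_general_mixing_bound} gives the second display.

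For the $\operatorname{Re}\mathbf M$ bound I would use an energy argument. For $v\in\mathbb R^m$, the derivative of the squared norm is $\frac{d}{dt}\|e^{t\mathbf X}v\|^2=\langle w,(\mathbf X+\mathbf X^T)w\rangle$ with $w=e^{t\mathbf X}v$. Antisymmetry of $\mathbf H$ gives $\mathbf X+\mathbf X^T=-8\operatorname{Re}\mathbf M$. Note that $\operatorname{Re}\mathbf M=(\mathbf M+\overline{\mathbf M})/2$ is real symmetric and positive semidefinite. Hence $\frac{d}{dt}\|w\|^2\le-8\lambda_{\min}\|w\|^2$, which gives $\|e^{t\mathbf X}\|\le e^{-4t\lambda_{\min}(\operatorname{Re}\mathbf M)}$. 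The real factor is what matters here, since $\mathbf X$ acts on real vectors and only its symmetric part contributes to the quadratic form. With $\|\rho-\sigma\|_1\le2$ and $m=2N$, a uniform lower bound $\mu$ then gives $\sup_\rho\|e^{t\LL}\rho-\sigma\|_1\le4Ne^{-4\mu t}$, which is constant-rate rapid mixing.

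The family statements are bookkeeping. If $L=O(1)$, the polynomial $\sum_{j<L}t^j/j!$ is at most $C(1+t)^{L-1}$. Requiring $2m\kappa(S)C(1+t)^{L-1}e^{-t\gamma}\le1/2$ holds for $t=O(\gamma^{-1}\log N)$ up to lower-order $\log\log$ terms, and this is polylogarithmic when $\gamma=\Omega((\log N)^{-r})$. When $\gamma$ is constant and $L=O(\log(N+1))$, I bound $\sum_{j<L}t^j/j!\le e^{\gamma t/2}\sum_j(2/\gamma)^j\cdot$ (bounded). More carefully, $t^j/j!\le(2/\gamma)^je^{\gamma t/2}$, so the sum is at most $L(2/\gamma)^{L}e^{\gamma t/2}$, with the factor $(2/\gamma)^L$ replaced by $1$ if $\gamma\ge2$. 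This factor is $\operatorname{poly}(N)$ because $L=O(\log N)$, so the prefactor stays polynomial and the rate is $\gamma/2$. The analogous conclusions for $\lambda_{\min}(\operatorname{Re}\mathbf M)$ follow directly from the third bound.

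For uniqueness of the stationary state, I would apply the bound to two stationary states $\sigma,\sigma'$ taken as $\rho,\sigma$. The left side is constant in $t$, while the right side tends to $0$, so $\sigma=\sigma'$. The only delicate step is keeping the $L=O(\log N)$ prefactor polynomial.
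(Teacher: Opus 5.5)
Your proposal is correct and follows the paper's proof essentially step by step: you quote \cref{lem:qf-oscillator-mixing}, bound $\|e^{t\mathbf X}\|$ blockwise through the Jordan form and through the symmetric part $(\mathbf X+\mathbf X^T)/2=-4\operatorname{Re}\mathbf M$, and use $t^j/j!\le(2/\gamma)^je^{\gamma t/2}$ to keep the prefactor polynomial when $L=O(\log(N+1))$. The differences are cosmetic: you derive $\|e^{t\mathbf X}\|\le e^{-4t\lambda_{\min}(\operatorname{Re}\mathbf M)}$ by a direct Gr\"onwall energy argument (the computation the paper itself uses in \cref{lem:qf-matrix-stability}) where the paper cites Bhatia's inequality $\|e^{tA}\|\le\|e^{t(A+A^\dagger)/2}\|$, and you spell out the $L=O(1)$ bookkeeping and the uniqueness argument that the paper leaves implicit.
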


\begin{proof}[Proof of \cref{prop:quasifree_rapid_mixing}]
\prooflabel{prop:quasifree_rapid_mixing}{proof:quasifree_rapid_mixing}
A stationary state exists because $\LL$ is a finite-dimensional Lindbladian. The first estimate is \cref{lem:qf-oscillator-mixing}. We bound $\|e^{t\mathbf{X}}\|$ in two ways. For a Jordan block $z\BI+N_r$ of size $r\leq L$, the matrix $N_r$ has ones on the first superdiagonal and $N_r^r=0$. Its powers have norm at most one, so
\begin{align}
    \|e^{t(z\BI+N_r)}\|&\overset{(1)}{\leq}e^{t\operatorname{Re}(z)}\sum_{q=0}^{r-1}\frac{t^q}{q!},\notag\\
    &\overset{(2)}{\leq}e^{-t\gamma(\mathbf{X})}(t+1)^{r-1},\notag\\
    \|e^{t\mathbf{X}}\|&\overset{(3)}{\leq}\kappa(S)e^{-t\gamma(\mathbf{X})}\sum_{q=0}^{L-1}\frac{t^q}{q!}, \label{eq:qf-jordan-matrix}
\end{align}
where $(1)$ expands the finite exponential series and uses the triangle inequality with $||N_r||\leq1$, $(2)$ uses $\operatorname{Re}(z)\leq-\gamma(\mathbf{X})$ and compares coefficients with $(t+1)^{r-1}$, and $(3)$ takes the maximum over the Jordan blocks, with $r\leq L$, and uses $e^{t\mathbf{X}}=Se^{tJ}S^{-1}$.

For a uniform lower bound $\gamma(\mathbf X)\geq\gamma_0>0$, the inequality $e^{-at}(at)^j/j!\leq1$ with $a=\gamma_0/2$ gives
\begin{align*}
    \|e^{t\mathbf X}\|
    \leq\kappa(S)\left[\sum_{j=0}^{L-1}
        \left(\frac2{\gamma_0}\right)^j\right]e^{-\gamma_0t/2}.
\end{align*}
If $\kappa(S)=\operatorname{poly}(N)$ and $L=O(\log(N+1))$, the bracket is polynomial in $N$, proving rapid mixing with constant decay rate.

For the second estimate, $\mathbf{X}$ is real and $\mathbf{H}^T=-\mathbf{H}$, so \cref{app:eq:quasifree_X} gives $(\mathbf{X}+\mathbf{X}^T)/2=-4\operatorname{Re}(\mathbf{M})$. Since $\|e^{tA}\|\leq\|e^{t(A+A^\dagger)/2}\|$ for every matrix $A$ and every $t\geq0$ \cite[Theorem~IX.3.1]{bhatia2013matrix}, and $\operatorname{Re}(\mathbf{M})$ is real symmetric,
\begin{align}
    \|e^{t\mathbf{X}}\|\leq\bigl\|e^{-4t\operatorname{Re}(\mathbf{M})}\bigr\|=e^{-4t\lambda_{\min}(\operatorname{Re}(\mathbf{M}))}. \label{eq:qf-hermitian-part-bound}
\end{align}
In particular $\operatorname{Re}(\mathbf{M})\succeq\mu\BI$ with $\mu>0$ gives $\|e^{t\mathbf{X}}\|\leq e^{-4\mu t}$.

\end{proof}

\begin{proposition}[Repeated \cref{cor:sum-gaussian}: Stability under quasifree additions]
\label{app:cor:sum-gaussian}
Let $\LL$ and $\KK$ be quasifree fermionic Lindbladians on the same $m=2N$ Majoranas, with $\mathbf{M}$ matrices $\mathbf{M}_\LL$ and $\mathbf{M}_\KK$, and suppose that $\operatorname{Re}(\mathbf{M}_\LL)\succeq\mu\BI$ for some $\mu>0$. Then $\LL+\KK$ has a unique stationary state $\sigma_{\LL+\KK}$, and for every state $\rho$ and every $t\geq0$,
\begin{align}
    \lVert e^{t(\LL+\KK)}(\rho)-\sigma_{\LL+\KK}\rVert_1\leq me^{-4\mu t}\|\rho-\sigma_{\LL+\KK}\|_1.
\end{align}
\end{proposition}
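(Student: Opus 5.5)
The plan is to reduce to \cref{prop:quasifree_rapid_mixing} applied to the single quasifree Lindbladian $\LL+\KK$. Adding two quasifree Lindbladians on the same Majoranas gives again a quasifree Lindbladian: the Hamiltonians add, so $\mathbf H_{\LL+\KK}=\mathbf H_\LL+\mathbf H_\KK$ is real antisymmetric, and the jump list of the sum is the concatenation of the two jump lists. Since $\mathbf M$ is defined as a sum over jumps, $\mathbf M_{\LL+\KK}=\mathbf M_\LL+\mathbf M_\KK$. Consequently $\mathbf X_{\LL+\KK}=\mathbf X_\LL+\mathbf X_\KK$ by linearity of \cref{app:eq:quasifree_X}.

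The key spectral step is
\begin{align*}
    \operatorname{Re}(\mathbf M_{\LL+\KK})=\operatorname{Re}(\mathbf M_\LL)+\operatorname{Re}(\mathbf M_\KK)\succeq\mu\BI .
\end{align*}
This holds because $\mathbf M_\KK\succeq0$ as a Hermitian matrix, so its real part $\tfrac12(\mathbf M_\KK+\overline{\mathbf M_\KK})$ is also positive semidefinite. Indeed, $\overline{\mathbf M_\KK}=\mathbf M_\KK^T$ is positive semidefinite, since transposition preserves the spectrum. Hence $\lambda_{\min}(\operatorname{Re}\mathbf M_{\LL+\KK})\geq\mu$. Then \cref{eq:qf-hermitian-part-bound} gives $\|e^{t\mathbf X_{\LL+\KK}}\|\leq e^{-4\mu t}$.

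Next, I take any stationary state $\sigma$ of $\LL+\KK$; one exists since this is a finite-dimensional Lindbladian. \Cref{lem:qf-oscillator-mixing} then yields
\begin{align*}
    \|e^{t(\LL+\KK)}(\rho)-\sigma\|_1\leq me^{-4\mu t}\|\rho-\sigma\|_1 ,
\end{align*}
which is the claimed bound. For uniqueness, apply this bound with $\rho=\sigma'$ another stationary state. The left-hand side equals $\|\sigma'-\sigma\|_1$ for all $t$, while the right-hand side tends to zero, so $\sigma'=\sigma$.

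The only point needing care is the bookkeeping that the sum is quasifree with additive $\mathbf M$. One must use the same normalization convention ($2L\rho L^\dagger-\{L^\dagger L,\rho\}$) for both summands, which is the convention in the definition. No other obstacle is expected: the positivity of $\operatorname{Re}\mathbf M_\KK$ is the whole mechanism, and it is also why no strength or locality assumption on $\KK$ is needed. The family statement in \cref{cor:sum-gaussian} follows with $m=2N$, giving prefactor $2N$ and constant rate $4\mu$.
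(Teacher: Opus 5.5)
Your proposal is correct and follows the paper's proof essentially step for step. You use additivity $\mathbf M_{\LL+\KK}=\mathbf M_\LL+\mathbf M_\KK$ and positivity of $\operatorname{Re}(\mathbf M_\KK)$; you argue the latter via $\overline{\mathbf M_\KK}=\mathbf M_\KK^T\succeq0$, the paper via $v^T\operatorname{Re}(\mathbf M_\KK)v=v^\dagger\mathbf M_\KK v\geq0$ for real $v$. You then apply \cref{lem:qf-oscillator-mixing} with \cref{eq:qf-hermitian-part-bound} to the sum (this combination is exactly \cref{prop:quasifree_rapid_mixing}) and obtain uniqueness by comparing two stationary states.
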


\begin{proof}[Proof of \cref{cor:sum-gaussian}]
\prooflabel{cor:sum-gaussian}{proof:sum-gaussian}
The sum $\LL+\KK$ is quasifree, with quadratic Hamiltonian $\mathbf{H}_\LL+\mathbf{H}_\KK$ and jump set the union of the two jump sets, so $\mathbf{M}_{\LL+\KK}=\mathbf{M}_\LL+\mathbf{M}_\KK$. Each $\mathbf{M}$ is Hermitian and positive semidefinite, and $\operatorname{Re}(\mathbf{M})$ is real symmetric with $v^T\operatorname{Re}(\mathbf{M})v=v^\dagger\mathbf{M}v\geq0$ for real $v$, so $\operatorname{Re}(\mathbf{M}_\KK)\succeq0$ and therefore
\begin{align*}
    \operatorname{Re}(\mathbf{M}_{\LL+\KK})=\operatorname{Re}(\mathbf{M}_\LL)+\operatorname{Re}(\mathbf{M}_\KK)\succeq\mu\BI.
\end{align*}
Every finite-dimensional Lindbladian has a stationary state. \Cref{prop:quasifree_rapid_mixing} applied to $\LL+\KK$ gives the desired result. Since the right-hand side tends to zero, applying it to two stationary states proves uniqueness. Applying this estimate at every size with the same $\mu>0$ and $m=2N$ proves the family statement in \cref{cor:sum-gaussian}.
\end{proof}

\begin{lemma}[Quasifree stability from finite-time matrix contraction]
\label{lem:qf-matrix-stability}
Let $\LL,\GG$ be quasifree Lindbladians on the same $m$ Majoranas, with single-particle matrices $\mathbf X_\LL,\mathbf X_\GG$. Suppose that, for some $t_0>0$, $0<q<1$, and $\eta\geq0$,
\begin{align*}
    \|e^{t_0\mathbf X_\LL}\|\leq q,\qquad
    \|\mathbf X_\GG-\mathbf X_\LL\|\leq\eta,\qquad
    q':=q+t_0\eta<1.
\end{align*}
Then $\GG$ has a unique stationary state $\sigma_\GG$, and, for every state $\rho$ and $t\geq0$,
\begin{align*}
    \|e^{t\GG}(\rho)-\sigma_\GG\|_1
    \leq\frac{m}{q'}\exp\left[-\frac{\log(1/q')}{t_0}t\right]\|\rho-\sigma_\GG\|_1.
\end{align*}
For families $(\LL_N)_N$ and $(\GG_N)_N$ satisfying these bounds with $t_0,q,\eta$ independent of $N$, $(\GG_N)_N$ is rapidly mixing with constant decay rate.
\end{lemma}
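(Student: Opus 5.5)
The plan is to reduce everything to \cref{lem:qf-oscillator-mixing} applied to $\GG$. That lemma gives, for any stationary state $\sigma_\GG$ of $\GG$ (one exists because $\GG$ is a finite-dimensional Lindbladian),
\begin{align*}
    \|e^{t\GG}(\rho)-\sigma_\GG\|_1\leq m\|e^{t\mathbf X_\GG}\|\,\|\rho-\sigma_\GG\|_1.
\end{align*}
So the whole task is a purely matrix-theoretic bound on $\|e^{t\mathbf X_\GG}\|$. Once that bound decays exponentially, uniqueness follows by applying the estimate to two stationary states: the right-hand side tends to zero, which forces them to coincide.

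The first step is a one-window contraction, $\|e^{t_0\mathbf X_\GG}\|\leq q'$. I would use the Duhamel identity
\begin{align*}
    e^{t_0\mathbf X_\GG}=e^{t_0\mathbf X_\LL}+\int_0^{t_0}e^{(t_0-s)\mathbf X_\LL}(\mathbf X_\GG-\mathbf X_\LL)e^{s\mathbf X_\GG}\,ds.
\end{align*}
Estimating the integral needs $\|e^{u\mathbf X_\LL}\|\leq1$ and $\|e^{u\mathbf X_\GG}\|\leq1$ for all $u\geq0$. This holds because $(\mathbf X+\mathbf X^T)/2=-4\operatorname{Re}(\mathbf M)\preceq0$ for any quasifree generator, as noted in the proof of \cref{prop:quasifree_rapid_mixing}, combined with the bound $\|e^{tA}\|\leq\|e^{t(A+A^\dagger)/2}\|$ used there. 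The triangle inequality then gives
\begin{align*}
    \|e^{t_0\mathbf X_\GG}\|\leq q+t_0\eta=q'.
\end{align*}

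The second step iterates this. Write $t=kt_0+s$ with $k=\lfloor t/t_0\rfloor$ and $0\leq s<t_0$. Submultiplicativity and contractivity on the remainder give $\|e^{t\mathbf X_\GG}\|\leq (q')^k$. Since $k\geq t/t_0-1$, this is at most
\begin{align*}
    (q')^{-1}\exp\!\bigl[-t\log(1/q')/t_0\bigr].
\end{align*}
Substituting into the oscillator estimate yields exactly the stated bound.

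For families, $t_0$, $q$ and $\eta$ are independent of $N$, so $q'<1$ is uniform and the decay rate $\log(1/q')/t_0$ is constant. The prefactor $m/q'=2N/q'$ is polynomial, which gives rapid mixing with constant decay rate.

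The main obstacle is the contractivity of $e^{u\mathbf X_\GG}$ inside the Duhamel integral. Without it, the perturbed propagator could grow within the window and the bound $q+t_0\eta$ would fail. I expect the negative semidefinite symmetric part, inherited from positivity of $\mathbf M$, to settle this cleanly, since both $\LL$ and $\GG$ are genuine quasifree Lindbladians.
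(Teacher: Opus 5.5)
Your proposal is correct and follows the paper's proof essentially step for step: contractivity of the single-particle semigroups from $\operatorname{Re}(\mathbf M)\succeq0$, a Duhamel estimate giving $\|e^{t_0\mathbf X_\GG}\|\leq q'$, iteration over windows of length $t_0$, and then \cref{lem:qf-oscillator-mixing} together with the two-stationary-states argument for uniqueness. The differences are cosmetic. The paper proves contractivity by differentiating $\|e^{s\mathbf X}v\|_2^2$ directly instead of quoting $\|e^{tA}\|\leq\|e^{t(A+A^\dagger)/2}\|$, and it writes the Duhamel integrand with the two propagators in the opposite order.
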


\begin{proof}
The single-particle semigroup of every quasifree Lindbladian is contractive. Indeed, for $\mathbf X=-\mathbf H-4\operatorname{Re}\mathbf M$ and $v_s=e^{s\mathbf X}v$,
\begin{align}
    \frac{d}{ds}\|v_s\|_2^2
    =v_s^\dagger(\mathbf X+\mathbf X^T)v_s
    =-8v_s^\dagger\operatorname{Re}\mathbf M\,v_s\leq0.
    \label{eq:qf-ti-one-particle-contractivity}
\end{align}
Duhamel's formula therefore gives
\begin{align}
    \|e^{t\mathbf X_\GG}-e^{t\mathbf X_\LL}\|
    &\leq\int_0^t
    \|e^{(t-s)\mathbf X_\GG}\|\,
    \|\mathbf X_\GG-\mathbf X_\LL\|\,
    \|e^{s\mathbf X_\LL}\|\,ds
    \leq t\eta. \label{eq:qf-ti-duhamel}
\end{align}
In particular $\|e^{t_0\mathbf X_\GG}\|\leq q'$. Write $t=jt_0+u$, with $j=\lfloor t/t_0\rfloor$ and $0\leq u<t_0$. The semigroup property and contractivity imply
\begin{align}
    \|e^{t\mathbf X_\GG}\|
    \leq(q')^j
    \leq\frac1{q'}\exp\left[-\frac{\log(1/q')}{t_0}t\right].
    \label{eq:qf-ti-perturbed-decay}
\end{align}
A stationary state $\sigma_\GG$ exists in finite dimension. Applying \cref{lem:qf-oscillator-mixing} proves the bound. Applying it to another stationary state and taking $t\to\infty$ proves uniqueness.
\end{proof}

\begin{proposition}[Repeated \cref{prop:qf-translation-invariant-stability}: Stability of translation-invariant quasifree dynamics under Lindbladian perturbations]
\label{app:prop:qf-translation-invariant-stability}
Let $\faml{\LL}$ be a translation-invariant quasifree Lindbladian family on periodic lattices $\Lambda_N=(\mathbb Z/N\mathbb Z)^D$, with one fermionic mode per unit cell and uniformly bounded interaction strength $s_{\LL}$. Here $s_{\LL}:=\sup_N\sup_{x\in\Lambda_N}\sum_{Z\ni x}\|\LL_N^{(Z)}\|_{1\to1}<\infty$ for a decomposition $\LL_N=\sum_{Z\subseteq\Lambda_N}\LL_N^{(Z)}$ into quasifree Lindbladians supported on $Z$, with each norm computed on the modes in $Z$. Let $\mathbf X_{\LL_N}$ be its single-particle matrix. Suppose that for some $\gamma>0$, independently of $N$,
\begin{align}
    \max_{z\in\operatorname{spec}(\mathbf X_{\LL_N})}\operatorname{Re}z&\leq-\gamma\qquad\text{for every }N,
\end{align}
Then:
\begin{enumerate}
\item[a)] The family $(\LL_N)_N$ is rapidly mixing with constant decay rate.
\item[b)] For every interaction-strength bound $s_{\KK}\geq0$, there exists a constant $\varepsilon_*>0$, independent of $N$, such that, for every quasifree Lindbladian family $(\KK_N)_N$, not necessarily translation invariant, admitting a decomposition
\begin{align}
    \KK_N&=\sum_{Z\subseteq\Lambda_N}\KK_N^{(Z)},
    &\sup_N\sup_{x\in\Lambda_N}\sum_{Z\ni x}\|\KK_N^{(Z)}\|_{1\to1}&\leq s_{\KK}, \label{app:eq:qf-ti-local-perturbation}
\end{align}
where each $\KK_N^{(Z)}$ is a quasifree Lindbladian supported on $Z$ and each local norm is computed on the modes in $Z$, the family $(\LL_N+\varepsilon\KK_N)_N$ remains rapidly mixing with constant decay rate for all $0\leq\varepsilon\leq\varepsilon_*$.
\end{enumerate}
\end{proposition}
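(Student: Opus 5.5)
Both parts reduce to \cref{lem:qf-oscillator-mixing} and \cref{lem:qf-matrix-stability}. The work is to show that the translation-invariant single-particle semigroup contracts uniformly in $N$ at a fixed time $t_0$, and that a locally bounded quasifree perturbation changes $\mathbf X$ by an $O(\varepsilon)$ amount in operator norm.

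\emph{Step 1: uniform contraction of $e^{t\mathbf X_{\LL_N}}$.} Translation invariance with one mode per cell makes $\mathbf X_{\LL_N}$ block circulant with $2\times2$ blocks in the Majorana pair $(c_{2x-1},c_{2x})$. A discrete Fourier transform block-diagonalizes it into symbols $\hat{\mathbf X}(k)\in\C^{2\times2}$, $k\in(2\pi\mathbb Z/N)^D$. Since the Fourier transform is unitary, $\|e^{t\mathbf X_{\LL_N}}\|=\max_k\|e^{t\hat{\mathbf X}(k)}\|$.

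Uniformly bounded interaction strength $s_{\LL}$ bounds the couplings entering $\mathbf H$ and $\mathbf M$. For a quasifree term, $\|\LL^{(Z)}\|_{1\to1}$ controls the norms of the coefficient matrices, e.g.\ via the action on linear observables. I expect the range to be bounded, since the paper speaks of translation-invariant finite-range families. If it is not, summability gives the same conclusion. Either way $\|\hat{\mathbf X}(k)\|\leq C$ uniformly in $k$ and $N$.

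The hypothesis says every eigenvalue of every $\hat{\mathbf X}(k)$ has real part at most $-\gamma$. For a $2\times2$ matrix $A$ with $\|A\|\leq C$ and spectral abscissa at most $-\gamma$, there is a uniform bound $\|e^{tA}\|\leq C'(1+t)e^{-\gamma t}$. This holds even at exceptional points, by the explicit formula $e^{tA}=e^{t\tau}[\cosh(\omega t)\BI+\omega^{-1}\sinh(\omega t)(A-\tau\BI)]$ with $\tau=\Tr A/2$, as in the fermionic-addition proof. Here $|\operatorname{Re}\omega|$ is small enough that $\operatorname{Re}(\tau\pm\omega)\leq-\gamma$, and $|\sinh(\omega t)/\omega|\leq t\,e^{|\operatorname{Re}\omega|t}$.

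Hence there is a size-independent $t_0$ with $\sup_N\|e^{t_0\mathbf X_{\LL_N}}\|\leq1/2$. Part a) follows from \cref{lem:qf-oscillator-mixing}, giving $\|e^{t\LL_N}(\rho)-\sigma\|_1\leq 2N C'(1+t)e^{-\gamma t}\|\rho-\sigma\|_1$, or alternatively by iterating the contraction at $t_0$ as in \cref{lem:qf-matrix-stability} with $\eta=0$. Uniqueness of the stationary state follows as before.

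\emph{Step 2: perturbation bound.} $\LL_N+\varepsilon\KK_N$ is quasifree with $\mathbf X_{\GG}-\mathbf X_{\LL}=\varepsilon\mathbf X_{\KK}$, and $\mathbf X_{\KK}=\sum_Z\mathbf X_{\KK^{(Z)}}$ with each summand supported on the Majoranas of $Z$. I would bound $\|\mathbf X_{\KK^{(Z)}}\|\leq c\,\|\KK^{(Z)}\|_{1\to1}$ for a universal constant $c$. One way is to use the identity $\delta_i\LL^*O=\LL^*\delta_iO+\sum_r\mathbf X_{ir}\delta_rO$ from \cref{lem:qf-adjoint-identities}. Another is to note that the Heisenberg action on linear Majoranas is $c_j\mapsto\sum_i\mathbf X_{ij}c_i$ up to parity dressing, and read off matrix entries as $\frac{1}{2^N}\Tr(c_i\KK^{(Z)*}(c_j))$, each bounded by $\|\KK^{(Z)*}\|_{\infty\to\infty}=\|\KK^{(Z)}\|_{1\to1}$ up to constants. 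This yields a local matrix-entry bound.

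A Schur-test estimate (maximum row sum) then gives $\|\mathbf X_{\KK}\|\leq c' s_{\KK}$ with $c'$ absolute (two Majoranas per site). Taking $\eta=\varepsilon c's_{\KK}$ and $\varepsilon_*=1/(4t_0c's_{\KK})$ gives $q'\leq3/4$. \cref{lem:qf-matrix-stability} then yields b), and the constants match \cref{eq:qf-ti-perturbation-threshold} up to $c'$.

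\emph{Main obstacle.} The delicate point is Step 1's uniformity, in particular the exceptional points where the $2\times2$ symbol is nondiagonalizable with ill-conditioned eigenvectors. The boundedness of $\hat{\mathbf X}(k)$ together with the closed-form $2\times2$ exponential handles this. A secondary nuisance is pinning down the constant relating $\|\KK^{(Z)}\|_{1\to1}$ to the entries of $\mathbf X_{\KK^{(Z)}}$.
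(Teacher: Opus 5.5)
Your architecture matches the paper's: Fourier block-diagonalization, a uniform bound on the $2\times2$ exponentials giving a size-independent $t_0$ with $\sup_N\|e^{t_0\mathbf X_{\LL_N}}\|\le 1/2$, then \cref{lem:qf-matrix-stability} with a size-independent bound on $\|\mathbf X_{\KK_N}\|$. Your closed-form $\cosh/\sinh$ estimate is a fine substitute for the paper's Schur-form bound $(2+s_{\LL}t)e^{-\gamma t}$.

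The gap is in how you get the size-independent bound on $\|\mathbf X_{\KK_N}\|$, and on $\|\hat{\mathbf X}(k)\|$ if the range is unbounded. Both routes you sketch give only entrywise information, e.g.\ $|(\mathbf X_{\KK_N^{(Z)}})_{ij}|\le\|\KK_N^{(Z)}\|_{1\to1}$. An entrywise row-sum Schur test then gives $\sum_j|(\mathbf X_{\KK_N})_{ij}|\le\sum_{Z\ni x_i}2|Z|\,\|\KK_N^{(Z)}\|_{1\to1}$, so your $c'$ is of order $\max_Z|Z|$, not absolute. The statement controls only $\sum_{Z\ni x}\|\KK_N^{(Z)}\|_{1\to1}$, not support sizes. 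A single quasifree term on all of $\Lambda_N$ with $\|\KK_N\|_{1\to1}\le s_{\KK}$ is admissible, and your route then gives only $\|\mathbf X_{\KK_N}\|\lesssim N^D s_{\KK}$, hence no size-independent $\varepsilon_*$. Likewise, ``summability'' in Step 1 would need the first moment $\sum_{Z\ni x}|Z|\,\|\LL_N^{(Z)}\|_{1\to1}$, which $s_{\LL}$ does not control. With uniformly bounded supports your argument does go through, with $c'$ proportional to the maximal support size.

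The missing ingredient is an operator-norm local estimate from the Clifford relations, followed by a weighted block summation. For real $v$, the locally dressed observable $\widetilde c_Z(v):=i\Pi_Z\sum_j v_j c_{Z,j}$ is Hermitian with $\widetilde c_Z(v)^2=\|v\|_2^2\BI$. Since every $\delta_i\delta_j$ annihilates it, \cref{eq:qf_adjoint_quasiderivations} gives $(\KK_N^{(Z)})^*(\widetilde c_Z(v))=\widetilde c_Z(\mathbf X_{\KK_N^{(Z)}}v)$. Taking operator norms and using $\|(\KK_N^{(Z)})^*\|_{\infty\to\infty}=\|\KK_N^{(Z)}\|_{1\to1}$ yields $\|\mathbf X_{\KK_N^{(Z)}}v\|_2\le\|\KK_N^{(Z)}\|_{1\to1}\|v\|_2$, i.e.\ $c=1$ independently of $|Z|$. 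Then, instead of an entrywise row sum, use Cauchy--Schwarz with weights $w_Z=\|\KK_N^{(Z)}\|_{1\to1}$: $|u^\dagger\mathbf X_{\KK_N}v|\le\sum_Z w_Z\|u_Z\|_2\|v_Z\|_2\le s_{\KK}\|u\|_2\|v\|_2$, because $\sum_Z w_Z\|u_Z\|_2^2\le s_{\KK}\|u\|_2^2$. The same two steps give $\|\hat{\mathbf X}(k)\|\le\|\mathbf X_{\LL_N}\|\le s_{\LL}$ for any range, and $\varepsilon_*=1/(4t_0 s_{\KK})$.

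Two smaller points. First, the dressing is essential, so your trace formula must read $2^{-N}\Tr(\widetilde c_i\,\KK^*(\widetilde c_j))$. For a single mode with jump $c_1$, the Heisenberg generator sends the undressed $c_1$ to $0$, although $\mathbf X_{11}=-4$; in general the image of an undressed $c_j$ contains cubic monomials. Second, in part a) there are $m=2N^D$ Majoranas, not $2N$.
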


The threshold is uniform over all perturbation families obeying the prescribed strength bound, and the same $\varepsilon$ is used for every system size. The proposition includes spatially disordered perturbations with uniformly bounded local strength.

\begin{proof}[Proof of \cref{prop:qf-translation-invariant-stability}]
\prooflabel{prop:qf-translation-invariant-stability}{proof:qf-translation-invariant-stability}
The local-matrix estimate in \cref{eq:qf-ti-local-drift-bound} and the weighted-overlap argument in \cref{eq:qf-ti-weighted-overlap}, proved below for supported quasifree Lindbladian terms, apply equally to the decomposition of $\LL_N$. They give
\begin{align*}
    \|\mathbf X_{\LL_N}\|&\leq s_{\LL}\qquad\text{for every }N.
\end{align*}
The spatial Fourier transform $U_N$ is unitary and block diagonalizes the translation-invariant single-particle matrix into its two-dimensional momentum blocks:
\begin{align}
    U_N\mathbf X_{\LL_N}U_N^\dagger&=\bigoplus_{k\in\Lambda_N^*}\mathbf X_{\LL_N}(k),
    &\|e^{t\mathbf X_{\LL_N}}\|&=\max_{k\in\Lambda_N^*}\|e^{t\mathbf X_{\LL_N}(k)}\|, \label{eq:qf-ti-fourier-blocks}
\end{align}
where $\Lambda_N^*$ is the discrete momentum grid in $\mathbb T^D=[-\pi,\pi)^D$. Unitary invariance gives $\|\mathbf X_{\LL_N}(k)\|\leq s_{\LL}$ for every $N$ and $k\in\Lambda_N^*$. The assumed spectral bound for $\mathbf X_{\LL_N}$ also holds for each of these blocks.

For every $N$ and $k\in\Lambda_N^*$, take a Schur decomposition $\mathbf X_{\LL_N}(k)=Q_N(k)T_N(k)Q_N(k)^\dagger$. The matrices $Q_N(k)$ are unitary, while $T_N(k)$ is a $2\times2$ upper-triangular matrix with norm at most $s_{\LL}$ and diagonal entries with real parts at most $-\gamma$. Its off-diagonal matrix element has magnitude at most $s_{\LL}$, so the explicit exponential gives
\begin{align}
    \|e^{tT_N(k)}\|&\leq(2+s_{\LL}t)e^{-\gamma t}. \label{eq:qf-ti-schur-bound}
\end{align}
Indeed, the off-diagonal entry is an integral of the form $\int_0^t e^{(t-s)z_1}b e^{sz_2}\,ds$, whose magnitude is at most $s_{\LL}t e^{-\gamma t}$. Unitary invariance and \cref{eq:qf-ti-fourier-blocks} therefore give
\begin{align}
    \|e^{t\mathbf X_{\LL_N}}\|&\leq(2+s_{\LL}t)e^{-\gamma t}\leq C_{s_{\LL},\gamma}e^{-\gamma t/2}, \label{eq:qf-ti-one-particle-decay}
\end{align}
with $C_{s_{\LL},\gamma}$ independent of $N$. Applying \cref{prop:quasifree_rapid_mixing}, with the linear factor $m_N=2|\Lambda_N|$ in the number of Majoranas, proves rapid mixing with constant decay rate.

We now allow the perturbation to break translation invariance. First choose a time $t_0>0$, depending only on $s_{\LL}$ and $\gamma$, such that the right-hand side of \cref{eq:qf-ti-one-particle-decay} is at most $1/2$. Thus
\begin{align}
    \|e^{t_0\mathbf X_{\LL_N}}\|\leq\frac12 \qquad\text{for every }N. \label{eq:qf-ti-fixed-contraction}
\end{align}
To preserve this contraction, we bound the change in the single-particle matrix using the local strength of $\KK_N$. For each term $\KK_N^{(Z)}$, let $\mathbf X_{\KK_N}^{(Z)}$ be its matrix on the $2|Z|$ Majorana labels in $Z$, extended by zero on all other labels. Additivity of the Hamiltonian and dissipative matrices gives
\begin{align*}
    \mathbf X_{\KK_N}&:=\sum_Z\mathbf X_{\KK_N}^{(Z)},\\
    \LL_{\varepsilon,N}&:=\LL_N+\varepsilon\KK_N,
    &\mathbf X_{\LL_{\varepsilon,N}}&=\mathbf X_{\LL_N}+\varepsilon\mathbf X_{\KK_N}.
\end{align*}
We first bound each local matrix, and then sum these bounds using the overlap of the supports.

Fix $Z$, and list its Majoranas as $c_{Z,1},\ldots,c_{Z,2|Z|}$. Define the parity operator on these modes and, for a real vector $v\in\mathbb R^{2|Z|}$, the observable
\begin{align*}
    \Pi_Z&:=(-i)^{|Z|}c_{Z,1}\cdots c_{Z,2|Z|},
    &\widetilde c_Z(v)&:=i\Pi_Z\sum_{j=1}^{2|Z|}v_jc_{Z,j}.
\end{align*}
These are the local versions of \cref{eq:qf-parity-dressing}. The operators $i\Pi_Zc_{Z,j}$ are Hermitian and satisfy the Majorana relations, so
\begin{align*}
    \widetilde c_Z(v)^2&=\|v\|_2^2\BI_Z,
    &\|\widetilde c_Z(v)\|&=\|v\|_2.
\end{align*}
Since $\widetilde c_Z(v)$ is linear in the dressed Majoranas, $\delta_i\delta_j\widetilde c_Z(v)=0$ for all $i,j$. Thus \cref{eq:qf_adjoint_quasiderivations}, applied on $Z$, reduces to
\begin{align*}
    (\KK_N^{(Z)})^*(\widetilde c_Z(v))&=\widetilde c_Z(\mathbf X_{\KK_N}^{(Z)}v).
\end{align*}
Taking norms now relates the local matrix to the local generator:
\begin{align*}
    \|\mathbf X_{\KK_N}^{(Z)}v\|_2
    &=\|(\KK_N^{(Z)})^*(\widetilde c_Z(v))\|\\
    &\leq\|(\KK_N^{(Z)})^*\|_{\infty\to\infty}\|\widetilde c_Z(v)\|\\
    &=\|\KK_N^{(Z)}\|_{1\to1}\|v\|_2.
\end{align*}
The last equality uses trace--operator norm duality. Since $\mathbf X_{\KK_N}^{(Z)}$ is real, its operator norm is attained on real vectors. Taking the supremum over real unit vectors therefore gives
\begin{align}
    \|\mathbf X_{\KK_N}^{(Z)}\|&\leq\|\KK_N^{(Z)}\|_{1\to1}. \label{eq:qf-ti-local-drift-bound}
\end{align}
For a vector $u$, let $u_Z$ be its restriction to the Majorana labels in $Z$ and put $w_Z:=\|\KK_N^{(Z)}\|_{1\to1}$. The local-strength bound in \cref{app:eq:qf-ti-local-perturbation} gives
\begin{align}
    \sum_Zw_Z\|u_Z\|_2^2
    &=\sum_i|u_i|^2\sum_{Z\ni x_i}w_Z
    \leq s_{\KK}\|u\|_2^2, \label{eq:qf-ti-weighted-overlap}
\end{align}
where $x_i$ is the site of Majorana label $i$. Hence, for arbitrary vectors $u,v$,
\begin{align}
    |u^\dagger\mathbf X_{\KK_N}v|
    &\leq\sum_Zw_Z\|u_Z\|_2\|v_Z\|_2 \notag\\
    &\leq\left(\sum_Zw_Z\|u_Z\|_2^2\right)^{1/2}
    \left(\sum_Zw_Z\|v_Z\|_2^2\right)^{1/2}\notag\\
    &\leq s_{\KK}\|u\|_2\|v\|_2, \notag
\end{align}
Taking the supremum over unit vectors proves the size-independent estimate
\begin{align}
    \|\mathbf X_{\KK_N}\|\leq s_{\KK}. \label{eq:qf-ti-full-perturbation}
\end{align}

Apply \cref{lem:qf-matrix-stability} with $\GG=\LL_{\varepsilon,N}$, $q=1/2$, and $\eta=\varepsilon s_{\KK}$. Choose $\varepsilon_*>0$ such that $t_0\varepsilon_*s_{\KK}\leq1/4$. This choice depends only on $s_{\LL},\gamma,s_{\KK}$, so it works for every perturbation family satisfying the prescribed bound. Then \cref{eq:qf-ti-fixed-contraction,eq:qf-ti-full-perturbation} and the Duhamel estimate in the lemma give, for $0\leq\varepsilon\leq\varepsilon_*$,
\begin{align}
    \|e^{t_0\mathbf X_{\LL_{\varepsilon,N}}}\|\leq\frac34. \label{eq:qf-ti-perturbed-contraction}
\end{align}
Iterating this contraction as in the lemma gives a unique stationary state $\sigma_{\varepsilon,N}$ and
\begin{align}
    \|e^{t\LL_{\varepsilon,N}}(\rho)-\sigma_{\varepsilon,N}\|_1
    &\leq\frac43m_N\exp\left[-\frac{\log(4/3)}{t_0}t\right]\|\rho-\sigma_{\varepsilon,N}\|_1, \label{eq:qf-ti-perturbed-mixing}
\end{align}
where $m_N=2|\Lambda_N|$. The prefactor is linear in the system size and the decay rate is independent of $N$, completing the proof.
\end{proof}

\begin{corollary}[Uniform rapid mixing with constant decay rate under uniform dissipation]
\label{cor:qf-uniform-dissipation}\resulttoc{cor:qf-uniform-dissipation}{Uniform rapid mixing under uniform dissipation}
Let $\faml{\LL}$ be a family of quasifree fermionic Lindbladians with one fermionic mode per site. Suppose that there exist $\mu,R_0>0$, independent of the system size, such that every restriction to $A=b_x(R)$ with $x\in\Lambda$ and $R\geq R_0$ satisfies
\begin{align*}
    \operatorname{Re}(\mathbf M_A)\succeq\mu\BI.
\end{align*}
Then $\LL_A$ has a unique stationary state $\sigma_{\LL_A}$, and for every state $\rho$ on $A$ and every $t\geq0$,
\begin{align}
    \|e^{t\LL_A}(\rho)-\sigma_{\LL_A}\|_1\leq m_Ae^{-4\mu t}\|\rho-\sigma_{\LL_A}\|_1, \label{app:eq:qf-uniform-dissipation}
\end{align}
where $m_A=2|A|$ is the number of Majoranas in $A$. Thus the family is uniformly rapidly mixing with constant decay rate. The same bound holds with $\sigma_{\LL_A+\KK_A}$ for $\LL_A+\KK_A$ whenever $\faml{\KK}$ is a quasifree Lindbladian family on the same modes and $\KK_A$ retains its Hamiltonian terms and jumps supported entirely on $A$.
\end{corollary}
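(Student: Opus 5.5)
The plan is to reduce the statement to \cref{prop:quasifree_rapid_mixing}, applied to each restriction separately. Fix a ball $A=b_x(R)$ with $R\geq R_0$. The restricted generator $\LL_A$ keeps only the Hamiltonian terms and jumps supported on $A$. It is therefore again a quasifree Lindbladian on the $m_A=2|A|$ Majoranas of $A$: its Hamiltonian is quadratic and its jumps are linear in those Majoranas. Its matrices $\mathbf H_A$, $\mathbf M_A$, $\mathbf X_A=-\mathbf H_A-4\operatorname{Re}(\mathbf M_A)$ are built from the retained coefficients alone. A stationary state exists because the dimension is finite.

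Next, I will invoke the third estimate in \cref{prop:quasifree_rapid_mixing}. The hypothesis $\operatorname{Re}(\mathbf M_A)\succeq\mu\BI$ gives
\begin{align*}
    \|e^{t\LL_A}(\rho)-\sigma\|_1\leq m_Ae^{-4\mu t}\|\rho-\sigma\|_1
\end{align*}
for any stationary $\sigma$. Applying this with $\rho$ equal to a second stationary state and letting $t\to\infty$ gives uniqueness of $\sigma_{\LL_A}$. This yields \cref{app:eq:qf-uniform-dissipation}.

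Uniform rapid mixing with constant decay rate then follows from \cref{eq:uniform-rapid-mixing}. Since $\|\rho-\sigma\|_1\leq2$, one may take $C=4$, $p=1$ and $\gamma=4\mu$, all independent of $\Lambda$, $x$ and $R$.

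For the perturbed statement, I will note that $\LL_A+\KK_A$ is quasifree on the same modes. Its matrix $\mathbf M$ is $\mathbf M_A+\mathbf M_{\KK_A}$, because the jump lists are concatenated. Moreover $\operatorname{Re}(\mathbf M_{\KK_A})\succeq0$, since for real $v$ one has $v^T\operatorname{Re}(\mathbf M)v=v^\dagger\mathbf Mv\geq0$. Hence the sum still satisfies $\succeq\mu\BI$, which is exactly the argument of \cref{app:cor:sum-gaussian}.

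The only point needing care is confirming that restriction commutes with forming $\mathbf M$. Each retained jump $L_\mu$ is supported in $A$, so it is a linear combination of Majoranas of $A$ only. Its contribution to $\mathbf M$ is therefore exactly the corresponding block, and the hypothesis is stated directly for $\mathbf M_A$ in any case. I do not expect a genuine obstacle here.
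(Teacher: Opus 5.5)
Your proposal is correct and follows the paper's proof essentially step by step. You apply \cref{prop:quasifree_rapid_mixing} to each restricted generator using $\lambda_{\min}(\operatorname{Re}\mathbf M_A)\geq\mu$, obtain uniqueness by comparing two stationary states as $t\to\infty$, and treat $\LL_A+\KK_A$ via $\mathbf M_{\LL_A+\KK_A}=\mathbf M_{\LL_A}+\mathbf M_{\KK_A}$ with $\operatorname{Re}\mathbf M_{\KK_A}\succeq0$. Your explicit constants $C=4$, $p=1$, $\gamma=4\mu$ in \cref{eq:uniform-rapid-mixing} are a correct, slightly more explicit version of the paper's remark that the prefactor is linear in $|A|$.
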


\begin{proof}[Proof of \cref{cor:qf-uniform-dissipation}]
\prooflabel{cor:qf-uniform-dissipation}{proof:qf-uniform-dissipation}
Fix $A=b_x(R)$ with $R\geq R_0$. Every finite-dimensional Lindbladian has a stationary state; let $\sigma_{\LL_A}$ be one for $\LL_A$.

Applying \cref{prop:quasifree_rapid_mixing} to this stationary state gives
\begin{align}
    \|e^{t\LL_A}(\rho)-\sigma_{\LL_A}\|_1 &\overset{(1)}{\leq}m_Ae^{-4t\lambda_{\min}(\operatorname{Re}\mathbf M_A)}\|\rho-\sigma_{\LL_A}\|_1,\notag\\
    &\overset{(2)}{\leq}m_Ae^{-4\mu t}\|\rho-\sigma_{\LL_A}\|_1,\notag\\
    &\overset{(3)}{=}2|A|e^{-4\mu t}\|\rho-\sigma_{\LL_A}\|_1, \notag
\end{align}
where $(1)$ is the quasifree mixing estimate, $(2)$ uses the uniform lower bound on $\operatorname{Re}\mathbf M_A$, and $(3)$ uses $m_A=2|A|$. Applying the bound to another stationary state and taking $t\to\infty$ proves uniqueness. The prefactor $2|A|$ is linear in the number of sites, and the decay rate $4\mu$ is independent of $A$ and $\Lambda$, proving uniform rapid mixing with constant decay rate.

For an addition, the retained jump sets give $\mathbf M_{\LL_A+\KK_A}=\mathbf M_{\LL_A}+\mathbf M_{\KK_A}$. Since $\operatorname{Re}\mathbf M_{\KK_A}\succeq0$, the same lower bound $\operatorname{Re}\mathbf M_{\LL_A+\KK_A}\succeq\mu\BI$ holds. The preceding argument applied to $\LL_A+\KK_A$ proves the remaining claim.
\end{proof}

\partbibliography

\paperpart{appendices-commuting_generators_mlsi}{1}
\section{Commuting generators and MLSI}\label{app:commuting}
The condition that a Lindbladian $\LL$ does not have any nonzero purely imaginary eigenvalues is equivalent to the infinite-time limit being well-defined.
\begin{lemma}[Well-defined infinite-time limit]\label{lem:infinite-time-limit}
Let $\LL$ be a Lindbladian on a finite-dimensional $\BB(\HH)$. The infinite-time limit $P_\LL=\lim_{t\to\infty}e^{t\LL}$, understood pointwise as $P_\LL(X)=\lim_{t\to\infty}e^{t\LL}(X)$ for every $X\in\BB(\HH)$, exists if and only if $\LL$ has no nonzero purely imaginary eigenvalue, i.e.\ $\operatorname{spec}(\LL)\cap i\mathbb{R}=\{0\}$.
\end{lemma}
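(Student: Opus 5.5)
The plan is to use the Jordan decomposition of $\LL$ on the finite-dimensional space $\BB(\HH)$, $e^{t\LL}=\sum_{\lambda\in\operatorname{spec}(\LL)}e^{t\lambda}\sum_{k=0}^{n_\lambda-1}\frac{t^k}{k!}(\LL-\lambda)^kP_\lambda$, where $P_\lambda$ are the spectral (Riesz) projections and $n_\lambda$ are the index sizes. The whole argument rests on one structural fact: the semigroup is uniformly bounded, $\sup_{t\geq0}\|e^{t\LL}\|_{1\to1}<\infty$. This holds because each $e^{t\LL}$ is a quantum channel, hence trace-norm contractive on Hermitian operators; splitting a general $X$ into Hermitian and anti-Hermitian parts gives the bound $2$. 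From boundedness I would derive the two standard spectral consequences. First, $\operatorname{Re}\lambda\leq0$ for every eigenvalue, which the paper already recalls from \cite{Lindblad1976}. Second, every eigenvalue with $\operatorname{Re}\lambda=0$ is semisimple: a nontrivial Jordan block there would give a term $t^k e^{t\lambda}$ with $k\geq1$ and $|e^{t\lambda}|=1$, producing polynomial growth and contradicting boundedness.

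For the direction ($\Leftarrow$), assume $\operatorname{spec}(\LL)\cap i\R=\{0\}$. Then $e^{t\LL}=P_0+\sum_{\operatorname{Re}\lambda<0}(\cdots)$. Since $0$ is semisimple, $e^{t\LL}P_0=P_0$. Every other term decays like $t^{k}e^{t\operatorname{Re}\lambda}\to0$. Hence $e^{t\LL}\to P_0$ in norm, and in particular pointwise. Semisimplicity also identifies $\operatorname{Ran}P_0=\ker\LL$, so the limit is the projection onto the steady-state subspace, as the paper asserts. A small remark is needed for the case where $0$ might not be an eigenvalue; this cannot happen, because every Lindbladian has a stationary state, so $0\in\operatorname{spec}(\LL)$ and the condition is consistent.

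For the direction ($\Rightarrow$), suppose $\lambda=i\omega$ with $\omega\neq0$ is an eigenvalue, with eigenvector $X\neq0$. Then $e^{t\LL}(X)=e^{i\omega t}X$, which has no limit as $t\to\infty$ because $e^{i\omega t}$ oscillates and $X\neq0$. This contradicts pointwise existence of the limit, so no such eigenvalue exists.

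The only step requiring care is the boundedness and semisimplicity argument. The key point is to make sure the trace-norm contractivity of channels extends to all of $\BB(\HH)$, not just Hermitian inputs, with a uniform constant. It is also worth noting that the ($\Rightarrow$) direction does not even need semisimplicity; only the ($\Leftarrow$) direction uses it, to rule out Jordan blocks at eigenvalue $0$. Everything else is routine finite-dimensional spectral calculus.
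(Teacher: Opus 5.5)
Your proof is correct and takes the same route as the paper. Uniform boundedness of the channel semigroup rules out a nontrivial Jordan block at $0$; the paper phrases this as $\ker\LL^2=\ker\LL$ and uses the splitting $\BB(\HH)=\ker\LL\oplus\operatorname{ran}\LL$ where you use the Riesz projection $P_0$. The remaining spectrum then has strictly negative real part and decays, the converse uses an oscillating eigenvector, and your explicit factor-$2$ bound for non-Hermitian inputs is extra care that the paper leaves implicit.
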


\begin{proof}[Proof of \cref{lem:infinite-time-limit}]
\prooflabel{lem:infinite-time-limit}{proof:infinite-time-limit}
The spectral condition is necessary: if $\LL X=i\omega X$ with $\omega\neq0$ and $X\neq0$, then $e^{t\LL}(X)=e^{i\omega t}X$ oscillates, so the limit fails to exist already at this single $X$. It is also sufficient: let $\operatorname{spec}(\LL)\cap i\mathbb{R}=\{0\}$. Every eigenvalue of a Lindbladian has $\operatorname{Re}\lambda\le0$, so by assumption every $\lambda\neq0$ has $\operatorname{Re}\lambda<0$.

We first show $\ker\LL^2=\ker\LL$. If $\LL^2X=0$ then $Y:=\LL X$ lies in $\ker\LL$, so the exponential series terminates after two terms: $e^{t\LL}(X)=X+tY$. Each $e^{t\LL}$ is a channel and hence does not increase trace distance, $\|e^{t\LL}(A)-e^{t\LL}(B)\|_1\le\|A-B\|_1$, which for $B=0$ says that no trajectory can grow; this forces $Y=\LL X=0$. Consequently $\ker\LL\cap\operatorname{ran}\LL=\{0\}$: an element $Y=\LL X$ of the intersection satisfies $\LL^2X=\LL Y=0$, hence $X\in\ker\LL^2=\ker\LL$ and $Y=\LL X=0$. Together with $\dim\ker\LL+\dim\operatorname{ran}\LL=\dim\BB(\HH)$ this gives the decomposition
\begin{align}
    &\BB(\HH)=\ker\LL\oplus\operatorname{ran}\LL ,
\end{align}
in which $\operatorname{ran}\LL$ is invariant under $e^{t\LL}$ because $\LL$ commutes with it. Split $X\in\BB(\HH)$ accordingly as $X=X_0+X_1$. The kernel part stays fixed, $e^{t\LL}(X_0)=X_0$, while the remainder decays: the restriction of $\LL$ to $\operatorname{ran}\LL$ is invertible, so its eigenvalues are among the $\lambda\neq0$ and have $\operatorname{Re}\lambda<0$, whence $e^{t\LL}(X_1)\to0$. Therefore $e^{t\LL}(X)\to X_0$ for every $X$, i.e. the pointwise limit exists.
\end{proof}

It is not necessarily true that if $\LL,\KK$ have well-defined infinite-time limits, then so will their sum. However, this is true when $\LL,\KK$ commute: suppose that $\LL,\KK$ have infinite-time limits:
\begin{align}
    &P_\LL = \lim_{t\to\infty}e^{t\LL},\quad P_\KK = \lim_{t\to\infty}e^{t\KK}.
\end{align}
Then $e^{t(\LL+\KK)}=e^{t\LL}\circ e^{t\KK}=e^{t\KK}\circ e^{t\LL}$, and the limiting projections satisfy $P_{\LL+\KK}=P_\LL\circ P_\KK=P_\KK\circ P_\LL$:
\begin{align}
    &\|e^{t\LL}e^{t\KK}-P_\LL P_\KK\|_{1\to1}\le\|(e^{t\LL}-P_\LL)e^{t\KK}\|_{1\to1} +\|P_\LL(e^{t\KK}-P_\KK)\|_{1\to1}\overset{t\to\infty}{\longrightarrow}0,
\end{align}
because $e^{t\KK}$ and $P_\LL$ are contractions, the latter being a channel as a limit of channels, while $P_\LL=\lim_{t\to\infty}e^{t\LL}$ commutes with every $e^{s\KK}$ and hence with $P_\KK$; in particular $\LL+\KK$ has a well-defined infinite-time limit as well. We see below that their mixing time is bounded by the maximum of the two mixing times at half the target error.

\begin{lemma}[Commuting sum: mixing-time upper bound, \cite{KastoryanoWolf2012}]\label{lem:commuting-sum}\label{app:lem:commuting-sum}
Let $\HH$ be a finite-dimensional Hilbert space and $\LL$ and $\KK$ be commuting Lindbladians acting on $\SSS(\HH)$ each with a well-defined infinite-time limit. Then, for any $\rho\in\SSS(\HH)$
\begin{align}
    \|e^{t(\LL+\KK)}(\rho)-P_{\LL+\KK}(\rho)\|_1 \leq \|e^{t\LL}(\rho)-P_{\LL}(\rho)\|_1 +\|e^{t\KK}(\rho)-P_{\KK}(\rho)\|_1,
\end{align}
and the mixing times satisfy:
\begin{align}
    \tau_{\operatorname{mix}}^{\LL+\KK}(\delta) \leq \max\{\tau_{\operatorname{mix}}^{\LL}(\delta/2),\tau_{\operatorname{mix}}^{\KK}(\delta/2)\} .
\end{align}
\end{lemma}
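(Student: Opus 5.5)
The plan is to factor the sum's semigroup and telescope, exactly as in the single-qubit argument of \cref{eq:single-qubit-telescoping}. Since $[\LL,\KK]=0$, we have $e^{t(\LL+\KK)}=e^{t\LL}e^{t\KK}$. The discussion preceding the lemma already shows that $P_{\LL+\KK}$ exists and equals $P_\LL P_\KK=P_\KK P_\LL$, so this will be taken as given.

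Fix a state $\rho$ and write
\begin{align*}
    e^{t\LL}e^{t\KK}(\rho)-P_\LL P_\KK(\rho)
    =e^{t\LL}\bigl(e^{t\KK}(\rho)-P_\KK(\rho)\bigr)+\bigl(e^{t\LL}-P_\LL\bigr)P_\KK(\rho).
\end{align*}
The first term is bounded by $\|e^{t\KK}(\rho)-P_\KK(\rho)\|_1$, because $e^{t\LL}$ is a channel and hence trace-norm contractive.

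The second term is the one that needs care, since it is evaluated at $P_\KK(\rho)$ rather than at $\rho$. The key observation is that $P_\KK$ commutes with $e^{t\LL}$ and with $P_\LL$; it commutes with $e^{s\LL}$ for every $s$, and $P_\LL$ is a limit of such maps. This allows the rewriting
\begin{align*}
    \bigl(e^{t\LL}-P_\LL\bigr)P_\KK(\rho)=P_\KK\bigl(e^{t\LL}(\rho)-P_\LL(\rho)\bigr).
\end{align*}
Now $P_\KK$ is a channel, being a limit of channels, so it is also trace-norm contractive. The second term is therefore bounded by $\|e^{t\LL}(\rho)-P_\LL(\rho)\|_1$. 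Combining the two bounds with the triangle inequality gives the first claim.

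For the mixing-time bound, take $t>\max\{\tau_{\operatorname{mix}}^{\LL}(\delta/2),\tau_{\operatorname{mix}}^{\KK}(\delta/2)\}$. The first claim together with monotonicity of $d_\LL(t)$ and $d_\KK(t)$ then gives a distance below $\delta/2+\delta/2=\delta$, uniformly in $\rho$. Monotonicity holds because $e^{s\LL}-P_\LL=(e^{(s-t)\LL})(e^{t\LL}-P_\LL)$ and $e^{(s-t)\LL}$ is contractive, and likewise for $\KK$. Taking the infimum over such $t$ gives the stated inequality.

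The only subtle step is the commutation of $P_\KK$ through $e^{t\LL}-P_\LL$. It follows from passing to the limit in $e^{s\KK}e^{t\LL}=e^{t\LL}e^{s\KK}$.
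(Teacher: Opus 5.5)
Your proposal is correct and is essentially the paper's proof: both insert an intermediate term, commute the limiting projection through the other semigroup, and use trace-norm contractivity of channels. The only difference is that you insert $e^{t\LL}P_\KK(\rho)$ where the paper inserts $e^{t\KK}P_\LL(\rho)$, and your explicit monotonicity argument for $d_\LL(t)$ and $d_\KK(t)$ fills in a step the paper leaves implicit when it passes to the mixing-time inequality.
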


\begin{proof}[Proof of \cref{lem:commuting-sum}]
\prooflabel{lem:commuting-sum}{proof:commuting-sum}
We follow the telescoping proof of \cite[Theorem~6]{KastoryanoWolf2012}, keeping the initial state fixed instead of taking the supremum over states. Since $\LL$ and $\KK$ commute we have
\begin{align}
    e^{t(\LL+\KK)} = e^{t\LL} \circ e^{t\KK} = e^{t\KK} \circ e^{t\LL} \quad \mathrm{and} \quad P_{\LL+\KK} = P_{\LL} \circ P_{\KK} = P_{\KK} \circ P_{\LL} .
\end{align}
Then
\begin{align}
    \|e^{t(\LL+\KK)} (\rho) - P_{\LL+\KK} (\rho) \|_1 &\overset{(1)}{\leq} \|e^{t(\LL+\KK)} (\rho) -  e^{t\KK} (P_{\LL}(\rho) )\|_1+\| e^{t\KK}(  P_{\LL}(\rho) )  - P_{\LL+\KK} (\rho) \|_1,\notag\\
    &\overset{(2)}{=}\|e^{t\KK}(e^{t\LL} (\rho) -P_{\LL}  (\rho ) )\|_1+\| P_{\LL} (e^{t\KK} (\rho)  - P_{\KK} (\rho) ) \|_1,\notag\\
    &\overset{(3)}{\leq} \|e^{t\LL} (\rho) -P_{\LL}  (\rho) \|_1+\|e^{t\KK} (\rho)  - P_{\KK} (\rho) \|_1,   \label{eq:tracenormbound:split}
\end{align}
where $(1)$ uses the triangle inequality, $(2)$  that $[P_{\LL}, e^{t \KK}] = 0$ and $(3)$ that the trace norm is contractive under the quantum channels $e^{t\KK},P_\LL$. For any time strictly larger than the maximum of the two mixing times at error $\delta/2$, both terms are at most $\delta/2$. Taking the infimum gives the mixing-time bound.
\end{proof}
\begin{lemma}[Repeated \cref{lem:commuting-unique-stationary}: Commuting addition to a relaxing generator]\label{app:lem:commuting-unique-stationary}
Let $\LL,\KK$ be commuting Lindbladians and suppose that $\LL$ has a unique stationary state $\sigma$ and a well-defined infinite-time limit. Then $\KK(\sigma)=0$, $\LL+\KK$ has the unique stationary state $\sigma$, and, for every $\rho\in\SSS(\HH)$ and $t\ge0$,
\begin{align*}
    \|e^{t(\LL+\KK)}(\rho)-\sigma\|_1\le \|e^{t\LL}(\rho)-\sigma\|_1.
\end{align*}
In particular, every rapid-mixing bound for $\LL$ is inherited by $\LL+\KK$.
\end{lemma}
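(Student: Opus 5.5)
The plan is to first show that $\sigma$ is stationary for $\KK$, then read off uniqueness and the estimate from the commuting-sum machinery. Since $\LL$ and $\KK$ commute, $e^{s\KK}$ commutes with $\LL$. Hence $\LL(e^{s\KK}(\sigma))=e^{s\KK}(\LL(\sigma))=0$, so $e^{s\KK}(\sigma)\in\ker\LL$ for every $s\ge0$. It is a state, because $e^{s\KK}$ is a channel. Uniqueness of the stationary state of $\LL$, read as $\ker\LL=\operatorname{span}\{\sigma\}$ together with trace preservation, then forces $e^{s\KK}(\sigma)=\sigma$. Differentiating at $s=0$ gives $\KK(\sigma)=0$. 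Consequently $(\LL+\KK)(\sigma)=0$.

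Next I will identify $P_\LL(\rho)=\Tr(\rho)\sigma=\sigma$ for every state $\rho$. The limit $P_\LL$ commutes with $e^{t\KK}$, being a limit of maps that commute with it. So we can write
\begin{align*}
e^{t(\LL+\KK)}(\rho)-\sigma=e^{t\KK}\bigl(e^{t\LL}(\rho)\bigr)-e^{t\KK}(\sigma)=e^{t\KK}\bigl(e^{t\LL}(\rho)-\sigma\bigr),
\end{align*}
using $e^{t(\LL+\KK)}=e^{t\KK}e^{t\LL}$ and $e^{t\KK}(\sigma)=\sigma$. Trace-norm contractivity of the channel $e^{t\KK}$ yields the displayed inequality directly. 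The route through \cref{lem:commuting-sum} would give an extra $\KK$-term; the direct argument avoids this because $\sigma$ is fixed by $e^{t\KK}$.

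For uniqueness, let $\omega$ be any stationary state of $\LL+\KK$. Applying the inequality gives $\|\omega-\sigma\|_1\le\|e^{t\LL}(\omega)-\sigma\|_1$, and the right-hand side tends to $0$ since $P_\LL(\omega)=\sigma$. Hence $\omega=\sigma$. The same inequality also shows that the infinite-time limit of $\LL+\KK$ exists on states and equals $\sigma$, so every mixing-time or rapid-mixing bound for $\LL$ transfers verbatim to $\LL+\KK$.

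The only delicate point is the first step. We need that $e^{s\KK}$ preserves $\ker\LL$ and that a one-dimensional kernel spanned by a state is pointwise fixed. Trace preservation handles this, since $\Tr e^{s\KK}(\sigma)=1$ pins the scalar multiple to one.
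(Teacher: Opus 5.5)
Your proposal is correct and follows essentially the same route as the paper: show that $e^{s\KK}(\sigma)$ is a stationary state of $\LL$, hence equals $\sigma$, so $\KK(\sigma)=0$; then use $e^{t(\LL+\KK)}(\rho)-\sigma=e^{t\KK}\bigl(e^{t\LL}(\rho)-\sigma\bigr)$ together with trace-norm contractivity, and obtain uniqueness from $P_\LL(\rho)=\sigma$. The detour through $\ker\LL=\operatorname{span}\{\sigma\}$ and the remark that $P_\LL$ commutes with $e^{t\KK}$ are harmless but unnecessary. Since $e^{s\KK}(\sigma)$ is already a state, uniqueness of the stationary state of $\LL$ applies to it directly.
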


\begin{proof}[Proof of \cref{lem:commuting-unique-stationary}]
\prooflabel{lem:commuting-unique-stationary}{proof:commuting-unique-stationary}
For every $s\geq0$, commutativity gives
\begin{align*}
    \LL(e^{s\KK}(\sigma))=e^{s\KK}(\LL(\sigma))=0.
\end{align*}
The state $e^{s\KK}(\sigma)$ is therefore stationary for $\LL$, so uniqueness gives $e^{s\KK}(\sigma)=\sigma$. Differentiating at $s=0$ yields $\KK(\sigma)=0$. Therefore $e^{t\KK}(\sigma)=\sigma$, and
\begin{align*}
    \|e^{t(\LL+\KK)}(\rho)-\sigma\|_1
    &=\|e^{t\KK}(e^{t\LL}(\rho))-e^{t\KK}(\sigma)\|_1\\
    &\le \|e^{t\LL}(\rho)-\sigma\|_1
\end{align*}
by trace-norm contractivity of $e^{t\KK}$. The right-hand side tends to zero because $P_\LL(\rho)$ is a stationary state of $\LL$ and therefore equals $\sigma$. Thus every initial state converges to $\sigma$ under $\LL+\KK$, which also proves uniqueness of its stationary state.
\end{proof}
\begin{lemma}[Repeated \cref{lem:commuting-kernel}: Commuting sum with kernel inclusion]\label{app:lem:commuting-kernel}
Let $\LL,\KK$ be commuting Lindbladians, suppose that $P_\LL=\lim_{t\to\infty}e^{t\LL}$ exists, and assume $\ker(\LL)\subseteq\ker(\KK)$. Then $P_{\LL+\KK}$ exists and equals $P_\LL$, and for all $\rho\in\SSS(\HH)$ and all $t\ge0$,
\begin{align*}
    \|e^{t(\LL+\KK)}(\rho)-P_{\LL+\KK}(\rho)\|_1\le \|e^{t\LL}(\rho)-P_\LL(\rho)\|_1.
\end{align*}
\end{lemma}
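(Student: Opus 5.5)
The plan is to reduce everything to the commuting-sum identities established just before \cref{lem:commuting-sum}, and to use kernel inclusion to identify $P_\LL P_\KK$ with $P_\LL$.

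\emph{Step 1: the limit of $\KK$ is harmless on stationary states.} We first note that $\KK$ need not have its own infinite-time limit, so we cannot quote $P_{\LL+\KK}=P_\LL P_\KK$ directly. Instead, we work with $e^{t\KK}$ acting on $\operatorname{Ran}P_\LL$. Since $P_\LL$ is the projection onto $\ker\LL$ (\cref{lem:infinite-time-limit}), kernel inclusion gives $\KK P_\LL=0$. Hence $e^{t\KK}P_\LL=P_\LL$ for all $t\ge0$. Commutativity of $\LL$ and $\KK$ also gives $[P_\LL,e^{t\KK}]=0$, because $P_\LL$ is a limit of $e^{s\LL}$.

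\emph{Step 2: existence and identification of $P_{\LL+\KK}$.} Commutativity gives $e^{t(\LL+\KK)}=e^{t\KK}e^{t\LL}$. We write
\begin{align*}
e^{t(\LL+\KK)}-P_\LL=e^{t\KK}\bigl(e^{t\LL}-P_\LL\bigr)+\bigl(e^{t\KK}P_\LL-P_\LL\bigr).
\end{align*}
By Step 1 the second term vanishes identically. The first term has trace norm at most $\|e^{t\LL}(\rho)-P_\LL(\rho)\|_1$ on each state, because $e^{t\KK}$ is a channel and hence a trace-norm contraction. This tends to zero, so $\lim_{t\to\infty}e^{t(\LL+\KK)}$ exists pointwise (states span $\BB(\HH)$) and equals $P_\LL$. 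The same identity evaluated at a fixed $\rho$ yields the stated inequality with $P_{\LL+\KK}(\rho)=P_\LL(\rho)$. If desired, existence can also be cross-checked via \cref{lem:infinite-time-limit}, since $\ker(\LL+\KK)\supseteq\ker\LL$ and the limit equals $P_\LL$.

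\emph{Main obstacle.} The only delicate point is Step 1. We must avoid assuming that $\KK$ has a well-defined limit. We must also make sure that $e^{t\KK}P_\LL=P_\LL$ follows from $\ker\LL\subseteq\ker\KK$ alone, which it does because $\operatorname{Ran}P_\LL=\ker\LL$. Everything after that is the same telescoping used in \cref{lem:commuting-sum}, now with one term vanishing exactly.
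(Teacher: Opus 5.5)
Your proof is correct and is essentially the paper's argument: kernel inclusion gives $\KK P_\LL=0$, hence $e^{t\KK}P_\LL=P_\LL$, and then $e^{t(\LL+\KK)}(\rho)-P_\LL(\rho)=e^{t\KK}\bigl(e^{t\LL}(\rho)-P_\LL(\rho)\bigr)$ together with trace-norm contractivity of $e^{t\KK}$ yields both the existence of the limit and the inequality. The commutation $[P_\LL,e^{t\KK}]=0$ is not used, and the suggested cross-check via \cref{lem:infinite-time-limit} is unnecessary and would not by itself establish existence. Neither affects your main argument.
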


\begin{proof}[Proof of \cref{lem:commuting-kernel}]
\prooflabel{lem:commuting-kernel}{proof:commuting-kernel}
Kernel inclusion gives $\KK P_\LL=0$, hence $e^{t\KK}P_\LL=P_\LL$. Using commutativity and trace-norm contractivity, we obtain
\begin{align}
    \|e^{t(\LL+\KK)}(\rho)-P_\LL(\rho)\|_1
    &=\|e^{t\KK}(e^{t\LL}(\rho)-P_\LL(\rho))\|_1,\notag\\
    &\leq\|e^{t\LL}(\rho)-P_\LL(\rho)\|_1.\label{eq:commuting-kernel-contraction}
\end{align}
The right-hand side tends to zero, so the infinite-time limit of the sum exists and equals $P_\LL$. This proves the stated inequality.
\end{proof}
These lemmas imply: if $\LL,\KK$ commute and both are rapidly mixing, then $\LL+\KK$ is rapidly mixing; and either a unique stationary state for $\LL$ or the inclusion $\ker(\LL)\subseteq\ker(\KK)$ ensures that rapid mixing of $\LL$ alone is inherited by $\LL+\KK$. These statements can fail without commutativity, as the counterexamples of \cref{sec:instability_results} show.
\begin{lemma}[Reset perturbation]\label{lem:reset}
Let $\LL$ be a Lindbladian with a fixed point $\sigma\in\SSS(\HH)$, not necessarily full rank, and let $\KK$ be the reset Lindbladian to $\sigma$,
\begin{align}
    \label{eq:reset} &\KK(A)=\Tr[A]\,\sigma-A ,
\end{align}
which satisfies $\|\KK\|_{1\to1}\le2$, $\ker\KK=\C\sigma$ and $P_\KK(A)=\Tr[A]\,\sigma$. Then $[\LL,\KK]=0$ and, for every $\varepsilon>0$, every $\rho\in\SSS(\HH)$ and every $t\ge0$,
\begin{align}
    \label{eq:resetdecay} &\|e^{t(\LL+\varepsilon\KK)}(\rho)-\sigma\|_1\le e^{-\varepsilon t}\|\rho-\sigma\|_1 ,
\end{align}
irrespective of the mixing behaviour of $\LL$. In particular $\LL+\varepsilon\KK$ has the well-defined infinite-time limit $P_{\LL+\varepsilon\KK}(A)=\Tr[A]\,\sigma$ and its only fixed point among the states is $\sigma$.
\end{lemma}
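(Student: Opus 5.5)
The plan is to verify the elementary properties of the reset generator $\KK$ first, then derive commutativity from stationarity of $\sigma$, and finally use the commuting factorization $e^{t(\LL+\varepsilon\KK)}=e^{t\LL}e^{t\varepsilon\KK}$ to obtain the contraction estimate.

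\emph{Properties of $\KK$.} Write $\KK=P-\operatorname{id}$ with $P(A)=\Tr[A]\sigma$. Since $P$ is a channel, $\|\KK\|_{1\to1}\le\|P\|_{1\to1}+1\le2$. Since $P^2=P$, we get
\begin{align*}
    e^{s\KK}=P+e^{-s}(\operatorname{id}-P),
\end{align*}
so $P_\KK=P$, and $\ker\KK=\operatorname{Ran}P=\C\sigma$. The map $\KK$ is a Lindbladian, being $P-\operatorname{id}$ with $P$ a channel; equivalently, with Kraus operators $F_\nu$ of $P$, one has $\KK=\sum_\nu\Diss{F_\nu}$, as in the MPDO corollary.

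\emph{Commutativity.} For any $A$, $\LL P(A)=\Tr[A]\LL(\sigma)=0$. Also $P\LL(A)=\Tr[\LL(A)]\sigma=0$, because a Lindbladian is trace-annihilating. Hence $\LL P=P\LL$, and therefore $[\LL,\KK]=[\LL,P]=0$. No full-rank assumption on $\sigma$ enters here.

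\emph{Decay estimate.} Commutativity gives
\begin{align*}
    e^{t(\LL+\varepsilon\KK)}=e^{t\LL}e^{\varepsilon t\KK}=e^{t\LL}\bigl(P+e^{-\varepsilon t}(\operatorname{id}-P)\bigr).
\end{align*}
For a state $\rho$, we have $(\operatorname{id}-P)\rho=\rho-\sigma$ and $e^{t\LL}P\rho=e^{t\LL}\sigma=\sigma$. Thus
\begin{align*}
    e^{t(\LL+\varepsilon\KK)}\rho-\sigma=e^{-\varepsilon t}e^{t\LL}(\rho-\sigma).
\end{align*}
Trace-norm contractivity of the channel $e^{t\LL}$ on Hermitian operators then yields \cref{eq:resetdecay}.

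\emph{Consequences.} The same factorization applied to an arbitrary operator $A$ gives
\begin{align*}
    e^{t(\LL+\varepsilon\KK)}(A)=\Tr[A]\sigma+e^{-\varepsilon t}e^{t\LL}(A-\Tr[A]\sigma).
\end{align*}
Since $\|e^{t\LL}\|_{1\to1}$ is uniformly bounded (by $1$ on Hermitian operators, and hence by a constant on all operators after splitting into Hermitian parts), the second term tends to zero. This gives the limit $P_{\LL+\varepsilon\KK}(A)=\Tr[A]\sigma$. If a state $\tau$ is fixed, then \cref{eq:resetdecay} gives $\|\tau-\sigma\|_1\le e^{-\varepsilon t}\|\tau-\sigma\|_1$ for all $t$, which forces $\tau=\sigma$.

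The only step that needs care is the commutation $P\LL=0$. It relies on trace annihilation of $\LL$, and everything else follows directly from the commuting factorization.
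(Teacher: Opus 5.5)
Your proposal is correct and follows the paper's proof essentially step by step. Like the paper, you write $\KK=\mathcal R-\operatorname{id}$ with the idempotent replacement map $\mathcal R(A)=\Tr[A]\,\sigma$, derive commutativity from $\LL(\sigma)=0$ together with trace annihilation of $\LL$, and obtain the decay bound from the factorization $e^{t(\LL+\varepsilon\KK)}=e^{-\varepsilon t}e^{t\LL}+(1-e^{-\varepsilon t})\mathcal R$. The differences are only cosmetic: you certify the Lindblad form of $\KK$ through abstract Kraus operators of $\mathcal R$, where the paper writes them explicitly as $\sqrt{p_k}\,\ketbra{k}{j}$, and you read off uniqueness of the fixed state directly from the contraction estimate rather than from the infinite-time limit.
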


\begin{proof}[Proof of \cref{lem:reset}]
\prooflabel{lem:reset}{proof:reset}
Let $\sigma=\sum_k p_k|k\rangle\langle k|$ and let $\{|j\rangle\}$ be any orthonormal basis. The jump operators $L_{kj}=\sqrt{p_k}\,|k\rangle\langle j|$ satisfy $\sum_{k,j}L_{kj}AL_{kj}^\dagger=\Tr[A]\,\sigma$ and $\sum_{k,j}L_{kj}^\dagger L_{kj}=\mathbf 1$, so the GKSL generator they define with vanishing Hamiltonian is exactly \cref{eq:reset}; and $\|\KK(A)\|_1\le|\Tr A|+\|A\|_1\le2\|A\|_1$.

Write $\mathcal{R}(A)=\Tr[A]\,\sigma$, so that $\KK=\mathcal{R}-\mathrm{id}$ with $\mathcal{R}^2=\mathcal{R}$, and $e^{t\LL}\circ\mathcal{R}=\mathcal{R}$ because $e^{t\LL}(\sigma)=\sigma$. The two generators commute,
\begin{align}
    &\LL\KK(A)=\Tr[A]\,\LL(\sigma)-\LL(A)=-\LL(A)=\Tr[\LL(A)]\,\sigma-\LL(A)=\KK\LL(A) ,
\end{align}
by $\LL(\sigma)=0$ and the fact that $\LL$ is trace-annihilating, so with $e^{s\KK}=e^{-s}e^{s\mathcal{R}}=e^{-s}\,\mathrm{id}+(1-e^{-s})\,\mathcal{R}$,
\begin{align}
    \label{eq:resetsemigroup} &e^{t(\LL+\varepsilon\KK)}=e^{t\LL}\circ e^{\varepsilon t\KK} =e^{-\varepsilon t}\,e^{t\LL}+(1-e^{-\varepsilon t})\,\mathcal{R},
\end{align}
where the displayed form of $e^{s\KK}$ already gives $P_\KK=\mathcal{R}$, while $\KK(A)=0$ forces $A=\Tr[A]\,\sigma$, i.e.\ $\ker\KK=\C\sigma$. Since $\|e^{t\LL}\|_{1\to1}\le1$, the first term of \cref{eq:resetsemigroup} vanishes as $t\to\infty$, so $e^{t(\LL+\varepsilon\KK)}\to\mathcal{R}$ and a state fixed by the semigroup equals $\mathcal{R}(\rho)=\sigma$. Applied to a state, \cref{eq:resetsemigroup} yields
\begin{align}
    \|e^{t(\LL+\varepsilon\KK)}(\rho)-\sigma\|_1& =e^{-\varepsilon t}\|e^{t\LL}(\rho)-\sigma\|_1\le e^{-\varepsilon t}\|\rho-\sigma\|_1 ,
\end{align}
the equality because $\mathcal{R}(\rho)=\sigma$ cancels the second term, and the bound by stationarity of $\sigma$ and trace-norm contractivity of $e^{t\LL}$.
\end{proof}
In the remainder of the section we show that if we additionally assume MLSI, the analogous results to the above lemmas hold for the MLSI of the sum. We will first need to introduce some notation. A \emph{conditional expectation} onto a unital $*$-subalgebra $\NN\subseteq\MM$ of a von Neumann algebra $\MM$ is a completely positive unital map $E^*:\MM\to\NN$ satisfying $E^*[X\Phi Y]=XE^*[\Phi]Y$ for $X,Y\in\NN$, $\Phi\in\MM$; its predual is denoted $E$. The \emph{decoherence-free subalgebra} of $\LL$ is the maximal algebra on which the dual semigroup acts as a $*$-automorphism \cite{Frigerio1978}:
\begin{align}
    \label{app:eq:dfs} \FF(\LL)=\Bigl\{X\in\BB(\HH): e^{t\LL^*}(X^\dagger X)=e^{t\LL^*}(X)^\dagger e^{t\LL^*}(X),\ e^{t\LL^*}(XX^\dagger)=e^{t\LL^*}(X)e^{t\LL^*}(X)^\dagger\ \ \forall t\ge0\Bigr\}.
\end{align}
\begin{proposition}[Repeated \cref{thm:CE}: Conditional expectation and infinite-time limit, Proposition 8 of \cite{carbone2013decoherence}, Theorem 19 \cite{carbone2015environment}]\label{app:thm:CE}
Assume $e^{t\LL}$ has a full-rank invariant state $\sigma$. Then there is a unique conditional expectation $E^*_\LL:\BB(\HH)\to\FF(\LL)$ with $E_\LL(\sigma)=\sigma$, and for every $X\in\BB(\HH)$,
\begin{align*}
    \lim_{t\to+\infty}e^{t\LL^*}(X-E^*_\LL[X])=0, \qquad\text{equivalently}\qquad \lim_{t\to+\infty}e^{t\LL}(\rho-E_\LL(\rho))=0\quad\forall\rho\in\SSS(\HH).
\end{align*}
If $\LL$ has no nonzero purely imaginary eigenvalues, then $E_\LL=P_\LL$.
\end{proposition}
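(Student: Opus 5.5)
The proof combines Frigerio's structure theory with a spectral argument. The first step is to use the full-rank invariant state $\sigma$ to build a faithful conditional expectation onto the decoherence-free subalgebra $\FF(\LL)$. Since $\sigma$ is faithful, $\FF(\LL)$ is a unital $*$-subalgebra of $\BB(\HH)$, and it is invariant under $e^{t\LL^*}$. On $\FF(\LL)$ the dual semigroup acts as a group of $*$-automorphisms, so $e^{t\LL^*}|_{\FF(\LL)}=e^{itK}(\cdot)e^{-itK}$ for some Hermitian $K$. By Takesaki's theorem, a $\sigma$-preserving conditional expectation $E^*_\LL$ onto $\FF(\LL)$ exists and is unique precisely when $\FF(\LL)$ is invariant under the modular group $\sigma^{it}(\cdot)\sigma^{-it}$. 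I will verify this invariance using the standard fact that, for a faithful invariant state, $\FF(\LL)$ is the commutant of the jumps and their iterated commutators with $H$ (Frigerio, Carbone--Sasso--Umanità). That characterization is also stable under the modular automorphism group. Uniqueness then follows because two faithful normal conditional expectations preserving the same faithful state coincide.

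The next step is convergence. I would use the orthogonal decomposition $\BB(\HH)=\FF(\LL)\oplus\ker E^*_\LL$ with respect to the KMS inner product of $\sigma$. Both summands are invariant under $e^{t\LL^*}$, since $E^*_\LL$ commutes with the semigroup by the uniqueness just established. On $\FF(\LL)$ the evolution is unitary, so all its eigenvalues lie on $i\R$. On the complement, I will show that every eigenvector $X$ with eigenvalue $i\omega$ must belong to $\FF(\LL)$. The argument applies the Kadison--Schwarz inequality to $X$ and pairs it with $\sigma$: the defect $\LL^*(X^\dagger X)-\LL^*(X)^\dagger X-X^\dagger\LL^*(X)\succeq0$ has zero $\sigma$-expectation, so by faithfulness it vanishes. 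This places $X$ in the multiplicative domain, and hence in $\FF(\LL)$. Consequently the restriction of $\LL^*$ to $\ker E^*_\LL$ has no spectrum on the imaginary axis. Contractivity rules out Jordan blocks on that axis, so all remaining eigenvalues satisfy $\Re\lambda<0$, and $e^{t\LL^*}(X-E^*_\LL X)\to0$. The predual statement follows by duality.

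The last claim concerns the case where $\LL$ has no nonzero imaginary eigenvalues. Then $e^{t\LL^*}|_{\FF(\LL)}$ is a unitary group whose generator has only the eigenvalue $0$, so it is the identity. Hence $\FF(\LL)=\ker\LL^*$ and $E^*_\LL$ is a projection onto $\ker\LL^*$ that commutes with the semigroup. By \cref{lem:infinite-time-limit}, $P^*_\LL=\lim_{t\to\infty}e^{t\LL^*}$ exists, and the convergence step gives $P^*_\LL X=P^*_\LL E^*_\LL X=E^*_\LL X$. Taking preduals yields $E_\LL=P_\LL$.

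I expect the main obstacle to be the modular invariance of $\FF(\LL)$, equivalently the existence of the $\sigma$-preserving expectation without a detailed-balance assumption. If the commutant characterization proves awkward to establish directly, I would fall back on defining $E^*_\LL$ as a Cesàro mean $\lim_T T^{-1}\int_0^T e^{-t\,\mathrm{ad}(iK)}e^{t\LL^*}\,dt$. I would then check that this map is idempotent, unital and completely positive, has range $\FF(\LL)$, and preserves $\sigma$. Tomiyama's theorem then makes it a conditional expectation.
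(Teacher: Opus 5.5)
Your last paragraph is essentially the paper's own argument for $E_\LL=P_\LL$. For existence, uniqueness and decay the paper only cites Carbone--Sasso--Umanit\`a, so you are attempting more than it does, and that attempt has a gap at its first step.

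\textbf{The gap.} You assert that the commutant $\{\delta_H^n(L_j),\delta_H^n(L_j^\dagger)\}'$ is stable under $X\mapsto\sigma^{it}X\sigma^{-it}$. That characterization does not involve $\sigma$ at all, and the invariance fails for faithful states that are not stationary. For example, take $\LL(\rho)=\Tr_2(\rho)\otimes\BI/2-\rho$ on $\C^2\otimes\C^2$, so that $\FF(\LL)=\BB(\C^2)\otimes\BI$. For a faithful state close to a Bell state, conjugation by its modular unitaries moves $Z\otimes\BI$ out of $\BB(\C^2)\otimes\BI$. So the step must use $\LL(\sigma)=0$, and you do not say how. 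By Takesaki's theorem, modular invariance of $\FF(\LL)$ is equivalent to the existence statement itself, so as written the first claim is assumed rather than proved. Your convergence paragraph inherits the problem, since the splitting $\FF(\LL)\oplus\ker E^*_\LL$ presupposes $E^*_\LL$. Separately, the claim that uniqueness makes $E^*_\LL$ commute with the semigroup needs a check: with $\alpha_t:=e^{t\LL^*}|_{\FF(\LL)}$, you must show that $\alpha_t^{-1}\circ E^*_\LL\circ e^{t\LL^*}$ is again a $\sigma$-preserving conditional expectation onto $\FF(\LL)$.

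\textbf{The repair.} Your fallback closes the gap once you reorder the argument, and it makes Takesaki, the commutant characterization and the KMS splitting unnecessary.
\begin{itemize}
\item Your Kadison--Schwarz computation never uses $E^*_\LL$. Applied to any $X$ with $\LL^*X=i\omega X$, and also to $X^\dagger$ (both conditions in the definition of $\FF$ are needed), it puts every imaginary-axis eigenvector of $\LL^*$ in $\FF(\LL)$.
\item Conversely, $\LL^*|_{\FF(\LL)}=i[K,\cdot\,]$ is diagonalizable with imaginary spectrum, and contractivity makes the imaginary-axis eigenvalues semisimple. Hence $\BB(\HH)=\FF(\LL)\oplus\mathcal N$, where $\mathcal N$ is the sum of the generalized eigenspaces with $\Rea\lambda<0$.
\item Define $E^*_\LL$ as the projection onto $\FF(\LL)$ along $\mathcal N$. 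It equals $\lim_{t\to\infty}e^{-t\,\mathrm{ad}(iK)}e^{t\LL^*}$ (no Ces\`aro mean is needed), so it is unital and completely positive.
\item It preserves $\sigma$ because $\Tr(\sigma N)=\Tr(\sigma e^{t\LL^*}N)\to0$ for $N\in\mathcal N$. This has to be argued in the limit, since the individual maps need not preserve $\sigma$ when $K$ does not commute with $\sigma$.
\item It is a conditional expectation because it fixes $F$ and $F^\dagger F$ for every $F\in\FF(\LL)$, which puts $\FF(\LL)$ in its multiplicative domain.
\item It is unique because any $\sigma$-preserving conditional expectation onto $\FF(\LL)$ is self-adjoint for $\langle A,B\rangle_\sigma=\Tr(\sigma A^\dagger B)$, hence is the orthogonal projection onto $\FF(\LL)$.
\item The decay claim is then immediate, because $X-E^*_\LL X\in\mathcal N$.
\end{itemize}
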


\begin{proof}[Proof of \cref{thm:CE}]
\prooflabel{thm:CE}{proof:CE}
The existence and uniqueness of the conditional expectation, and the decay of the complementary component, follow from \cite[Proposition~8]{carbone2013decoherence} and \cite[Theorem~19]{carbone2015environment}. Suppose in addition that $\LL$ has no nonzero purely imaginary eigenvalues. The dual evolution on the finite-dimensional decoherence-free algebra $\FF(\LL)$ is a group of $*$-automorphisms, whose generator is diagonalizable with purely imaginary spectrum. The absence of nonzero purely imaginary eigenvalues therefore makes this evolution the identity, so $e^{t\LL^*}(E_\LL^*[X])=E_\LL^*[X]$. By linearity and the decay statement above,
\begin{align*}
    e^{t\LL^*}(X)
    &=e^{t\LL^*}(E_\LL^*[X])+e^{t\LL^*}(X-E_\LL^*[X])\\
    &=E_\LL^*[X]+e^{t\LL^*}(X-E_\LL^*[X])\longrightarrow E_\LL^*[X].
\end{align*}
Taking preduals gives $P_\LL=E_\LL$.
\end{proof}

For a Lindbladian $\LL$ with a full-rank invariant state and a well-defined infinite-time limit, we write $E_\LL=P_\LL$ for the state predual of the corresponding conditional expectation $E_\LL^*$. We use the fixed-point MLSI convention in \cref{eq:fixed-point-MLSI}; a full-rank fixed point need not be unique. Under these hypotheses,
\begin{align}
    \label{eq:ranE} &\FF(\LL)\overset{(1)}{=}\operatorname{ran}E^*_\LL \overset{(2)}{=}\operatorname{ran}P^*_\LL \overset{(3)}{=}(\ker P_\LL)^{\perp} \overset{(4)}{=}(\operatorname{ran}\LL)^{\perp} \overset{(5)}{=}\ker\LL^* ,
\end{align}
where $(1)$ combines the codomain in \cref{thm:CE} with the fact that a conditional expectation is a projection onto its target algebra, since taking $\Phi=Y=\BI$ in its definition gives $E^*[X]=XE^*[\BI]=X$ for $X\in\NN$; $(2)$ is $E_\LL=P_\LL$; $(3)$ is $\operatorname{ran}A^*=(\ker A)^{\perp}$ for the Hilbert-Schmidt inner product; $(4)$ uses that $P_\LL$ is the projection onto $\ker\LL$ along $\operatorname{ran}\LL$, by \cref{lem:infinite-time-limit}; and $(5)$ uses $(\operatorname{ran}\LL)^\perp=\ker\LL^*$, by adjoint duality. We now recall two useful results from the literature.
\begin{proposition}[Chain rule, Lemma 3.4 of \cite{JLR2019}]\label{prop:chainrule}
Let $E^*:\BB(\HH)\to\NN$ be a conditional expectation onto a unital $*$-subalgebra $\NN\subseteq\BB(\HH)$, and let $E$ be its state predual. Let $\sigma\in\SSS(\HH)$ satisfy $E(\sigma)=\sigma$. Then for every $\rho\in\SSS(\HH)$,
\begin{align}
    \label{eq:chainrule} D(\rho\|\sigma)=D(\rho\|E(\rho))+D(E(\rho)\|\sigma).
\end{align}
\end{proposition}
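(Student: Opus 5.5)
The plan is to reduce the chain rule to an exact additivity of the log terms, which holds because $E^*$ is a conditional expectation and $\sigma$ is $E$-invariant. Expanding the definitions,
\begin{align*}
    D(\rho\|E(\rho))+D(E(\rho)\|\sigma)
    =\Tr[\rho\log\rho]-\Tr[\rho\log E(\rho)]+\Tr[E(\rho)\log E(\rho)]-\Tr[E(\rho)\log\sigma].
\end{align*}
Comparing with $D(\rho\|\sigma)=\Tr[\rho\log\rho]-\Tr[\rho\log\sigma]$, it suffices to prove two identities:
\begin{align*}
    \Tr[\rho\log E(\rho)]&=\Tr[E(\rho)\log E(\rho)],\\
    \Tr[\rho\log\sigma]&=\Tr[E(\rho)\log\sigma].
\end{align*}
By duality $\Tr[E(\rho)Y]=\Tr[\rho E^*(Y)]$, both identities follow once we show that $\log E(\rho)$ and $\log\sigma$ are fixed by $E^*$.

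To identify what $E^*$ fixes, first use that $E^*$ is a projection onto $\NN$ (as noted after \cref{eq:ranE}). In finite dimensions, a faithful $E$-invariant $\sigma$ gives the structure theorem $\NN=\bigoplus_k\BB(\HH_k)\otimes\BI_{\HH_k'}$. Every $E$-invariant state then has the form $\bigoplus_k p_k\,\omega_k\otimes\tau_k$, with fixed states $\tau_k$ determined by $E$. Its logarithm is $\bigoplus_k\bigl(\log(p_k\omega_k)\otimes\BI+\BI\otimes\log\tau_k\bigr)$, which lies in $\NN+\mathcal T$. Here $\mathcal T$ is spanned by operators $\BI\otimes\log\tau_k$ on each block.

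Next, check that $E^*$ acts as the identity on these logarithms. By the module property, $E^*(\BI\otimes T)=\BI\otimes\Tr[\tau_kT]\BI$ on block $k$. Hence $\Tr[E(\rho)\log\sigma]$ and $\Tr[\rho\log\sigma]$ need not match termwise, and the correct formulation is to substitute directly rather than assume $E^*(\log\sigma)=\log\sigma$. Write $E(\rho)=\bigoplus p_k(\rho)\,\omega_k(\rho)\otimes\tau_k$, where $p_k\omega_k=\Tr_{\HH_k'}[P_k\rho P_k]$. Compute $\Tr[\rho\,(\log(p_k\omega_k)\otimes\BI)]=\Tr[E(\rho)(\log(p_k\omega_k)\otimes\BI)]$ via the partial trace. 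For the $\log\tau_k$ pieces, the terms $\Tr[\rho(\BI\otimes\log\tau_k)]$ from the two sides of the chain rule cancel. This cancellation works because $\log E(\rho)$ and $\log\sigma$ carry the same $\tau_k$ parts, which appear with opposite signs in the two sums. So the pieces should be grouped as $\log E(\rho)-\log\sigma$, which lies in $\NN$. I expect verifying this grouping carefully, including the support and full-rank issues, to be the main obstacle.

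With the grouping in place, the identity to check becomes
\begin{align*}
    \Tr[(\rho-E(\rho))(\log E(\rho)-\log\sigma)]=0.
\end{align*}
This holds because $\log E(\rho)-\log\sigma\in\NN$ and $\Tr[(\rho-E(\rho))X]=\Tr[\rho(X-E^*X)]=0$ for $X\in\NN$. Rearranging the four terms gives \cref{eq:chainrule}. If $E(\rho)$ is not faithful, we restrict to supports: $\operatorname{supp}\rho\subseteq\operatorname{supp}E(\rho)$ because $E$ is a faithful conditional expectation. The formula then holds with the convention $0\log0=0$, and the infinite case cannot occur since $\sigma$ is full rank.
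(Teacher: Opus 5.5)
The paper gives no proof of this proposition; it is quoted from \cite{JLR2019}. Your final argument is the standard proof and is correct in the faithful setting. Both $E(\rho)$ and $\sigma$ factor as $\bigoplus_k X_k\otimes\tau_k$ with the same $\tau_k$, so $\log E(\rho)-\log\sigma\in\NN$, and $\Tr[(\rho-E(\rho))X]=0$ for $X\in\NN$. This is the same cancellation the paper uses in its proof of \cref{prop:swap_fixpoint}. In a write-up, delete the opening reduction to the two separate identities $\Tr[\rho\log E(\rho)]=\Tr[E(\rho)\log E(\rho)]$ and $\Tr[\rho\log\sigma]=\Tr[E(\rho)\log\sigma]$. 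As you yourself observe, they are false in general, because on block $k$ one has $E^*(\BI\otimes\log\tau_k)=\Tr[\tau_k\log\tau_k]\BI$. Only the grouped identity holds.

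The genuine gap is that you assume $\sigma$ is full rank (``a faithful $E$-invariant $\sigma$'', ``since $\sigma$ is full rank''), but the statement does not. The block form $\NN=\bigoplus_k\BB(\HH_k)\otimes\BI_{\HH_k'}$ holds for any unital $*$-subalgebra. Faithfulness of $\sigma$ is needed only to force the states $\tau_k$ in $E^*=\bigoplus_k(\mathrm{id}\otimes\tau_k)$ to be invertible. Without it, $E$ need not be faithful, and your inclusion $\operatorname{supp}\rho\subseteq\operatorname{supp}E(\rho)$ can fail. For example, take $\NN=\C\BI$, $E^*(X)=\Tr(\tau X)\BI$ with $\tau$ pure, and $\sigma=\tau$. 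In that case \cref{eq:chainrule} must be read in $[0,\infty]$.

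The repair is short; write $\sigma=\bigoplus_k\sigma_k\otimes\tau_k$ with $\sigma_k=\Tr_{\HH_k'}[P_k\sigma P_k]$. First suppose $\operatorname{supp}\rho\not\subseteq\operatorname{supp}\sigma$. Then the right-hand side is also $+\infty$, since finiteness of both terms would give $\operatorname{supp}\rho\subseteq\operatorname{supp}E(\rho)\subseteq\operatorname{supp}\sigma$. Otherwise, $\rho\preceq c\sigma$ gives $E(\rho)\preceq c\sigma$. Each diagonal block $P_k\rho P_k$ is supported in $\HH_k\otimes\operatorname{supp}\tau_k$, so $\operatorname{supp}\rho\subseteq\operatorname{supp}E(\rho)$. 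Compressing to $\operatorname{supp}\sigma=\bigoplus_k\operatorname{supp}\sigma_k\otimes\operatorname{supp}\tau_k$ then reduces everything to your faithful case.

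Even in the faithful case, your ``restrict to supports'' step deserves one sentence. When $E(\rho)$ is singular, its support projection has the form $\bigoplus_k Q_k\otimes\BI\in\NN$. Compressing $\log\sigma$ by it therefore keeps the difference of logarithms in $\NN$. For the paper's actual use in \cref{prop:mlsi-commuting}, the conditional expectation is faithful and the reference state is full rank, so your version already suffices there.
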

\begin{proposition}[Swap of fixed points, Lemma 2.1 of \cite{BCG2024}]\label{prop:swap_fixpoint}
Let $\LL$ be a Lindbladian with a well-defined infinite-time limit, and let $\sigma_1,\sigma_2\in\SSS(\HH)$ be full-rank states in its kernel. Then $\log\sigma_1-\log\sigma_2\in\FF(\LL)=\ker\LL^*$, so in particular
\begin{align}
    &E^*_\LL(\log\sigma_1-\log\sigma_2)=\log\sigma_1-\log\sigma_2,
\end{align}
and, for all $\rho\in\SSS(\HH)$,
\begin{align}
    \label{eq:swap} \Tr[\LL(\rho)(\log\sigma_1-\log\sigma_2)]=0.
\end{align}
\end{proposition}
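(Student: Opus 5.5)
Write $\sigma_1,\sigma_2$ for the two full-rank states in $\ker\LL$ and $X:=\log\sigma_1-\log\sigma_2$. The plan is to show $X\in\FF(\LL)$ by working with the modular structure of the conditional expectation. Then \cref{eq:ranE} gives $X\in\ker\LL^*$. The remaining claims follow at once. First, $E^*_\LL(X)=X$ because $E^*_\LL$ is a projection onto $\FF(\LL)$. Second, $\Tr[\LL(\rho)X]=\Tr[\rho\,\LL^*(X)]=0$ for every $\rho$.

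\textbf{Step 1: structure of full-rank invariant states.} By \cref{thm:CE} there is a conditional expectation $E^*_\LL$ onto $\NN:=\FF(\LL)$. Since $\LL$ has a well-defined infinite-time limit, $E_\LL=P_\LL$ and $\ker\LL=\operatorname{ran}E_\LL$. In finite dimension, $\NN\cong\bigoplus_k\BB(\HH_k)\otimes\BI_{\KK_k}$ with respect to a decomposition $\HH=\bigoplus_k\HH_k\otimes\KK_k$. A faithful conditional expectation onto such an algebra has the form
\begin{align*}
    E(\rho)=\bigoplus_k\Tr_{\KK_k}\bigl(P_k\rho P_k\bigr)\otimes\tau_k,
\end{align*}
with fixed full-rank states $\tau_k$ on $\KK_k$. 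This is the standard Takesaki/Carbone--Sasso characterization. I would derive it from uniqueness in \cref{thm:CE}, or simply quote it. Consequently every full-rank $\sigma_i\in\ker\LL=\operatorname{ran}E_\LL$ has the form $\sigma_i=\bigoplus_k\omega_{i,k}\otimes\tau_k$, with $\omega_{i,k}\succ0$ on $\HH_k$.

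\textbf{Step 2: the logarithm difference lies in $\NN$.} Logarithms respect direct sums and tensor products of commuting factors, so
\begin{align*}
    \log\sigma_i=\bigoplus_k\bigl(\log\omega_{i,k}\otimes\BI+\BI\otimes\log\tau_k\bigr).
\end{align*}
The $\log\tau_k$ terms are the same for $i=1,2$ and cancel, leaving
\begin{align*}
    X=\bigoplus_k(\log\omega_{1,k}-\log\omega_{2,k})\otimes\BI_{\KK_k}\in\NN.
\end{align*}
By \cref{eq:ranE}, $\NN=\ker\LL^*$.

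\textbf{Main obstacle.} The hard part is Step 1. One must show that the state predual of the unique $\sigma$-preserving conditional expectation has exactly this $\tau_k$-tensor form, with $\tau_k$ independent of which full-rank invariant state is used. I would argue it as follows. Any conditional expectation onto $\NN$ restricts on the relative commutant $\NN'\cap P_k\BB(\HH)P_k\cong\BB(\KK_k)$ to a state. Hence $E^*(A\otimes B)=A\,\Tr(\tau_k B)$ blockwise, where $\tau_k$ is determined by $E$ itself and not by $\sigma$. The invariant states are exactly the elements of $\operatorname{ran}E_\LL$. Since $E_\LL=P_\LL$ is canonical, this fixes the $\tau_k$ uniformly and closes the argument.
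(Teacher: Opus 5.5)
Your proof is correct and follows the paper's argument. You decompose $\FF(\LL)$ into blocks and write $E_\LL$ in tensor form with a single family of states $\tau_k$, common to both $\sigma_1$ and $\sigma_2$ because $E_\LL=P_\LL$ does not depend on which invariant state is used; you then cancel the $\log\tau_k$ terms and conclude via \cref{eq:ranE}. The only difference is that you sketch the derivation of that tensor form (a conditional expectation maps the commutant $\NN'$ into the center of $\NN$, which produces the $\tau_k$), whereas the paper simply cites it.
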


\begin{proof}[Proof of \cref{prop:swap_fixpoint}]
\prooflabel{prop:swap_fixpoint}{proof:swap_fixpoint}
On a finite-dimensional Hilbert space $\HH$, the Hilbert space and decoherence-free subalgebra $\FF(\LL)$ can be written, up to unitary equivalence, as \cite{kadison1986fundamentals}:
\begin{align}
    &\HH = \bigoplus_{i\in I} \HH^{(1)}_i \otimes \HH_i^{(2)},\quad \FF(\LL) = \bigoplus_{i\in I}\BB(\HH_i^{(1)})\otimes \BI_{\HH_i^{(2)}},
\end{align}
and there is a family of density operators $\{\tau_i:i\in I\}$ such that for every full-rank state $\rho$:
\begin{align}
    &E_\LL(\rho) = \bigoplus_{i\in I}\Tr_{\HH_i^{(2)}}(P_i \rho P_i)\otimes \tau_i,
\end{align}
where for each $i$, $P_i$ is the projection onto $\HH_i^{(1)}\otimes \HH_i^{(2)}$; see \cite[Eq.~(2.10)]{BR2022}, originally proven in \cite{deschamps2016structure}; note that \cite{BR2022} writes $E_\NN$ for the map on observables and $E_{\NN*}$ for its predual, the opposite of the convention used here. Since $\sigma_1,\sigma_2\in\ker\LL$ and $E_\LL$ acts as the identity there, we can thus write the fixed points as:
\begin{align}
    \sigma_j&= E_\LL(\sigma_j)=\bigoplus_{i\in I}\tilde{\sigma}_{j,i}\otimes \tau_i, \text{ where }\tilde{\sigma}_{j,i} := \Tr_{\HH_i^{(2)}}(P_i \sigma_{j} P_i), \quad j=1,2,
\end{align}
with the \emph{same} family $\{\tau_i\}$ for both $j$, and with $\tilde{\sigma}_{j,i},\tau_i>0$ because $\sigma_j$ is full rank. Now using $\log(A\otimes B)=\log A\otimes\BI+\BI\otimes\log B$, we obtain:
\begin{align}
    \log\sigma_1-\log\sigma_2&=\bigoplus_i\log\tilde{\sigma}_{1,i}\otimes\BI_{\HH_i^{(2)}}+\bigoplus_i\BI_{\HH_i^{(1)}}\otimes \log\tau_{i}-\bigoplus_i\log\tilde{\sigma}_{2,i}\otimes\BI_{\HH_i^{(2)}}-\bigoplus_i\BI_{\HH_i^{(1)}}\otimes \log\tau_{i} ,\notag\\
    &=\bigoplus_i\bigl(\log\tilde{\sigma}_{1,i}-\log\tilde{\sigma}_{2,i}\bigr)\otimes\BI_{\HH_i^{(2)}} \ \in\ \FF(\LL)=\ker\LL^* = \operatorname{ran}E_\LL^*, \label{eq:loglogker}
\end{align}
since the cross terms cancel, the equalities of algebras being \cref{eq:ranE}; this yields the first claim. The second claim is:
\begin{align}
    &\Tr[\LL(\rho)(\log\sigma_1-\log\sigma_2)] \overset{(1)}{=}\Tr[\rho\,\LL^*(\log\sigma_1-\log\sigma_2)]\overset{(2)}{=}0,
\end{align}
where $(1)$ uses the Hilbert-Schmidt adjoint and $(2)$ uses \cref{eq:loglogker}.
\end{proof}
The next lemma isolates the only consequence of \cref{prop:swap_fixpoint} used below, in the generality in which it holds.
\begin{lemma}[Additivity of the entropy production]\label{lem:EP-additive}
Let $\LL,\KK$ be Lindbladians such that $\LL,\KK$ and $\LL+\KK$ have well-defined infinite-time limits, and let $\sigma$ be a full-rank state in $\ker\LL\cap\ker\KK$. Then, for every $\rho\succ0$, the states $E_\LL(\rho),E_\KK(\rho),E_{\LL+\KK}(\rho)$ are full rank and the entropy productions are additive,
\begin{align}
    \label{eq:EP-additive} &\operatorname{EP}_{\LL}(\rho)+\operatorname{EP}_{\KK}(\rho)=\operatorname{EP}_{\LL+\KK}(\rho) .
\end{align}
\end{lemma}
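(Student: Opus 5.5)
The plan is to reduce everything to the swap-of-fixed-points identity in \cref{prop:swap_fixpoint}, applied separately to $\LL$, $\KK$ and $\LL+\KK$.

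\emph{Full rank of the limits.} Since $\rho\succ0$ and $\sigma\succ0$, there is $c>0$ with $\rho\succeq c\sigma$. Each of $E_\LL$, $E_\KK$, $E_{\LL+\KK}$ is positive, being a limit of channels (\cref{lem:infinite-time-limit}). Each also fixes $\sigma$, because $\sigma\in\ker\LL\cap\ker\KK\subseteq\ker(\LL+\KK)$. Hence $E_\bullet(\rho)\succeq cE_\bullet(\sigma)=c\sigma\succ0$. Each $E_\bullet(\rho)$ is a state in the kernel of the corresponding generator, so it is a full-rank stationary state of that generator. The same argument shows that $\sigma$ is a full-rank invariant state for all three generators, so \cref{thm:CE} identifies $E_\bullet=P_\bullet$.

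\emph{Replacing the reference by $\sigma$.} By definition,
\[
\EP_\LL(\rho)=-\Tr[\LL(\rho)(\log\rho-\log E_\LL(\rho))].
\]
Apply \cref{prop:swap_fixpoint} to $\LL$ with $\sigma_1=E_\LL(\rho)$ and $\sigma_2=\sigma$, both full-rank states in $\ker\LL$. This gives $\Tr[\LL(\rho)(\log E_\LL(\rho)-\log\sigma)]=0$, so
\[
\EP_\LL(\rho)=-\Tr[\LL(\rho)(\log\rho-\log\sigma)].
\]
Applying the same step to $\KK$ with $E_\KK(\rho)$ and $\sigma$, and to $\LL+\KK$ with $E_{\LL+\KK}(\rho)$ and $\sigma$, gives the analogous expressions for $\EP_\KK(\rho)$ and $\EP_{\LL+\KK}(\rho)$, all relative to the common state $\sigma$.

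\emph{Additivity.} The expression $-\Tr[\,\cdot\,(\rho)(\log\rho-\log\sigma)]$ is linear in the generator. Together with $(\LL+\KK)(\rho)=\LL(\rho)+\KK(\rho)$, this yields \cref{eq:EP-additive}.

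The only delicate point is checking that the hypotheses of \cref{prop:swap_fixpoint} hold for each of the three generators: a well-defined infinite-time limit, which is assumed, and full-rank kernel states. The full-rank property is exactly what the domination argument $\rho\succeq c\sigma$ provides, so I do not expect any substantive obstacle beyond this bookkeeping.
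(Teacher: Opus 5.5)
Your proposal is correct and matches the paper's proof step for step: you use the domination $\rho\succeq c\sigma$ together with positivity and $\sigma$-invariance of the three limit maps to get full rank, then apply \cref{prop:swap_fixpoint} to each of $\LL$, $\KK$ and $\LL+\KK$ to replace the input-dependent reference states by the common $\sigma$, and conclude by linearity in the generator. One cosmetic point: calling $E_\bullet$ positive ``as a limit of channels'' already uses the identification $E_\bullet=P_\bullet$ from \cref{thm:CE}, so that identification should be stated before the positivity step rather than after it.
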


\begin{proof}[Proof of \cref{lem:EP-additive}]
\prooflabel{lem:EP-additive}{proof:EP-additive}
By linearity $\sigma\in\ker(\LL+\KK)$, so $\sigma$ is a full-rank state in $\ker\GG$ for each $\GG\in\{\LL,\KK,\LL+\KK\}$. With $c = \lambda_{\min}(\rho)/\lambda_{\max}(\sigma)$,
\begin{align}
    &E_\GG(\rho)-c\sigma \overset{(1)}{=}E_\GG(\rho)-cE_\GG(\sigma) \overset{(2)}{=}E_\GG(\rho -c\sigma) \overset{(3)}{\succeq} 0,
\end{align}
where $(1)$ uses $E_\GG(\sigma)=\sigma$, $(2)$ uses linearity of $E_\GG$ and $(3)$ that $E_\GG$ is a positive map together with $\rho-c\sigma\succeq0$, which holds for this $c$ because $\rho,\sigma$ are full-rank states; hence $E_\GG(\rho)\succeq c\sigma\succ0$, so the three entropy productions in \cref{eq:EP-additive} are well defined. Moreover $E_\GG=P_\GG$ projects onto $\ker\GG$ by \cref{lem:infinite-time-limit}, so $E_\GG(\rho)$ and $\sigma$ are two full-rank states in $\ker\GG$ and
\begin{align}
    \operatorname{EP}_{\LL}(\rho)+\operatorname{EP}_{\KK}(\rho) &\overset{(1)}{=} - \Tr[\LL(\rho)(\log \rho - \log \sigma)] - \Tr[\KK(\rho)(\log \rho - \log \sigma)],\notag\\
    &\overset{(2)}{=} - \Tr[(\LL+\KK)(\rho)(\log \rho - \log \sigma)],\notag\\
    &\overset{(3)}{=} \operatorname{EP}_{\LL+\KK}(\rho),
\end{align}
where $(1)$ applies \cref{prop:swap_fixpoint} to $\LL$ and to $\KK$ to swap the fixed points $E_\LL(\rho),E_\KK(\rho)$ for $\sigma$ in the entropy productions, $(2)$ uses linearity and $(3)$ applies it to $\LL+\KK$ to swap $\sigma$ for $E_{\LL+\KK}(\rho)$.
\end{proof}
The following proposition bounds the MLSI constant of a sum when the stationary projections commute.
\begin{proposition}[Repeated \cref{prop:mlsi-commuting}]\label{app:prop:mlsi-commuting}
Let $\HH$ be a finite-dimensional Hilbert space, and let $\LL$ and $\KK$ be Lindbladians on $\BB(\HH)$, each with a full-rank invariant state, a well-defined infinite-time limit, and a positive MLSI constant. Suppose that
\begin{align*}
    E_\LL E_\KK=E_\KK E_\LL.
\end{align*}
Then the infinite-time limit of $\LL+\KK$ exists, and
\begin{align}
    E_{\LL+\KK}&=E_\LL E_\KK,&
    \ker(\LL+\KK)&=\ker\LL\cap\ker\KK,\notag\\
    \alpha(\LL+\KK)&\geq\min\{\alpha(\LL),\alpha(\KK)\}. &&
\end{align}
If $E_\LL=E_\KK$, the stronger bound
\begin{align*}
    \alpha(\LL+\KK)\geq\alpha(\LL)+\alpha(\KK)
\end{align*}
holds.
\end{proposition}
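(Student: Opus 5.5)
The plan has three steps: identify the infinite-time limit of the sum, identify its kernel, and then bound the MLSI constant using the chain rule (\cref{prop:chainrule}) together with additivity of entropy production (\cref{lem:EP-additive}).

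\emph{Limit and kernel.} Let $\sigma_\LL,\sigma_\KK$ be the full-rank invariant states. Then $E_\KK(\sigma_\LL)$ is a full-rank state, by the positivity argument in the proof of \cref{lem:EP-additive}. It lies in $\operatorname{ran}E_\LL E_\KK=\operatorname{ran}E_\KK E_\LL\subseteq\ker\LL\cap\ker\KK$, so $\sigma:=E_\LL E_\KK(\sigma_\LL)$ is a common full-rank fixed point.

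Because the projections commute, $P:=E_\LL E_\KK$ is a projection onto $\ker\LL\cap\ker\KK$. We have $(\LL+\KK)P=0$, and $P$ maps states to states. To show $e^{t(\LL+\KK)}\to P$, I would work with the Hilbert--Schmidt geometry weighted by $\sigma$. Since $\sigma$ is full rank and invariant for both generators, \cref{eq:ranE} gives $\ker\LL^*=\FF(\LL)$.

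\begin{itemize}
\item \emph{No purely imaginary eigenvalues.} Suppose $(\LL+\KK)^*X=i\omega X$. Use the Kadison--Schwarz dissipation functions $\Gamma_\LL(X)=\LL^*(X^\dagger X)-\LL^*(X^\dagger)X-X^\dagger\LL^*(X)\succeq0$, and similarly for $\KK$. Since $\sigma$ is invariant, $\Tr\sigma(\Gamma_\LL+\Gamma_\KK)(X)=-2\operatorname{Re}\Tr\sigma X^\dagger(\LL+\KK)^*X=0$. Faithfulness of $\sigma$ then forces $\Gamma_\LL(X)=\Gamma_\KK(X)=0$.
\item \emph{Consequence.} Standard Lindblad structure then gives $\LL^*X=i[H_\LL,X]$-type behaviour on $X$. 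I would instead use the MLSI route, which avoids this: the positive MLSI constants give exponential convergence of each semigroup, and the conclusion should follow from the chain rule plus the MLSI inequality below, applied along the trajectory.
\end{itemize}

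\emph{MLSI bound.} For $\rho\succ0$, \cref{lem:EP-additive} gives $\EP_{\LL+\KK}(\rho)=\EP_\LL(\rho)+\EP_\KK(\rho)$, where each entropy production may be computed relative to $\sigma$. Apply MLSI to each term:
\[
\EP_{\LL+\KK}(\rho)\ge2\alpha(\LL)D(\rho\|E_\LL\rho)+2\alpha(\KK)D(\rho\|E_\KK\rho).
\]
By \cref{prop:chainrule} applied to $E_\LL$ with reference state $E_\LL E_\KK\rho$ (which is $E_\LL$-invariant),
\[
D(\rho\|E_\LL E_\KK\rho)=D(\rho\|E_\LL\rho)+D(E_\LL\rho\|E_\LL E_\KK\rho).
\]
For the last term, use commutativity: $E_\LL E_\KK\rho=E_\LL(E_\KK\rho)$. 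Data processing under $E_\LL$ then gives $D(E_\LL\rho\|E_\LL E_\KK\rho)\le D(\rho\|E_\KK\rho)$. Hence
\[
D(\rho\|E_{\LL+\KK}\rho)\le D(\rho\|E_\LL\rho)+D(\rho\|E_\KK\rho)\le\frac{\EP_{\LL+\KK}(\rho)}{2\min\{\alpha(\LL),\alpha(\KK)\}},
\]
which is the claimed bound, with the min. Density of full-rank states extends it to all $\rho$.

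\emph{Convergence from the MLSI.} The inequality just proved, with $P$ in place of $E_{\LL+\KK}$, shows that $D(\rho_t\|P\rho_t)$ decays exponentially via Grönwall. This uses that $P\rho_t=P\rho$ is constant in time, because $P$ commutes with both semigroups: $Pe^{t\LL}=E_\KK E_\LL e^{t\LL}=P$ by commutativity, and similarly for $\KK$. So $\rho_t\to P\rho$ by Pinsker, which establishes the limit; the kernel identity follows at once.

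\emph{Equal projections.} If $E_\LL=E_\KK$, the two relative entropies $D(\rho\|E_\LL\rho)$ and $D(\rho\|E_\KK\rho)$ coincide with $D(\rho\|E_{\LL+\KK}\rho)$. Summing the two MLSIs directly then gives $\alpha(\LL+\KK)\ge\alpha(\LL)+\alpha(\KK)$.

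\emph{Main obstacle.} The step I expect to be hardest is justifying \cref{lem:EP-additive} before the limit of $\LL+\KK$ is known, since that lemma assumes the limit exists. I would circumvent this by working directly with $\sigma$-based entropy production, using \cref{prop:swap_fixpoint} only for $\LL$ and $\KK$ individually. With $P$ in place of $E_{\LL+\KK}$, the needed swap is $\log P\rho-\log\sigma\in\ker\LL^*\cap\ker\KK^*$. This holds because $P\rho$ and $\sigma$ are both full-rank elements of $\ker\LL$ and of $\ker\KK$, so \cref{prop:swap_fixpoint} applies to each generator separately.
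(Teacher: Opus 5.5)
Your proposal is correct and follows essentially the same route as the paper. It builds a common faithful fixed point from the commuting projections and uses the chain rule plus data processing under $E_\LL$ to get $D(\rho\|E_\LL E_\KK\rho)\le D(\rho\|E_\LL\rho)+D(\rho\|E_\KK\rho)$. It applies the fixed-point swap to $\LL$ and $\KK$ separately, so no limit for the sum is presupposed. Finally, it runs Gr\"onwall and Pinsker along the trajectory, using $E_\LL E_\KK(\LL+\KK)=0$, to obtain the limit and the kernel identity.

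The Kadison--Schwarz detour you start and then abandon is indeed unnecessary. The one detail worth adding is that $\rho\succeq c\sigma$ gives $\rho_t\succeq c\sigma\succ0$, so the entropy productions remain well defined along the trajectory.
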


\begin{proof}[Proof of \cref{prop:mlsi-commuting}]
\prooflabel{prop:mlsi-commuting}{proof:mlsi-commuting}
Put $E:=E_\LL E_\KK=E_\KK E_\LL$ and $\GG:=\LL+\KK$. Each $E_j$, $j\in\{\LL,\KK\}$, maps strictly positive states to strictly positive states: if $\sigma_j\succ0$ is invariant and $\rho\succeq c\sigma_j$ for some $c>0$, then $E_j(\rho)\succeq c\sigma_j$. Thus
\begin{align*}
    \sigma:=E(\BI/\dim\HH)
\end{align*}
is a full-rank state. Commutativity of the projections makes it stationary for both generators. Moreover,
\begin{align*}
    \operatorname{ran}E&=\ker\LL\cap\ker\KK,&
    E\GG&=\GG E=0.
\end{align*}
For example, $E\LL=E_\KK E_\LL\LL=0$ and $E\KK=E_\LL E_\KK\KK=0$; the range identity gives $\GG E=0$.

For $\rho\succ0$, the chain rule and data processing give
\begin{align}
    D(\rho\|E(\rho))
    &=D(\rho\|E_\LL(\rho))
      +D(E_\LL(\rho)\|E_\LL E_\KK(\rho)),\label{eq:Drho_ELK}\\
    &\leq D(\rho\|E_\LL(\rho))+D(\rho\|E_\KK(\rho)).
    \label{eq:rel_entropy_ineq}
\end{align}
The chain rule applies because $E(\rho)$ is fixed by $E_\LL$. The inequality is data processing for the channel $E_\LL$, applied to $\rho$ and $E_\KK(\rho)$: $D(E_\LL(\rho)\|E_\LL E_\KK(\rho))\leq D(\rho\|E_\KK(\rho))$. All the reference states are full rank by the positivity argument above. With $\tau:=E(\rho)$, \cref{prop:swap_fixpoint} applied separately to the two summands yields
\begin{align*}
    -\Tr[\GG(\rho)(\log\rho-\log\tau)]
    &=\EP_\LL(\rho)+\EP_\KK(\rho)\\
    &\geq2\min\{\alpha(\LL),\alpha(\KK)\}D(\rho\|\tau).
\end{align*}
Here $\tau$ is stationary for both summands, so this calculation does not require an infinite-time limit for $\GG$.

Write $\rho_t=e^{t\GG}(\rho)$. Since $E\rho_t=\tau$, the same estimate along the trajectory gives
\begin{align*}
    -\frac{d}{dt}D(\rho_t\|\tau)
    &\geq2\min\{\alpha(\LL),\alpha(\KK)\}D(\rho_t\|\tau).
\end{align*}
Gr\"onwall's inequality and Pinsker's inequality imply $\rho_t\to E(\rho)$. For an arbitrary state, approximate it by $(1-\delta)\rho+\delta\sigma$. Contractivity of $e^{t\GG}$ and $E$ bounds the two trace-norm approximation errors by $4\delta$, uniformly in time. Taking $t\to\infty$ and then $\delta\downarrow0$ proves convergence for every state. Thus $E_{\GG}=E$, giving the kernel identity, and the production estimate is the asserted MLSI. The entropy-decay inequality extends to all states by the same faithful approximation.

If $E_\LL=E_\KK=E$, the two MLSIs have the same relative-entropy denominator. Their sum therefore gives
\begin{align*}
    \EP_{\GG}(\rho)
    &\geq2\bigl(\alpha(\LL)+\alpha(\KK)\bigr)D(\rho\|E(\rho)),
\end{align*}
which proves the stronger bound.
\end{proof}
We next show that kernel inclusion preserves the stationary projection and the MLSI without requiring commutativity or convergence of the addition.
\begin{proposition}[Repeated \cref{prop:mlsi-kernel}]\label{app:prop:mlsi-kernel}
Let $\LL$ be a Lindbladian with a full-rank invariant state, a well-defined infinite-time limit $E_\LL$, and $\alpha(\LL)>0$. Let $\KK$ be a Lindbladian satisfying $\ker\LL\subseteq\ker\KK$. Then the infinite-time limit of $\LL+\KK$ exists, and
\begin{align*}
    E_{\LL+\KK}=E_\LL,\qquad
    \ker(\LL+\KK)=\ker\LL,\qquad
    \alpha(\LL+\KK)\geq\alpha(\LL).
\end{align*}
\end{proposition}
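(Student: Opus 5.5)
The plan is to imitate the proof of \cref{prop:mlsi-commuting}, with $\ker\LL\subseteq\ker\KK$ standing in for commutativity. Write $\GG:=\LL+\KK$ and $E:=E_\LL$.

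First I will record what kernel inclusion gives. Since $\operatorname{ran}E=\ker\LL\subseteq\ker\KK$, we have $\KK E=0$, and hence $\GG E=0$. In the Heisenberg picture, \cref{eq:ranE} gives $\operatorname{ran}E^*=\ker\LL^*$. The next claim is that $E\KK=0$, i.e.\ $\KK^*(\ker\LL^*)=0$, which I do not get for free. I would try the Hilbert--Schmidt pairing: for $X\in\ker\LL^*$, $\Tr[Y\KK^*(X)]=\Tr[\KK(Y)X]$. The cleanest route uses the full-rank invariant state $\sigma$ of $\LL$. It lies in $\ker\KK$, so it is invariant for $\KK$ as well.

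I then expect the algebra $\FF(\LL)=\ker\LL^*$ to be contained in the fixed-point algebra of $\KK^*$. My first attempt would be a Schwarz-inequality argument. For $X\in\FF(\LL)$, consider $\Tr[\sigma(\KK^*(X^\dagger X)-\KK^*(X)^\dagger X-X^\dagger\KK^*(X))]\ge0$. The first term vanishes by invariance of $\sigma$, and the others should vanish after using the structure of $\ker\LL$: every $X\sigma$ with $X\in\FF(\LL)$ is again of the form $E(\cdot)$, hence lies in $\ker\LL\subseteq\ker\KK$ by the block decomposition in the proof of \cref{prop:swap_fixpoint}. This identity would give $\KK^*(X)=0$. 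If this step resists, I would settle for what is actually used below: $E\circ e^{t\GG}=E$. This follows from $E\GG=E\LL+E\KK$ with $E\LL=0$, and $E\KK=0$ holds because $\KK(\rho)$ pairs to zero with every $X\in\ker\LL^*$. That pairing condition is the same statement, so I regard this step as the main obstacle.

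With $E\GG=\GG E=0$ in hand, the rest follows. Take a full-rank $\rho$ and set $\tau:=E(\rho)$, which is constant along $\rho_t=e^{t\GG}(\rho)$. Apply \cref{prop:swap_fixpoint} to $\KK$ with the full-rank states $\tau$ and $\rho$'s relevant stationary reference. I will use \cref{lem:EP-additive}-style reasoning to write the time derivative as
\begin{align*}
-\tfrac{d}{dt}D(\rho_t\|\tau)=\EP_\LL(\rho_t)+\EP_\KK^{\tau}(\rho_t),
\end{align*}
where $\EP^\tau_\KK(\rho_t)=-\Tr[\KK(\rho_t)(\log\rho_t-\log\tau)]\ge0$. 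This term is nonnegative because $\tau\in\ker\KK$ and relative entropy decreases under $e^{s\KK}$. Hence the derivative is at least $2\alpha(\LL)D(\rho_t\|\tau)$, with $\tau=E_\LL(\rho_t)$.

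Gr\"onwall's and Pinsker's inequalities then give $\rho_t\to E(\rho)$. The same faithful-approximation argument as in \cref{prop:mlsi-commuting} extends this to all states, so $E_\GG=E_\LL$ and $\ker\GG=\ker\LL$. The production inequality at $t=0$ is the MLSI $\alpha(\GG)\ge\alpha(\LL)$.
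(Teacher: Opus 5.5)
The gap is exactly the step you flag: $E_\LL\KK=0$, equivalently $\KK^*(X)=0$ for all $X\in\FF(\LL)=\ker\LL^*$. Your fallback is circular, as you note, and your first attempt does not reach the conclusion. Everything after $E_\LL\GG=\GG E_\LL=0$ matches the paper's proof: the MLSI for $\LL$ along $\rho_t$ with the fixed reference $\tau=E_\LL(\rho)$, nonnegativity of the $\KK$-production from data processing under $e^{s\KK}$, then Gr\"onwall, Pinsker and faithful approximation.

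What your first attempt establishes is correct. $\Tr[\sigma\KK^*(X^\dagger X)]=0$, and the bimodule property $E_\LL(X\sigma)=XE_\LL(\sigma)=X\sigma$ puts $X\sigma$ and $\sigma X^\dagger$ in $\ker\LL\subseteq\ker\KK$. Hence both cross terms vanish and $\Tr[\sigma D_\KK(X,X)]=0$. For any GKSL representation of $\KK$ one has $D_\KK(X,X)=\sum_j[L_j,X]^\dagger[L_j,X]$, so faithfulness of $\sigma$ gives only $[L_j,X]=0$. Repeating with $X^\dagger$ gives $[L_j^\dagger,X]=0$, and therefore $\KK^*(X)=i[H_\KK,X]$. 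The Hamiltonian part is invisible to the dissipation function: for $\KK=-i[H,\,\cdot\,]$ it vanishes identically. So ``this identity would give $\KK^*(X)=0$'' does not follow, and this step is the heart of the proposition.

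You can close the gap in either of two ways. The paper works in the GNS norm $\|A\|_\sigma^2=\Tr(\sigma A^\dagger A)$. Kadison--Schwarz makes the dual of every $\sigma$-preserving channel a contraction in this norm, so the idempotent $E_\LL^*$ is an orthogonal projection. Next, $e^{t\KK}E_\LL=E_\LL$ gives $E_\LL^*e^{t\KK^*}=E_\LL^*$. For $u\in\operatorname{ran}E_\LL^*$, the Pythagorean identity $\|e^{t\KK^*}u\|_\sigma^2=\|u\|_\sigma^2+\|(\mathrm{id}-E_\LL^*)e^{t\KK^*}u\|_\sigma^2$ together with contractivity forces $e^{t\KK^*}u=u$, and hence $E_\LL e^{t\KK}=E_\LL$.

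Alternatively, finish your own route in one line. Since $X$ commutes with every $L_j$ and $L_j^\dagger$, a direct computation gives $\KK(X\sigma)=-i[H_\KK,X]\sigma+X\KK(\sigma)=-i[H_\KK,X]\sigma$. Then $\KK(X\sigma)=0$ and $\sigma\succ0$ give $[H_\KK,X]=0$, so $\KK^*(X)=0$.

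Separately, drop the appeal to the fixed-point swap lemma for $\KK$. That lemma requires a well-defined infinite-time limit, which is not assumed for $\KK$. It is also unnecessary: once $E_\LL(\rho_t)=\tau$, the derivative $-\frac{d}{dt}D(\rho_t\|\tau)$ splits by linearity into $\EP_\LL(\rho_t)$ and $-\Tr[\KK(\rho_t)(\log\rho_t-\log\tau)]\geq0$.
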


\begin{proof}[Proof of \cref{prop:mlsi-kernel}]
\prooflabel{prop:mlsi-kernel}{proof:mlsi-kernel}
Put $E:=E_\LL$, $\GG:=\LL+\KK$, and let $\sigma\succ0$ be an invariant state of $\LL$. Kernel inclusion implies that $\KK(\sigma)=0$ and $e^{t\KK}E=E$. We first show that $Ee^{t\KK}=E$ as well.

On observables, use the Hilbert-space norm
\begin{align*}
    \|A\|_\sigma^2:=\Tr(\sigma A^\dagger A).
\end{align*}
For a channel $\Phi$ preserving $\sigma$, Kadison--Schwarz gives
\begin{align*}
    \|\Phi^*(A)\|_\sigma^2
    &\leq\Tr[\sigma\Phi^*(A^\dagger A)]
    =\|A\|_\sigma^2.
\end{align*}
In particular, $E^*$ is a contractive idempotent and hence an orthogonal projection in this Hilbert space. Indeed, contractivity gives $\|u\|_\sigma\leq\|u+zv\|_\sigma$ for $u\in\operatorname{ran}E^*$, $v\in\ker E^*$, and every $z\in\mathbb C$, which forces $u\perp v$.

Set $T=e^{t\KK}$. The identity $TE=E$ gives $E^*T^*=E^*$. For $u\in\operatorname{ran}E^*$, orthogonality and contractivity imply
\begin{align*}
    \|T^*u\|_\sigma^2
    &=\|u\|_\sigma^2+\|(\mathrm{id}-E^*)T^*u\|_\sigma^2
    \leq\|u\|_\sigma^2.
\end{align*}
Thus $T^*u=u$, so $T^*E^*=E^*$ and $ET=E$. We have therefore shown that $Ee^{t\KK}=e^{t\KK}E=E$ for every $t\geq0$. Taking the derivative of each identity with respect to $t$ at $t=0$, and using $\left.\frac{d}{dt}e^{t\KK}\right|_{t=0}=\KK$, gives
\begin{align*}
    E\KK=\left.\frac{d}{dt}(Ee^{t\KK})\right|_{t=0}=0,\qquad
    \KK E=\left.\frac{d}{dt}(e^{t\KK}E)\right|_{t=0}=0.
\end{align*}
Since $E=P_\LL$ also satisfies $E\LL=\LL E=0$, adding the two generators yields
\begin{align*}
    E\GG=E\LL+E\KK=0,\qquad \GG E=\LL E+\KK E=0.
\end{align*}

For $\rho\succ0$, put $\tau:=E(\rho)$ and $\rho_t:=e^{t\GG}(\rho)$. Choosing $c>0$ with $\rho\succeq c\sigma$ shows that $\tau,\rho_t\succeq c\sigma\succ0$. Moreover, $E\rho_t=\tau$, and $\tau$ is stationary for both summands. The MLSI for $\LL$ gives
\begin{align}
    2\alpha(\LL)D(\rho_t\|\tau)\leq\EP_\LL(\rho_t).
    \label{eq:mlsi-L}
\end{align}
For every $\omega\succ0$, data processing for $e^{s\KK}$ fixing $\tau$ gives
\begin{align}
    D(e^{s\KK}(\omega)\|\tau)&\leq D(\omega\|\tau),
    \label{eq:DPI-monotone}\\
    -\Tr[\KK(\omega)(\log\omega-\log\tau)]&\geq0.
    \label{eq:DPI-der}
\end{align}
The second line follows by differentiating the first at $s=0$. Consequently,
\begin{align}
    -\frac{d}{dt}D(\rho_t\|\tau)
    &=-\Tr[\GG(\rho_t)(\log\rho_t-\log\tau)]\notag\\
    &\geq\EP_\LL(\rho_t)
    \geq2\alpha(\LL)D(\rho_t\|\tau).
    \label{eq:EP-drop-K}
\end{align}
Integration yields
\begin{align*}
    D(e^{t\GG}(\rho)\|E(\rho))
    \leq e^{-2\alpha(\LL)t}D(\rho\|E(\rho)).
\end{align*}
Pinsker's inequality proves convergence to $E(\rho)$. For an arbitrary state, use $(1-\delta)\rho+\delta\sigma$ and contractivity as in \cref{proof:mlsi-commuting}; this extends convergence to all states, and faithful approximation also extends the entropy-decay bound. Hence $E_{\GG}=E$, which proves the kernel identity. With this identification, \cref{eq:EP-drop-K} is the MLSI for $\GG$ with constant $\alpha(\LL)$.
\end{proof}
In particular, every Lindbladian addition preserving the unique full-rank stationary state of $\LL$ inherits its MLSI.
\begin{corollary}[MLSI of a reset perturbation]\label{cor:perturbation}
In the setting of \cref{lem:reset}, assume in addition that $\sigma$ is full rank. Then $\alpha(\KK)\geq\tfrac12$ and, for every $\varepsilon>0$,
\begin{align}
    &\alpha(\LL+\varepsilon\KK)\geq \tfrac{\varepsilon}{2} .
\end{align}
\end{corollary}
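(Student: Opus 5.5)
The plan is to obtain both bounds from results already in the paper. The first bound, $\alpha(\KK)\geq\tfrac12$, is a direct computation. The second, $\alpha(\LL+\varepsilon\KK)\geq\varepsilon/2$, then follows from \cref{prop:mlsi-kernel} with the roles of the two generators exchanged.

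For $\alpha(\KK)\geq\tfrac12$, we use that $\KK=\mathcal R-\mathrm{id}$ with $\mathcal R(A)=\Tr[A]\sigma$. By \cref{lem:reset}, $\KK$ has the well-defined infinite-time limit $E_\KK=\mathcal R$ and the full-rank invariant state $\sigma$, so $E_\KK(\rho)=\sigma$ for every state. The entropy production is
\begin{align*}
    \EP_\KK(\rho)=-\Tr[(\sigma-\rho)(\log\rho-\log\sigma)]=D(\rho\|\sigma)+D(\sigma\|\rho),
\end{align*}
and this is at least $D(\rho\|\sigma)$. It follows that $2\alpha D(\rho\|\sigma)\leq\EP_\KK(\rho)$ holds with $\alpha=\tfrac12$. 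This is the standard reset computation, which can also be read off from \cite[Example~3.2]{Bardet2017}. Rank-deficient $\rho$ are handled either by the convention that $D(\sigma\|\rho)=+\infty$ or by faithful approximation. The approximation argument is the same one used in \cref{proof:mlsi-commuting}.

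Scaling the generator by $\varepsilon$ scales $\EP$ by $\varepsilon$ and leaves $E_{\varepsilon\KK}=\mathcal R$ unchanged, so $\alpha(\varepsilon\KK)\geq\varepsilon/2$. Now apply \cref{prop:mlsi-kernel} with reference generator $\varepsilon\KK$ and addition $\LL$. Its hypotheses are a full-rank invariant state, a well-defined limit and a positive MLSI for $\varepsilon\KK$, all of which hold. The remaining hypothesis is the kernel inclusion $\ker(\varepsilon\KK)=\C\sigma\subseteq\ker\LL$, which holds because $\LL(\sigma)=0$. The proposition then gives $\alpha(\LL+\varepsilon\KK)\geq\alpha(\varepsilon\KK)\geq\varepsilon/2$. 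It also identifies the limit as $\mathcal R$, consistent with \cref{lem:reset}.

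The only delicate point is checking that the fixed-point MLSI convention in \cref{eq:fixed-point-MLSI} matches the unique-fixed-point form used for $\KK$. It does, because $E_\KK(\rho)=\sigma$ for all $\rho$. As an alternative to \cref{prop:mlsi-kernel}, one could use commutativity $[\LL,\KK]=0$ together with \cref{lem:EP-additive}, which gives $\EP_{\LL+\varepsilon\KK}=\EP_\LL+\varepsilon\EP_\KK\geq\varepsilon D(\rho\|\sigma)$ since $\EP_\LL\geq0$. That route would need the infinite-time limit of $\LL$, whereas the kernel-inclusion route does not, so I would use the latter.
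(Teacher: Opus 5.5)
Your proposal is correct and follows the paper's proof essentially step for step. You use the same symmetrized-relative-entropy computation $\EP_\KK(\rho)=D(\rho\|\sigma)+D(\sigma\|\rho)\geq D(\rho\|\sigma)$, the same scaling to $\alpha(\varepsilon\KK)\geq\varepsilon/2$, and the same application of \cref{prop:mlsi-kernel} with $\varepsilon\KK$ as the base generator, $\LL$ as the addition, and $\ker(\varepsilon\KK)=\C\sigma\subseteq\ker\LL$.
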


\begin{proof}[Proof of \cref{cor:perturbation}]
\prooflabel{cor:perturbation}{proof:perturbation}
By \cref{lem:reset} the kernel of $\KK$ is $\mathbb{C}\sigma$ and $E_\KK(\rho)=P_\KK(\rho)=\sigma$ for every state $\rho$, so for $\rho\succ0$ we have $\KK(\rho)=\sigma-\rho$ and
\begin{align}
    \operatorname{EP}_{\KK}(\rho) &=\Tr[(\rho-\sigma)(\log\rho-\log\sigma)] \notag\\
    &=D(\rho\|\sigma)+D(\sigma\|\rho)\ \geq\ D(\rho\|\sigma) ,
\end{align}
the entropy production of a reset being the symmetrised relative entropy. This is the MLSI for $\KK$ with constant $\tfrac12$, and scaling gives $\alpha(\varepsilon\KK)=\varepsilon\,\alpha(\KK)\geq\varepsilon/2$.

Apply \cref{prop:mlsi-kernel} with $\varepsilon\KK$ as the base generator and $\LL$ as the addition. The base has the full-rank stationary state $\sigma$, the limit $E_{\varepsilon\KK}(\rho)=\sigma$, and $\ker(\varepsilon\KK)=\mathbb C\sigma\subseteq\ker\LL$. Hence
\begin{align*}
    \alpha(\LL+\varepsilon\KK)\geq\alpha(\varepsilon\KK)\geq\varepsilon/2.
\end{align*}
\end{proof}
Only the MLSI needs $\sigma$ to be full rank: by \cref{eq:resetdecay} the perturbed generator mixes at rate $\varepsilon$ for an arbitrary fixed point $\sigma$, and without any assumption on $\LL$. For a full-rank $\sigma$ the decay \cref{eq:resetdecay} is moreover faster than what $\alpha(\LL+\varepsilon\KK)\geq\varepsilon/2$ yields through the standard relative-entropy argument, which carries an additional $\log\log\|\sigma^{-1}\|_\infty$.

\begin{lemma}\label{lem:mlsi-epsilon}
Let $\LL,\KK$ both have positive MLSI constants $\alpha(\LL),\alpha(\KK)>0$ and assume that both have the same full-rank state $\sigma$ in the kernel. Furthermore, assume that all of $\LL,\KK,\LL+\KK$ have well-defined infinite-time limits and that $\ker(\LL + \KK) = \ker(\LL) \cap \ker(\KK)$. Assume the following positivity order for the dual of the infinite time limits holds for some $0<\eta<\eta_0$, where $\eta_0$ is the first positive zero of the function $f$ below:
\begin{align}
    \label{eq:positivityorder} (1-\eta) E_{\LL+\KK}^* \preceq E_{\LL}^* \circ E_{\KK}^*\preceq (1+\eta) E_{\LL + \KK}^*
\end{align}
Then $\LL+\KK$ has an MLSI constant $\alpha(\LL+\KK)\geq f(\eta) \min\{\alpha(\LL),\alpha(\KK)\}$, where
\begin{align}
    f(\eta) = \left(\frac{1-\eta}{1+\eta}-\frac{\eta}{(1-\eta)(2\ln 2-1)}\right).
\end{align}
Thus $f(\eta)>0$ throughout the assumed interval; numerically, $\eta_0\approx0.204$.
\end{lemma}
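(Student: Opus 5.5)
The plan is to follow the proof of \cref{prop:mlsi-commuting}, with the exact identity $E_{\LL+\KK}=E_\LL E_\KK$ replaced by the approximate order \cref{eq:positivityorder}. Set $\GG:=\LL+\KK$, $E:=E_\GG$ and $E':=E_\LL E_\KK$ (predual of $E_\LL^*\circ E_\KK^*$). Fix $\rho\succ0$.

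\emph{Step 1: entropy production.} By \cref{lem:EP-additive},
\begin{align*}
\EP_\GG(\rho)=\EP_\LL(\rho)+\EP_\KK(\rho)\geq 2\min\{\alpha(\LL),\alpha(\KK)\}\bigl(D(\rho\|E_\LL(\rho))+D(\rho\|E_\KK(\rho))\bigr).
\end{align*}
It therefore suffices to prove
\begin{align*}
f(\eta)\,D(\rho\|E(\rho))\leq D(\rho\|E_\LL(\rho))+D(\rho\|E_\KK(\rho)).
\end{align*}

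\emph{Step 2: chain rule.} As in \cref{eq:Drho_ELK,eq:rel_entropy_ineq}, the chain rule for $E_\LL$ (\cref{prop:chainrule}) together with data processing under $E_\LL$ gives
\begin{align*}
D(\rho\|E_\LL(\rho))+D(\rho\|E_\KK(\rho))\geq D(\rho\|E_\LL(\rho))+D(E_\LL(\rho)\|E'(\rho)).
\end{align*}
The chain rule applies with reference state $E'(\rho)$ provided $E_\LL(E'(\rho))=E'(\rho)$. I would check this first. If it fails, I would apply the chain rule with reference $E(\rho)$ instead: since $\ker\GG=\ker\LL\cap\ker\KK$, we have $E(\rho)\in\ker\LL$, hence $E_\LL E(\rho)=E(\rho)$ and $D(\rho\|E(\rho))=D(\rho\|E_\LL(\rho))+D(E_\LL(\rho)\|E(\rho))$. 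The task is then to control $D(E_\LL(\rho)\|E(\rho))$, which is of the same type as the term $D(\rho\|E(\rho))$ itself and can be handled with the comparison of Step 3.

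\emph{Step 3: comparing $E'$ and $E$.} From \cref{eq:positivityorder}, applied to positive observables and dualized to states, $(1-\eta)E(\omega)\preceq E'(\omega)\preceq(1+\eta)E(\omega)$ for every state $\omega$. Operator monotonicity of the logarithm then gives $\|\log E'(\omega)-\log E(\omega)\|\leq-\log(1-\eta)$. Also $E\circ E'=E$, since both kill $\operatorname{ran}\GG$ and fix $\ker\GG$. Two inequalities follow.

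First, the reference-swap bound $D(\omega\|E(\omega))\leq\frac{1+\eta}{1-\eta}D(\omega\|E'(\omega))$. I would prove it by comparing the ratio of the relative entropies through the Lieb/Petz integral representation of $D$, or via $\log E(\omega)\geq\log E'(\omega)+\log\frac{1}{1+\eta}$, since the log-shift terms vanish after centering by traceless parts.

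Second, an additive defect term coming from the non-idempotence of $E'$. I expect to bound it using the elementary estimate $x\log x-x+1\geq(2\ln2-1)(x-1)^2/\ldots$ on $[0,2]$ (the origin of the constant $2\ln2-1$), applied to ratios lying in $[1-\eta,1+\eta]$. This yields
\begin{align*}
D(\rho\|E(\rho))\leq\frac{\eta}{(1-\eta)(2\ln2-1)}D(\rho\|E(\rho))+\frac{1+\eta}{1-\eta}\cdot\text{RHS}.
\end{align*}
Rearranging produces exactly the factor $f(\eta)$.

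\emph{Main obstacle.} The hardest part is this last quantitative step: turning the two-sided operator bound into a multiplicative relative-entropy comparison with the precise constants. My first attempt would be the integral representation $D(\rho\|\sigma)=\int_0^\infty\Tr[(\rho-\sigma)(\sigma+s)^{-1}-\ldots]\,ds$ with the sandwich $(1\pm\eta)$ inserted termwise.

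\emph{Step 4: conclusion.} Once $\EP_\GG(\rho)\geq2f(\eta)\min\{\alpha(\LL),\alpha(\KK)\}D(\rho\|E(\rho))$ holds for faithful $\rho$, it extends to all states by the faithful approximation used in \cref{proof:mlsi-commuting}. Since $f(\eta)>0$ for $\eta<\eta_0$, this gives $\alpha(\LL+\KK)\geq f(\eta)\min\{\alpha(\LL),\alpha(\KK)\}$.
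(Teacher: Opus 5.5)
Your Steps 1, 2 and 4 follow the paper's argument: additivity of entropy production (\cref{lem:EP-additive}), the two MLSIs, and chain rule plus data processing to get $D(\rho\|E'(\rho))\leq D(\rho\|E_\LL(\rho))+D(\rho\|E_\KK(\rho))$. Two small fixes are needed there. First, the predual of $E_\LL^*\circ E_\KK^*$ is $E_\KK\circ E_\LL$, not $E_\LL E_\KK$. So either run the chain rule through $E_\KK$, as the paper does via \cref{prop:swap_fixpoint}, or justify the sandwich for the reversed composition. Second, your reason for $E\circ E'=E$ (with $E:=E_{\LL+\KK}$) is wrong. A map that kills $\operatorname{ran}\GG$ and fixes $\ker\GG$ \emph{is} $E$, which would force $\eta=0$. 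What follows directly from $\operatorname{ran}E=\ker\GG\subseteq\ker\LL\cap\ker\KK$ is $E'\circ E=E$, and that is the identity the paper uses.

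The genuine gap is the inequality that carries all the content, $f(\eta)\,D(\rho\|E(\rho))\leq D(\rho\|E'(\rho))$. The paper does not prove it. It imports it as \cite[Lemma~2.3]{GJLL23}, after checking that $E^*$ is a conditional expectation (\cref{thm:CE}) and that $E^*\circ E_\LL^*\circ E_\KK^*=E^*$. Your Step~3 does not replace that lemma.

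\emph{The logarithm route gives only non-multiplicative errors.} Operator monotonicity of the logarithm yields only $D(\omega\|E(\omega))\leq D(\omega\|E'(\omega))+\log(1+\eta)$, because the shift is paired with $\omega$ and $\Tr\omega=1$. If you instead pair the bounded operator $\log E'(\omega)-\log E(\omega)$ with the traceless part $\omega-E(\omega)$, Pinsker gives an error of order $\eta\sqrt{D(\omega\|E(\omega))}$. That is useless for an MLSI.

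\emph{The multiplicative swap does not follow from your hypotheses.} The bound $D(\omega\|E(\omega))\leq\frac{1+\eta}{1-\eta}D(\omega\|E'(\omega))$ is not implied by the properties you invoke, namely the sandwich and $E\circ E'=E'\circ E=E$. On $\C^2$, take $E(\rho)=\Tr(\rho)\BI/2$. Let $\Phi$ be the measure-and-prepare channel in the computational basis that keeps the outcome with probability $(1+\eta)/2$ and flips it with probability $(1-\eta)/2$. Its Choi matrix is diagonal with entries in $[(1-\eta)/2,(1+\eta)/2]$, so $(1-\eta)E\preceq\Phi\preceq(1+\eta)E$ in the completely positive order, and $E\Phi=\Phi E=E$. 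Yet for diagonal $\rho\to\BI/2$ one has $\Phi(\rho)-\BI/2=\eta(\rho-\BI/2)$, and the ratio $D(\rho\|\BI/2)/D(\rho\|\Phi(\rho))$ tends to $(1-\eta)^{-2}>\frac{1+\eta}{1-\eta}$.

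\emph{The constants are not derived.} The additive defect term is asserted rather than derived. Your displayed inequality also rearranges to $\frac{1-\eta}{1+\eta}-\frac{\eta}{(1+\eta)(2\ln2-1)}$, not to $f(\eta)$.

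To close the proof, cite \cite[Lemma~2.3]{GJLL23} with its hypotheses verified as above, or give a real proof of the multiplicative comparison. Any argument using only the sandwich and these composition identities can at best reach the constant $(1-\eta)^2<\frac{1-\eta}{1+\eta}$, so a correction beyond $\frac{1-\eta}{1+\eta}$, like the second term in $f(\eta)$, is unavoidable there.
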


\begin{proof}[Proof of \cref{lem:mlsi-epsilon}]
\prooflabel{lem:mlsi-epsilon}{proof:mlsi-epsilon}
For $0<\eta<\eta_0$, this follows from \cite[Lemma 2.3]{GJLL23} which states that if $E_{\LL+\KK}^*$ is a conditional expectation, $E_{\LL+\KK}^*\circ E^*_{\LL}\circ E^*_{\KK} = E_{\LL+\KK}^*$ and \cref{eq:positivityorder} holds, then
\begin{align}
    \label{eq:lemma2.3} D(\rho\| E_{\LL+\KK}(\rho)) \leq f(\eta)^{-1} D(\rho\| E_{\KK}\circ E_{\LL}(\rho)) \, .
\end{align}
In fact the full-rank state $\sigma$ lies in $\ker\LL\cap\ker\KK$ and hence in $\ker(\LL+\KK)$, so $E_{\LL+\KK}^*$ is a conditional expectation by \cref{thm:CE}. The second condition follows from $\ker(\LL + \KK) = \ker(\LL) \cap \ker(\KK)$, which gives $\operatorname{ran}E_{\LL+\KK}\subseteq\ker\LL=\operatorname{ran}E_\LL$ and $\operatorname{ran}E_{\LL+\KK}\subseteq\ker\KK=\operatorname{ran}E_\KK$, so that $E_{\LL} \circ E_{\LL +\KK} = E_{\LL + \KK}$ and $E_{\KK} \circ E_{\LL +\KK} = E_{\LL + \KK}$ and thus, after dualizing, $E_{\LL+\KK}^*\circ E^*_{\LL}\circ E^*_{\KK} = E_{\LL+\KK}^*$. Then, for $\rho\succ0$ we get
\begin{align}
    f(\eta)D(\rho\| E_{\LL+\KK}(\rho)) &\overset{(1)}{\leq} D(\rho\| E_{\KK}\circ E_{\LL}(\rho)),\notag\\
    &\overset{(2)}{=} D(\rho\| E_{\KK}(\rho))  + \Tr[\rho(\log E_{\KK}(\rho)-\log E_{\KK}\circ E_{\LL}(\rho))],\notag\\
    &\overset{(3)}{=}   D(\rho\| E_{\KK}(\rho))  + \Tr[E_{\KK}(\rho)(\log E_{\KK}(\rho)-\log E_{\KK}\circ E_{\LL}(\rho))],\notag\\
    &\overset{(4)}{=}   D(\rho\| E_{\KK}(\rho))  + D(E_{\KK}(\rho)\| E_{\KK}\circ E_{\LL}(\rho)),\notag\\
    &\overset{(5)}{\leq}   D(\rho\| E_{\KK}(\rho))  + D(\rho\| E_{\LL}(\rho)), \label{eq:two-entropies}
\end{align}
where $(1)$ is \cref{eq:lemma2.3}, $(2)$ adds and subtracts $\Tr[\rho\log E_{\KK}(\rho)]$ in the definition of the relative entropy, $(3)$ replaces $\rho$ by $E_{\KK}(\rho)$, which is allowed because $E_{\KK}(\rho)$ and $E_{\KK}\circ E_{\LL}(\rho)$ are two full-rank states in $\ker\KK$, so that \cref{prop:swap_fixpoint} applies to $\KK$, $(4)$ uses the definition of the relative entropy and $(5)$ the data-processing inequality for the channel $E_{\KK}$. It remains to turn the two MLSIs into an MLSI for $\LL+\KK$. Writing $\alpha_{\min}=\min\{\alpha(\LL),\alpha(\KK)\}$, we get for all $\rho\succ0$
\begin{align}
    2 f(\eta)\,\alpha_{\min} D(\rho \| E_{\LL+\KK}(\rho)) &\overset{(1)}{\leq} 2\alpha_{\min}\bigl(D(\rho\| E_{\LL}(\rho))+D(\rho\| E_{\KK}(\rho))\bigr),\notag\\
    &\overset{(2)}{\leq} \operatorname{EP}_{\LL}(\rho)+\operatorname{EP}_{\KK}(\rho),\notag\\
    &\overset{(3)}{=} \operatorname{EP}_{\LL+\KK}(\rho) ,
\end{align}
where $(1)$ is \cref{eq:two-entropies}, $(2)$ uses the MLSIs of $\LL$ and $\KK$ together with $\alpha_{\min}\leq\alpha(\LL),\alpha(\KK)$, and $(3)$ is \cref{eq:EP-additive}. Since $\rho\succ0$ was arbitrary, this is the MLSI for $\LL+\KK$ with constant $f(\eta)\alpha_{\min}$, i.e.\ $\alpha(\LL+\KK)\geq f(\eta)\min\{\alpha(\LL),\alpha(\KK)\}$, as claimed.
\end{proof}

\partbibliography

\paperpart{appendices-fast_mixing}{1}
\section{Stability of fast mixing in the \texorpdfstring{$\chi^2$}{chi-squared} divergence}\label{app:fast}
We first prove \cref{thm:tkrw}, which is an adaptation of Lemma 12 in \cite{TKRW2010}, since their formulation required $\LL$ to be primitive. This version will allow us to apply it to $\KK$, which need not be primitive: it only needs to satisfy $\KK(\sigma)=0$.

In this appendix, inequalities between self-adjoint superoperators are understood with respect to the Hilbert--Schmidt inner product: $\Phi\preceq\Psi$ means that $\langle A,(\Psi-\Phi)(A)\rangle\geq0$ for every operator $A$.

\begin{proposition}[$\chi^2$ contraction bound; adapted Lemma 12 in \cite{TKRW2010}]\label{thm:tkrw}\label{app:thm:tkrw}
Let $\mathcal L$ be a Lindbladian with a stationary state $\sigma\succ0$, and let $k\in K$. Write $\mathcal L^*$ for its Hilbert--Schmidt adjoint and define
\begin{align}
    \label{app:eq:Lambda} \Lambda_k(\mathcal L)=[\Omega_\sigma^k]^{1/2}\circ\mathcal L\circ[\Omega_\sigma^k]^{-1/2}+[\Omega_\sigma^k]^{-1/2}\circ\mathcal L^*\circ[\Omega_\sigma^k]^{1/2}.
\end{align}
Then $\Lambda_k(\mathcal L)\preceq0$ and $\Lambda_k(\mathcal L)(\sigma^{1/2})=0$. If $l_1^k(\mathcal L)\le0$ denotes its second largest eigenvalue, counting multiplicity, then every state $\rho$ satisfies
\begin{align}
    \label{app:eq:tkrw} \lVert e^{t\mathcal L}(\rho)-\sigma\rVert_1^2\le\chi_k^2(e^{t\mathcal L}(\rho),\sigma)\le e^{l_1^k(\mathcal L)t}\chi_k^2(\rho,\sigma),\qquad t\ge0.
\end{align}
\end{proposition}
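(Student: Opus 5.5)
The plan is to work in the weighted Hilbert space defined by $\Omega_\sigma^k$ and show that $\chi_k^2$ of the evolved state is a Lyapunov function whose derivative is governed by $\Lambda_k(\mathcal L)$. Write $X_t:=e^{t\mathcal L}(\rho)-\sigma$ and $Y_t:=[\Omega_\sigma^k]^{1/2}X_t$, so that $\chi_k^2(e^{t\mathcal L}(\rho),\sigma)=\|Y_t\|_2^2$. Since $\mathcal L(\sigma)=0$, we have $\dot X_t=\mathcal L X_t$, and self-adjointness of $\Omega_\sigma^k$ gives
\begin{align*}
    \frac{d}{dt}\|Y_t\|_2^2=\langle Y_t,\Lambda_k(\mathcal L)Y_t\rangle.
\end{align*}
Hence everything reduces to spectral information about the self-adjoint operator $\Lambda_k(\mathcal L)$ together with the orthogonality of $Y_t$ to its top eigenvector.

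\textbf{Step 1: $\Lambda_k(\mathcal L)\preceq0$.} Here I would invoke the monotonicity of the $\chi_k^2$ divergences under channels fixing $\sigma$, which holds because $-k$ is operator monotone (the Lesniewski--Ruskai / TKRW contraction theorem). For the channel $T_s=e^{s\mathcal L}$ this gives $\langle T_sX,\Omega_\sigma^kT_sX\rangle\le\langle X,\Omega_\sigma^kX\rangle$ for every operator $X$, not only traceless Hermitian ones, since the monotonicity is a quadratic-form inequality valid on all of $\BB(\HH)$. Differentiating at $s=0$ and substituting $X=[\Omega_\sigma^k]^{-1/2}A$ yields $\langle A,\Lambda_k(\mathcal L)A\rangle\le0$. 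No primitivity of $\mathcal L$ enters here, only $\mathcal L(\sigma)=0$; this is exactly the adaptation relative to TKRW.

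\textbf{Step 2: the zero eigenvector and the orthogonality.} The key identity is $\Omega_\sigma^k(\sigma)=R_\sigma^{-1}k(\Delta_{\sigma,\sigma})(\sigma)=\BI$, which uses $\Delta_{\sigma,\sigma}(\sigma)=\sigma$ and $k(1)=1$. Consequently $[\Omega_\sigma^k]^{1/2}(\sigma)=\sigma^{1/2}$, because $\Omega_\sigma^k$ acts on $\sigma^{1/2}$ by the scalar $\sigma^{-1}k(1)$ in the functional calculus of the commuting maps $L_\sigma,R_\sigma$. With this,
\begin{align*}
    [\Omega^k_\sigma]^{1/2}\mathcal L[\Omega^k_\sigma]^{-1/2}\sigma^{1/2}=[\Omega^k_\sigma]^{1/2}\mathcal L(\sigma)=0,
\end{align*}
and, using $[\Omega^k_\sigma]^{1/2}\sigma^{1/2}=\BI$, the second term is $[\Omega^k_\sigma]^{-1/2}\mathcal L^*(\BI)=0$ by trace preservation. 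So $\Lambda_k(\mathcal L)(\sigma^{1/2})=0$, and by Step 1 the value $0$ is the largest eigenvalue.

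For the orthogonality, $\langle\sigma^{1/2},Y_t\rangle=\langle[\Omega^k_\sigma]^{1/2}\sigma^{1/2},X_t\rangle=\Tr X_t=0$. The orthogonal complement of $\sigma^{1/2}$ is invariant under the self-adjoint $\Lambda_k(\mathcal L)$, and on it the form is bounded by the second largest eigenvalue counted with multiplicity, $l_1^k(\mathcal L)$. This holds even if $0$ is degenerate, in which case $l_1^k=0$ and the bound is trivial. Gr\"onwall's inequality then gives the second inequality in \cref{app:eq:tkrw}.

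\textbf{Step 3: the trace-norm bound.} Apply Cauchy--Schwarz in the $\Omega_\sigma^k$ inner product:
\begin{align*}
    \|X\|_1=\sup_{\|U\|\le1}|\langle U,X\rangle|\le\langle U,[\Omega^k_\sigma]^{-1}U\rangle^{1/2}\chi_k(\cdot).
\end{align*}
It then remains to bound $\langle U,[\Omega^k_\sigma]^{-1}U\rangle\le1$. The condition $k(w^{-1})=wk(w)$ together with operator monotonicity gives $k(w)\ge 2/(1+w)$, so that $[\Omega_\sigma^k]^{-1}\preceq\tfrac12(L_\sigma+R_\sigma)$. Hence $\langle U,[\Omega^k_\sigma]^{-1}U\rangle\le\tfrac12\Tr\sigma(UU^\dagger+U^\dagger U)\le1$ for $\|U\|\le1$.

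\textbf{Main obstacle.} I expect Step 1 to be the hardest point: making sure that the monotonicity result applies to non-primitive channels and to non-Hermitian inputs. If that is unclear, the fallback is to prove $\Lambda_k(\mathcal L)\preceq0$ directly from the Lindblad form, using operator convexity of $1/k$ term by term on the jump operators. This is considerably more laborious.
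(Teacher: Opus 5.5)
Your proposal is correct and follows essentially the same route as the paper: monotonicity of $\chi_k^2$ under the channel $e^{s\mathcal L}$ (TKRW Theorem~4, valid for every channel and every matrix, which is how the paper avoids primitivity), then differentiation at $s=0$ and conjugation by $[\Omega_\sigma^k]^{-1/2}$, then the identities $[\Omega_\sigma^k]^{1/2}(\sigma)=\sigma^{1/2}$ and $[\Omega_\sigma^k]^{1/2}(\sigma^{1/2})=\BI$, orthogonality of $[\Omega_\sigma^k]^{1/2}(\rho(t)-\sigma)$ to $\sigma^{1/2}$, and finally Gr\"onwall. The only difference is that you derive the trace-norm bound directly, via Cauchy--Schwarz and $k(w)\geq 2/(1+w)$ (equivalently $[\Omega_\sigma^k]^{-1}\preceq\tfrac12(L_\sigma+R_\sigma)$), whereas the paper cites Lemma~5 of \cite{TKRW2010}; your derivation of that step is also correct.
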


\begin{proof}[Proof of \cref{thm:tkrw}]
\prooflabel{thm:tkrw}{proof:tkrw}
Write $\Omega=\Omega_\sigma^k$ and $T_t=e^{t\mathcal L}$. We first show that $\Lambda_k(\mathcal L)\preceq0$, then identify its stationary vector, and finally derive the decay bound.

Theorem 4 of \cite{TKRW2010} states that for every quantum channel $T$ and every matrix $A$,
\begin{align}
    \bigl\langle T(A),\Omega_{T(\sigma)}^k(T(A))\bigr\rangle\le\langle A,\Omega_\sigma^k(A)\rangle.
\end{align}
Since $T_t(\sigma)=\sigma$, applying this to $T=T_t$ gives
\begin{align}
    \langle T_t(A),\Omega(T_t(A))\rangle\le\langle A,\Omega(A)\rangle.
\end{align}
Using the Hilbert--Schmidt adjoint to move $T_t$ from the first argument to the second, we obtain
\begin{align}
    \bigl\langle A,(T_t^*\circ\Omega\circ T_t)(A)\bigr\rangle\le\langle A,\Omega(A)\rangle.
\end{align}
This holds for every $A$, so it is precisely the superoperator inequality $T_t^*\circ\Omega\circ T_t\preceq\Omega$.

At $t=0$ the two sides agree. Subtracting $\Omega$, dividing by $t>0$, and taking $t\downarrow0$ therefore gives
\begin{align}
    \mathcal L^*\circ\Omega+\Omega\circ\mathcal L\preceq0.
\end{align}
Because $\Omega^{-1/2}$ is self-adjoint, conjugation by $\Omega^{-1/2}$ preserves this inequality. Hence
\begin{align}
    \Lambda_k(\mathcal L)&=\Omega^{-1/2}\circ(\mathcal L^*\circ\Omega+\Omega\circ\mathcal L)\circ\Omega^{-1/2},\notag\\
    &=\Omega^{-1/2}\circ\mathcal L^*\circ\Omega^{1/2}+\Omega^{1/2}\circ\mathcal L\circ\Omega^{-1/2}\preceq0.
\end{align}
The two summands are adjoints of one another, so $\Lambda_k(\mathcal L)$ is self-adjoint.

We next identify a vector in its kernel. On matrices commuting with $\sigma$, the definition of $\Omega$ and the normalization $k(1)=1$ give $\Omega(A)=A\sigma^{-1}$. Its positive square root and inverse square root consequently satisfy
\begin{align}
    \Omega^{1/2}(\sigma)&=\sigma^{1/2},& \Omega^{-1/2}(\sigma^{1/2})&=\sigma,& \Omega^{1/2}(\sigma^{1/2})&=I.
\end{align}
Using $\mathcal L(\sigma)=0$ and $\mathcal L^*(I)=0$, we conclude that
\begin{align}
    \Lambda_k(\mathcal L)(\sigma^{1/2})=\Omega^{1/2}(\mathcal L(\sigma))+\Omega^{-1/2}(\mathcal L^*(I))=0.
\end{align}
Thus the largest eigenvalue of $\Lambda_k(\mathcal L)$ is zero. On the orthogonal complement of $\sigma^{1/2}$, its quadratic form is bounded above by its second largest eigenvalue $l_1^k(\mathcal L)$, counting multiplicity.

Now let $\rho(t)=T_t(\rho)$ and define
\begin{align}
    x(t)=\Omega^{1/2}(\rho(t)-\sigma).
\end{align}
By construction,
\begin{align}
    \|x(t)\|_2^2=\chi_k^2(\rho(t),\sigma).
\end{align}
Moreover, self-adjointness of $\Omega^{1/2}$ and preservation of the trace imply
\begin{align}
    \langle\sigma^{1/2},x(t)\rangle&=\langle\Omega^{1/2}(\sigma^{1/2}),\rho(t)-\sigma\rangle,\\
    &=\langle I,\rho(t)-\sigma\rangle=\Tr(\rho(t)-\sigma)=0.
\end{align}
Thus $x(t)$ lies in the orthogonal complement considered above.

Since $\mathcal L(\sigma)=0$, the evolution equation for $x(t)$ is
\begin{align}
    \dot x(t)&=\Omega^{1/2}(\dot\rho(t)),\notag\\
    &\overset{(1)}{=}\Omega^{1/2}(\mathcal L(\rho(t))),\notag\\
    &\overset{(2)}{=}\Omega^{1/2}(\mathcal L(\rho(t)-\sigma)),\notag\\
    &\overset{(3)}{=}\Omega^{1/2}(\mathcal L(\Omega^{-1/2}(x(t)))),\notag\\
    &=(\Omega^{1/2}\circ\mathcal L\circ\Omega^{-1/2})(x(t)),
\end{align}
where $(1)$ uses $\dot\rho(t)=\mathcal L(\rho(t))$, $(2)$ uses linearity and $\mathcal L(\sigma)=0$, and $(3)$ uses $\rho(t)-\sigma=\Omega^{-1/2}(x(t))$. Differentiating its squared norm and using the definition of $\Lambda_k(\mathcal L)$ therefore yields
\begin{align}
    \frac{\mathrm d}{\mathrm dt}\chi_k^2(\rho(t),\sigma)&\overset{(1)}{=}\frac{\mathrm d}{\mathrm dt}\|x(t)\|_2^2,\notag\\
    &\overset{(2)}{=}\langle\dot x(t),x(t)\rangle+\langle x(t),\dot x(t)\rangle,\notag\\
    &\overset{(3)}{=}\langle x(t),\Lambda_k(\mathcal L)(x(t))\rangle,\notag\\
    &\overset{(4)}{\le}l_1^k(\mathcal L)\|x(t)\|_2^2,\notag\\
    &\overset{(5)}{=}l_1^k(\mathcal L)\chi_k^2(\rho(t),\sigma),
\end{align}
where $(1)$ and $(5)$ use $\|x(t)\|_2^2=\chi_k^2(\rho(t),\sigma)$, $(2)$ uses the product rule, $(3)$ uses the evolution equation for $x(t)$, the Hilbert--Schmidt adjoint and the definition of $\Lambda_k(\mathcal L)$, and $(4)$ uses the variational principle on the orthogonal complement of $\sigma^{1/2}$, since $\langle\sigma^{1/2},x(t)\rangle=0$. Integrating this differential inequality gives
\begin{align}
    \chi_k^2(\rho(t),\sigma)\le e^{l_1^k(\mathcal L)t}\chi_k^2(\rho,\sigma).
\end{align}
Finally, Lemma 5 of \cite{TKRW2010} bounds the squared trace-norm distance by the divergence, so
\begin{align}
    \|\rho(t)-\sigma\|_1^2\le\chi_k^2(\rho(t),\sigma)\le e^{l_1^k(\mathcal L)t}\chi_k^2(\rho,\sigma).
\end{align}
\end{proof}

\begin{proposition}[Repeated \cref{prop:tkrw-stability}: Stability of the $\chi^2$ mixing bound]\label{app:prop:tkrw-stability}
Let $\mathcal L$ and $\mathcal K$ be Lindbladians satisfying $\mathcal L(\sigma)=\mathcal K(\sigma)=0$ for the same state $\sigma\succ0$. For every $k\in K$, using this same $\sigma$ to define all three operators, we have
\begin{align*}
    \Lambda_k(\mathcal L+\mathcal K)&=\Lambda_k(\mathcal L)+\Lambda_k(\mathcal K)\preceq\Lambda_k(\mathcal L),\\
    l_1^k(\mathcal L+\mathcal K)&\le l_1^k(\mathcal L)+l_1^k(\mathcal K).
\end{align*}
If $l_1^k(\mathcal L)+l_1^k(\mathcal K)<0$, then $\mathcal L+\mathcal K$ converges exponentially to $\sigma$, which is therefore its unique stationary state.
\end{proposition}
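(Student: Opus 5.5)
The plan rests on linearity of $\Lambda_k$ in the generator when $\sigma$, and hence $\Omega:=\Omega_\sigma^k$, is held fixed. By \cref{app:eq:Lambda},
\begin{align*}
    \Lambda_k(\mathcal L+\mathcal K)=\Omega^{1/2}\circ(\mathcal L+\mathcal K)\circ\Omega^{-1/2}+\Omega^{-1/2}\circ(\mathcal L^*+\mathcal K^*)\circ\Omega^{1/2}=\Lambda_k(\mathcal L)+\Lambda_k(\mathcal K).
\end{align*}
Both generators annihilate the same $\sigma\succ0$, so \cref{thm:tkrw} applies to $\mathcal K$ as well. It gives $\Lambda_k(\mathcal K)\preceq0$, and therefore $\Lambda_k(\mathcal L+\mathcal K)\preceq\Lambda_k(\mathcal L)$. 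The sum $\mathcal L+\mathcal K$ is again a Lindbladian fixing $\sigma$, so \cref{thm:tkrw} applies to it too.

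For the eigenvalue inequality I would use a variational characterization. All three operators are self-adjoint and negative semidefinite, and all annihilate $\sigma^{1/2}$ by \cref{thm:tkrw}. Each therefore leaves $V:=\{\sigma^{1/2}\}^\perp$ invariant, and its second largest eigenvalue (counting multiplicity) is its largest eigenvalue on $V$. This holds because the full spectrum consists of the eigenvalue $0$ at $\sigma^{1/2}$ together with the spectrum on $V$, and the latter is $\leq0$. Hence
\begin{align*}
    l_1^k(\mathcal G)=\max_{x\in V,\ \|x\|_2=1}\langle x,\Lambda_k(\mathcal G)x\rangle,\qquad \mathcal G\in\{\mathcal L,\mathcal K,\mathcal L+\mathcal K\}.
\end{align*}
Evaluating the quadratic form of the sum at a maximizer $x$ and bounding each summand by its own maximum over $V$ gives $l_1^k(\mathcal L+\mathcal K)\le l_1^k(\mathcal L)+l_1^k(\mathcal K)$.

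For the final claim, suppose $l_1^k(\mathcal L)+l_1^k(\mathcal K)<0$. The previous bound gives $l_1^k(\mathcal L+\mathcal K)<0$. Applying \cref{app:eq:tkrw} to $\mathcal L+\mathcal K$ gives $\|e^{t(\mathcal L+\mathcal K)}(\rho)-\sigma\|_1^2\le e^{l_1^k(\mathcal L+\mathcal K)t}\chi_k^2(\rho,\sigma)$ for every state $\rho$, and the right-hand side is finite because $\sigma\succ0$. So every state converges exponentially to $\sigma$. If $\rho$ were another stationary state, the left-hand side would be constant in $t$, forcing $\rho=\sigma$; hence $\sigma$ is the unique stationary state.

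The main obstacle is the second step: confirming that "second largest eigenvalue counting multiplicity" really equals the maximum on $V$. The point to check is that if $0$ has multiplicity greater than one, then $l_1^k=0$, and the maximum over $V$ is also $0$. This case is consistent, so the argument goes through.
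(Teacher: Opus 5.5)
Your proposal is correct and follows essentially the same route as the paper: linearity of $\Lambda_k$ at fixed $\Omega_\sigma^k$, then $\Lambda_k(\mathcal K)\preceq0$ from the adapted contraction theorem, then the Rayleigh-quotient characterization of $l_1^k$ on $\{\sigma^{1/2}\}^\perp$ with the supremum of a sum bounded by the sum of suprema, and finally the $\chi^2$ decay bound applied to $\mathcal L+\mathcal K$ to get convergence and uniqueness. Your extra check that the second largest eigenvalue equals the maximum on $\{\sigma^{1/2}\}^\perp$, including when $0$ is degenerate, makes explicit a step the paper asserts directly.
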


\begin{proof}[Proof of \cref{prop:tkrw-stability}]
\prooflabel{prop:tkrw-stability}{proof:tkrw-stability}
Fix $k\in K$ and write $\Omega=\Omega_\sigma^k$. All three operators $\Lambda_k(\mathcal L)$, $\Lambda_k(\mathcal K)$ and $\Lambda_k(\mathcal L+\mathcal K)$ are defined using this same $\Omega$. Since $\mathcal L(\sigma)=\mathcal K(\sigma)=0$, the state $\sigma$ is also stationary for $\mathcal L+\mathcal K$.

The equation $\Lambda_k(\LL+\KK) = \Lambda_k(\LL)+\Lambda_k(\KK)$ follows by linearity of $\Lambda_k$. Because $\mathcal K$ is a Lindbladian satisfying $\mathcal K(\sigma)=0$, \cref{thm:tkrw} gives $\Lambda_k(\mathcal K)\preceq0$. Thus, for every matrix $A$,
\begin{align}
    \langle A,\Lambda_k(\mathcal L+\mathcal K)(A)\rangle&=\langle A,\Lambda_k(\mathcal L)(A)\rangle+\langle A,\Lambda_k(\mathcal K)(A)\rangle,\notag\\
    &\le\langle A,\Lambda_k(\mathcal L)(A)\rangle.\label{eq:lambda-L-K-quadratic}
\end{align}
Equivalently, $\Lambda_k(\mathcal L+\mathcal K)\preceq\Lambda_k(\mathcal L)$.

We now compare the second largest eigenvalues. By \cref{thm:tkrw}, all three symmetrized operators are self-adjoint, negative semidefinite, and annihilate $\sigma^{1/2}$. Consequently, their second largest eigenvalues, counting multiplicity, are the largest Rayleigh quotients on the same subspace orthogonal to $\sigma^{1/2}$. Therefore,
\begin{align}
    l_1^k(\mathcal L+\mathcal K)&\overset{(1)}{=}\sup_{\substack{A\ne0\\
    \langle\sigma^{1/2},A\rangle=0}}\frac{\langle A,\Lambda_k(\mathcal L+\mathcal K)(A)\rangle}{\|A\|_2^2},\notag\\
    &\overset{(2)}{\le}\sup_{\substack{A\ne0\\
    \langle\sigma^{1/2},A\rangle=0}}\frac{\langle A,\Lambda_k(\mathcal L)(A)\rangle}{\|A\|_2^2}
    +\sup_{\substack{A\ne0\\
    \langle\sigma^{1/2},A\rangle=0}}\frac{\langle A,\Lambda_k(\mathcal K)(A)\rangle}{\|A\|_2^2},\notag\\
    &\overset{(3)}{=}l_1^k(\mathcal L)+l_1^k(\mathcal K),
\end{align}
where $(1)$ and $(3)$ use the variational principle, and $(2)$ uses additivity of $\Lambda_k$ and bounds the supremum of a sum by the sum of the suprema.

Finally, applying \cref{thm:tkrw} to $\mathcal L+\mathcal K$ gives, for every state $\rho$ and every $t\ge0$,
\begin{align}
    \|e^{t(\mathcal L+\mathcal K)}(\rho)-\sigma\|_1^2& \le e^{l_1^k(\mathcal L+\mathcal K)t}\chi_k^2(\rho,\sigma) \le e^{[l_1^k(\mathcal L)+l_1^k(\mathcal K)]t}\chi_k^2(\rho,\sigma).
\end{align}

If $l_1^k(\mathcal L)+l_1^k(\mathcal K)<0$, this bound tends to zero as $t\to\infty$. Since $\sigma\succ0$, the initial divergence is finite for every state $\rho$, so every state converges to $\sigma$. In particular, any stationary state $\tau$ of $\mathcal L+\mathcal K$ satisfies
\begin{align}
    \|\tau-\sigma\|_1^2=\|e^{t(\mathcal L+\mathcal K)}(\tau)-\sigma\|_1^2\le e^{[l_1^k(\mathcal L)+l_1^k(\mathcal K)]t}\chi_k^2(\tau,\sigma)\longrightarrow0,
\end{align}
and hence $\tau=\sigma$.
\end{proof}

\partbibliography

\paperpart{appendices-generic_perturbations}{1}
\section{Stability under generic perturbations}
\label{app:generic-perturbations}

We prove the generic perturbation estimate stated in \cref{lem:perturbation_fixed_point}. The first lemma turns mixing of states into contraction on traceless Hermitian operators. We then apply Duhamel's formula at a fixed time and iterate the resulting contraction. Related Duhamel bounds comparing perturbed and unperturbed trajectories were established in \cite[Theorem~6]{SzehrWolf2013}.

\begin{lemma}[Mixing implies contraction on the traceless subspace]
\label{lem:mixing_traceless}
Let $\mathcal L$ be a Lindbladian with fixed point $\sigma$. Assume that there exist $C,\lambda>0$ such that, for every state $\rho$ and every $t\geq0$,
\begin{align}
    \lVert e^{t\mathcal L}(\rho)-\sigma\rVert_1\le Ce^{-\lambda t}.
\end{align}
Then, for every traceless Hermitian operator $X$,
\begin{align}
    \lVert e^{t\mathcal L}(X)\rVert_1\le Ce^{-\lambda t}\lVert X\rVert_1. \label{eq:mixing-traceless-hermitian}
\end{align}
For an arbitrary traceless operator $X$, the same bound holds with $2C$ in place of $C$.
\end{lemma}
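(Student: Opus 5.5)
The plan is to reduce a traceless Hermitian operator to a difference of two states with equal weight. Take a traceless Hermitian $X\neq0$ and its Jordan decomposition $X=X_+-X_-$ with $X_\pm\succeq0$ of orthogonal supports. Tracelessness gives $\Tr X_+=\Tr X_-=\lVert X\rVert_1/2=:s$, and $s>0$ since $X\neq0$. Set $\rho_\pm:=X_\pm/s$, which are density matrices, so that $X=s(\rho_+-\rho_-)$.

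Next we insert the fixed point. By linearity of $e^{t\mathcal L}$,
\begin{align*}
    e^{t\mathcal L}(X)=s\bigl[(e^{t\mathcal L}(\rho_+)-\sigma)-(e^{t\mathcal L}(\rho_-)-\sigma)\bigr].
\end{align*}
The triangle inequality and the hypothesis applied to each state then give
\begin{align*}
    \lVert e^{t\mathcal L}(X)\rVert_1\leq 2sCe^{-\lambda t}=Ce^{-\lambda t}\lVert X\rVert_1,
\end{align*}
which is \cref{eq:mixing-traceless-hermitian}. The only delicate point is this constant: the factor $2$ from the two states is exactly absorbed by $s=\lVert X\rVert_1/2$, which uses tracelessness.

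For a general traceless $X$, we split it into its Hermitian and anti-Hermitian parts, $X=H_1+iH_2$ with $H_1=(X+X^\dagger)/2$ and $H_2=(X-X^\dagger)/(2i)$. Both parts are Hermitian and traceless, since $\Tr X^\dagger=\overline{\Tr X}=0$. Both also satisfy $\lVert H_j\rVert_1\leq\lVert X\rVert_1$, because $\lVert X^\dagger\rVert_1=\lVert X\rVert_1$. Applying the Hermitian bound to each part and using the triangle inequality gives
\begin{align*}
    \lVert e^{t\mathcal L}(X)\rVert_1\leq Ce^{-\lambda t}\bigl(\lVert H_1\rVert_1+\lVert H_2\rVert_1\bigr)\leq 2Ce^{-\lambda t}\lVert X\rVert_1.
\end{align*}
No step is a real obstacle; the argument is bookkeeping of constants.
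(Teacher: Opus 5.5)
Your proposal is correct and follows essentially the same route as the paper: the Jordan decomposition into equal-trace positive and negative parts reduces the Hermitian case to two states. The general traceless case then follows by splitting into Hermitian and anti-Hermitian parts, each of trace norm at most $\lVert X\rVert_1$.
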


\begin{proof}[Proof of \cref{lem:mixing_traceless}]
\prooflabel{lem:mixing_traceless}{proof:mixing_traceless}
Let $X=X^\dagger$ be traceless and nonzero. Write $X=X_+-X_-$ for its positive and negative parts. Their supports are orthogonal, and tracelessness gives
\begin{align}
    \operatorname{tr}(X_+)=\operatorname{tr}(X_-)=:\alpha=\frac{\lVert X\rVert_1}{2}.
\end{align}
Thus $\rho_\pm:=X_\pm/\alpha$ are states. Applying the mixing assumption to these two states yields
\begin{align}
    \lVert e^{t\mathcal L}(X)\rVert_1&=\alpha\lVert e^{t\mathcal L}(\rho_+)-e^{t\mathcal L}(\rho_-)\rVert_1,\notag\\
    &\le\alpha\bigl(\lVert e^{t\mathcal L}(\rho_+)-\sigma\rVert_1+\lVert e^{t\mathcal L}(\rho_-)-\sigma\rVert_1\bigr),\notag\\
    &\le2\alpha Ce^{-\lambda t},\notag\\
    &=Ce^{-\lambda t}\lVert X\rVert_1.
\end{align}
This proves \cref{eq:mixing-traceless-hermitian}. The case $X=0$ is immediate.

For an arbitrary traceless operator, write
\begin{align}
    X=X_H+iX_A,\qquad X_H:=\frac{X+X^\dagger}{2},\qquad X_A:=\frac{X-X^\dagger}{2i}.
\end{align}
Both parts are traceless and Hermitian, and the triangle inequality gives $\lVert X_H\rVert_1,\lVert X_A\rVert_1\le\lVert X\rVert_1$. Therefore,
\begin{align}
    \lVert e^{t\mathcal L}(X)\rVert_1&\le\lVert e^{t\mathcal L}(X_H)\rVert_1+\lVert e^{t\mathcal L}(X_A)\rVert_1,\notag\\
    &\le Ce^{-\lambda t}\bigl(\lVert X_H\rVert_1+\lVert X_A\rVert_1\bigr),\notag\\
    &\le2Ce^{-\lambda t}\lVert X\rVert_1.
\end{align}
\end{proof}

\begin{lemma}[Repeated \cref{lem:perturbation_fixed_point}: Mixing under a small contractive perturbation]
\label{app:lem:perturbation_fixed_point}
Let $\mathcal L_0$ be a Lindbladian with a unique stationary state $\sigma_0$ and a well-defined infinite-time limit. Set $T:=\tau_{\operatorname{mix}}^{\mathcal L_0}(1/2)$, and let $V$ be a Hermiticity-preserving, trace-annihilating linear map.
Define the restricted induced trace norm on traceless Hermitian operators by
\begin{align*}
    \|V\|_{1\to1,0}:=\sup_{\substack{X=X^\dagger,\ \Tr X=0\\X\neq0}}
    \frac{\|V(X)\|_1}{\|X\|_1}.
\end{align*}
For $\varepsilon\geq0$, set $\mathcal L_\varepsilon:=\mathcal L_0+\varepsilon V$ and suppose that $e^{t\mathcal L_\varepsilon}$ is a trace-norm contraction on Hermitian operators for every $t\geq0$. If
\begin{align}
    \varepsilon \|V\|_{1\to1,0}\leq\frac{1}{4T}, \label{app:eq:generic-perturbation-threshold}
\end{align}
then $\mathcal L_\varepsilon$ has a unique stationary state $\sigma_\varepsilon$ and, for all states $\rho$ and $t\geq0$,
\begin{align}
    \lVert e^{t\mathcal L_\varepsilon}(\rho)-\sigma_\varepsilon\rVert_1
    \leq\frac43e^{-\gamma t}\lVert\rho-\sigma_\varepsilon\rVert_1,
    \qquad \gamma:=\frac{\log(4/3)}{T}. \label{app:eq:generic-perturbation-bound}
\end{align}
Mixing times $T=O(\operatorname{polylog}N)$ imply rapid mixing; $T=O(1)$ gives constant decay rate.
\end{lemma}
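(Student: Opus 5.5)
The approach is to turn the unperturbed mixing time into a strict contraction on traceless Hermitian operators at the single time $T$, and to control the perturbation over that window with Duhamel's formula. Then I would iterate the perturbed contraction and extract existence and uniqueness of $\sigma_\varepsilon$ from it.

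First, convert the mixing time into a contraction at time $T$. Since $\sigma_0$ is the unique stationary state, $P_{\mathcal L_0}(\rho)=\sigma_0$ for every state. By definition of $T=\tau_{\operatorname{mix}}^{\mathcal L_0}(1/2)$ and continuity in $t$, we have $\|e^{T\mathcal L_0}(\rho)-\sigma_0\|_1\le 1/2$ for all states $\rho$. I would apply the decomposition argument of \cref{lem:mixing_traceless} at this fixed time rather than for all $t$: write $X=X_+-X_-$ with $\Tr X_\pm=\|X\|_1/2$ and use the triangle inequality through $\sigma_0$. This gives
\begin{align*}
\|e^{T\mathcal L_0}(X)\|_1\le\tfrac12\|X\|_1
\end{align*}
for every traceless Hermitian $X$.

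Second, compare with the perturbed semigroup. Duhamel's formula gives
\begin{align*}
e^{T\mathcal L_\varepsilon}(X)-e^{T\mathcal L_0}(X)=\varepsilon\int_0^T e^{(T-s)\mathcal L_\varepsilon}\,V\,e^{s\mathcal L_0}(X)\,ds .
\end{align*}
The operator $e^{s\mathcal L_0}(X)$ remains traceless and Hermitian, because $\mathcal L_0$ is Hermiticity preserving and trace annihilating. It is also bounded in trace norm by $\|X\|_1$, since channels are contractive. The action of $V$ on it is therefore controlled by $\|V\|_{1\to1,0}$. The output $Ve^{s\mathcal L_0}(X)$ is Hermitian, so the assumed contractivity of $e^{(T-s)\mathcal L_\varepsilon}$ on Hermitian operators applies to it. Combining these bounds with \cref{eq:generic-perturbation-threshold} gives a correction of at most $\varepsilon T\|V\|_{1\to1,0}\|X\|_1\le\|X\|_1/4$. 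Hence $\|e^{T\mathcal L_\varepsilon}(X)\|_1\le\frac34\|X\|_1$ on traceless Hermitian operators. This subspace is invariant under $e^{t\mathcal L_\varepsilon}$, because $\mathcal L_\varepsilon$ is Hermiticity preserving and trace annihilating.

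Third, iterate. Write $t=nT+r$ with $0\le r<T$. Using contractivity for the remainder $r$, we get
\begin{align*}
\|e^{t\mathcal L_\varepsilon}(X)\|_1\le(3/4)^n\|X\|_1\le\tfrac43 e^{-\gamma t}\|X\|_1 ,
\end{align*}
since $(3/4)^n\le\frac43(3/4)^{t/T}$. Next I need a stationary state. I would either invoke the general existence of stationary states for finite-dimensional Lindbladians, if $\mathcal L_\varepsilon$ is a Lindbladian, or, more robustly, observe that $e^{T\mathcal L_\varepsilon}$ is a strict contraction on differences of states. The Banach fixed-point theorem on the affine set of trace-one Hermitian operators then yields a unique fixed point $\sigma_\varepsilon$ of $e^{T\mathcal L_\varepsilon}$. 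Commutation of the semigroup with $e^{T\mathcal L_\varepsilon}$ shows that $\sigma_\varepsilon$ is fixed for all $t$, hence $\mathcal L_\varepsilon(\sigma_\varepsilon)=0$.

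The main obstacle is positivity of $\sigma_\varepsilon$, since contractivity on Hermitian operators alone does not guarantee a positive fixed point. I would obtain it as the limit of $e^{t\mathcal L_\varepsilon}(\rho)$, which is positive if the semigroup is positive, for instance when $\mathcal L_\varepsilon$ is a Lindbladian as in the intended applications. Otherwise I would state the result for the trace-one Hermitian fixed point. Applying the decay bound to $X=\rho-\sigma_\varepsilon$ then gives \cref{eq:generic-perturbation-bound}. Uniqueness follows by applying the same bound to two stationary states. The final claims follow because $\gamma=\log(4/3)/T$ and the prefactor is $O(1)$.
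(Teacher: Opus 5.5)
Your first three steps are the paper's proof. The paper writes Duhamel with $e^{(T-s)\mathcal L_0}$ outside and $e^{s\mathcal L_\varepsilon}$ inside, but your ordering works equally well. Your Banach fixed-point argument on the trace-one Hermitian operators is also a fine substitute for the paper's appeal to the existence of a stationary state.

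\textbf{The gap is in the last step.} You treat positivity of $\sigma_\varepsilon$ as unresolved and fall back on one of two options:
\begin{itemize}
\item[(i)] assume that $\mathcal L_\varepsilon$ is a Lindbladian, which the lemma does not assume, since $V$ is only Hermiticity preserving and trace annihilating;
\item[(ii)] weaken the conclusion to a trace-one Hermitian fixed point.
\end{itemize}
Neither proves the lemma as stated, which asserts a unique stationary \emph{state}.

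\textbf{The missing observation} is that the hypotheses already force positivity. Take $Y=Y^\dagger$ with $\Tr Y=1$ and $\|Y\|_1\le1$, and write $Y=Y_+-Y_-$. Then $\Tr Y_-=\tfrac12(\|Y\|_1-\Tr Y)\le0$, so $Y_-=0$ and $Y\succeq0$.

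Now fix a state $\rho$. Because $\mathcal L_\varepsilon$ is Hermiticity preserving and trace annihilating, $e^{t\mathcal L_\varepsilon}(\rho)$ is Hermitian with trace one. By the assumed contractivity on Hermitian operators, its trace norm is at most $\|\rho\|_1=1$. Hence $e^{t\mathcal L_\varepsilon}$ maps states to states.

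Your fixed point is $\sigma_\varepsilon=\lim_{t\to\infty}e^{t\mathcal L_\varepsilon}(\rho)$, which follows from your decay bound applied to $\rho-\sigma_\varepsilon$. It is a limit of states, and the state space is closed, so $\sigma_\varepsilon$ is a state. This is exactly how the paper closes the argument. With this inserted, your proof is complete and needs no extra assumption on $\mathcal L_\varepsilon$.
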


\begin{proof}[Proof of \cref{lem:perturbation_fixed_point}]
\prooflabel{lem:perturbation_fixed_point}{proof:perturbation_fixed_point}
Continuity at $T=\tau_{\operatorname{mix}}^{\mathcal L_0}(1/2)$ gives $\sup_\rho\|e^{T\mathcal L_0}(\rho)-\sigma_0\|_1\leq1/2$. For a traceless Hermitian $X$, write $X=\frac12\|X\|_1(\rho_+-\rho_-)$ using its positive and negative parts, as in \cref{proof:mixing_traceless}. The triangle inequality then gives
\begin{align}
    \lVert e^{T\mathcal L_0}(X)\rVert_1\le\frac12\lVert X\rVert_1. \label{eq:generic-unperturbed-contraction}
\end{align}
Duhamel's formula gives
\begin{align}
    \bigl(e^{T\mathcal L_\varepsilon}-e^{T\mathcal L_0}\bigr)(X)
    =\varepsilon\int_0^T e^{(T-s)\mathcal L_0}V e^{s\mathcal L_\varepsilon}(X)\,ds.
\end{align}
Both semigroups are trace-norm contractions on Hermitian operators, and $e^{s\mathcal L_\varepsilon}(X)$ is traceless and Hermitian. Thus the difference has norm at most $\varepsilon T \|V\|_{1\to1,0}\lVert X\rVert_1$, and
\begin{align}
    \lVert e^{T\mathcal L_\varepsilon}(X)\rVert_1
    &\overset{(1)}{\leq}\lVert e^{T\mathcal L_0}(X)\rVert_1
      +\lVert(e^{T\mathcal L_\varepsilon}-e^{T\mathcal L_0})(X)\rVert_1,\notag\\
    &\overset{(2)}{\leq}\left(\frac12+\varepsilon T \|V\|_{1\to1,0}\right)\lVert X\rVert_1
    \overset{(3)}{\leq}\frac34\lVert X\rVert_1.
\end{align}
Here $(1)$ is the triangle inequality, $(2)$ uses \cref{eq:generic-unperturbed-contraction} and the preceding Duhamel bound, and $(3)$ uses \cref{app:eq:generic-perturbation-threshold}.
Writing $t=nT+s$ with $n=\lfloor t/T\rfloor$ and $0\leq s<T$, iteration and contractivity give
\begin{align}
    \lVert e^{t\mathcal L_\varepsilon}(X)\rVert_1
    \leq\left(\frac34\right)^n\lVert X\rVert_1
    \leq\frac43e^{-\gamma t}\lVert X\rVert_1.
\end{align}
The perturbed semigroup maps states to states: its outputs are Hermitian, have trace one and trace norm at most one, hence are positive. Such a finite-dimensional state-preserving semigroup has a stationary state $\sigma_\varepsilon$, which is unique by applying the contraction bound to the difference of two stationary states. Substituting $X=\rho-\sigma_\varepsilon$ proves \cref{eq:generic-perturbation-bound}.
\end{proof}

\partbibliography

\paperpart{appendices-classical_case}{1}
\section{MLSI and rapid mixing for classical Markov chains}
\label{app:classical-case}

There are two different statements that are sometimes summarized by saying that, classically, modified logarithmic Sobolev inequalities and rapid mixing are equivalent. The first is a general semigroup identity: MLSI is equivalent to exponential contraction of relative entropy. The second is a more specialized theorem for finite-range Hamiltonians, their Gibbs states and Glauber dynamics, uniform over finite volumes and boundary conditions.

Let $P_t=e^{tL}$ be an irreducible continuous-time Markov semigroup on a finite set $\Omega$, with invariant measure $\pi$.  For simplicity suppose that it is reversible, and write
\begin{align}
    \operatorname{Ent}_\pi(f) &=\pi[f\log f]-\pi[f]\log\pi[f],\\
    \mathcal E(f,g)&=-\pi[fLg].
\end{align}
If $f_t$ is the density of a law evolved for time $t$ with respect to $\pi$, then
\begin{align}
    \label{eq:classical-entropy-dissipation} \frac{d}{dt}\operatorname{Ent}_\pi(f_t) =-\mathcal E(f_t,\log f_t).
\end{align}
Consequently, the modified logarithmic Sobolev inequality
\begin{align}
    \label{eq:classical-mlsi} \alpha_{\mathrm m}\operatorname{Ent}_\pi(f) \leq \mathcal E(f,\log f),\qquad f>0,
\end{align}
is equivalent to the entropy-contraction estimate
\begin{align}
    \label{eq:classical-entropy-contraction} \operatorname{Ent}_\pi(P_t f) \leq e^{-\alpha_{\mathrm m}t}\operatorname{Ent}_\pi(f) \quad(t\geq0).
\end{align}
Indeed, \cref{eq:classical-mlsi} and \cref{eq:classical-entropy-dissipation} give \cref{eq:classical-entropy-contraction} by Gronwall's lemma, while differentiating \cref{eq:classical-entropy-contraction} at $t=0$ gives the converse.  Thus MLSI is exactly equivalent to exponential contraction of relative entropy with prefactor one \cite{BobkovTetali2006}.

As in the quantum case, the MLSI implies, via Pinsker's inequality, an exponential decay in total variation:
\begin{align}
    \label{eq:classical-mlsi-tv} \lVert P_t(x,\cdot)-\pi\rVert_{\mathrm{TV}} \leq \polylog(\min_x{\pi(x)}) e^{-\alpha_{\mathrm m}t/2}.
\end{align}
For a fixed bounded finite-spin interaction on a volume $\Lambda$, let $\pi_*:=\min_x\pi(x)$. Then $\log(1/\pi_*)=O(|\Lambda|)$, so a uniform positive MLSI constant gives $t_{\mathrm{mix}}(\varepsilon)=O(\log(|\Lambda|/\varepsilon))$ when each site is updated at rate one.

The Gross, or standard, logarithmic Sobolev inequality is instead
\begin{align}
    \label{eq:classical-gross-lsi} \alpha_2\operatorname{Ent}_\pi(g^2) \leq 2\mathcal E(g,g).
\end{align}
For reversible finite chains, the hierarchy is
\begin{align}\label{eq:classical-hierarchy}\text{Gross LSI}\ \Longrightarrow\ \text{MLSI}\ \Longrightarrow\ \text{Poincar\'e inequality (spectral gap)}.\end{align}
For the first implication, apply Gross LSI to $g=\sqrt f$ and use $(a-b)(\log a-\log b)\geq4(\sqrt a-\sqrt b)^2$ to obtain
\begin{align*}
    \alpha_2\operatorname{Ent}_\pi(f)
    \leq 2\mathcal E(\sqrt f,\sqrt f)
    \leq \tfrac12\mathcal E(f,\log f).
\end{align*}
For the second, substitute $f=1+\varepsilon g$ with $\pi(g)=0$ into MLSI and expand to second order in $\varepsilon$. This gives
\begin{align*}
    \tfrac{\alpha_{\mathrm m}}2\operatorname{Var}_\pi(g)\leq\mathcal E(g,g),
\end{align*}
which is a Poincar\'e inequality. Neither reverse implication holds uniformly over general families \cite{DiaconisSaloffCoste1996,BobkovTetali2006}.

An exponential total-variation estimate with a size-independent decay rate implies a uniform spectral gap for reversible chains. Suppose that, for every initial state $x$ and all $t\geq0$,
\begin{align}
    \label{eq:classical-tv-gap}
    \lVert P_t(x,\cdot)-\pi\rVert_{\mathrm{TV}}\leq C_Ne^{-\gamma t},
\end{align}
where $C_N<\infty$ is independent of time and $\gamma>0$ is independent of system size. Let $\lambda=\operatorname{gap}(L)$. Since $L$ is self-adjoint on $L^2(\pi)$, choose an eigenfunction $Lf=-\lambda f$ with $\lVert f\rVert_\infty=1$ and a state $x_*$ with $|f(x_*)|=1$. Stationarity gives $\pi(f)=0$, so
\begin{align*}
    e^{-\lambda t}
    &=|P_tf(x_*)-\pi(f)|\\
    &\leq 2\lVert P_t(x_*,\cdot)-\pi\rVert_{\mathrm{TV}}
    \leq 2C_Ne^{-\gamma t}.
\end{align*}
Letting $t\to\infty$ gives $\lambda\geq\gamma$.

More can be said in the finite-range Glauber setting of \cite{StroockZegarlinski1992a}. Fix a bounded finite-range Hamiltonian on $\mathbb Z^d$ with a finite single-site spin space, and consider its canonical single-site Glauber (heat-bath) dynamics.
Its Gibbs specification is the family of finite-volume Gibbs distributions obtained by fixing the spins outside each volume. Take each site to update at rate one, and require all constants below to be uniform over every finite region and every admissible boundary condition. In this setting, \cite[Theorem~1.8(c)]{StroockZegarlinski1992a}, together with the Dobrushin--Shlosman theory summarized in \cite[Theorem~8.8]{GuionnetZegarlinski2003}, the chain \cref{eq:classical-hierarchy} and the argument above identify the following equivalent properties:
\begin{enumerate}
\item the Dobrushin--Shlosman strong mixing condition;
\item a uniform positive spectral gap for the finite-volume Glauber generators;
\item a uniform positive Gross log-Sobolev constant;
\item a uniform positive MLSI constant;
\item uniform rapid mixing with constant decay rate.
\end{enumerate}

\partbibliography

\addtocontents{toc}{\protect\setcounter{tocdepth}{2}}
\hypersetup{bookmarksdepth=2}

\printbibliography[ heading=bibintoc, title=Bibliography, label=bibliography, ]

\end{document}